\newtheorem{thm}{Theorem}\newtheorem{lem}[thm]{Lemma}
\newcommand{\Dcausal}{%
\tikz[baseline]{\draw[-{Latex[length=2mm]}] (0,0.5ex)--(3ex,0.5ex) node[right]{};
\draw (2ex,0.5ex)--(4ex,0.5ex)}%
}
\newcommand{\DcausalDotA}{
\tikz[baseline]{%
\draw[-{Latex[length=2mm]}] (0,0.5ex)--(2.2ex,0.5ex) node[right]{};
\draw[black,fill=black] (3ex,0.5ex) circle(.4ex);
\draw (2ex,0.5ex)--(4.5ex,0.5ex)}%
}
\newcommand{\DcausalDotB}{
\tikz[baseline]{%
\draw[black,fill=black] (1.8ex,0.5ex) circle(.4ex);
\draw[-{Latex[length=2mm]}] (2.6ex,0.5ex)--(3.8ex,0.5ex) node[right]{};%
\draw (0,0.5ex)--(4.5ex,0.5ex)}%
}
\newcommand{\GH}{
 \tikz[baseline]{%
	\draw (0.0,0.5ex) sin (0.2ex,1ex) cos (0.4ex,0.5ex) sin(0.6ex,0) cos (0.8ex,0.5ex) sin (1.0ex,1.0ex) cos(1.2ex,0.5ex)
	sin (1.4ex,0ex) cos (1.6ex,0.5ex) sin(1.8ex,1.0ex) cos (2.0ex,0.5ex) sin (2.2ex,0) cos(2.4ex,0.5ex) sin (2.4ex,1.0ex)
	cos (2.6ex,0.5ex) sin(2.8ex,0ex) cos (3.0ex,0.5ex) sin (3.2ex,1.0ex) cos(3.4ex,0.5ex) sin (3.6ex,0) cos(3.8ex,0.5ex)
	sin (4.0ex,1ex) cos (4.2ex,0.5ex) sin(4.4ex,0ex) cos (4.6ex,0.5ex) 	 
	}%
}
\renewcommand{\textcolor}[2]{#2}
\journal{Annals of Physics}
\begin{document}

\begin{frontmatter}{}

\title{Nonequilibrium Steady State and Heat Transport in Nonlinear Open
Quantum Systems: Stochastic Influence Action and Functional Perturbative
Analysis}

\author{Jing Yang}

\ead{jyang75@ur.rochester.edu}

\address{Department of Physics and Astronomy, University of Rochester, Rochester,
New York 14627, USA}

\author{Jen-Tsung Hsiang}

\ead{cosmology@gmail.com}

\address{Center for High Energy and High Field Physics, National Central University,
Chungli 32001, Taiwan}

\address{Center for Particle Physics and Field Theory, Department of Physics,
Fudan University, Shanghai 200438, China}

\author{Andrew N. Jordan}

\ead{jordan@pas.rochester.edu}

\address{Department of Physics and Astronomy, University of Rochester, Rochester,
New York 14627, USA}

\address{Institute for Quantum Studies, Chapman University, 1 University Drive,
Orange, CA 92866, USA}

\author{B. L. Hu}

\ead{blhu@umd.edu}

\address{Joint Quantum Institute and Maryland Center for Fundamental Physics,
University of Maryland, College Park, Maryland 20742, USA}
\begin{abstract}
In this paper, we show that a nonequilibrium steady state (NESS) exists
at late times in open quantum systems with weak nonlinearity by following
its nonequilibrium dynamics with a perturbative analysis. We consider
an oscillator chain containing three-types of anharmonicity: cubic
$\alpha$- and quartic $\beta$-type Fermi-Pasta-Ulam-Tsingou (FPUT)
nearest-oscillator interactions and the on-site (pinned) Klein-Gordon
(KG) quartic self-interaction. Assuming weak nonlinearity, we introduce
a stochastic influence action approach to the problem and obtain the
energy flows in different junctures across the chain. The formal results
obtained here can be used for quantum transport problems in weakly
nonlinear quantum systems. For $\alpha$-type anharmonicity, we observe
that the first-order corrections do not play any role in the thermal
transport in the NESS of the configuration we considered. For KG and
$\beta$-types anharmonicity, we work out explicitly the case of two
weakly nonlinearly coupled oscillators, with results scalable to any
number of oscillators. We examine the late-time energy flows from
one thermal bath to the other via the coupled oscillators, and show
that both the zeroth- and the first-order contributions of the energy
flows become constant in time at late times, signaling the existence
of a late-time NESS to first order in nonlinearity. Our perturbative
calculations provide a measure of the strength of nonlinearity for
nonlinear open quantum systems, which may help control the mesoscopic
heat transport distinct from or close to linear transport. Furthermore,
our results also give a benchmark for the numerical challenge of simulating
heat transport. Our setup and predictions can be implemented and verified
by investigating heat flow in an array of Josephson junctions in the
limit of large Josephson energy with the platform of circuit QED. 
\end{abstract}
\begin{keyword}
Nonequilibrium steady state; Anharmonic chain; Feynman-Vernon influence
functional; Functional perturbation; Quantum transport; Open quantum
systems
\end{keyword}

\end{frontmatter}{}

\newpage

\tableofcontents{}

\newpage

\section{Introduction}

Inasmuch as the equilibrium state of a system in contact with one
heat bath is of fundamental importance in the conceptualization and
application of the powerful canonical ensemble in statistical thermodynamics,
the existence of a nonequilibrium steady state (NESS) is of similar
importance for the understanding of nonequilibrium processes which
are widely present in Nature. For example NESS is the arena in the
discovery of the fluctuation theorems of various forms from Gallivotti
and Cohen \citep{GalCoh,Evans,Kurchan,LebSpoFluc,Seifert} to Jarzynsky
and Crook \citep{Jar97,Crook,FlucThmRev,JarRev,SubHu} and the basis
for the investigation of quantum thermodynamics (e.g., \citep{Mahler,Kosloff,Anders,QTDQI}).
One common feature of the equilibrium state and nonequilibrium steady
state is their ubiquitous appearance--many, but not all, systems
approach an equilibrium or steady state given sufficient time. While
the exceptions are not so easily captured but of special interest,
the ``norm'' also needs to be proven or demonstrated for different
types of systems and their environments before one can use the many
attractive features of NESS to exert control of nonequilibrium systems
in such states.

For classical many-body systems the existence and uniqueness of NESS
have been studied by mathematical physicists in statistical mechanics
for decades. For Gaussian systems (such as a chain of harmonic oscillators
with two heat baths at the two ends of the chain) \citep{NESSclHOC}
and anharmonic oscillators under general conditions \citep{NESSclAHO}
there are definitive answers in the form of proven theorems. Answering
this question for quantum many-body systems is not so straightforward.
This is the challenge two of the present authors have undertaken \citep{HHAoP},
showing the existence of NESS at late times for a quantum \textit{harmonic}
chain with two ends coupled linearly to two separate heat baths at
different temperatures. (See also \citep{Motz}). The goal of this
work, is to demonstrate the existence of NESS for a system of two
weakly \textit{nonlinearly} coupled (generalizable to a chain of )
quantum oscillators linearly coupled to two heat baths of different
temperatures. The forms of interaction amongst the system oscillators
correspond to the Fermi-Pasta-Ulam-Tsingou (FPUT) $\alpha$ (cubic
nearest neighbor interaction), $\beta$ (quartic nearest neighbor
interaction) \citep{FPU,FPUrev} or the Klein-Gordon (KG)(`pinned',
or quartic self-interaction) \citep{Montroll,Bender} models. The
current work approaches to this problem using a stochastic influence
action which is a part of the functional techniques systematically
developed in \citep{cgea,JH1,GH1,CRV} and applied to transport problems
in \citep{HHAoP,NENL1}. %See also \citep{PJSB03,JSP04}.
A fluctuation-dissipation relation is shown to exist in such open
nonlinear quantum systems by nonperturbative methods in \citep{NLFDR}.
In this paper using the stochastic influence action, we aim to demonstrate
the existence of a NESS at late times for the FPUT and KG models with
weak nonlinearity.

In terms of physical contexts and applications, we give a brief sketch
of the problem of quantum transport in nonlinear systems, motivating
the usefulness of analytic results, perturbative notwithstanding.
We then give a brief summary of the functional perturbative method
developed in \citep{HPZ93} and a description of our approach. We
illustrate this by calculating the energy transfer between the weakly
nonlinear system and its environment, showing that at late times indeed
they balance to first order of nonlinearity, signifying the existence
of a NESS. Further background descriptions can be found in the Introductions
of \citep{HHAoP} and \citep{NENL1}.

%%%%%%%%%

\subsection{Quantum transport of nonlinear systems - numerically challenging}

Quantum transport is an important class of problems where the nonequilibrium
evolution of an open quantum system has been followed and applicable
laws examined. A well-known case is the Fourier law for heat conduction
in low-dimensional lattices \citep{Fourier,LLP}. Amongst the vast
literature on this subject suffice it for our purpose here to mention
three reviews \citep{LLPrep,Dhar,LiRen}. As for methodologies closest
to ours, we mention the density matrix approach (e.g., \citep{DSH}),
the nonequilibrium Green function approach~ (e.g., \citep{KadBay,MacWanGuo,Rammer,WALT}),
the stochastic path integral developed in the context of charge transport
in mesoscopic systems \citep{PJSB03,JSP04}, the closed time path
(CTP, Schwinger-Keldysh, or `in-in') ~\citep{ctp,CalHu08,Kamenev},
the influence functional \citep{Weiss,IF,HPZ92} and the stochastic
influence action formalisms \citep{cgea,JH1,GH1,CRV} which are the
methods of choice for two of the present authors.

Transport problems depend heavily on numerical computations (e.g.,
\citep{Zhao}), which enable one to identify qualitatively different
behaviors in different regimes. Many analytical approaches, e.g.,
Boltzmanm molecular dynamics \citep{QBE}, nonlinear kinetic theory
and fluctuating hydrodynamics \citep{Spohn} when combined with numerical
methods give reasonably good predictions for classical systems in
different parameter regimes. However, for quantum many-body systems,
numerical computations are a lot more difficult to carry out. For
example, even casting the Langevin equations for non-linear quantum
systems suitable for numerical analysis poses a challenge \citep{DharPC}.

For this reason analytical results for quantum nonlinear systems,
even perturbative in nature, may serve some function, in bridging
the gaps in the parameter space regions these theories left behind
\citep{ZhaoPC}.

\subsection{Analytical results, even perturbative, are helpful}

The motivation for our investigation using perturbative methods is
to explore how analytical methods and results, no matter how unimpressive
they are in the face of full scale numerical computations, can still
serve some function in clarifying the issues behind energy or charge
transport in nonequilibrium weakly-nonlinear systems.

In recent years, several discoveries highlighted the importance of
even weak nonlinearity in the thermodynamic and transport properties
of classical many-body systems. E.g., it is shown that in the thermodynamic
limit a one-dimensional (1D) nonlinear lattice can always be thermalized
for arbitrarily small nonlinearity \citep{FZZ19}. On the effects
of nonlinear interaction on disordered chains there is a tentative
suggestion that weak nonlinearity can destroy Anderson localization
\citep{WFZZ19}. Thermalization is an important current subject in
quantum many-body physics. It is found that even weak nonlinearity
can play a pivotal role for thermalization. Thus, a different perspective
from nonlinear quantum open systems is desirable. Thermalization via
wave turbulence theory has also generated renewed interest in the
FPUT and KG models \citep{POC18,PCBLO}. % what is OVPL,. 
These recent developments provided good impetus for us to implement
our program on nonequilibrium dynamics of open quantum nonlinear systems.
The first step is to explore the conditions for systems in popular
models to reach a NESS. Because knowing that a system enters such
a state from following its nonequilibrium dynamics can offer great
simplification in the analysis of the nonlinear system's long time
behavior.

\subsection{Functional perturbative method}

We use the same mathematical framework as in \citep{HHAoP}, namely,
the path-integral \citep{ctp,CalHu08}, influence functional \citep{IF,Weiss}
formalism, under which the influence action, the coarse-grained effective
action \citep{cgea} and the stochastic influence action \citep{JH1,GH1}
are defined. The stochastic equations such as the master equation
(see, e.g., \citep{HPZ92}) and the Langevin equations (see, e.g.,
\citep{CRV}) can be obtained from taking the functional variations
of these effective actions. We then invoke the functional perturbative
approach of \citep{HPZ93} developed further in \citep{NENL1} to
treat systems with weak nonlinearity. In this approach we first introduce
external sources to drive a linear (harmonic oscillator) system and
calculate the in-in generating functional. Taking the functional derivatives
with respect to the sources gives the expectation values of the covariance
matrix elements or the two-point functions of the canonical variables.
The perturbative correction to the two-point functions, or the expectation
values of quantum observables due to the nonlinear potential can also
be found order by order by taking the appropriate functional derivatives
of the generating functional of the linear system linked to external
sources.

Going beyond the stochastic path integrals \citep{CDJ13,CJ15} which
unravels the effect of a Markovian bath with continuous weak measurement,
in this work we unravel the full effect of the bath without invoking
the Markov approximation but at the price that the unravelled trajectory
is fictitious which may not correspond to any real measurement. It
has been shown that the stochastic unraveling admits a time-local
master equation \citep{StoMak} even in the presence of the nonlinearity
of the quantum open systems. In our stochastic influence action approach,
we can define the heat flow rigorously among the oscillators for our
cases where the system Hamiltonian contains $\alpha,\,\beta$-FPUT
and KG nonlinearities. Furthermore, we introduce an additional external
source $\bm{h}$, which allows one to compute correlation functions
involving momentum operators using the stochastic influence action.
These new ingredients allow us to compute \textit{steady-state} heat
currents in a more efficient and economic way than \citep{NENL1},
which provides fully nonequilibrium evolution from the transient to
the relaxation to a steady state of an open nonlinear system.

\subsection{Power balance and stationarity}

Instead of seeking mathematical proofs for the existence and uniqueness
of NESS in quantum nonlinear systems, which to us is a daunting challenge,
we aim at solving for the nonequilibrium dynamics of the model systems
mentioned above, examine how they evolve in time and see if one or
more NESS exist(s) at late times. This is the same approach used earlier
by two of the present authors in \citep{HHAoP}. Namely, taking the
functional variation of the stochastic influence action yields the
quantum stochastic equations. The Langevin equation is then used to
obtain expressions for the energy flow from one bath to another through
the nonlinear system. We can check if these two energy fluxes reach
a steady state, and an energy flow (power) balance relation exists.
The main result of this work is that at late times at least to first
order in the nonlinear interaction between the oscillators, perturbations
do not grow any faster or stronger so as to overtake and disrupt the
contributions of the linear (harmonic) order. Thus the NESS is explicitly
demonstrated to exist for the weakly nonlinear systems studied in
our models.

\subsection{Findings and outline }

Based on the perturbative formalism in the framework of quantum open
systems, we give formal expressions for the energy currents up to
the first-order of the nonlinear couplings for an FPUT and KG chain.
It consists of nonlinearly coupled anharmonic oscillators via the
$\alpha$-, $\beta$-FPUT and KG nonlinearities. Each oscillator also
linearly couples to its own private bath. We find for $\alpha$-FPUT
nonlinearity, the first-order corrections to the energy currents vanish.
We demonstrate that for the two-oscillator chain, the NESS is established
in the late time up to the first-order of the nonlinear coupling constant.
This procedure can be straightforwardly extended to a chain of arbitrary
length. In addition, we devise diagrammatic representations for the
energy currents in the NESS, which may provide an intuitive understanding
of the heat transport at NESS. As a physical application, the ratio
between the first-order and zeroth correction of the energy currents
in the steady-state provides a measure of the strength of the nonlinearity
in the context of thermal transport, which may be applied to controlling
thermal transport either distinct from or close to the linear transport.
Our analytical results suggest that when the coupling between the
chain and the baths is strong or when the temperature bias across
the chain is large, the heat transport behaves like a linear one in
the presence of the nonlinearities, offering a wider latitude to manipulate
nonlinear couplings. Our results can provide a useful benchmark for
the numerical simulations of anomalous heat transport in low dimensions
in the weakly nonlinear regime \citep{LLPrep}. Our setup can be implemented
by e.g. engineering an array of Josephson junctions in circuit quantum
electrodynamics in the large Josephson energy limit \citep{JJA-NatComm,JJA-Nature}.

This paper is organized as follows: in Sec. \ref{sec:The-general-formalsim}
we first present the general formalism of the influence functional
and functional perturbation, aiming at the problem of heat transport
in a network of oscillators and baths. As a first application of the
functional perturbation formalism, in Sec. \ref{sec:zeroth-order},
we apply the functional perturbation formalism to compute the zeroth-order
steady-state energy currents in a chain configuration with two oscillators
and two baths. In Sec. \ref{sec:The-first-order-corrections}, we
compute the first-order correction for the KG, $\beta$-FPUT and $\alpha$-FPUT
models respectively. Upon replacing causal propagator for two-oscillator
and two-bath configuration with the one for the general $N$-oscillator
and $N$-bath configuration, the results in Sec. \ref{sec:zeroth-order}
and \ref{sec:The-first-order-corrections} can be generalized straightforwardly
to the case of a chain of anharmonic oscillators. In Sec. \ref{sec:Diagrammatic-representations},
we give the diagrammatic representation of all the energy currents,
which provides an intuitive understanding of the heat transport up
to the first-order. In Sec. \ref{sec:NESS}, we prove for the two-oscillator
two-bath configuration that, the NESS exists at the late times for
both KG and $\beta$-FPUT nonlinearities. In Sec. \ref{sec:Conclusion},
we discuss the possible physical applications of our perturbative
results and summarize the findings.

\section{\label{sec:The-general-formalsim}The general formalism}

\subsection{The Feynman-Vernon formalism for a network of oscillators and baths}

We consider $N$ coupled oscillators interacting with their private
thermal baths, as shown in Fig.~\ref{fig:setup}. Each oscillator
couples to only its nearest neighbors via the linear and the nonlinear
mutual couplings, so that these oscillators form a general FPUT chain
threading through thermal baths of various temperatures. The dynamics
of this entire system can be described by the action 
\begin{equation}
S_{\text{tot}}=S+S_{\phi}+S_{I}\,.
\end{equation}
The action $S_{\chi}$ for the chained oscillators contains \textit{linear}
and \textit{nonlinear} mutual couplings between the neighboring oscillators
\begin{equation}
S_{\chi}[\bm{\chi}]=\int\!dt\,\biggl[\sum_{n=1}^{N}L_{n}-\sum_{n=1}^{N-1}V_{n,\,n+1}\biggr],\label{E:gbkgjs}
\end{equation}
in which the Lagrangian $L_{n}$ of the $n^{\text{th}}$ oscillator
of mass $m_{n}$ takes the form 
\begin{equation}
L_{n}=\frac{1}{2m_{n}}\dot{\chi}_{n}^{2}-V(\chi_{n})\,,\label{eq:L-n}
\end{equation}
and $\chi_{n}$ denote the oscillator's displacement. The site potential
$V(\chi_{n})$ may include \textit{nonlinear self-interaction} of
the oscillator. The mutual coupling between two neighboring oscillators
$n$ and $n+1$, ${V}_{n,\,n+1}$, takes the form 
\begin{equation}
V_{n,\,n+1}=\sum_{\eta=2}\frac{\lambda_{\eta}}{\eta}(\chi_{n}-\chi_{n+1})^{\eta}.\label{eq:V-n}
\end{equation}
If $V(\chi_{n})$ contains only the harmonic site potential, then
the $\eta=2$ case in fact describes a linear chain, whose nonequilibrium
dynamics and the existence of the NESS have been discussed with great
details in~\citep{HHAoP}. Thus, Eq.~\eqref{E:gbkgjs} generalize
the previous consideration to include a nonlinear site potential,
as well as nonlinear intra-oscillator couplings.

\begin{figure}
\begin{centering}
\includegraphics[scale=0.4]{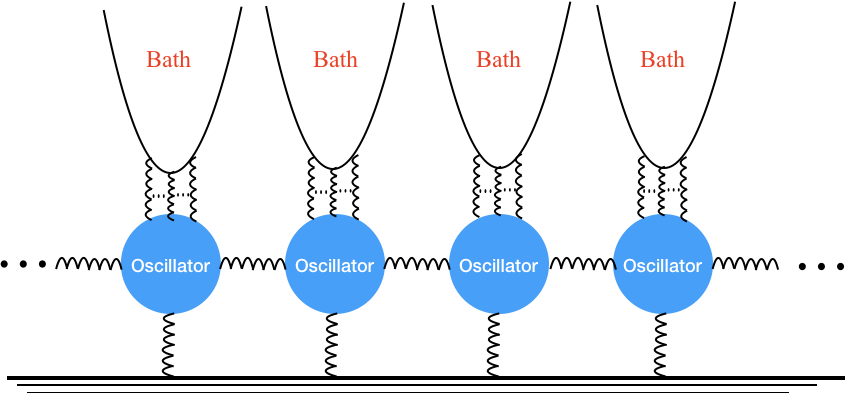} 
\par\end{centering}
\caption{\label{fig:setup}$N$-oscillators represented by blue solid circles
that are coupled to their individual thermal baths represented by
curved lines, which are theoretically modeled as massless scalar field
theory. The inter-oscillator coupling in general could be both linear
and nonlinear while the coupling between the chain and the bath is
assumed to be linear. The coupling between the oscillators with the
substrate contains the linear as well as nonlinear site potential. }
\end{figure}

The private baths associated with each oscillator are modeled by independent
free massless scalar fields, initially prepared in their respective
thermal states. They can have different initial temperatures in general,
and their action is given by 
\begin{equation}
S_{\phi}[\bm{\phi}]=\int\!d^{4}x\;\sum_{n=1}^{N}\partial_{\mu}\phi_{n}(\bm{x},\,t)\partial^{\mu}\phi_{n}(\bm{x},\,s)\,,
\end{equation}
where $\phi_{n}$ denotes the private bath attached to the $n^{\text{th}}$
oscillator. The interaction between the oscillator and its bath takes
the bilinear form 
\begin{align}
S_{I}[\bm{\chi},\,\bm{\phi}] & =\int\!dt\;\sum_{n=1}^{N}e_{n}\,\chi_{n}(t)\phi_{n}(\bm{z}(t),\,t)\,.
\end{align}
The parameter $\bm{z}(t)$ describes the trajectory of each oscillator,
but we assume that the oscillators are held at fixed position in our
subquent treatment. The coupling strength $e_{n}$ between the $n^{\text{th}}$
oscillator and its private bath may take on different values and are
not restricted to weak coupling. In addition, to avoid cluster of
notations, we introduce the matrix representation such that, say,
$\bm{\chi}=(\chi_{1},\dots,\chi_{N})^{T}$ to describe $N$ oscillator
as a whole.

We will investigate the nonequilibrium dynamics of this anharmonic
chain in the framework of quantum open systems, where the oscillator
chain is the (reduced) system of our interest and the private baths
serve as the environment. The dynamics of the system is fully governed
by the reduced density matrix $\rho$, whose evolution can be described
by the propagating function $J(\bm{\chi}_{f},\,\bm{\chi}_{f}^{\prime},\,t_{f};\bm{\chi}_{i},\,\bm{\chi}_{i}^{\prime},\,t_{i})$.
It essentially maps the reduced density matrix of the initial configuration
at time $t_{i}$ to the reduced density matrix at final time $t_{f}$
according to 
\begin{equation}
\rho(\bm{\chi}_{f},\,\bm{\chi}_{f}^{\prime},\,t_{f})=\int\!d\bm{\chi}_{i}d\bm{\chi}_{i}^{\prime}\;J(\bm{\chi}_{f},\,\bm{\chi}_{f}^{\prime},\,t_{f};\bm{\chi}_{i},\,\bm{\chi}_{i}^{\prime},\,t_{i})\,\rho(\bm{\chi}_{i},\,\bm{\chi}_{i}^{\prime},\,t_{i})\,.\label{E:kfghyrd}
\end{equation}
where $\rho(\bm{\chi},\,\bm{\chi}^{\prime},\,t)\equiv\braket{\bm{\chi}\big|\hat{\rho}(t)\big|\bm{\chi}^{\prime}}$
is the elements of the reduced density matrix in the position representation.
With the help of the Feynman-Vernon influence functional formalism
and the Hubbard-Stratonovich transformation {[}see \ref{sec:FV-formalism}{]},
the propagating function ${J}(\bm{\chi}_{f},\,\bm{\chi}_{f}^{\prime},\,t_{f};\bm{\chi}_{i},\,\bm{\chi}_{i}^{\prime},\,t_{i})$
can be formally expressed as the ensemble average 
\begin{align}
J(\bm{\chi}_{f},\,\bm{\chi}_{f}^{\prime},\,t_{f};\bm{\chi}_{i},\,\bm{\chi}_{i}^{\prime},\,t_{i})=\int\!\mathcal{D}\bm{\xi}\;P[\bm{\xi}]\,\mathcal{J}_{\bm{\xi}}(\bm{r}_{f},\,\bm{q}_{f},\,t_{f};\,\bm{r}_{i},\,\bm{q}_{i},\,t_{i})\,,\label{eq:Jx}
\end{align}
where $P[\bm{\xi}]$ is the probability distribution functional of
the stochastic trajectory, which may be interpreted as the manifestation
of quantum fluctuations of the bath fields, and $\mathcal{J}_{\bm{\xi}}(\bm{r}_{f},\,\bm{q}_{f},\,t_{f};\,\bm{r}_{i},\,\bm{q}_{i},\,t_{i})$
is the propagating function for each realization of the stochastic
noise $\bm{\xi}$. Here we have introduced the center-of-mass coordinate
$\bm{r}$ and relative coordinate $\bm{q}$ by 
\begin{align}
\bm{q}(s) & =\bm{\chi}(s)-\bm{\chi}^{\prime}(s)\,, & \bm{r}(s) & =\frac{1}{2}[\bm{\chi}(s)+\bm{\chi}^{\prime}(s)]\,,
\end{align}
respectively. From now on, we will set $t_{i}=0$ without loss of
generality and suppress the argument $t_{i}$ in relevant quantities.

The stochastic trajectory $\bm{\xi}(s)$ satisfies the Gaussian statistics,
and its first two moments are 
\begin{align}
\langle\!\langle\xi_{k}(s)\rangle\!\rangle & =0\,, & \langle\!\langle\xi_{k}(s)\xi_{l}(s^{\prime})\rangle\!\rangle & =G_{H}^{(k)}(s,\,s^{\prime})\delta_{kl}\,,\label{eq:xik-xil-ave}
\end{align}
where $G_{H}^{(n)}(s,\,s^{\prime})$ is the noise kernel of the $n^{\text{th}}$
private thermal bath. The stochastic propagation function thus enables
us to introduce the stochastic density operator in the same fashion
as \eqref{E:kfghyrd} 
\begin{equation}
\varrho_{\bm{\xi}}(\bm{r}_{f},\,\bm{q}_{f},\,t_{f})\equiv\int\!d\bm{r}_{i}d\bm{q}_{i}\;\mathcal{J}_{\bm{\xi}}(\bm{r}_{f},\,\bm{q}_{f},\,t_{f};\,\bm{r}_{i},\,\bm{q}_{i})\,\varrho(\bm{r}_{i},\,\bm{q}_{i})\,,
\end{equation}
such that 
\begin{align}
\mathcal{J}(\bm{r}_{f},\,\bm{q}_{f},\,t_{f};\,\bm{r}_{i},\,\bm{q}_{i}) & =\langle\!\langle\mathcal{J}_{\bm{\xi}}(\bm{r}_{f},\,\bm{q}_{f},\,t_{f};\,\bm{r}_{i},\,\bm{q}_{i})\rangle\!\rangle\,,\\
\varrho(\bm{r}_{f},\,\bm{q}_{f},\,t_{f}) & =\langle\!\langle\varrho_{\bm{\xi}}(\bm{r}_{f},\,\bm{q}_{f},\,t_{f})\rangle\!\rangle\,,
\end{align}
where $\langle\!\langle\bullet\rangle\!\rangle$ denotes the ensemble
average over the probability functional $P[\bm{\xi}]$.

In the context of the thermal transport along the oscillator chain,
we are primarily concerned with the nature of the thermal flow along
the chain, and the energy exchange between the chain and the bathes.
In next section, we will discuss the physical quantities of our interest
for the current configuration.

\subsection{The energy exchange}

We use the master equation developed in \citep{StoMak}, which is
time-local for general system Hamiltonian, to define the energy exchange
between an oscillator and its private bath or its nearest neighbors.
The Hamiltonian operator of the chain, corresponding to the action
$S_{\chi}$, is 
\begin{equation}
\hat{H}_{\chi}=\sum_{n=1}^{N}\hat{H}_{n}+\sum_{n=1}^{N-1}\hat{V}_{n,\,n+1}\,,
\end{equation}
where the Hamiltonian operator $\hat{H}_{n}$ at site $n$ takes the
form 
\begin{equation}
\hat{H}_{n}=\frac{\hat{p}_{n}^{2}}{2m_{n}}+V(\hat{\chi}_{n})\,.\label{eq:H-n}
\end{equation}
For Ohmic baths, it can be shown that, the evolution of the chain
associated with each trajectory $\bm{\xi}$ can be described by the
stochastic Liouville-von Neumann equation \citep{StoMak} 
\begin{align}
\dot{\hat{\rho}}_{\bm{\xi}}(t_{f}) & =-i\bigl[\hat{H}_{\chi},\,\hat{\rho}_{\bm{\xi}}(t_{f})\bigr]+i\sum_{n=1}^{N}\xi_{n}(t_{f})\bigl[\hat{\chi}_{n},\,\hat{\rho}_{\bm{\xi}}(t_{f})\bigr]-i\sum_{n=1}^{N}\gamma_{n}\,\bigl[\hat{\chi}_{n},\,\{\hat{p}_{n},\,\hat{\rho}_{\bm{\xi}}(t_{f})\}\bigr]
\end{align}
where $\gamma_{n}=e_{n}^{2}/(8\pi m_{n})$. We will calculate $\operatorname{Tr}\{\dot{\hat{\rho}}_{\bm{\xi}}(t_{f})H_{n}\}/\operatorname{Tr}\{\hat{\rho}(t_{f})\}$,
which is the the change of internal energy~of oscillator $n$ associated
with the trajectory $\bm{\xi}$, explicitly to identify the all the
relevant heat exchanges at site $n$. To this end, we find 
\begin{align}
\operatorname{Tr}\Bigl\{\dot{\hat{\rho}}_{\bm{\xi}}(t_{f})\hat{H}_{n}\Bigr\} & =i\,\operatorname{Tr}\Bigl\{\bigl[\hat{H}_{\chi},\,\hat{H}_{n}\bigr]\,\hat{\rho}_{\bm{\xi}}(t_{f})\Bigr\}-i\,\xi_{n}(t_{f})\operatorname{Tr}\Bigl\{\bigl[\hat{\chi}_{n},\,\hat{H}_{n}\bigr]\,\hat{\rho}_{\bm{\xi}}(t_{f})\Bigr\}+i\,\gamma_{n}\operatorname{Tr}\Bigl\{\bigl\{\hat{p}_{n},\,\bigl[\hat{\chi}_{n},\,\hat{H}_{n}\bigr]\bigr\}\,\hat{\rho}_{\bm{\xi}}(t_{f})\Bigr\}\,,\label{eq:Heat}
\end{align}
where we have used the following identities 
\begin{align}
\operatorname{Tr}\Bigl\{[A,\,B]\,C\Bigr\} & =-\operatorname{Tr}\Bigl\{[A,\,C]\,B\Bigr\}\,, & \operatorname{Tr}\Bigl\{[C,\{A,\,B\}]\,D\Bigr\} & =-\operatorname{Tr}\Bigl\{\{A,\,[C,\,D]\}\,B\Bigr\}\,.
\end{align}
Carrying out the commutators and anti-commutators in Eq.~\eqref{eq:Heat},
we find 
\begin{align}
\operatorname{Tr}\Bigl\{\bigl[\hat{\chi}_{n},\,\hat{H}_{n}\bigr]\,\hat{\rho}_{\bm{\xi}}(t_{f})\Bigr\} & =\frac{i}{m_{n}}\operatorname{Tr}\Bigl\{\hat{p}_{n}\,\hat{\rho}_{\bm{\xi}}(t_{f})\Bigr\}\,,\label{eq:comm1}\\
\operatorname{Tr}\Bigl\{\bigl\{\hat{p}_{n},\,\bigl[\hat{\chi}_{n},\,H_{n}\bigr]\bigr\}\,\hat{\rho}_{\bm{\xi}}(t_{f})\Bigr\} & =\frac{2i}{m_{n}}\operatorname{Tr}\Bigl\{\hat{p}_{n}^{2}\,\hat{\rho}_{\bm{\xi}}(t_{f})\Bigr\}\,,\\
\operatorname{Tr}\Bigl\{[\hat{H}_{\chi},\,\hat{H}_{n}]\,\hat{\rho}_{\bm{\xi}}(t_{f})\Bigr\} & =\frac{1}{2m_{n}}\operatorname{Tr}\Bigl\{\bigl[\hat{V}_{n,\,n-1},\,\hat{p}_{n}^{2}\bigr]\,\hat{\rho}_{\bm{\xi}}(t_{f})\Bigr\}+\frac{1}{2m_{i}}\operatorname{Tr}\Bigl\{\bigl[\hat{V}_{n,\,n+1},\,\hat{p}_{n}^{2}\bigr]\,\hat{\rho}_{\bm{\xi}}(t_{f})\Bigr\}\,.\label{eq:comm3}
\end{align}
Substituting Eqs. (\ref{eq:comm1}-\ref{eq:comm3}) into Eq.~(\ref{eq:Heat})
yields 
\begin{equation}
\frac{1}{\mathcal{Z}}\,\operatorname{Tr}\Bigl\{\dot{\hat{\rho}}_{\bm{\xi}}(t_{f})\hat{H}_{n}\Bigr\}=P_{\xi_{n}}(t_{f},\,\bm{\xi}]+P_{\gamma_{n}}(t_{f},\,\bm{\xi}]+P_{n-1\to n}(t_{f},\,\bm{\xi}]+P_{n+1\to n}(t_{f},\,\bm{\xi}]\label{eq:dEdt}
\end{equation}
where $\mathcal{Z}\equiv\operatorname{Tr}\bigl\{\hat{\rho}(t_{f})\bigr\}$.
We can identify 
\begin{equation}
P_{\xi_{n}}(t_{f},\,\bm{\xi}]\equiv\mathcal{Z}^{-1}\frac{\xi_{n}(t_{f})}{m_{n}}\,\operatorname{Tr}\Bigl\{\hat{p}_{n}\,\hat{\rho}_{\bm{\xi}}(t_{f})\Bigr\}\,,\label{eq:P-xi-traj}
\end{equation}
as the energy flow into the oscillator $n$ from its private bath
and 
\begin{equation}
P_{\gamma_{n}}(t_{f},\,\bm{\xi}]\equiv-\mathcal{Z}^{-1}\frac{2\gamma_{n}}{m_{n}}\,\operatorname{Tr}\Bigl\{\hat{p}_{n}^{2}\hat{\rho}_{\bm{\xi}}(t_{f})\Bigr\}\,,\label{eq:P-gamma-traj}
\end{equation}
the rate of the energy dissipated back to the bath. We note that Eqs.
(\ref{eq:P-xi-traj}, \ref{eq:P-gamma-traj}) coincides with the definition
of the oscillator-bath heat exchange in ~\citep{HHAoP} through the
semiclassical Langevin equation. The remaining two expressions in
\eqref{eq:dEdt} describe the energy exchange between the neighboring
oscillators 
\begin{align}
P_{n-1\to n}(t_{f},\,\bm{\xi}] & \equiv-\mathcal{Z}^{-1}\frac{i}{2m_{n}}\,\operatorname{Tr}\Bigl\{\bigl[\hat{p}_{n}^{2},\,\hat{V}_{n,\,n-1}\bigr]\,\hat{\rho}_{\bm{\xi}}(t_{f})\Bigr\}\,,\label{eq:Pn-left}\\
P_{n+1\to n}(t_{f},\,\bm{\xi}] & \equiv-\mathcal{Z}^{-1}\frac{i}{2m_{n}}\,\operatorname{Tr}\Bigl\{\bigl[\hat{p}_{n}^{2},\,\hat{V}_{n,\,n+1}\bigr]\,\hat{\rho}_{\bm{\xi}}(t_{f})\Bigr\}\,.\label{eq:Pn-right}
\end{align}
Since $\operatorname{Tr}\{\dot{\hat{\rho}}_{\bm{\xi}}(t_{f})\hat{H}_{n}\}/\mathcal{Z}$
is the change of the trajectory-wise internal energy~of oscillator
$n$, we then interpret $P_{n-1\to n}(t_{f},\,\bm{\xi}]$ as the power
delivered to oscillator $n$ by oscillator $n-1$ to it, while $P_{n+1\to n}(t_{f},\,\bm{\xi}]$
as the power by oscillator $n+1$. The mutual coupling between the
oscillators consists of various components as shown in \eqref{eq:V-n},
it proves convenient to further decompose the energy exchange between
the oscillators into the sum of the contributions from these components
\begin{equation}
P_{\nu\to n}(t_{f},\,\bm{\xi}]=\sum_{\eta=2}^{4}P_{\nu\to n}^{(\eta)}(t_{f},\,\bm{\xi}]\,,
\end{equation}
where $\nu=n\pm1$ and the contribution from each components is defined
by 
\begin{equation}
P_{\nu\to n}^{(\eta)}(t_{f},\,\bm{\xi}]\equiv-\mathcal{Z}^{-1}\frac{\lambda_{\eta}}{\eta}\frac{i}{2m_{n}}\,\operatorname{Tr}\Bigl\{\bigl[\hat{p}_{n}^{2},\,\bigl(\hat{\chi}_{n}-\hat{\chi}_{\nu}\bigr)^{\eta}\bigr]\,\hat{\rho}_{\bm{\xi}}(t_{f})\Bigr\}\,.\label{eq:Pn-eta-nu}
\end{equation}
At this moment, for each trajectory $\bm{\xi}$, we have denoted the
thermodynamic quantities $O$ by the expectation value $O(t_{f},\,\bm{\xi}]$,
which could be, e.g., the heat exchange between the bath and the primary
system, heat flow inside the primary system, local kinetic and potential
energy, etc. Their nonequilibrium dynamics can be investigated once
we get hold of the stochastic density matrix $\hat{\rho}_{\bm{\xi}}(t)$
at any moment $t$. However, $O(t_{f},\,\bm{\xi}]$ will not correspond
to the observable we measure. It is too fine-grained, corresponding
to each realization of the stochastic noise $\bm{\xi}$, and we still
need to perform the appropriate ensemble average. For example, in
Eqs.~\eqref{eq:P-xi-traj}, \eqref{eq:P-gamma-traj}, and \eqref{eq:Pn-right},
we have used the first and the second moments of the momentum operator
at the $n^{\text{th}}$ site 
\begin{align}
p_{n}(t_{f},\,\bm{\xi}] & =\operatorname{Tr}\Bigl\{\hat{p}_{n}\,\hat{\rho}_{\bm{\xi}}(t_{f})\Bigr\}\,, & p_{n}^{2}(t_{f},\,\bm{\xi}] & =\operatorname{Tr}\Bigl\{\hat{p}_{n}^{2}\,\hat{\rho}_{\bm{\xi}}(t_{f})\Bigr\}\,,\label{eq:pn2}
\end{align}
and the quantity 
\begin{equation}
w_{\nu\to n}^{(\eta)}(t_{f},\,\bm{\xi}]=-\frac{i}{2\eta}\operatorname{Tr}\Bigl\{\bigl[\hat{p}_{n}^{2},\,\bigl(\hat{\chi}_{n}-\hat{\chi}_{\nu}\bigr)^{\eta}\bigr)]\,\hat{\rho}_{\bm{\xi}}(t_{f})\Bigr\}\,.\label{eq:wn-nu-eta}
\end{equation}
Thus after performing the ensemble average introduced in \eqref{eq:Jx},
we can write the averaged \textcolor{purple}{heat current} respectively
associated with \eqref{eq:P-xi-traj}, \eqref{eq:P-gamma-traj}, and
\eqref{eq:Pn-right} as 
\begin{align}
P_{\xi_{n}}(t_{f}) & =\mathcal{Z}^{-1}\frac{1}{m_{n}}\,\langle\!\langle\xi_{n}(t_{f})\,p_{n}(t_{f},\,\bm{\xi}]\rangle\!\rangle\,,\label{eq:P-xi-ave}\\
P_{\gamma_{n}}(t_{f}) & =-\mathcal{Z}^{-1}\frac{2\gamma_{n}}{m_{n}}\,\langle\!\langle p_{n}^{2}(t_{f},\,\bm{\xi}]\rangle\!\rangle\,,\label{eq:P-gamma-ave}\\
P_{\nu\to n}^{(\eta)}(t_{f}) & =\mathcal{Z}^{-1}\frac{\lambda_{\eta}}{m_{n}}\,\langle\!\langle w_{\nu\to n}^{(\eta)}(t_{f},\,\bm{\xi}]\rangle\!\rangle\,.\label{eq:P-nu-eta-ave}
\end{align}

Now with the functional method introduced in ~\ref{sec:SPF}, the
results we obtained earlier can be concisely expressed as the functional
derivatives of a generating functional 
\begin{align}
p_{n}(t_{f},\,\bm{\xi}] & =-i\frac{\partial}{\partial h_{n}}\mathcal{Z}_{\bm{\xi}}[\bm{j},\bm{j}^{\prime},\bm{h})\,\bigg|_{\bm{j}=\bm{j}^{\prime}=\bm{h}=0}\label{eq:pn-Z}\\
p_{n}^{2}(t_{f},\,\bm{\xi}] & =-\frac{\partial^{2}}{\partial h_{n}^{2}}\mathcal{Z}_{\bm{\xi}}[\bm{j},\bm{j}^{\prime},\bm{h})\,\bigg|_{\bm{j}=\bm{j}^{\prime}=\bm{h}=0}\label{eq:pn2-Z}\\
w_{\nu\to n}^{(\eta)}(t_{f},\,\bm{\xi}] & =-\bigl(-i\bigr)^{\eta}\frac{\partial}{\partial h_{n}}\left[\frac{\delta}{\delta j_{n}(t_{f})}-\frac{\delta}{\delta j_{\nu}(t_{f})}\right]^{\eta-1}\mathcal{Z}_{\bm{\xi}}[\bm{j},\bm{j}^{\prime},\bm{h})\,\bigg|_{\bm{j}=\bm{j}^{\prime}=\bm{h}=0}\,,\label{eq:wn-nu-eta-Z}
\end{align}
and in particular, $\operatorname{Tr}\bigl\{\rho(t_{f})\bigr\}=\langle\!\langle\mathcal{Z}_{\bm{\xi}}[0]\rangle\!\rangle$,
where the generating functional $\mathcal{Z}_{\bm{\xi}}[\bm{j},\,\bm{j}^{\prime},\bm{h})$
of central importance is defined by 
\begin{equation}
\mathcal{Z}_{\bm{\xi}}[\bm{j},\,\bm{j}^{\prime},\bm{h})=\int\!d\bm{r}_{i}d\bm{q}_{i}\int\!d\bm{r}_{f}\int_{\bm{r}(0)=\bm{r}_{i}}^{\bm{r}(t_{f})=\bm{r}_{f}}\!\mathcal{D}\bm{r}\int_{\bm{q}(0)=\bm{q}_{i}}^{\bm{q}(t_{f})=\bm{h}}\!\mathcal{D}\bm{q}\;e^{i\,\mathcal{S}_{\bm{\xi}}[\bm{r},\,\bm{q};\,\bm{j},\,\bm{j}^{\prime}]}\varrho(\bm{r}_{i},\,\bm{q}_{i})\,,\label{eq:Z}
\end{equation}
with the action given by 
\begin{equation}
\mathcal{S}_{\bm{\xi}}[\bm{r},\,\bm{q};\,\bm{j},\,\bm{j}^{\prime}]=\mathcal{S}_{\bm{\xi}}[\bm{r},\,\bm{q}]+\int_{0}^{t_{f}}\!ds\;\Bigl\{\bm{j}(s)\cdot\bm{r}(s)+\bm{j}^{\prime}(s)\cdot\bm{q}(s)\Bigr\}\,.\label{eq:s-xi}
\end{equation}
We have introduced the shorthand notation $\mathcal{Z}_{\bm{\xi}}[0]$
for $\mathcal{Z}_{\bm{\xi}}[\bm{j}=0,\bm{j}^{\prime}=0,\bm{h}=0)$.
Two external, nondynamical currents $\bm{j}$, $\bm{j}'$ and one
external parameter $\bm{h}$ are introduced in the context of the
in-in formalism to facilitate the functional method.

As of now, we have kept the discussions very general: (a) We allow
general nonlinearities in the chain. Nonetheless, when nonlinearities
are presented, the generating functional cannot be evaluated exactly
and the perturbative treatment is needed, and (b) We allow each oscillator
linearly coupled their own private linear bath, but do not restrict
the coupling strength. In what follows, we shall carry out the detailed
calculations specified by Eqs. (\ref{eq:P-xi-ave}-\ref{eq:wn-nu-eta-Z})
and address the existence of the NESS across a nonlinear chain made
of two nonlinear oscillators, which are put in contact with their
private thermal baths.

\section{\label{sec:zeroth-order} The zeroth-order of the heat exchange between
an oscillator and their private baths}

With the functional perturbative approach sketched in Sec. \ref{sec:The-general-formalsim},
we are in a position to compute the heat flow across the nonlinear
chain in the weak nonlinearity regime. For the sake of simplicity,
we will demonstrate in detail only for the case of two anharmonic
oscillators. The argument generalizes to $N>2$ anharmonic oscillators.
We outline the steps as follows: 
\begin{enumerate}
\item We show in \ref{sec:Z0-SE} that thanks to the natures of the linear
dynamics at late time, $t_{f}\to\infty$, the stochastic generating
functional $\mathcal{Z}_{0\bm{\xi}}[\bm{j},\,\bm{j}^{\prime},\,\bm{h})$
for the linear chain reduces a simple form {[}see Eqs. (\ref{eq:Z0-xi}-\ref{eq:S0-frak}){]}.
These can be straightforwardly generalized to the case of $N>2$ linear
harmonic oscillators in the chain. One only needs to replace the causal
propagator corresponding to two-oscillator and two-bath configuration
with the one under investigation. 
\item With the help this simplified stochastic generating functional and
the Wick contraction, we compute the perturbative corrections for
the quantities of our interest at the late time to the first order
in the nonlinear coupling constants. 
\item We show that the contribution from the initial state of the chain
at the late time $t_{f}\to\infty$ vanishes due to the damping effect.
Furthermore, the internal energy of each oscillator is stationary
and therefore NESS may be argued, at least up to the first order in
nonlinear coupling constants. 
\end{enumerate}
We have already seen in Eqs. (\ref{eq:pn2-Z}-\ref{eq:wn-nu-eta-Z})
that the generating functional Eq.~\eqref{eq:Z} is of crucial importance
to obtain various thermodynamic quantities. To illustrate the stochastic
functional perturbative approach, we consider three types of nonlinearities:
$\alpha$-FPUT, $\beta$-FPUT and KG types of anharmonicity in the
chain, consisting of only two oscillators. We assume the oscillator-bath
coupling constants are equal $e_{1}=e_{2}=e$ and therefore the retarded
Green's functions of the two private baths take the same form, i.e.,
$G_{R}^{(1)}=G_{R}^{(2)}$. Furthermore, for the sake of simplicity,
we assume uniform mass distribution across the chain, so we set mass
of each oscillator to unity without loss of generality. We can always
recover the mass dependence in the relevant energy fluxes through
dimensional consideration or redefining parameters and displacements
in the Lagrangian.

In the context of the perturbative treatment, one splits the chain
Lagrangian into two parts: $L_{\chi}=L_{0}+L_{\textsc{NL}}$, where
$L_{0}$ describes the linearly coupled oscillators 
\begin{equation}
L_{0}[\bm{\chi}]=\frac{1}{2}\sum_{n=1}^{2}[\dot{\chi}_{n}^{2}-\omega_{0}^{2}\chi_{n}^{2}]-\frac{\lambda_{2}}{2}\bigl(\chi_{1}-\chi_{2}\bigr)^{2}\,,
\end{equation}
and $L_{\text{NL}}$ accounts for the nonlinear self- and intra-oscillator
couplings, 
\begin{equation}
L_{\textsc{NL}}[\bm{\chi}]=-\left[\frac{\lambda_{3}}{3}\bigl(\chi_{1}-\chi_{2}\bigr)^{3}+\frac{\lambda_{4}}{4}\bigl(\chi_{1}-\chi_{2}\bigr)^{4}+\frac{\lambda_{\textsc{kg}}}{4}\bigl(\chi_{1}^{4}+\chi_{2}^{4}\bigr)\right]\,.\label{E:fbrjht}
\end{equation}
In \eqref{E:fbrjht}, the first two terms on the righthand side describe
nonlinear intra-oscillator couplings: the cubic term for the $\alpha$-FPUT
coupling and the quartic term for the $\beta$-FPUT coupling. The
last term is the KG quartic on-site pin potential.

To the first order of the coupling constants, the stochastic propagating
function Eq.~\eqref{eq:Jxi} in this case generically takes the form
\begin{align}
\mathcal{J}_{\bm{\xi}}(\bm{r}_{f},\,\bm{h},\,t_{f};\,\bm{r}_{i},\,\bm{q}_{i};\,\bm{j},\,\bm{j}^{\prime}) & =\int_{\bm{r}(0)=\bm{r}_{i}}^{\bm{r}(t_{f})=\bm{r}_{f}}\!\mathcal{D}\bm{r}\int_{\bm{q}(0)=\bm{q}_{i}}^{\bm{q}(t_{f})=\bm{h}}\!\mathcal{D}\bm{q}\;\exp\Bigl\{ i\,\mathcal{S}_{\bm{\xi}}^{(0)}[\bm{r},\,\bm{q};\,\bm{j},\,\bm{j}^{\prime}]\Bigr\}\nonumber \\
 & \qquad\times\biggl\{1+i\sum_{klmr}\int_{0}^{t_{f}}ds\left[\sigma_{klmr}r_{k}(s)q_{l}(s)q_{m}(s)q_{r}(s)+\mu_{klmr}r_{k}(s)r_{l}(s)r_{m}(s)q_{r}(s)\right]\biggr.\nonumber \\
 & \qquad\qquad\quad\biggl.+\sum_{klm}\int_{0}^{t_{f}}ds\left[\sigma_{klm}r_{k}(s)q_{l}(s)q_{m}(s)+\mu_{klm}r_{k}(s)r_{l}(s)q_{m}(s)\right]\biggr\}\,,\label{eq:calJ-xi}
\end{align}
where the nonlinear coupling constants $\lambda_{\text{KG}}$, $\lambda_{4}$
and $\lambda_{3}$ have been absorbed into $\sigma_{klmr}$, $\mu_{klmr}$
and $\sigma_{klm}$ and $\mu_{klm}$ respectively. Thus the stochastic
generating functional Eq. \eqref{eq:Z} is then given by 
\begin{align}
\mathcal{Z}_{\bm{\xi}}[\bm{j},\,\bm{j}^{\prime},\,\bm{h}) & =\int\!d\bm{r}_{i}d\bm{q}_{i}\int\!d\bm{r}_{f}\;\mathcal{J}_{\bm{\xi}}(\bm{r}_{f},\,\bm{h},\,t_{f};\,\bm{r}_{i},\,\bm{q}_{i};\,\bm{j},\,\bm{j}^{\prime})\,\varrho(\bm{r}_{i},\,\bm{q}_{i})\,.\label{E:yrytd}
\end{align}
Since we will perform the perturbative calculations based on the zeroth-order
results, we will denote $\langle\!\langle\bullet\rangle\!\rangle_{0}$
as the ensemble average with respect to the zeroth-order stochastic
generating functional $Z_{\bm{\xi}}^{(0)}$, that is, the linear chain,
which takes the same form as \eqref{E:yrytd} with $\mathcal{J}_{\bm{\xi}}$
replaced by $\mathcal{J}_{\bm{\xi}}^{(0)}$. The latter does not contain
any nonlinear contribution in \eqref{eq:calJ-xi}. The stochastic
action $\mathcal{S}_{\bm{\xi}}^{(0)}$ for the linear chain is 
\begin{align}
\mathcal{S}_{\bm{\xi}}^{(0)}[\bm{r},\,\bm{q},\,\bm{j},\,\bm{j}^{\prime}] & =\int_{0}^{t_{f}}\!ds\,\biggl\{\dot{\bm{q}}^{T}(s)\cdot\dot{\bm{r}}(s)-\bm{q}^{T}(s)\cdot\bm{\Omega}^{2}\cdot\bm{r}(s)+\bm{q}^{T}(s)\cdot\bm{j}^{\prime}(s)+\bm{r}^{T}\cdot\bm{j}(\bm{s})\biggr.\nonumber \\
 & \qquad\qquad\qquad\qquad+\biggl.\bm{q}^{T}(s)\cdot\bm{\xi}(s)+\int_{0}^{t}\!ds^{\prime}\;\bm{q}^{T}(s)\bm{G}_{R}(s,\,s^{\prime})\bm{r}(s^{\prime})\biggr\}\,,\label{eq:S0SE}
\end{align}
where $\sqrt{\omega_{0}^{2}+\lambda_{2}}$ is understood as the bare
frequency of the linear oscillator, and $\lambda_{2}$ is the mutual
coupling between two oscillators, 
\begin{equation}
\bm{\Omega}^{2}\equiv\begin{bmatrix}\omega_{0}^{2}+\lambda_{2} & -\lambda_{2}\\
-\lambda_{2} & \omega_{0}^{2}+\lambda_{2}
\end{bmatrix}\,.\label{eq:Omega-bare}
\end{equation}
This action essentially describes the dynamics of the coupled, dissipative
oscillators, driven by the quantum fluctuations of their individual
baths~\citep{HHAoP}.

In Eq. \eqref{sec:Z0-SE}, we find in the limit $t_{f}\to\infty$,
the stochastic generating functional $\mathcal{Z}_{\bm{\xi}}^{(0)}$
can be factored into 
\begin{equation}
\mathcal{Z}_{\bm{\xi}}^{(0)}[\bm{j},\,\bm{j}^{\prime},\,\bm{h})=\mathscr{Z}^{(0)}[\varrho_{i},\,\bm{j}]\exp\Bigl\{ i\,\bar{\mathscr{S}}_{\pmb{\xi}}^{(0)}[\bm{j},\,\bm{h}]+i\,\bar{\mathfrak{S}}^{(0)}[\bm{j},\,\bm{j}^{\prime},\,\bm{h})\Bigr\},\label{eq:Z0-xi}
\end{equation}
according to their dependence on the initial state and the external
sources, where 
\begin{align}
\mathscr{Z}^{(0)}[\varrho_{i},\,\bm{j}] & =\int\!d\bm{r}_{i}\;\exp\Bigl\{ i\int_{0}^{t_{f}}\!ds\;\bm{j}^{T}(s)\cdot\bm{D}_{1}(s)\cdot\bm{r}_{i}\Bigr\}\,\varrho(\,\bm{r}_{i},\,\int_{0}^{t_{f}}\!ds\;\bm{D}_{2}(s)\cdot\bm{j}(s)\,)\,,\label{eq:Z0-scr}\\
\mathscr{\bar{S}}_{\bm{\xi}}^{(0)}[\bm{j},\,\bm{h}) & =\bm{h}^{T}\cdot\int_{0}^{t_{f}}\!\!ds\;\;\dot{\bm{D}}_{2}(t_{f}-s)\cdot\bm{\xi}(s)+\int_{0}^{t_{f}}\!ds\int_{0}^{t_{f}}\!ds^{\prime}\;\bm{j}^{T}(s)\cdot\bm{\mathfrak{D}}(s-s^{\prime})\cdot\bm{\xi}(s^{\prime})\,,\label{eq:S0cal-xi}\\
\bar{\mathfrak{S}}^{(0)}[\bm{j},\,\bm{j}^{\prime},\,\bm{h}] & =\int_{0}^{t_{f}}\!ds\int_{0}^{t_{f}}\!ds^{\prime}\;\bm{j}^{T}(s)\cdot\bm{\mathfrak{D}}(s-s^{\prime})\cdot\bm{j}^{\prime}(s^{\prime})+\bm{h}^{T}\cdot\int_{0}^{t_{f}}\!ds\;\dot{\bm{D}}_{2}(t_{f}-s)\cdot\bm{j}^{\prime}(s)\,.\label{eq:S0-frak}
\end{align}
The matrices $\bm{D}_{i}(s)$ are the homogenous solutions to Eq.~\eqref{eq:r-diff-gamma},
with the initial conditions $\bm{D}_{1}(0)=\mathbb{I}$, $\dot{\bm{D}}_{1}(0)=0$,
$\bm{D}_{2}(0)=0$ and $\dot{\bm{D}}_{2}(0)=\mathbb{I}$, and $\bm{\mathfrak{D}}(s)=\Theta(s)\bm{D}_{2}(s)$
is the causal (retarded) Green's function of the chain. In the case
of two oscillators, the Laplace transforms of $\bm{D}_{i}(s)$ are
given by Eqs.~\eqref{eq:D1-s-def} and \eqref{eq:D2-s-def}.

The first factor on the right hand side of Eq. \eqref{eq:Z0-xi} contains
the contributions due to the initial state of the coupled oscillators.
One would naturally ask whether the NESS thermodynamic quantities
and functional derivatives generated by Eq. (\ref{eq:Z0-xi}) depends
on the initial state? For linear chain or the zeroth-order of nonlinear
chain, all the heat fluxes are related to functional derivatives $\delta/\delta\bm{j}(t_{f})$
{[}see Eqs. (\ref{eq:pn-Z}-\ref{eq:wn-nu-eta-Z}){]}. According to
Theorem \ref{thm:dNZscrdNj} in \ref{sec:Functional derivatives},
in the late time limit, the NESS heat fluxes for the linear chain
or at the zeroth-order for nonlinear chain do not depend on the initial
state of the chain. For an anharmonic chain with the $\alpha$, $\beta$-FPUT
and KG types of nonlinearities, we will show in Sec. \ref{sec:The-first-order-corrections},
by the virtues of theorems in  \ref{sec:Functional derivatives},
at least to the first-order of the nonlinear coupling, the NESS exists,
and is independent of the initial states.

We conjecture that this conclusion holds to arbitrary order of the
nonlinear coupling constants for $\alpha$, $\beta$-FPUT and KG types
of nonlinearities, as long as the coupling is weak such that our perturbation
calculation is still valid. The reason is the following: Physically,
it is plausible that because of the damping, the initial information
about the chain will finally decay for an arbitrary order of the nonlinear
coupling constants. Mathematically, this implies that contribution
of the steady-state heat flow from the functional derivation of the
partition function $\mathscr{Z}^{(0)}[\varrho_{i},\,\bm{j}]$ {[}see
Eq. \eqref{eq:Z0-xi}{]} will eventually vanish, and requires that
the contribution from each higher order is smaller than that from
previous order to ensure the validity of the perturbative treatment.
The observation that the initial information is irrelevant in the
NESS effectively amounts to setting $\mathscr{Z}^{(0)}[\varrho_{i},\,\bm{j}=0]=1$
from the very beginning of the computation of the steady-state heat
exchange. That is to say, we can work with the new stochastic generating
functional $\check{\mathcal{Z}}_{\bm{\xi}}^{(0)}$, 
\begin{equation}
\check{\mathcal{Z}}_{\bm{\xi}}^{(0)}[\bm{j},\,\bm{j}^{\prime},\,\bm{h})=\exp\Bigl\{ i\,\mathscr{\bar{S}}_{\bm{\xi}}^{(0)}[\bm{j},\,\bm{h})+i\,\bar{\mathfrak{S}}^{(0)}[\bm{j},\,\bm{j}^{\prime},\,\bm{h})\Bigr\}\,,
\end{equation}
where the check mark indicates the initial state contribution is ignored.
It is then straightforward to compute the high order correction of
the steady-state heat flows across the chain. We speculate the heat
flow should satisfy their respective heat balance order by order for
a chain with either one of $\alpha$, $\beta$-FPUT and KG types of
nonlinearities.

Next we first apply the approach we have developed so far to the zeroth-order
contributions in the quantities defined in Eqs.~\eqref{eq:pn-Z}--\eqref{eq:wn-nu-eta-Z},
which corresponds to the heat exchange across the linear chain. In
this case the generating functional only contains one single term:
${\displaystyle \exp\bigl\{ i\,\bm{h}^{T}\cdot\int_{0}^{t_{f}}\!ds\;\dot{\bm{D}}_{2}(t_{f}-s)\cdot\bm{\xi}(s)\bigr\}}$.
Thus we find that for the $n^{\text{th}}$ oscillator, its first moment
of momentum at the zeroth order is 
\begin{align}
p_{n}^{(0)}(t_{f},\,\bm{\xi}] & =-i\,\frac{\partial}{\partial h_{n}}\mathcal{Z}_{\bm{\xi}}^{(0)}[\bm{j},\,\bm{j}^{\prime},\,\bm{h})\,\bigg|_{\bm{j}=\bm{j}^{\prime}=\bm{h}=0}=\int_{0}^{t_{f}}\!ds\;\Bigl[\dot{\bm{D}}_{2}(t_{f}-s)\cdot\bm{\xi}(s)\Bigr]_{n}\,,\label{eq:pn-zeroth-SPF}
\end{align}
so that the energy flow pumped in by its private bath is 
\begin{align}
P_{\xi_{n}}^{(0)}(t_{f}) & =\sum_{m=1}^{2}\int_{0}^{t_{f}}\!ds\;\Bigl[\dot{\bm{D}}_{2}(t_{f}-s)\Bigr]_{nm}\Bigl[\bm{G}_{H}(t_{f}-s)\Bigr]_{mn}\nonumber \\
 & =\int_{0}^{t_{f}}\!ds\;\Bigl[\dot{\bm{D}}_{2}(s)\cdot\bm{G}_{H}(s)\Bigr]_{nn}=\int_{-\infty}^{t_{f}}\!ds\;\Bigl[\dot{\bm{\mathfrak{D}}}_{2}(s)\cdot\bm{G}_{H}(s)\Bigr]_{nn}\,,\label{eq:Pxi0}
\end{align}
according to \eqref{eq:P-xi-ave} and \eqref{eq:pn-Z}. Similarly,
one can show the second moment of the momentum is given by 
\begin{align}
p_{n}^{2(0)}(t_{f},\,\bm{\xi}] & =-\frac{\partial^{2}Z_{0\bm{\xi}}[\bm{j}=0,\,\bm{j}^{\prime}=0,\,\bm{h})}{\partial h_{n}^{2}}\bigg|_{\bm{h}=0}\nonumber \\
 & =\int_{0}^{t_{f}}ds\int_{0}^{t_{f}}ds^{\prime}[\dot{\bm{D}}_{2}(s)\bm{G}_{H}(s^{\prime}-s)\dot{\bm{D}}_{2}(s^{\prime})]_{nn}\nonumber \\
 & =\int_{-\infty}^{t_{f}}\!ds\int_{-\infty}^{t_{f}}\!ds^{\prime}\;\Bigl[\dot{\bm{\mathfrak{D}}}(s)\cdot\bm{G}_{H}(s^{\prime}-s)\cdot\dot{\bm{\mathfrak{D}}}_{2}(s^{\prime})\Bigr]_{nn}\label{eq:pn2-zeroth}
\end{align}
The energy flow dissipated back into the thermal bath is 
\begin{equation}
P_{\gamma_{n}}^{(0)}(t_{f})\,=-2\gamma\int_{-\infty}^{t_{f}}\!ds\int_{-\infty}^{t_{f}}\!ds^{\prime}\;\Bigl[\dot{\bm{\mathfrak{D}}}(s)\cdot\bm{G}_{H}(s^{\prime}-s)\cdot\dot{\bm{\mathfrak{D}}}_{2}(s^{\prime})\Bigr]_{nn}.\label{eq:P-gamma-zeroth}
\end{equation}
For the energy flow between two oscillators at the zeroth order, we
only need to consider the $w_{\nu\to n}^{(0,\,2)}(t_{f},\,\bm{\xi}]$
term, 
\begin{equation}
w_{\nu\to n}^{(0,\,2)}(t_{f},\,\bm{\xi}]=\frac{\partial}{\partial h_{n}}\Bigl[\frac{\delta}{\delta j_{n}(t_{f})}-\frac{\delta}{\delta j_{\nu}(t_{f})}\Bigr]\,\mathcal{Z}_{\bm{\xi}}^{(0)}[\bm{j},\,\bm{j}^{\prime},\,\bm{h})\,\bigg|_{\bm{j}=\bm{j}^{\prime}=\bm{h}=0}\,.
\end{equation}
Here, we note that notational difference: From Eq. (\ref{eq:wn-nu-eta}),
the first superscript $a$ in $w_{\nu\to n}^{(a,\,\eta)}(t_{f},\,\bm{\xi}]$
denotes the perturbative order of the nonlinear coupling constants,
$\lambda_{3}$,$\lambda_{4}$, $\lambda_{\text{KG}}$, as one will
see in next section. From Eqs.~(\ref{eq:Z0-xi}-\ref{eq:S0-frak}),
we observe that only $\mathscr{Z}^{(0)}[\varrho_{i},\,\bm{j}]$ and
$\exp\bigl\{ i\,\bar{\mathscr{S}}_{\bm{\xi}}^{(0)}[\bm{j},\,\bm{h})\bigr\}$
in the generating functional contribute. Moreover, from the discussions
in Theorem~\ref{thm:dNZscrdNj}, in the late time limit, the only
non-vanishing results in $w_{\nu\to n}^{(0,\,2)}$ come from acting
the functional derivatives $\delta/\delta j_{n}(t_{f})$, $\delta/\delta j_{\nu}(t_{f})$
of $\bar{\mathscr{S}}_{0\bm{\xi}}[\bm{j},\,\bm{h})$. Therefore, we
obtain 
\begin{align}
w_{\nu\to n}^{(0,\,2)}(t_{f},\,\bm{\xi}] & =-\int_{0}^{t_{f}}ds[\dot{\bm{D}}_{2}(t_{f}-s)\bm{\xi}(s)]_{n}\int_{0}^{t_{f}}ds^{\prime}\left([\bm{\mathfrak{D}}(t_{f}-s^{\prime})\bm{\xi}(s^{\prime})]_{n}-[\bm{\mathfrak{D}}(t_{f}-s^{\prime})\bm{\xi}(s^{\prime})]_{\nu}\right)\nonumber \\
 & =-\sum_{kl}\int_{0}^{t_{f}}ds\int_{0}^{t_{f}}ds^{\prime}[\dot{\bm{D}}_{2}(t_{f}-s)]_{nk}[\bm{\xi}(s)]_{k}\left([\bm{\mathfrak{D}}(t_{f}-s^{\prime})]_{nl}[\bm{\xi}(s^{\prime})]_{l}-[\bm{\mathfrak{D}}(t_{f}-s^{\prime})]_{\nu l}\bm{\xi}(s^{\prime})]_{l}\right)
\end{align}
Using Eq. \eqref{eq:xik-xil-ave}, we find 
\begin{align}
w_{\nu\to n}^{(0,\,2)}(t_{f}) & =-\int_{0}^{t_{f}}ds\int_{0}^{t_{f}}ds^{\prime}\left([\dot{\bm{D}}_{2}(s)\bm{G}_{H}(s^{\prime}-s)\bm{\mathfrak{D}}(s^{\prime})]_{nn}-[\dot{\bm{D}}_{2}(s)\bm{G}_{H}(s^{\prime}-s)\bm{\mathfrak{D}}(s^{\prime})]_{n\nu}\right)\nonumber \\
 & =-\int_{-\infty}^{t_{f}}ds\int_{-\infty}^{t_{f}}ds^{\prime}\left([\dot{\bm{\mathfrak{D}}}(s)\bm{G}_{H}(s^{\prime}-s)\bm{\mathfrak{D}}(s^{\prime})]_{nn}-[\dot{\bm{\mathfrak{D}}}(s)\bm{G}_{H}(s^{\prime}-s)\bm{\mathfrak{D}}(s^{\prime})]_{n\nu}\right)\label{eq:W-n2-zeroth}
\end{align}
It can be straightforwardly shown that for $t_{f}\gg\gamma^{-1}$,
Eqs.~(\ref{eq:Pxi0}, \ref{eq:pn2-zeroth} \ref{eq:W-n2-zeroth})
approach time-independent constants because of damping. The diagrammatic
representation of Eqs.~(\ref{eq:Pxi0}, \ref{eq:pn2-zeroth} \ref{eq:W-n2-zeroth})
is given in Fig. \ref{fig:zeroth-order-time}. Further illustrations
of these diagrams will be discussed in Sec. \ref{sec:Diagrammatic-representations}

\begin{figure}
\begin{picture}(450,100) 
\begin{centering}
\put(0,0){\includegraphics[scale=0.75]{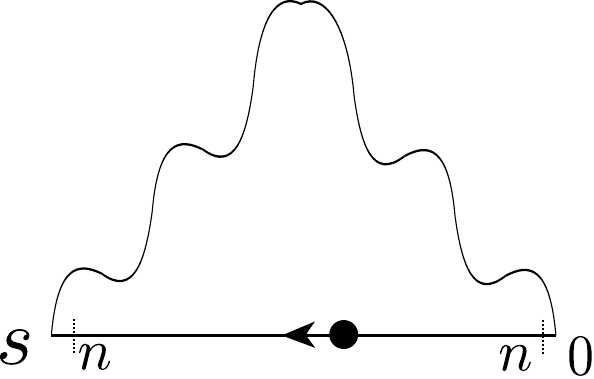}}\put(150,0){\includegraphics[scale=0.55]{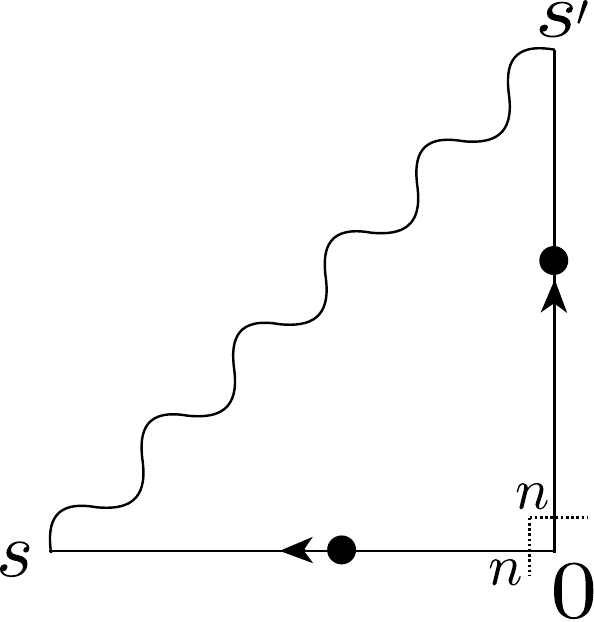}}\put(270,0){\includegraphics[scale=0.55]{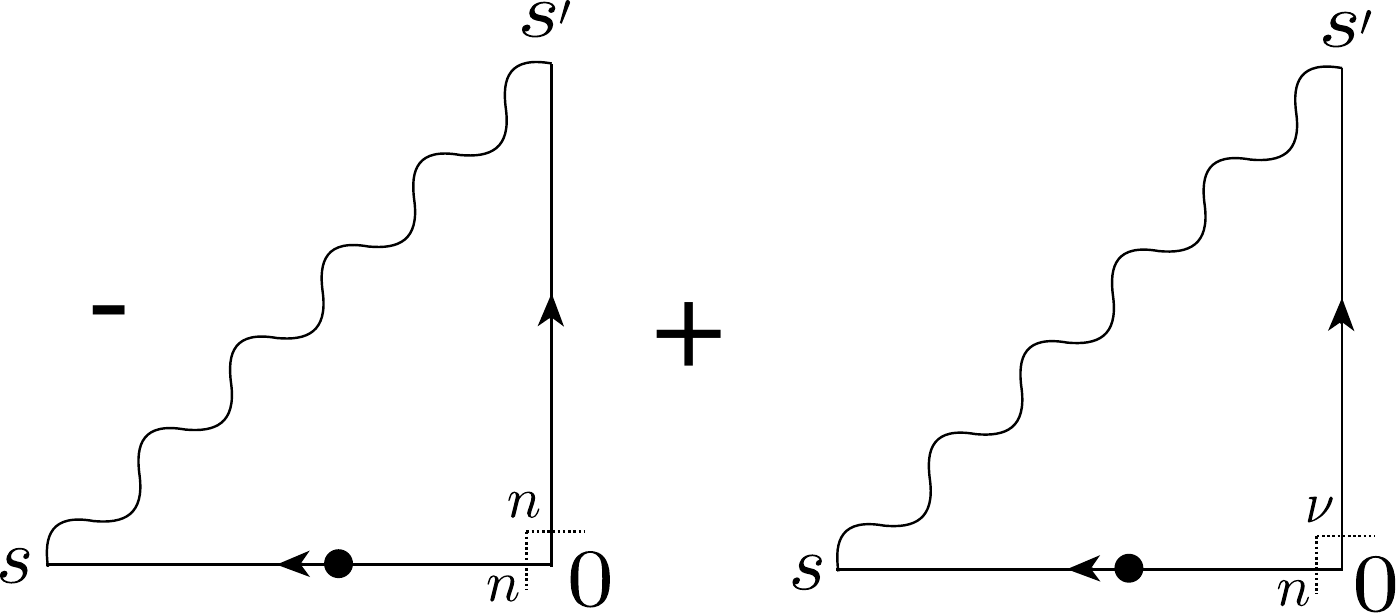}} 
\par\end{centering}
\put(50,30){\footnotesize{}(a)} \put(200,30){\footnotesize{}(b)}
\put(320,30){\footnotesize{}(c)} \put(440,30){\footnotesize{}(c)}
\end{picture}

\caption{\label{fig:zeroth-order-time}Time-domain diagrammatic representation
of the zeroth-order energy flux including (a) $P_{\xi_{n}}$ (b) $p_{n}^{2(0)}$
(c) $w_{\nu\to n}^{(0,\,2)}$. As we show Sec. \ref{subsec:Properties-of-causal-propagator},
the first digram in (c) actually vanishes in the late time limit.}
\end{figure}

\section{\label{sec:The-first-order-corrections}The first-order corrections}

From \eqref{eq:calJ-xi}, we observe that for the first-order corrections,
the three types of nonlinearities contribute independently to the
first-order corrections, so in what follows, we will first discuss
the contributions from the $\beta$-FPUT and KG nonlinearities, and
then the $\alpha$-FPUT nonlinearity. In the end, we put them together
to form the overall first-order corrections, and examine the effects
of these nonlinearities on the heat transport in NESS.

\subsection{Stochastic generating functional $\mathcal{Z}_{\bm{\xi}}^{(1)}(0)$}

We start with the the first-order correction to the stochastic generating
functional. From Eq. (\ref{eq:calJ-xi}), we can easily show that
\begin{align}
\mathcal{Z}_{\bm{\xi}}^{(1)}[0] & =i\int_{0}^{t_{f}}\!ds\;\sum_{klmr}\Bigl[\sigma_{klmr}\,\langle r_{k}(s)q_{l}(s)q_{m}(s)q_{r}(s)\rangle_{0}+\mu_{klmr}\langle r_{k}(s)r_{l}(s)r_{m}(s)q_{r}(s)\rangle_{0}\Bigr]\,\bigg|_{\bm{j}=\bm{j}'=\bm{h}=0}\nonumber \\
 & \qquad\qquad+i\int_{0}^{t_{f}}\!ds\;\sum_{klm}\Bigl[\sigma_{klm}\langle r_{k}(s)q_{l}(s)q_{m}(s)\rangle_{0}+\mu_{klm}\langle r_{k}(s)r_{l}(s)q_{m}(s)\rangle_{0}\Bigr]\,\bigg|_{\bm{j}=\bm{j}'=\bm{h}=0}\,,
\end{align}
where 
\begin{align}
\langle r_{k}(s)q_{l}(s)q_{m}(s)\rangle_{0} & =i\,\frac{\delta^{3}\mathcal{Z}_{\bm{\xi}}^{(0)}[\bm{j},\,\bm{j}^{\prime},\,\bm{h})}{\delta j_{k}(s)\delta j_{l}^{\prime}(s)\delta j_{m}^{\prime}(s)}\,, & \langle r_{k}(s)q_{l}(s)q_{m}(s)q_{r}(s)\rangle_{0}\rangle_{0} & =\frac{\delta^{4}\mathcal{Z}_{\bm{\xi}}^{(0)}[\bm{j},\,\bm{j}^{\prime},\,\bm{h})}{\delta j_{k}(s)\delta j_{l}^{\prime}(s)\delta j_{m}^{\prime}(s)\delta j_{r}^{\prime}(s)}\,,\\
\langle r_{k}(s)r_{l}(s)q_{m}(s)\rangle_{0} & =i\,\frac{\delta^{3}\mathcal{Z}_{\bm{\xi}}^{(0)}[\bm{j},\,\bm{j}^{\prime},\,\bm{h})}{\delta j_{k}(s)\delta j_{l}(s)\delta j_{m}^{\prime}(s)}\,, & \langle r_{k}(s)r_{l}(s)r_{m}(s)q_{r}(s)\rangle_{0} & =\frac{\delta^{4}\mathcal{Z}_{\bm{\xi}}^{(0)}[\bm{j},\,\bm{j}^{\prime},\,\bm{h})}{\delta j_{k}(s)\delta j_{l}(s)\delta j_{m}(s)\delta j_{r}^{\prime}(s)}\,.
\end{align}
From Theorem \ref{thm:same-s} in  \ref{sec:Functional derivatives},
the above four ensemble averages all vanish, so the generating functional
$\mathcal{Z}_{\bm{\xi}}^{(1)}[0]$ does not have the first-order correction,
that is, $\mathcal{Z}_{\bm{\xi}}^{(1)}[0]=0$.

\subsection{Heat exchanges $P_{\xi_{n}}^{(1)}(t_{f})$, $P_{\gamma_{n}}^{(1)}(t_{f})$
between the oscillator and its private bath}

Here we will compute the first-order correction to the rate of the
energy exchange between the bath to the oscillator it belongs to.

\subsubsection{\label{subsec:Pxi-gamma-betaKG}KG and $\beta$-FPUT nonlinearities}

In \eqref{eq:P-xi-ave} and \eqref{eq:P-gamma-ave}, we have shown
that the first-order corrections of the rates of the energy exchange
$P_{\xi_{n}}^{(1)}(t_{f})$, $P_{\gamma_{n}}^{(1)}(t_{f})$ between
oscillator and its private bath, are proportional to $\langle\!\langle\xi_{n}(t_{f})p_{n}^{(1)}(t_{f},\,\bm{\xi}]\rangle\!\rangle$
and $\langle\!\langle p_{n}^{2(1)}(t_{f},\,\bm{\xi}]\rangle\!\rangle$,
respectively. From Eqs.~\eqref{eq:pn-Z} and \eqref{eq:pn2-Z}, the
first-order corrections of $p_{n}(t_{f},\,\bm{\xi}]$ and $p_{n}^{2(1)}(t_{f},\,\bm{\xi}]$
are found to be 
\begin{equation}
p_{n}^{(1)}(t_{f},\,\bm{\xi}]=\int_{0}^{t_{f}}ds\sum_{klmr}\left[\sigma_{klmr}\partial_{n}\langle r_{k}(s)q_{l}(s)q_{m}(s)q_{r}(s)\rangle_{0}+\mu_{klmr}\partial_{n}\langle r_{k}(s)r_{l}(s)r_{m}(s)q_{r}(s)\rangle_{0}\right]\,|_{\bm{j}=\bm{j}'=\bm{h}=0},
\end{equation}
\begin{equation}
p_{n}^{2(1)}(t_{f},\,\bm{\xi}]=-i\int_{0}^{t_{f}}ds\sum_{klmr}\left[\sigma_{klmr}\partial_{n}^{2}\langle r_{k}(s)q_{l}(s)q_{m}(s)q_{r}(s)\rangle_{0}+\mu_{klmr}\partial_{n}^{2}\langle r_{k}(s)r_{l}(s)r_{m}(s)q_{r}(s)\rangle_{0}\right]\,|_{\bm{j}=\bm{j}'=\bm{h}=0},
\end{equation}
where $\partial_{n}^{k}\equiv\partial^{k}/\partial h_{n}^{k}$ and
\begin{eqnarray}
\partial_{n}\langle r_{k}(s)q_{l}(s)q_{m}(s)q_{r}(s)\rangle_{0}|_{\bm{j}=\bm{j}^{\prime}=\bm{h}=0} & = & \frac{\partial\delta^{4}\mathcal{Z}_{0\bm{\xi}}[\bm{j},\,\bm{j}^{\prime},\,\bm{h})}{\partial h_{n}\delta j_{k}(s)\delta j_{l}^{\prime}(s)\delta j_{m}^{\prime}(s)\delta j_{r}^{\prime}(s)}\bigg|_{\bm{j}=\bm{j}^{\prime}=\bm{h}=0}\label{eq:dh-1r-3q}\\
\partial_{n}\langle r_{k}(s)r_{l}(s)r_{m}(s)q_{r}(s)\rangle_{0}|_{\bm{j}=\bm{j}^{\prime}=\bm{h}=0} & = & \frac{\partial\delta^{4}\mathcal{Z}_{0\bm{\xi}}[\bm{j},\,\bm{j}^{\prime},\,\bm{h})}{\partial h_{n}\delta j_{k}(s)\delta j_{l}(s)\delta j_{m}(s)\delta j_{r}^{\prime}(s)}\bigg|_{\bm{j}=\bm{j}^{\prime}=\bm{h}=0}\label{eq:dh-3r-1q}\\
\partial_{n}^{2}\langle r_{k}(s)q_{l}(s)q_{m}(s)q_{r}(s)\rangle_{0}|_{\bm{j}=\bm{j}^{\prime}=\bm{h}=0} & = & \frac{\partial\delta^{4}\mathcal{Z}_{0\bm{\xi}}[\bm{j},\,\bm{j}^{\prime},\,\bm{h})}{\partial h_{n}^{2}\delta j_{k}(s)\delta j_{l}^{\prime}(s)\delta j_{m}^{\prime}(s)\delta j_{r}^{\prime}(s)}\bigg|_{\bm{j}=\bm{j}^{\prime}=\bm{h}=0}\label{eq:d2h-1r-3q}\\
\partial_{n}^{2}\langle r_{k}(s)r_{l}(s)r_{m}(s)q_{r}(s)\rangle_{0}|_{\bm{j}=\bm{j}^{\prime}=\bm{h}=0} & = & \frac{\partial^{2}\delta^{4}\mathcal{Z}_{0\bm{\xi}}[\bm{j},\,\bm{j}^{\prime},\,\bm{h})}{\partial^{2}h_{n}\delta j_{k}(s)\delta j_{l}(s)\delta j_{m}(s)\delta j_{r}^{\prime}(s)}\bigg|_{\bm{j}=\bm{j}^{\prime}=\bm{h}=0}\label{eq:d2h-3r-1q}
\end{eqnarray}
Not all of the terms in the integrand will contribute. From Theorem
\ref{thm:same-s}, one can easily see that Eqs. (\ref{eq:dh-1r-3q},
\ref{eq:d2h-1r-3q}) vanish. Hence, in order to find $\langle\!\langle\xi_{n}(t_{f})p_{n}^{(1)}(t_{f},\,\bm{\xi}]\rangle\!\rangle$
and $\langle\!\langle p_{n}^{2(1)}(t_{f},\,\bm{\xi}]\rangle\!\rangle$,
we evaluate the following expressions 
\begin{align}
\Gamma_{klm}^{nr}(t_{f}) & \equiv\lim_{t_{f}\to\infty}\int_{0}^{t_{f}}\!ds\;\langle\!\langle\xi_{n}(t_{f})\,\partial_{n}\langle r_{k}(s)r_{l}(s)r_{m}(s)q_{r}(s)\rangle_{0}\rangle\!\rangle\,|_{\bm{h}=0}\,,\label{eq:Gamma-nr-klm-def}\\
\tilde{\Gamma}_{klm}^{nr}(t_{f}) & \equiv-i\lim_{t_{f}\to\infty}\int_{0}^{t_{f}}\!ds\;\langle\!\langle\partial_{n}^{2}\langle r_{k}(s)r_{l}(s)r_{m}(s)q_{r}(s)\rangle_{0}\rangle\!\rangle\,|_{\bm{h}=0}\,.\label{eq:Gamma-tild-nr-klm-def}
\end{align}
Here to illustrate the implementation of the functional method, we
will present the calculations in greater details.

Let us compute $\Gamma_{klm}^{nr}(t_{f})$ first. By Eqs.~\eqref{eq:Z0-xi}--\eqref{eq:S0-frak},
it is straightforward to find 
\begin{equation}
\frac{\partial\delta\mathcal{Z}_{\bm{\xi}}^{(0)}[\bm{j},\,\bm{j}^{\prime},\,\bm{h})}{\partial h_{n}\delta j_{r}^{\prime}(s)}\bigg|_{\bm{j}^{\prime}=\bm{h}=0}=i\,\Bigl[\dot{\bm{D}}_{2}(t_{f}-s)\Bigr]_{nr}\mathscr{Z}^{(0)}[\varrho_{i},\,\bm{j}]\exp\Bigl\{ i\,\bar{\mathscr{S}}_{\bm{\xi}}^{(0)}[\bm{j},\,\bm{h}=0)\Bigr\}\,.
\end{equation}
Thus Eq. (\ref{eq:Gamma-nr-klm-def}) becomes 
\begin{equation}
\Gamma_{klm}^{nr}(t_{f})=i\int_{0}^{t_{f}}\!ds\;\langle\!\langle\Bigl[\dot{\bm{D}}_{2}(t_{f}-s)\Bigr]_{nr}\,\xi_{n}(t_{f})\,\frac{\delta^{3}\mathscr{Z}^{(0)}[\varrho_{i},\,\bm{j}]\exp\Bigl\{ i\,\bar{\mathscr{S}}_{\bm{\xi}}^{(0)}[\bm{j},\,\bm{h})\Bigr\}}{\delta j_{k}(s)\delta j_{l}(s)\delta j_{m}(s)}\rangle\!\rangle\,\bigg|_{\bm{j}=\bm{h}=0}
\end{equation}
Theorem \ref{thm:decaying} in \ref{sec:Functional derivatives} shows
that the only nonzero contribution in the late time limit $t_{f}\to\infty$
comes from the case when all three functional derivatives acts on
$\exp\bigl\{ i\,\bar{\mathscr{S}}_{\bm{\xi}}^{(0)}[\bm{j},\,\bm{h})\bigr\}$
only. Therefore, we arrive at 
\begin{align}
\Gamma_{klm}^{nr}(t_{f})=\sum_{pqj}\dotsint_{0}^{t_{f}}\!dsds_{1}ds_{2}ds_{3}\; & \langle\!\langle[\dot{\bm{D}}_{2}(t_{f}-s)]_{nr}\xi_{n}(t_{f})[\bm{\mathfrak{D}}(s-s_{1})]_{kp}\xi_{p}(s_{1})[\bm{\mathfrak{D}}(s-s_{2})]_{lq}\xi_{q}(s_{2})\nonumber \\
 & \qquad\times[\bm{\mathfrak{D}}(s-s_{3})]_{mj}\xi_{j}(s_{3})\rangle\!\rangle dsds_{1}ds_{2}ds_{3}.\label{eq:Gamma-nr-klm}
\end{align}
The Wick's theorem gives 
\begin{align}
\langle\!\langle\xi_{n}(t_{f})\xi_{p}(s_{1})\xi_{q}(s_{2})\xi_{j}(s_{3})\rangle\!\rangle & =[\bm{G}_{H}(t_{f}-s_{1})]_{np}[\bm{G}_{H}(s_{2}-s_{3})]_{qj}+[\bm{G}_{H}(t_{f}-s_{2})]_{nq}[\bm{G}_{H}(s_{1}-s_{3})]_{pj}\nonumber \\
 & +[\bm{G}_{H}(t_{f}-s_{3})]_{nj}[\bm{G}_{H}(s_{1}-s_{2})]_{pq}
\end{align}
and then \eqref{eq:Gamma-nr-klm} becomes 
\begin{align}
\Gamma_{klm}^{nr}(t_{f}) & =\dotsint_{0}^{t_{f}}dsds_{1}ds_{2}ds_{3}[\dot{\bm{D}}_{2}(s)]_{nr}\left\{ [\bm{G}_{H}(s_{1})\bm{\mathfrak{D}}(s_{1}-s)]_{nk}[\bm{\mathfrak{D}}(s_{2}-s)\bm{G}_{H}(s_{3}-s_{2})\bm{\mathfrak{D}}(s_{3}-s)]_{lm}\right.\nonumber \\
 & +[\bm{G}_{H}(s_{2})\bm{\mathfrak{D}}(s_{2}-s)]_{nl}[\bm{\mathfrak{D}}(s_{1}-s)\bm{G}_{H}(s_{3}-s_{2})\bm{\mathfrak{D}}(s_{3}-s)]_{km}\nonumber \\
 & \left.+[\bm{G}_{H}(s_{3})\bm{\mathfrak{D}}(s_{3}-s)]_{nm}[\bm{\mathfrak{D}}(s_{1}-s)\bm{G}_{H}(s_{2}-s_{1})\bm{\mathfrak{D}}(s_{2}-s)]_{kl}\right\} \label{eq:Gamma-nr-klm2}
\end{align}
after we make the changes of variables $t_{f}-s\to s$, $t_{f}-s^{\prime}\to s^{\prime}$,
and $t_{f}-s_{i}\to s_{i}$. We use the short hand notation $\left[\cdots\right]$
to represent each of the three terms in the integrand in Eq. \eqref{eq:Gamma-nr-klm2}.
Because of the property $\bm{\mathfrak{D}}(s_{i}-s)=0$ for $s_{i}\le s$,
we may write 
\begin{equation}
\dotsint_{0}^{t_{f}}\left[\cdots\right]dsds_{1}ds_{2}ds_{3}=\int_{0}^{t_{f}}ds\int_{-\infty}^{t_{f}}ds_{1}\int_{0}^{t_{f}}ds_{2}\int_{0}^{t_{f}}ds_{3}\left[\cdots\right]=\int_{0}^{t_{f}}ds\int_{-\infty}^{t_{f}}\prod_{i=1}^{3}ds_{i}\left[\cdots\right].\label{eq:integration-limit}
\end{equation}
Note that the lower limit of the integration over $s_{i},\,(i=1,\,2,\,3)$
in Eq. \eqref{eq:Gamma-nr-klm2} is zero while in Eq. \eqref{eq:integration-limit}
is $-\infty$. Finally, one can change the lower limit for the $s$
integral in Eq. \eqref{eq:Gamma-nr-klm2} from $0$ to $-\infty$
by replacing $[\dot{\bm{D}}_{2}(s)]_{nr}$ with $[\bm{\mathfrak{D}}(s)]_{nr}$.
Hence it leads to, 
\begin{align}
\Gamma_{klm}^{nr}(t_{f}) & =\dotsint_{-\infty}^{t_{f}}dsds_{1}ds_{2}ds_{3}[\dot{\bm{\mathfrak{D}}}(s)]_{nr}\left\{ [\bm{G}_{H}(s_{1})\bm{\mathfrak{D}}(s_{1}-s)]_{nk}[\bm{\mathfrak{D}}(s_{2}-s)\bm{G}_{H}(s_{3}-s_{2})\bm{\mathfrak{D}}(s_{3}-s)]_{lm}\right.\nonumber \\
 & +[\bm{G}_{H}(s_{2})\bm{\mathfrak{D}}(s_{2}-s)]_{nl}[\bm{\mathfrak{D}}(s_{1}-s)\bm{G}_{H}(s_{3}-s_{2})\bm{\mathfrak{D}}(s_{3}-s)]_{km}\nonumber \\
 & \left.+[\bm{G}_{H}(s_{3})\bm{\mathfrak{D}}(s_{3}-s)]_{nm}[\bm{\mathfrak{D}}(s_{1}-s)\bm{G}_{H}(s_{2}-s_{1})\bm{\mathfrak{D}}(s_{2}-s)]_{kl}\right\} \label{eq:Gamma-nr-klm-final}
\end{align}

Next we compute $\tilde{\Gamma}_{klm}^{nr}(t_{f})$. Since 
\begin{equation}
\frac{\partial^{2}\delta\mathcal{Z}_{\bm{\xi}}^{(0)}[\bm{j},\,\bm{j}^{\prime},\,\bm{h})}{\partial h_{n}^{2}\delta j_{r}^{\prime}(s)}\bigg|_{\bm{j}^{\prime}=\bm{h}=0}=-2\sum_{p}\int_{0}^{t_{f}}\!ds'\;\Bigl[\dot{\bm{D}}_{2}(t_{f}-s)\Bigr]_{nr}\Bigl[\dot{\bm{D}}_{2}(t_{f}-s^{\prime})\Bigr]_{np}\,\xi_{p}(s^{\prime})\mathscr{Z}^{(0)}[\varrho_{i},\,\bm{j}]\exp\Bigl\{ i\,\bar{\mathscr{S}}_{\bm{\xi}}^{(0)}[\bm{j},\,\bm{h}=0)\Bigr\}
\end{equation}
where the factor $2$ comes from the fact the derivatives $\partial h_{n}$
can be applied to either $\bar{\mathscr{S}}_{\bm{\xi}}^{(0)}[\bm{j},\,\bm{h})$
or $\mathscr{Z}^{(0)}[\varrho_{i},\,\bm{j}]$. Then we find 
\begin{equation}
\tilde{\Gamma}_{klm}^{nr}(t_{f})=2i\sum_{p}\int_{0}^{t_{f}}\!ds\int_{0}^{t_{f}}\!ds'\;\langle\!\langle\Bigl[\dot{\bm{D}}_{2}(t_{f}-s)\Bigr]_{nr}\Bigl[\dot{\bm{D}}_{2}(t_{f}-s^{\prime})\Bigr]_{np}\,\xi_{p}(s^{\prime})\,\frac{\delta^{3}\mathscr{Z}^{(0)}[\varrho_{i},\,\bm{j}]\exp\Bigl\{ i\,\bar{\mathscr{S}}_{\bm{\xi}}^{(0)}[\bm{j},\,\bm{h}=0)\Bigr\}}{\delta j_{k}(s)\delta j_{l}(s)\delta j_{m}(s)}\,\bigg|_{\bm{j}=0}\rangle\!\rangle
\end{equation}
Similar to the evaluation of $\Gamma_{klm}^{nr}(t_{f})$, Theorem
\ref{thm:decaying-2D-3j} in  \ref{sec:Functional derivatives} shows
that the only nonzero contribution in the late time limit $t_{f}\to\infty$
comes from the case when all three functional derivatives act only
on $\exp\bigl\{ i\,\bar{\mathscr{S}}_{\bm{\xi}}^{(0)}[\bm{j},\,\bm{h}=0)\bigr\}$.
Therefore, following the similar steps in Eqs. (\ref{eq:Gamma-nr-klm}-\ref{eq:Gamma-nr-klm-final}),
we find 
\begin{align}
\tilde{\Gamma}_{klm}^{nr}(t_{f})=2\dotsint_{-\infty}^{t_{f}} & \!dsds^{\prime}ds_{1}ds_{2}ds_{3}\;\Bigl[\dot{\bm{\mathfrak{D}}}(s)\Bigr]_{nr}\nonumber \\
 & \qquad\qquad\times\left\{ \Bigl[\dot{\bm{\mathfrak{D}}}(s^{\prime})\bm{G}_{H}(s_{1}-s')\bm{\mathfrak{D}}(s_{1}-s)\Bigr]_{nk}\Bigl[\bm{\mathfrak{D}}(s_{2}-s)\bm{G}_{H}(s_{3}-s_{2})\bm{\mathfrak{D}}(s_{3}-s)\Bigr]_{lm}\right.\nonumber \\
 & \qquad\qquad\;\;\,+\Bigl[\dot{\bm{\mathfrak{D}}}(s^{\prime})\bm{G}_{H}(s_{2}-s')\bm{\mathfrak{D}}(s_{2}-s)\Bigr]_{nl}\Bigl[\bm{\mathfrak{D}}(s_{1}-s)\bm{G}_{H}(s_{3}-s_{2})\bm{\mathfrak{D}}(s_{3}-s)\Bigr]_{km}\nonumber \\
 & \qquad\qquad\;\;\,+\left.\dot{\bm{\mathfrak{D}}}(s^{\prime})\bm{G}_{H}(s_{3}-s')\bm{\mathfrak{D}}(s_{3}-s)\Bigr]_{nm}\Bigl[\bm{\mathfrak{D}}(s_{1}-s)\bm{G}_{H}(s_{2}-s_{1})\bm{\mathfrak{D}}(s_{2}-s)\Bigr]_{kl}\right\} \,.\label{eq:Gamma-tilde-nr-klm-final}
\end{align}
In the late time limit $t_{f}\to\infty$, we then find that the first-order
corrections from the $\beta$- and KG-type of nonlinearities to the
rates of the energy exchange between the $n^{\text{th}}$ oscillator
and its bath are given by 
\begin{eqnarray}
P_{\xi_{n}}^{(1)}(\infty) & = & \sum_{klmr}\mu_{klmr}\Gamma_{klm}^{nr}(\infty),\label{eq:Pxi-first-order}\\
P_{\gamma_{n}}^{(1)}(\infty) & = & -2\gamma\sum_{klmr}\mu_{klmr}\tilde{\Gamma}_{klm}^{nr}(\infty).\label{eq:Pgamma-first-order}
\end{eqnarray}
Both heat current approach time-independent constants according to
Eq. (\ref{eq:Gamma-nr-klm-final}, \ref{eq:Gamma-tilde-nr-klm-final})
in the late time limit, which signals the existence of the NESS. The
diagrammatic representation of $\Gamma_{klm}^{nr}(\infty)$ and $\tilde{\Gamma}_{klm}^{nr}(\infty)$
is shown in Fig. \ref{fig:Gamma-GammaTilde-time}.

\begin{figure}
\begin{picture}(450,100) 
\begin{centering}
\put(0,0){\includegraphics[scale=0.4]{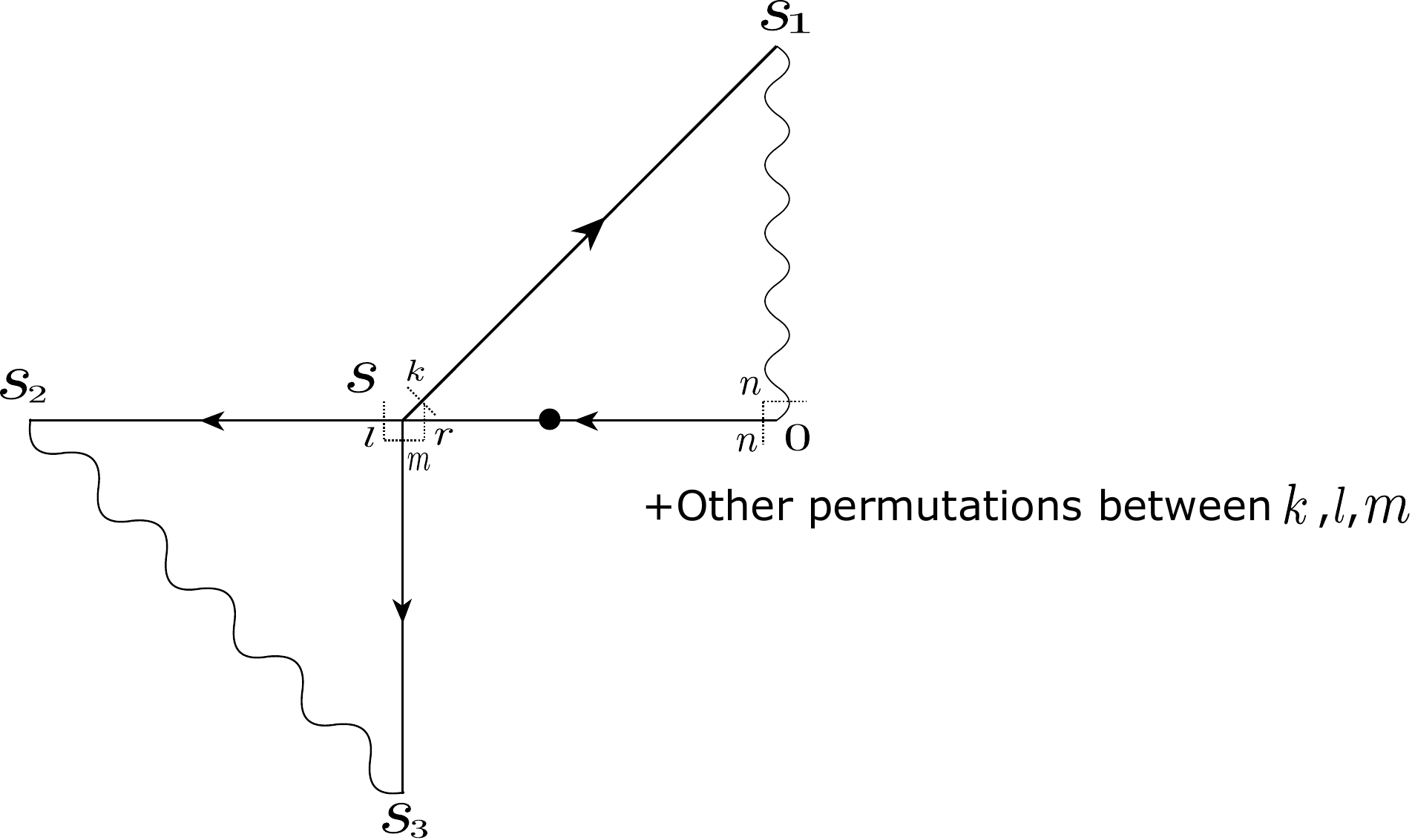}}\put(230,0){\includegraphics[scale=0.4]{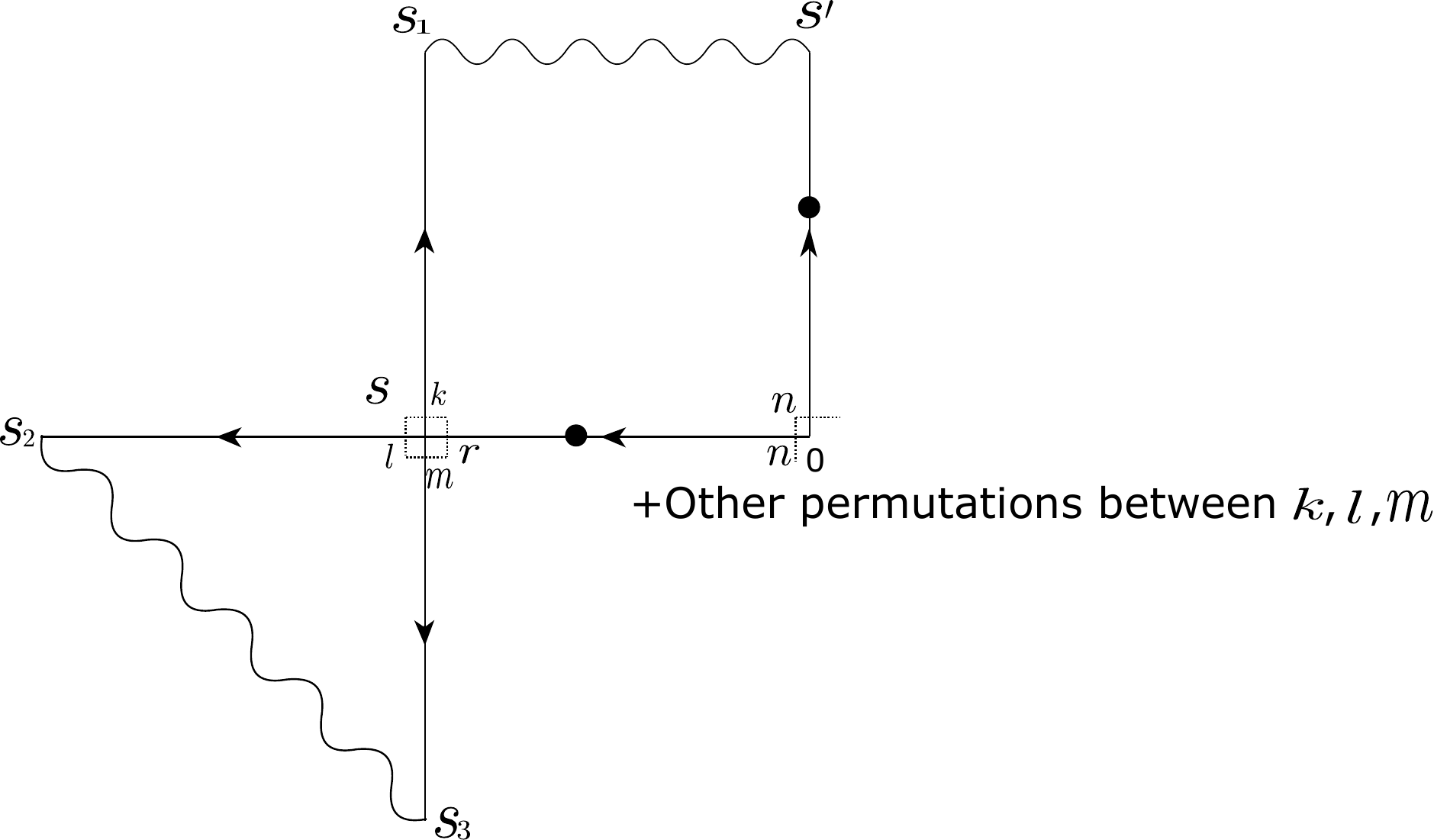}} 
\par\end{centering}
\put(50,80){\footnotesize{}(a)} \put(250,80){\footnotesize{}(b)}
\end{picture} \centering{}\caption{\label{fig:Gamma-GammaTilde-time}Time-domain diagrammatic representations
of $\Gamma_{klm}^{nr}(\infty)$ and $\tilde{\Gamma}_{klm}^{nr}(\infty)$
are shown in (a) and (b) respectively. Note that the factor of $2$
in Eq. \eqref{eq:Gamma-tilde-nr-klm-final} is not shown in (b).}
\end{figure}

\subsubsection{$\alpha$-FPUT nonlinearity}

Here we consider the corrections caused by the $\alpha$-type of nonlinearity
to $P_{\xi_{n}}$ and $P_{\gamma_{n}}$. By Theorem \ref{thm:same-s}
in  \ref{sec:Functional derivatives}, we recognize 
\begin{equation}
\partial_{n}\langle r_{k}(s)q_{l}(s)q_{m}(s)\rangle|_{\bm{h}=0}=0\,.
\end{equation}
Furthermore, following similar steps in Sec. \ref{subsec:Pxi-gamma-betaKG},
we find that for $\alpha$-FPUT nonlinearity, 
\begin{align}
\Gamma_{kl}^{nm}(t_{f}) & \equiv\lim_{t_{f}\to\infty}\int_{0}^{t_{f}}\!ds\;\langle\!\langle\xi_{n}(t_{f})\,\partial_{n}\langle r_{k}(s)r_{l}(s)q_{m}(s)\rangle_{0}\,\rangle\!\rangle\,\bigg|_{\bm{h}=0}\nonumber \\
 & =\int_{0}^{t_{f}}\!ds\;\langle\!\langle\Bigl[\dot{\bm{D}}_{2}(t_{f}-s)\Bigr]_{nm}\,\xi_{n}(t_{f})\frac{\delta^{2}\mathscr{Z}^{(0)}[\varrho_{i},\,\bm{j}]\exp\Bigl\{ i\,\bar{\mathscr{S}}_{\bm{\xi}}^{(0)}[\bm{j},\,\bm{h}=0)\Bigr\}}{\delta j_{k}(s)\delta j_{l}(s)}\,\rangle\!\rangle\bigg|_{\bm{j}=0}\,,\label{eq:gamma-nm-kl}\\
\tilde{\Gamma}_{k}^{nlm}(t_{f}) & \equiv-i\lim_{t_{f}\to\infty}\int_{0}^{t_{f}}ds\langle\!\langle\partial_{n}^{2}\langle r_{k}(s)q_{l}(s)q_{m}(s)\rangle_{0}\big|_{\bm{h}=0}\rangle\!\rangle\nonumber \\
 & =-2\int_{0}^{t_{f}}\int_{0}^{t_{f}}ds^{\prime}ds\langle\!\langle[\dot{\bm{D}}_{2}(t_{f}-s)]_{nl}\dot{\bm{D}}_{2}(t_{f}-s^{\prime})]_{nm}\frac{\delta\left(\mathscr{Z}^{(0)}[\varrho_{i},\,\bm{j}]\exp\{i\bar{\mathscr{S}}_{\bm{\xi}}^{(0)}[\bm{j},\,\bm{h}=0)\}\right)}{\delta j_{k}(s)}\bigg|_{\bm{j}=0}\rangle\!\rangle,\\
\bar{\Gamma}_{kl}^{nm}(t_{f}) & \equiv-i\lim_{t_{f}\to\infty}\int_{0}^{t_{f}}ds\langle\!\langle\partial_{n}^{2}\langle r_{k}(s)r_{l}(s)q_{m}(s)\rangle_{0}\big|_{\bm{h}=0}\rangle\!\rangle\nonumber \\
 & =2\sum_{p}\int_{0}^{t_{f}}\int_{0}^{t_{f}}ds^{\prime}ds\langle\!\langle[\dot{\bm{D}}_{2}(t_{f}-s)]_{nm}\dot{\bm{D}}_{2}(t_{f}-s^{\prime})]_{np}\xi_{p}(s^{\prime})\frac{\delta\left(\mathscr{Z}^{(0)}[\varrho_{i},\,\bm{j}]\exp\{i\bar{\mathscr{S}}_{\bm{\xi}}^{(0)}[\bm{j},\,\bm{h}=0)\}\right)}{\delta j_{k}(s)\delta j_{l}(s)}\bigg|_{\bm{j}=0}\rangle\!\rangle.\label{eq:gamma-bar-nm-kl}
\end{align}
By Theorems \ref{thm:decaying-2D-3j}-\ref{thm:decaying-2D-1j} in
 \ref{sec:Functional derivatives}, we observe that, the contributions
from the cases where at least one functional derivatives with respect
to $\bm{j}$ is applied to $\mathscr{Z}^{(0)}[\varrho_{i},\,\bm{j}]$
vanish in the late time limit $t_{f}\to\infty$. The only possible
nonzero contributions in Eqs. (\ref{eq:gamma-nm-kl}-\ref{eq:gamma-bar-nm-kl})
can only come from the situation that all functional derivatives with
respect to $\bm{j}$ are applied to $\exp\{i\bar{\mathscr{S}}_{\bm{\xi}}^{(0)}[\bm{j},\,\bm{h}=0)\}$.
However, this will generate odd powers in $\bm{\xi}$. Since $\bm{\xi}$
is Gaussian, their ensemble averages give zero. Therefore, we arrive
\begin{equation}
\Gamma_{kl}^{nm}(\infty)=\tilde{\Gamma}_{k}^{nlm}(\infty)=\bar{\Gamma}_{kl}^{nm}(\infty)=0\,.
\end{equation}
In other words, the powers $P_{\xi_{n}}(\infty)$ and $P_{\gamma_{n}}(\infty)$
do not have first-order corrections in $\alpha$-type nonlinearity.

\subsection{First order correction to the energy current $P_{\nu\to n}^{(2)}(t_{f})$ }

Here we will discuss the nonlinear correction to the inter-oscillator
energy flow, contributed by the quadratic coupling ($\eta=2$) between
oscillator $\nu$ and $n$, i.e., between oscillator 2 and oscillator
1. The correction actually stems from the nonlinearity in the generating
function. We first discuss the $\beta$- and KG-type nonlinearities.
From \eqref{eq:P-nu-eta-ave}, we see that we need to compute the
the first-order correction, $w_{\nu\to n}^{(1,\,2)}$, of $w_{\nu\to n}^{(2)}$.

\subsubsection{KG and $\beta$-FPUT nonlinearities}

According to Eqs. (\ref{eq:wn-nu-eta-Z}, \ref{eq:calJ-xi}), the
first-order corrections of $w_{\nu\to n}^{(2)}(t_{f},\,\bm{\xi}]$
due to the $\beta$- and KG-type nonlinearities are given by 
\begin{align}
w_{\nu\to n}^{(1,\,2)}(t_{f},\,\bm{\xi}] & =-\sum_{klmr}\int_{0}^{t_{f}}ds\left[\sigma_{klmr}\left(\partial_{n}\langle r_{n}(t_{f})r_{k}(s)q_{l}(s)q_{m}(s)q_{r}(s)\rangle_{0}-\partial_{n}\langle r_{\nu}(t_{f})r_{k}(s)q_{l}(s)q_{m}(s)q_{r}(s)\rangle_{0}\right)\big|_{\bm{h}=0}\right.\nonumber \\
 & \left.+\mu_{klmr}\left(\partial_{n}\langle r_{n}(t_{f})r_{k}(s)q_{l}(s)q_{m}(s)q_{r}(s)\rangle_{0}-\partial_{n}\langle r_{\nu}(t_{f})r_{k}(s)r_{l}(s)r_{m}(s)q_{r}(s)\rangle_{0}\right)\big|_{\bm{h}=0}\right],
\end{align}
where 
\begin{eqnarray}
\partial_{n}\langle r_{\nu}(t_{f})r_{k}(s)q_{l}(s)q_{m}(s)q_{r}(s)\rangle_{0}\big|_{\bm{h}=0} & = & -i\frac{\partial\delta^{5}\mathcal{Z}_{\bm{\xi}}^{(0)}[\bm{j},\,\bm{j}^{\prime},\,\bm{h})}{\partial h_{n}\delta j_{\nu}(t_{f})\delta j_{k}(s)\delta j_{l}^{\prime}(s)\delta j_{m}^{\prime}(s)\delta j_{r}^{\prime}(s)}\bigg|_{\bm{j}=\bm{j}^{\prime}=\bm{h}=0},\label{eq:dh-1rtf-1r-3q}\\
\partial_{n}\langle r_{\nu}(t_{f})r_{k}(s)r_{l}(s)r_{m}(s)q_{r}(s)\rangle_{0}\big|_{\bm{h}=0} & = & -i\frac{\partial\delta^{5}\mathcal{Z}_{\bm{\xi}}^{(0)}[\bm{j},\,\bm{j}^{\prime},\,\bm{h})}{\partial h_{n}\delta j_{\nu}(t_{f})\delta j_{k}(s)\delta j_{l}(s)\delta j_{m}(s)\delta j_{r}^{\prime}(s)}\bigg|_{\bm{j}=\bm{j}^{\prime}=\bm{h}=0}.\label{eq:dh-1rtf-3r-1q}
\end{eqnarray}
Theorem \ref{thm:same-s-tf} indicates that Eq. (\ref{eq:dh-1rtf-1r-3q})
vanishes, so we need to evaluate only Eq. (\ref{eq:dh-1rtf-3r-1q}).
So one only needs to evaluate Eq. \eqref{eq:dh-1rtf-3r-1q}. After
performing the derivatives $\partial\delta^{2}/\partial h_{n}\delta j_{\nu}(t_{f})\delta j_{r}^{\prime}(s)$,
we find 
\begin{gather}
\partial_{n}\langle r_{\nu}(t_{f})r_{k}(s)r_{l}(s)r_{m}(s)q_{r}(s)\rangle_{0}\big|_{\bm{h}=0}=-i\times\nonumber \\
\frac{\delta^{3}}{\delta j_{k}(s)\delta j_{l}(s)\delta j_{m}(s)}\left(\frac{\delta\left(\mathscr{Z}^{(0)}[\varrho_{i},\,\bm{j}]\exp\{i\bar{\mathscr{S}}_{\bm{\xi}}^{(0)}[\bm{j},\,\bm{h})\}\right)}{\delta j_{\nu}(t_{f})}\frac{\partial\delta\left(\exp\{i\bar{\mathfrak{S}}_{0\bm{\xi}}[\bm{j},\,\bm{j}^{\prime}\,\bm{h})\}\right)}{\partial h_{n}\delta j_{r}^{\prime}(s)}\right.\nonumber \\
+\left.\frac{\partial\left(\mathscr{Z}^{(0)}[\varrho_{i},\,\bm{j}]\exp\{i\bar{\mathscr{S}}_{\bm{\xi}}^{(0)}[\bm{j},\,\bm{h})\}\right)}{\partial h_{n}}\frac{\delta^{2}\left(\exp\{i\bar{\mathfrak{S}}_{0\bm{\xi}}[\bm{j},\,\bm{j}^{\prime}\,\bm{h})\}\right)}{\delta j_{\nu}(t_{f})\delta j_{r}^{\prime}(s)}+\cdots\right)\bigg|_{\bm{j}=\bm{j}^{\prime}=\bm{h}=0},
\end{gather}
where all other omitted terms vanish in the end due to according Theorems
\ref{thm:NKlessM} and \ref{thm:same-s}. As one can see from Theorem
\ref{thm:dNZscrdNj} in  \ref{sec:Functional derivatives}, if the
functional derivative $\delta/\delta j_{\nu}(t_{f})$ in the first
term of the right hand side acts on $\mathscr{Z}^{(0)}[\varrho_{i},\,\bm{j}]$,
it will generates terms that will vanish in the late time limit $t_{f}\to\infty$.
Therefore, the only possible non-vanishing terms come from when $\delta/\delta j_{\nu}(t_{f})$
acts on $\exp\{i\bar{\mathscr{S}}_{\bm{\xi}}^{(0)}[\bm{j},\,\bm{h}=0)\}$.
With this observation, we find 
\begin{gather}
\partial_{n}\langle r_{\nu}(t_{f})r_{k}(s)r_{l}(s)r_{m}(s)q_{r}(s)\rangle_{0}\big|_{\bm{h}=0}=i\times\nonumber \\
\left\{ \sum_{p}\int_{0}^{t_{f}}ds^{\prime}[\dot{\bm{D}}_{2}(t_{f}-s)]_{nr}\bm{\mathfrak{D}}(t_{f}-s^{\prime})]_{\nu p}\xi_{p}(s^{\prime})\frac{\delta^{3}\left(\mathscr{Z}^{(0)}[\varrho_{i},\,\bm{j}]\exp\{i\bar{\mathscr{S}}_{\bm{\xi}}^{(0)}[\bm{j},\,\bm{h}=0)\}\right)}{\delta j_{k}(s)\delta j_{l}(s)\delta j_{m}(s)}\right.\nonumber \\
+\left.\sum_{p}\int_{0}^{t_{f}}ds^{\prime}[\bm{\mathfrak{D}}(t_{f}-s)]_{\nu r}[\dot{\bm{D}}_{2}(t_{f}-s^{\prime})]_{np}\xi_{p}(s^{\prime})\frac{\delta^{3}\left(\mathscr{Z}^{(0)}[\varrho_{i},\,\bm{j}]\exp\{i\bar{\mathscr{S}}_{\bm{\xi}}^{(0)}[\bm{j},\,\bm{h}=0)\}\right)}{\delta j_{k}(s)\delta j_{l}(s)\delta j_{m}(s)}\right\} .
\end{gather}
Let us define 
\begin{equation}
\Upsilon_{\nu klm}^{nr}(t_{f})\equiv\lim_{t_{f}\to\infty}i\sum_{p}\int_{0}^{t_{f}}\int_{0}^{t_{f}}dsds^{\prime}\langle\!\langle[\dot{\bm{D}}_{2}(t_{f}-s)]_{nr}\bm{\mathfrak{D}}(t_{f}-s^{\prime})]_{\nu p}\xi_{p}(s^{\prime})\frac{\delta^{3}\left(\mathscr{Z}^{(0)}[\varrho_{i},\,\bm{j}]\exp\{i\bar{\mathscr{S}}_{\bm{\xi}}^{(0)}[\bm{j},\,\bm{h}=0)\}\right)}{\delta j_{k}(s)\delta j_{l}(s)\delta j_{m}(s)}\rangle\!\rangle,\label{eq:Ups-nr-klm}
\end{equation}
\begin{equation}
\tilde{\Upsilon}_{\nu klm}^{nr}(t_{f})\equiv\lim_{t_{f}\to\infty}i\sum_{p}\int_{0}^{t_{f}}\int_{0}^{t_{f}}dsds^{\prime}\langle\!\langle[\bm{\mathfrak{D}}(t_{f}-s)]_{\nu r}[\dot{\bm{D}}_{2}(t_{f}-s^{\prime})]_{np}\xi_{p}(s^{\prime})\frac{\delta^{3}\left(\mathscr{Z}^{(0)}[\varrho_{i},\,\bm{j}]\exp\{i\bar{\mathscr{S}}_{\bm{\xi}}^{(0)}[\bm{j},\,\bm{h}=0)\}\right)}{\delta j_{k}(s)\delta j_{l}(s)\delta j_{m}(s)}\rangle\!\rangle,\label{eq:Ups-tilde-nr-klm}
\end{equation}
then 
\begin{equation}
\lim_{t_{f}\to\infty}\int_{0}^{t_{f}}ds\langle\!\langle\partial_{n}\langle r_{\nu}(t_{f})r_{k}(s)r_{l}(s)r_{m}(s)q_{r}(s)\rangle_{0}\big|_{\bm{h}=0}\rangle\!\rangle=\Upsilon_{\nu klm}^{nr}(t_{f})+\tilde{\Upsilon}_{\nu klm}^{nr}(t_{f}).
\end{equation}
We have justified in Theorems \ref{thm:decaying-2D-3j} in  \ref{sec:Functional derivatives}
that in the late time limit $t_{f}\to\infty$, the only non-vanishing
terms come from the contribution when all the three functional derivatives
with respect to $\bm{j}$ acts on $\exp\{i\bar{\mathscr{S}}_{\bm{\xi}}^{(0)}[\bm{j},\,\bm{h}=0)\}$.
Therefore, similar with the calculations in Eqs. (\ref{eq:Gamma-nr-klm}-\ref{eq:Gamma-nr-klm2}),
we obtain 
\begin{align}
\Upsilon_{\nu klm}^{nr}(t_{f}) & =\idotsint_{0}^{t_{f}}dsds^{\prime}ds_{1}ds_{2}ds_{3}\dot{\bm{D}}_{2}(t_{f}-s)]_{nr}\nonumber \\
 & \times\left\{ [\bm{\mathfrak{D}}(t_{f}-s^{\prime})\bm{G}_{H}(s^{\prime}-s_{1})\bm{\mathfrak{D}}(s-s_{1})]_{\nu k}[\bm{\mathfrak{D}}(s-s_{2})\bm{G}_{H}(s_{2}-s_{3})\bm{\mathfrak{D}}(s-s_{3})]_{lm}\right.\nonumber \\
 & +[\bm{\mathfrak{D}}(t_{f}-s)\bm{G}_{H}(s^{\prime}-s_{2})\bm{\mathfrak{D}}(s-s_{2})]_{\nu l}[\bm{\mathfrak{D}}(s-s_{1})\bm{G}_{H}(s_{1}-s_{3})\bm{\mathfrak{D}}(s-s_{3})]_{km}\nonumber \\
 & +\left.[\bm{\mathfrak{D}}(t_{f}-s)\bm{G}_{H}(s^{\prime}-s_{3})\bm{\mathfrak{D}}(s-s_{3})]_{\nu m}[\bm{\mathfrak{D}}(s-s_{1})\bm{G}_{H}(s_{1}-s_{2})\bm{\mathfrak{D}}(s-s_{2})]_{kl}\right\} ,\label{eq:Ups-nr-klm-form1}
\end{align}
\begin{align}
\tilde{\Upsilon}_{\nu klm}^{nr}(t_{f}) & =\idotsint_{0}^{t_{f}}dsds^{\prime}ds_{1}ds_{2}ds_{3}[\bm{\mathfrak{D}}(t_{f}-s)]_{\nu r}\nonumber \\
 & \times\left\{ [\dot{\bm{D}}_{2}(t_{f}-s^{\prime})\bm{G}_{H}(s^{\prime}-s_{1})\bm{\mathfrak{D}}(s-s_{1})]_{nk}[\bm{\mathfrak{D}}(s-s_{2})\bm{G}_{H}(s_{2}-s_{3})\bm{\mathfrak{D}}(s-s_{3})]_{lm}\right.\nonumber \\
 & +[\dot{\bm{D}}_{2}(t_{f}-s)]\bm{G}_{H}(s^{\prime}-s_{2})\bm{\mathfrak{D}}(s-s_{2})]_{nl}[\bm{\mathfrak{D}}(s-s_{1})\bm{G}_{H}(s_{1}-s_{3})\bm{\mathfrak{D}}(s-s_{3})]_{km}\nonumber \\
 & +\left.[\dot{\bm{D}}_{2}(t_{f}-s)\bm{G}_{H}(s^{\prime}-s_{3})\bm{\mathfrak{D}}(s-s_{3})]_{nm}[\bm{\mathfrak{D}}(s-s_{1})\bm{G}_{H}(s_{1}-s_{2})\bm{\mathfrak{D}}(s-s_{2})]_{kl}\right\} .\label{eq:Ups-tilde-nr-klm-form1}
\end{align}
Making change of variables $t_{f}-s\to s$, $t_{f}-s^{\prime}\to s^{\prime}$,
and $t_{f}-s_{i}\to s_{i}$ and employing the trick given in \eqref{eq:integration-limit},
we arrive at 
\begin{align}
\Upsilon_{\nu klm}^{nr}(t_{f}) & =\idotsint_{-\infty}^{t_{f}}dsds^{\prime}ds_{1}ds_{2}ds_{3}[\dot{\bm{\mathfrak{D}}}(s)]_{nr}\nonumber \\
 & \times\left\{ [\bm{\mathfrak{D}}(s^{\prime})\bm{G}_{H}(s_{1}-s^{\prime})\bm{\mathfrak{D}}(s_{1}-s)]_{\nu k}[\bm{\mathfrak{D}}(s_{2}-s)\bm{G}_{H}(s_{3}-s_{2})\bm{\mathfrak{D}}(s_{3}-s)]_{lm}\right.\nonumber \\
 & +[\bm{\mathfrak{D}}(s)\bm{G}_{H}(s_{2}-s^{\prime})\bm{\mathfrak{D}}(s_{2}-s)]_{\nu l}[\bm{\mathfrak{D}}(s_{1}-s)\bm{G}_{H}(s_{3}-s_{1})\bm{\mathfrak{D}}(s_{3}-s)]_{km}\nonumber \\
 & +\left.[\bm{\mathfrak{D}}(s)\bm{G}_{H}(s_{3}-s^{\prime})\bm{\mathfrak{D}}(s_{3}-s)]_{\nu m}[\bm{\mathfrak{D}}(s_{1}-s)\bm{G}_{H}(s_{2}-s_{1})\bm{\mathfrak{D}}(s_{2}-s)]_{kl}\right\} ,\label{eq:Ups-nr-klm-final}
\end{align}
\begin{align}
\tilde{\Upsilon}_{\nu klm}^{nr}(t_{f}) & =\idotsint_{-\infty}^{t_{f}}dsds^{\prime}ds_{1}ds_{2}ds_{3}[\bm{\mathfrak{D}}(s)]_{\nu r}\nonumber \\
 & \times\left\{ [\dot{\bm{\mathfrak{D}}}(s^{\prime})\bm{G}_{H}(s_{1}-s^{\prime})\bm{\mathfrak{D}}(s_{1}-s)]_{nk}[\bm{\mathfrak{D}}(s_{2}-s)\bm{G}_{H}(s_{3}-s_{2})\bm{\mathfrak{D}}(s_{3}-s)]_{lm}\right.\nonumber \\
 & +[\dot{\bm{\mathfrak{D}}}(s)\bm{G}_{H}(s_{2}-s^{\prime})\bm{\mathfrak{D}}(s_{2}-s)]_{nl}[\bm{\mathfrak{D}}(s_{1}-s)\bm{G}_{H}(s_{3}-s_{1})\bm{\mathfrak{D}}(s_{3}-s)]_{km}\nonumber \\
 & +\left.[\dot{\bm{\mathfrak{D}}}(s)\bm{G}_{H}(s_{2}-s^{\prime})\bm{\mathfrak{D}}(s_{3}-s)]_{nm}[\bm{\mathfrak{D}}(s_{1}-s)\bm{G}_{H}(s_{2}-s_{1})\bm{\mathfrak{D}}(s_{2}-s)]_{kl}\right\} .\label{eq:Ups-tilde-nr-klm-final}
\end{align}
Now we take the limit $t_{f}\to\infty$, we find 
\begin{equation}
w_{\nu\to n}^{(1,\,2)}(\infty)=-\sum_{klmr}\mu_{klmr}[\Upsilon_{nklm}^{nr}(\infty)+\tilde{\Upsilon}_{nklm}^{nr}(\infty)-\Upsilon_{\nu klm}^{nr}(\infty)-\tilde{\Upsilon}_{\nu klm}^{nr}(\infty)].\label{eq:wn2-first-order}
\end{equation}
This is one of the the expressions we need to find the corrections
to the energy flow contributed by the quadratic coupling ($\eta=2$)
between oscillator $\nu$ and $n$. Next, we discuss the corresponding
contribution from the $\alpha$-type nonlinearity in the generating
functional. 
\begin{figure}
\begin{picture}(450,100) 
\begin{centering}
\put(0,0){\includegraphics[scale=0.4]{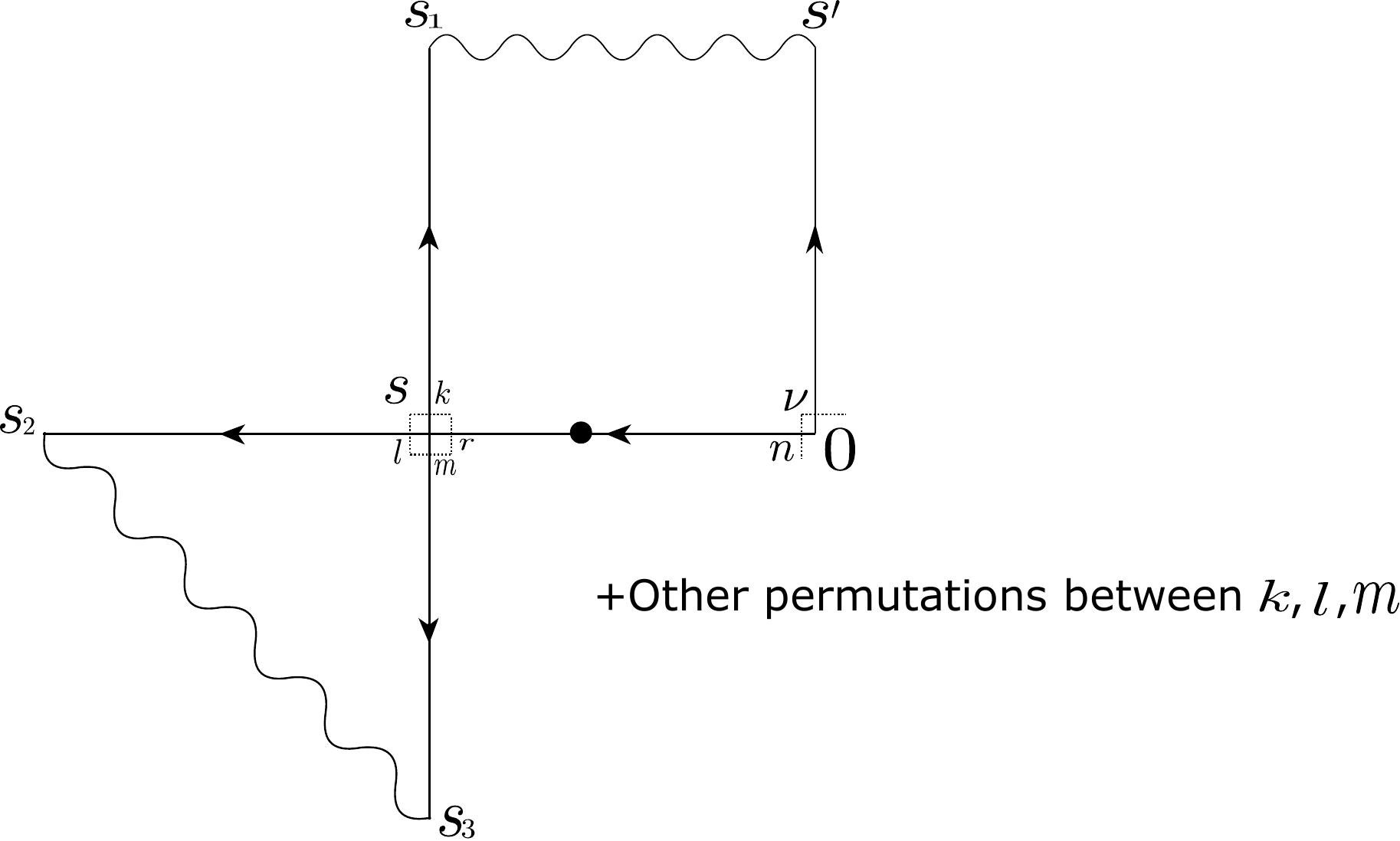}}\put(230,0){\includegraphics[scale=0.4]{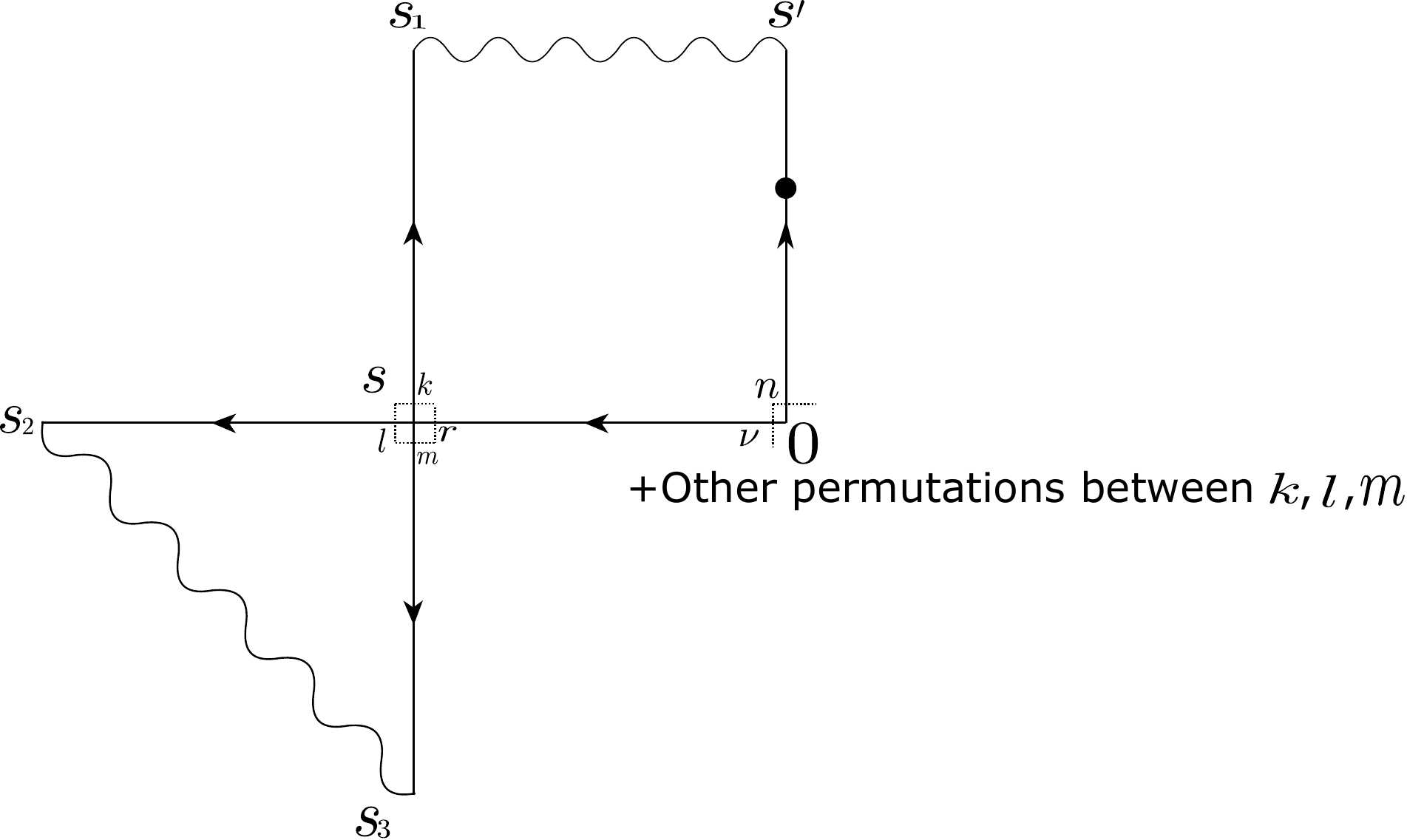}} 
\par\end{centering}
\put(25,80){\footnotesize{}(a)} \put(250,80){\footnotesize{}(b)}
\end{picture} \centering{}\caption{\label{fig:Ups-UpsTilde}Time-domain diagrammatic representation of
$\Upsilon_{klm}^{nr}(\infty)$ and $\tilde{\Upsilon}_{\nu klm}^{nr}(\infty)$}
\end{figure}

\subsubsection{$\alpha$-FPUT nonlinearity}

To compute the first-order correction to $w_{\nu\to n}^{(2)}(t_{f},\,\bm{\xi}]$
due to $\alpha$-FPUT nonlinearity in the generating functional, we
need to evaluate the following quantities 
\begin{align}
\partial_{n}\langle r_{\nu}(t_{f})r_{k}(s)q_{l}(s)q_{m}(s)\rangle_{0}\bigg|_{\bm{h}=0} & =\frac{\partial\delta^{4}\mathcal{Z}_{\bm{\xi}}^{(0)}[\bm{j},\,\bm{j}^{\prime},\,\bm{h})}{\partial h_{n}\delta j_{\nu}(t_{f})\delta j_{k}(s)\delta j_{l}^{\prime}(s)\delta j_{m}^{\prime}(s)}\bigg|_{\bm{j}=\bm{j}^{\prime}=\bm{h}=0},\label{eq:dh-1rtf-1r-2q}\\
\partial_{n}\langle r_{\nu}(t_{f})r_{k}(s)r_{l}(s)q_{m}(s)\rangle_{0}\bigg|_{\bm{h}=0} & =\frac{\partial\delta^{4}\mathcal{Z}_{\bm{\xi}}^{(0)}[\bm{j},\,\bm{j}^{\prime},\,\bm{h})}{\partial h_{n}\delta j_{\nu}(t_{f})\delta j_{k}(s)\delta j_{l}(s)\delta j_{m}^{\prime}(s)}\bigg|_{\bm{j}=\bm{j}^{\prime}=\bm{h}=0}.\label{eq:dh-1rtf-2r-1q}
\end{align}
Carrying out the derivatives gives 
\begin{align}
 & \partial_{n}\langle r_{\nu}(t_{f})r_{k}(s)q_{l}(s)q_{m}(s)\rangle_{0}\bigg|_{\bm{h}=0}\nonumber \\
= & \frac{\delta}{\delta j_{k}(s)}\left(\mathscr{Z}_{0}[\varrho_{i},\,\bm{j}]\exp\{i\bar{\mathscr{S}}_{\bm{\xi}}^{(0)}[\bm{j},\,\bm{h})\}\frac{\partial\delta^{3}\left(\exp\{i\bar{\mathfrak{S}}_{0\bm{\xi}}[\bm{j},\,\bm{j}^{\prime}\,\bm{h})\}\right)}{\partial h_{n}\delta j_{\nu}(t_{f})\delta j_{l}(s^{\prime})\delta j_{m}^{\prime}(s)}+\cdots\right)\bigg|_{\bm{j}=\bm{j}^{\prime}=\bm{h}=0},\\
 & \partial_{n}\langle r_{\nu}(t_{f})r_{k}(s)r_{l}(s)q_{m}(s)\rangle_{0}\bigg|_{\bm{h}=0}\nonumber \\
= & \frac{\delta^{2}}{\delta j_{k}(s)\delta j_{l}(s)}\left(\frac{\delta\left(\mathscr{Z}^{(0)}[\varrho_{i},\,\bm{j}]\exp\{i\bar{\mathscr{S}}_{\bm{\xi}}^{(0)}[\bm{j},\,\bm{h})\}\right)}{\delta j_{\nu}(t_{f})}\frac{\partial\delta\left(\exp\{i\bar{\mathfrak{S}}_{0\bm{\xi}}[\bm{j},\,\bm{j}^{\prime}\,\bm{h})\}\right)}{\partial h_{n}\delta j_{m}^{\prime}(s)}\right.\biggr.\nonumber \\
 & \qquad\qquad\qquad\qquad+\biggl.\left.\frac{\partial\left(\mathscr{Z}^{(0)}[\varrho_{i},\,\bm{j}]\exp\{i\bar{\mathscr{S}}_{\bm{\xi}}^{(0)}[\bm{j},\,\bm{h})\}\right)}{\partial h_{n}}\frac{\delta^{2}\left(\exp\{i\bar{\mathfrak{S}}_{0\bm{\xi}}[\bm{j},\,\bm{j}^{\prime}\,\bm{h})\}\right)}{\delta j_{\nu}(t_{f})\delta j_{m}^{\prime}(s)}+\cdots\right)\bigg|_{\bm{j}=\bm{j}^{\prime}=\bm{h}=0},
\end{align}
where all other omitted terms will vanish in the end due to Theorems
\ref{thm:NKlessM} and \ref{thm:same-s}. Similar to derivations of
Eqs.~(\ref{eq:gamma-nm-kl}-\ref{eq:gamma-bar-nm-kl}), further simplifications
give 
\begin{align}
\int_{0}^{t_{f}}\!ds\;\langle\!\langle\,\partial_{n}\langle r_{\nu}(t_{f}) & r_{k}(s)q_{l}(s)q_{m}(s)\rangle_{0}\,\rangle\!\rangle\,\bigg|_{\bm{h}=0}\nonumber \\
 & \sim\int_{0}^{t_{f}}\!ds\int_{0}^{t_{f}}\!ds'\;\langle\!\langle\,\dot{D}_{2}(t_{f}-s^{\prime})\mathfrak{D}_{2}(t_{f}-s)\,\frac{\delta\mathscr{Z}^{(0)}[\varrho_{i},\,\bm{j}]\exp\Bigl\{ i\,\bar{\mathscr{S}}_{\bm{\xi}}^{(0)}[\bm{j},\,\bm{h})\Bigr\}}{\delta j_{k}(s)}\,\rangle\!\rangle\,,\label{eq:alphaFPUT-first-w2-1}\\
\int_{0}^{t_{f}}\!ds\;\langle\!\langle\,\partial_{n}\langle r_{\nu}(t_{f}) & r_{k}(s)r_{l}(s)q_{m}(s)\rangle_{0}\,\rangle\!\rangle\,\bigg|_{\bm{h}=0}\nonumber \\
 & \sim\int_{0}^{t_{f}}\!ds\int_{0}^{t_{f}}\!ds'\;\langle\!\langle\dot{D}_{2}(t_{f}-s)\mathfrak{D}(t_{f}-s^{\prime})\,\xi(s^{\prime})\,\frac{\delta^{2}\mathscr{Z}^{(0)}[\varrho_{i},\,\bm{j}]\exp\Bigl\{ i\,\bar{\mathscr{S}}_{\bm{\xi}}^{(0)}[\bm{j},\,\bm{h})\Bigr\}}{\delta j_{k}(s)j_{l}(s)}\,\rangle\!\rangle\nonumber \\
 & \qquad\qquad+\int_{0}^{t_{f}}\!ds\int_{0}^{t_{f}}\!ds'\;\langle\!\langle\mathfrak{D}(t_{f}-s)\dot{D}_{2}(t_{f}-s^{\prime})\,\xi(s^{\prime})\,\frac{\delta^{2}\mathscr{Z}^{(0)}[\varrho_{i},\,\bm{j}]\exp\Bigl\{ i\,\bar{\mathscr{S}}_{\bm{\xi}}^{(0)}[\bm{j},\,\bm{h})\Bigr\}}{\delta j_{k}(s)j_{l}(s)}\,\rangle\!\rangle\,,\label{eq:alpha-FPUT-first-w2-2}
\end{align}
where we have used $\sim$ notation to indicate that the matrix indices
are suppressed on the right hand side, and $\dot{D}_{2}$ and $\mathfrak{D}$
stand for the matrix elements of $\bm{D}_{2}$ and $\mathfrak{\bm{D}}$
respectively. According to Theorems \ref{thm:decaying-2D-2j} and
\ref{thm:decaying-2D-1j}, we immediately see that Eqs. (\ref{eq:alphaFPUT-first-w2-1},
\ref{eq:alpha-FPUT-first-w2-2}) will vanish at late times $t_{f}\to\infty$,
so we find that 
\begin{equation}
w_{\nu\to n}^{(1,\,2)}(\infty,\,\bm{\xi}]=0\,.
\end{equation}
The $\alpha$-FPUT nonlinearity in the generating functional does
not contribute to $P_{\nu\to n}^{(1,\,2)}(\infty)$ between the two
oscillators, which results solely from the quadratic intra-oscillator
coupling \eqref{eq:wn2-first-order}.

\subsection{The first order correction to the energy flux $P_{\nu\to n}^{(\eta)}(t_{f})$
for $\eta\protect\geq3$}

Now we will focus on the contributions to the energy flows between
oscillators purely from the nonlinear intra-oscillator couplings.
Since we only consider nonlinear effects up to first order of $P_{\nu\to n}^{(\eta)}(t_{f})$,
from Eq. (\ref{eq:P-nu-eta-ave}), we need to compute $w_{\nu\to n}^{(0,\,\eta)}(t_{f})$,
the zeroth order of $w_{\nu\to n}^{(\eta)}(t_{f})$, which only involves
the linear term in the generating function. The superscript $0$ reminds
the fact that we use the zeroth-order generating functional $\mathcal{Z}_{\bm{\xi}}^{(0)}[\bm{j},\,\bm{j}^{\prime},\,\bm{h})$.
The quantity $w_{\nu\to n}^{(0,\,n)}(t_{f})$ is given by 
\begin{equation}
w_{\nu\to n}^{(0,\,\eta)}(t_{f})(t_{f},\,\bm{\xi}]=-(-i)^{\eta}\frac{\partial}{\partial h_{n}}\left(\frac{\delta}{\delta j_{n}(t_{f})}-\frac{\delta}{\delta j_{\nu}(t_{f})}\right)^{\eta-1}\mathcal{Z}_{\bm{\xi}}^{(0)}[\bm{j},\,\bm{j}^{\prime},\,\bm{h})\,\bigg|_{\bm{j}=\bm{j}^{\prime}=\bm{h}=0}\,,
\end{equation}
where 
\begin{equation}
\mathcal{Z}_{\bm{\xi}}^{(0)}[\bm{j},\,\bm{j}^{\prime},\,\bm{h})\,\bigg|_{\bm{j}'=0}=\mathscr{Z}^{(0)}[\rho_{i},\,\bm{j}^ {}]\exp\Bigl\{ i\,\bar{\mathscr{S}}_{\bm{\xi}}^{(0)}[\bm{j},\,\bm{h})\Bigr\}\,.\label{eq:Z-jprime-zero}
\end{equation}
From Theorem \ref{thm:dNZscrdNj}, we know in the late time limit
$t_{f}\to\infty$, the contributions to $w_{\nu\to n}^{(0,\,\eta)}(t_{f},\,\bm{\xi}]$
come from the case when all the functional derivatives with respect
to $\bm{j}(t_{f})$ act on $\exp\Bigl\{ i\,\bar{\mathscr{S}}_{\bm{\xi}}^{(0)}[\bm{j},\,\bm{h})\Bigr\}$.
Therefore, we find 
\begin{equation}
w_{\nu\to n}^{(0,\,\eta)}(t_{f},\,\bm{\xi}]=-(-i)^{n}\frac{\partial}{\partial h_{n}}\left(\frac{\delta}{\delta j_{n}(t_{f})}-\frac{\delta}{\delta j_{\nu}(t_{f})}\right)^{\eta-1}\exp\Bigl\{ i\,\bar{\mathscr{S}}_{\bm{\xi}}^{(0)}[\bm{j},\,\bm{h})\Bigr\}\,\bigg|_{\bm{j}=\bm{h}=0}\,.\label{eq:wn-eta-nu}
\end{equation}
For $\eta=3$, after we Taylor-expand the righthand side, we obtain
\begin{align}
 & \quad w_{\nu\to n,\,}^{(0,\,3)}(t_{f},\,\bm{\xi}]\\
 & =-\frac{1}{2!}\frac{\partial}{\partial h_{n}}\left(\frac{\delta}{\delta j_{n}(t_{f})}-\frac{\delta}{\delta j_{\nu}(t_{f})}\right)^{2}\left[\bm{h}^{T}\cdot\int_{0}^{t_{f}}\!ds\;\dot{\bm{D}}_{2}(t_{f}-s)\cdot\bm{\xi}(s)\right]\left[\int_{0}^{t_{f}}\!ds\int_{0}^{t_{f}}\!ds'\;\bm{j}^{T}(s)\cdot\bm{\mathfrak{D}}(s-s^{\prime})\cdot\bm{\xi}(s^{\prime})\right]^{2}\,\bigg|_{\bm{j}=\bm{h}=0}\nonumber 
\end{align}
from which we clearly see that after performing the derivatives with
respect to $\bm{j}$ and $\bm{h}$ and setting them zero, $w_{\nu\to n,\,}^{(0,\,3)}(t_{f},\,\bm{\xi}]$
is odd in $\bm{\xi}$. Therefore, we conclude $w_{\nu\to n}^{(0,\,3)}(t_{f})\equiv\langle\!\langle\,w_{\nu\to n,\,}^{(0,\,3)}(t_{f},\,\bm{\xi}]\rangle\!\rangle=0$,
that is, no contribution to the energy flow between two oscillators
from the $P_{\nu\to n}^{(1,\,3)}(t_{f})$ component.

Now we move on to the $\eta=4$ case. All we need to do is find the
right hand side, 
\begin{equation}
w_{\nu\to n}^{(0,\,4)}(t_{f},\,\bm{\xi}]=-\sum_{klm}\kappa_{klm}\,\frac{\partial\delta^{3}\exp\Bigl\{ i\,\bar{\mathscr{S}}_{\bm{\xi}}^{(0)}[\bm{j},\,\bm{h})\Bigr\}}{\partial h_{n}\delta j_{k}(t_{f})\delta j_{l}(t_{f})\delta j_{m}(t_{f})}\,\bigg|_{\bm{j}=\bm{h}=0}.
\end{equation}
One can see that only the fourth order in the Taylor expansion of
$\exp\Bigl\{ i\,\bar{\mathscr{S}}_{\bm{\xi}}^{(0)}[\bm{j},\,\bm{h})\Bigr\}$
will give a non-zero contribution since it contains the term $h\cdot j^{3}$.
Carrying out this procedure yields,  
\begin{align}
 & \quad\frac{\partial\delta^{3}\exp\left\{ i\,\bar{\mathscr{S}}_{\bm{\xi}}^{(0)}[\bm{j},\,\bm{h})\right\} }{\partial h_{n}\delta j_{k}(t_{f})\delta j_{l}(t_{f})\delta j_{m}(t_{f})}\,\bigg|_{\bm{j}=\bm{h}=0}\nonumber \\
 & =\sum_{opqr}\idotsint_{0}^{t_{f}}\!dsds_{1}ds_{2}ds_{3}\;\Bigl[\dot{\bm{D}}_{2}(t_{f}-s)]_{no}\cdot\bm{\xi}(s)\Bigr]_{o}\Bigl[\bm{\mathfrak{D}}(t_{f}-s_{1})\Bigr]_{kp}\,\bm{\xi}_{p}(s_{1})\Bigl[\bm{\mathfrak{D}}(t_{f}-s_{2})\Bigr]_{lq}\,\bm{\xi}_{q}(s_{2})\Bigl[\bm{\mathfrak{D}}(t_{f}-s_{3})\Bigr]_{mr}\,\bm{\xi}_{r}(s_{3})\,.
\end{align}
We introduce a shorthand notation 
\begin{equation}
\Lambda_{klm}^{n}(t_{f})\equiv\langle\!\langle\,\frac{\partial\delta^{3}\exp\left\{ i\,\bar{\mathscr{S}}_{\bm{\xi}}^{(0)}[\bm{j},\,\bm{h})\right\} }{\partial h_{n}\delta j_{k}(t_{f})\delta j_{l}(t_{f})\delta j_{m}(t_{f})}\,\rangle\!\rangle\,\bigg|_{\bm{j}=\bm{h}=0}.\label{eq:Lambda-def}
\end{equation}
Making change of variables $t_{f}-s\to s$, $t_{f}-s_{i}\to s_{i}$
and employing the trick given in \eqref{eq:integration-limit}, we
obtain 
\begin{align}
\Lambda_{klm}^{n}(t_{f}) & =\idotsint_{-\infty}^{t_{f}}\!dsds_{1}ds_{2}ds_{3}\;\biggl\{\Bigl[\dot{\bm{\mathfrak{D}}}(s)\bm{G}_{H}(s_{1}-s)\bm{\mathfrak{D}}(s_{1})\Bigr]_{nk}\Bigl[\bm{\mathfrak{D}}(s_{2})\bm{G}_{H}(s_{3}-s_{2})\bm{\mathfrak{D}}(s_{3})\Bigr]_{lm}\biggr.\nonumber \\
 & \qquad\qquad\qquad\qquad\qquad\qquad+\Bigl[\dot{\bm{\mathfrak{D}}}(s)\bm{G}_{H}(s_{2}-s)\bm{\mathfrak{D}}(s_{2})\Bigr]_{nl}\Bigl[\bm{\mathfrak{D}}(s_{1})\bm{G}_{H}(s_{3}-s_{1})\bm{\mathfrak{D}}(s_{3})\Bigr]_{km}\nonumber \\
 & \qquad\qquad\qquad\qquad\qquad\qquad\qquad\qquad+\biggl.\Bigl[\dot{\bm{\mathfrak{D}}}(s)\bm{G}_{H}(s_{3}-s)\bm{\mathfrak{D}}(s_{3})\Bigr]_{nm}\Bigl[\bm{\mathfrak{D}}(s_{1})\bm{G}_{H}(s_{2}-s_{1})\bm{\mathfrak{D}}(s_{2})\Bigr]_{kl}\biggr\}\,.\label{eq:Lambda-final}
\end{align}
Since from \eqref{eq:wn-eta-nu} for $\eta=4$, we know that the coefficients
$\kappa_{klm}$ will be determined by the binomial expansion of the
form $\Bigl[\delta/\delta j_{n}(t_{f})-\delta/\delta j_{\nu}(t_{f})\Bigr]^{3}$,
we thus find 
\begin{equation}
w_{\nu\to n,\,0}^{(0,4)}(\infty)=-\sum_{klm}\kappa_{klm}\Lambda_{klm}^{n}(\infty)=-\Bigl[\Lambda_{nnn}^{n}(\infty)-3\Lambda_{nn\nu}^{n}(\infty)+3\Lambda_{n\nu\nu}^{n}(\infty)-\Lambda_{\nu\nu\nu}^{n}(\infty)\Bigr]\,,\label{eq:wbeta-zero-order}
\end{equation}
after the motion is fully relaxed. 
\begin{figure}
\begin{centering}
\includegraphics[scale=0.5]{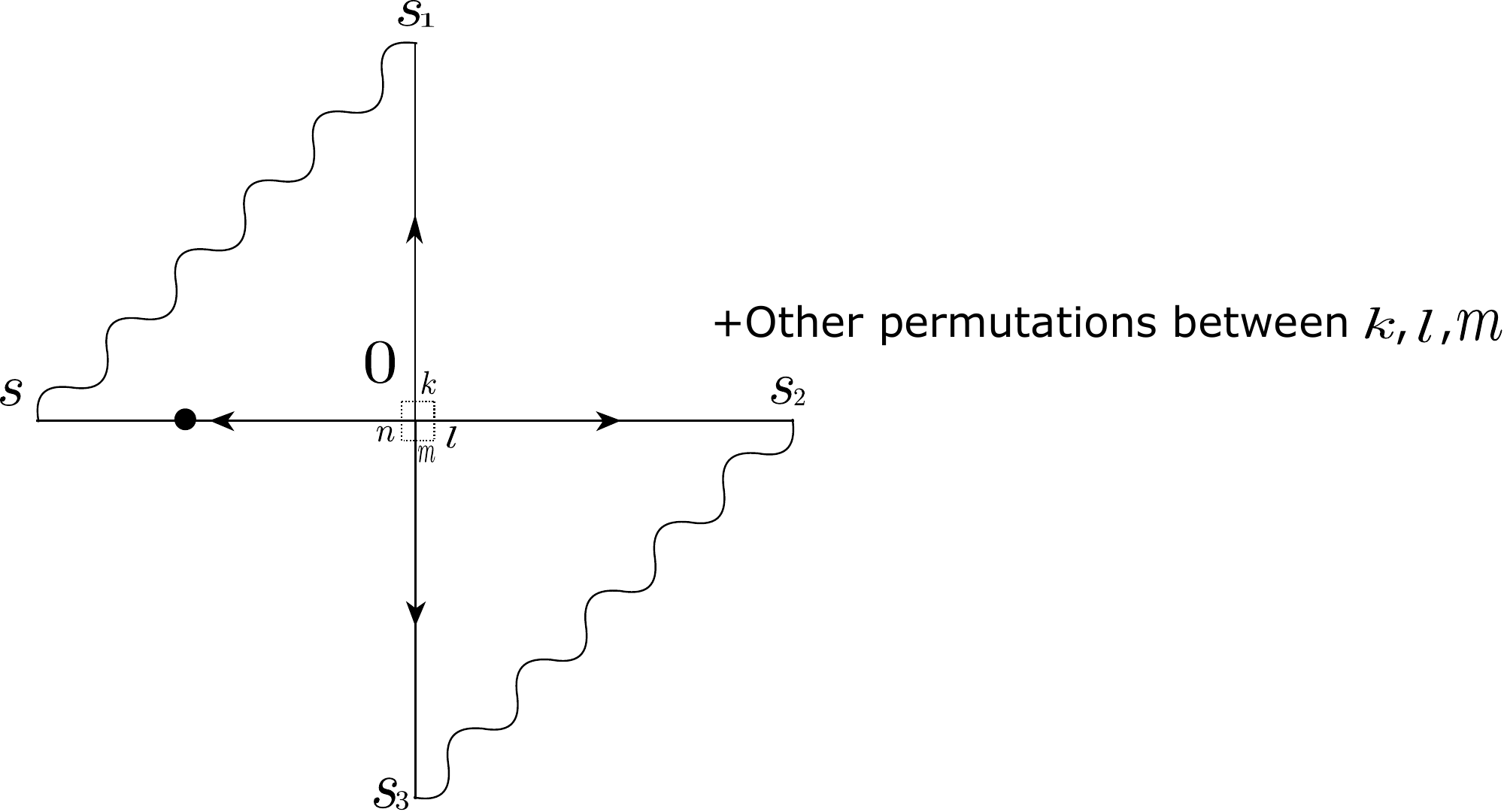} 
\par\end{centering}
\centering{}\caption{\label{fig:Lambda-time}Time-domain diagrammatic representation of
$\Lambda_{klm}^{n}(\infty)$}
\end{figure}

Now we have calculated all the first-order corrections \eqref{eq:Pxi-first-order},
\eqref{eq:Pgamma-first-order} of the rates of the energy exchange
between each oscillator and its private bath, as well as the correction
\eqref{eq:wn2-first-order}, \eqref{eq:wbeta-zero-order} of the energy
flow between the neighboring oscillators. We have noted that 1) the
$\alpha$-FPUT nonlinearity does not contribute to the quantities
of our interest, and 2) these first-order corrections all become time-independent
on a time scale much greater than the relaxation time. Thus it seems
to imply a steady state for the current configuration. We will address
this feature in Sec. \ref{sec:NESS}

We conclude this section by highlighting that the first-order nonlinear
corrections to the rate of the energy exchange between the oscillators
consist of two distinct components: (a) the first-order correction
of the current $P_{\nu\to n}^{(2)}$ corresponding to the intra-oscillator
quadratic coupling, which can be induced by either cubic $\alpha$-FPUT
or $\beta$-FPUT and KG quartic interaction in the chain Hamiltonian.
(b) the the first-order correction of the current $P_{\nu\to n}^{(\eta)}(\eta\ge3)$
corresponding to the intra-oscillator cubic $\alpha$-FPUT or quartic
$\beta$-FPUT coupling. Obviously, implied by (a) and (b), the KG
nonlinearity is self-interaction at each site and does not couple
two different oscillators. So it only induces a correction to $P_{\nu\to n}^{(2)}$
and does not generate the current $P_{\nu\to n}^{(4)}$.

\section{\label{sec:Diagrammatic-representations}Diagrammatic representations}

From now on, we will discuss the steady-state energy and will implicitly
assume $t_{f}=\infty$. Therefore we suppress all the time dependence
in all the relevant energy currents, which are given by Eqs. (\ref{eq:Pxi0},
\ref{eq:P-gamma-zeroth}, \ref{eq:W-n2-zeroth},\ref{eq:Gamma-nr-klm-final},
\ref{eq:Gamma-tilde-nr-klm-final}, \ref{eq:Ups-nr-klm-final}, \ref{eq:Ups-tilde-nr-klm-final},
\ref{eq:Lambda-final}) respectively. Although they may look formidable,
all these energy currents admit time-domain diagrammatic representations,
which can be seen directly from their respective analytic expressions
through definite rules. Furthermore, one can easily convert the time-domain
diagrams to the frequency domain diagrams, which allows one to obtain
the Fourier transform of these energy currents in an economical way.
Finally, these diagrams provide an intuitive understanding of the
energy exchange between the baths and the oscillators and among the
oscillators.

\subsection{\label{subsec:Properties-of-causal-propagator}Properties of causal
propagator $\tilde{\bm{\mathfrak{D}}}(\omega)$}

Before moving to the diagrammatic representation, let us discuss some
useful properties of the Fourier transform the the causal propagator
$\bm{\mathfrak{D}}(s)$, which will be used later. According to Eq.
\eqref{eq:D2-s-def}, the Laplace transform of $\bm{D}_{2}(s)$, i.e.,
$\bm{D}_{2}(\mathfrak{s})$ is symmetric and so is $\bm{D}_{2}(s)$.
Here one should not be confused with the variable in the Laplace domain
denoted as $\mathfrak{s}$ and the variable in the time domain denoted
as $s$. According to the relation between Fourier and Laplace transform,
one can readily find

\begin{equation}
\tilde{\bm{\mathfrak{D}}}(\omega)=\bm{D}_{2}(\mathfrak{s}=i\omega)=[-\omega^{2}+2i\gamma\omega+\bm{\Omega}_{R}^{2}]^{-1},\label{eq:Dfrak-omega}
\end{equation}
where $\bm{\Omega}_{R}^{2}$ defined in Eq. \eqref{eq:Omega-Renormalized}
is the renormalized version of Eq. \eqref{eq:Omega-bare}. From Eq.
\eqref{eq:Dfrak-omega}, we know 
\begin{eqnarray}
[\tilde{\bm{\mathfrak{D}}}(\omega)]_{12} & = & [\tilde{\bm{\mathfrak{D}}}(\omega)]_{21},\label{eq:off-diag-ME}\\{}
[\tilde{\bm{\mathfrak{D}}}(\omega)]_{11} & = & [\tilde{\bm{\mathfrak{D}}}(\omega)]_{22},\label{eq:diag-ME}
\end{eqnarray}
\begin{eqnarray}
(-\omega^{2}+2i\gamma\omega+\omega_{R0}^{2}+\lambda_{2})[\tilde{\bm{\mathfrak{D}}}(\omega)]_{nn}-\lambda_{2}[\tilde{\bm{\mathfrak{D}}}(\omega)]_{n\nu} & = & 1,\label{eq:Inverse1}\\
(-\omega^{2}+2i\gamma\omega+\omega_{R0}^{2}+\lambda_{2})[\tilde{\bm{\mathfrak{D}}}(\omega)]_{n\nu}-\lambda_{2}[\tilde{\bm{\mathfrak{D}}}(\omega)]_{nn} & = & 0.\label{eq:Inverse2}
\end{eqnarray}
where $\nu\neq n$. Furthermore, since $\bm{\mathfrak{D}}(s)$ is
real, one can easily conclude 
\begin{equation}
\tilde{\bm{\mathfrak{D}}}^{*}(\omega)=\tilde{\bm{\mathfrak{D}}}(-\omega)\label{eq:D-omega-symmetry}
\end{equation}
or equivalently $\text{Re}\tilde{\bm{\mathfrak{D}}}(\omega)$ is even
in $\omega$ while $\text{Im}\tilde{\bm{\mathfrak{D}}}(\omega)$ is
odd in $\omega$.

\subsection{Feynman diagrams in the time domain}

The following rules can be used in order convert the analytic expressions
typically in Eqs. (\ref{eq:Pxi0}, \ref{eq:pn2-zeroth}, \ref{eq:W-n2-zeroth},\ref{eq:Gamma-nr-klm-final},
\ref{eq:Gamma-tilde-nr-klm-final}, \ref{eq:Ups-nr-klm-final}, \ref{eq:Ups-tilde-nr-klm-final},
\ref{eq:Lambda-final} ) into diagrams shown in Figs. \ref{fig:zeroth-order-time}-\ref{fig:Lambda-time}. 
\begin{enumerate}
\item $s_{1}\Dcausal s_{2}$ denotes the causal Green's function of the
chain $\bm{\mathfrak{D}}(s_{2}-s_{1})$. 
\item The bullet denotes the time derivative with respect to the argument
of the chain's causal Green's function. Therefore $s_{1}\DcausalDotA s_{2}$
or $s_{1}\DcausalDotB s_{2}$ is the time derivative the causal Green's
function of the chain, i.e., $\dot{\bm{\mathfrak{D}}}(s_{2}-s_{1})$,
where the overhead dot denotes the derivative with respect to the
argument of $\dot{\bm{\mathfrak{D}}}(s_{2}-s_{1})$. The order that
the arrow and the bullet appears does not matter. 
\item $s_{1}\GH s_{2}$ denotes the Hadamard Green's function of the field
$\bm{G}_{H}(s_{2}-s_{1})$ or $\bm{G}_{H}(s_{1}-s_{2})$. There is
no arrows placed on the wavy line since $\bm{G}_{H}$ is symmetric
in $s_{1}$ and $s_{2}$. 
\item All the vortices such as $s$, $s^{\prime}$, $s_{i}$, except the
origin, must be integrated out, with integration limit goes from $-\infty$
to $+\infty$ (since we implicitly assume $t_{f}=\infty$). 
\item The indices of matrix elements associated with the propagators are
indicated by the letters beside the short line segments. 
\end{enumerate}
Conversely, with above rules, one can recover Eqs. (\ref{eq:Pxi0},
\ref{eq:pn2-zeroth}, \ref{eq:W-n2-zeroth},\ref{eq:Gamma-nr-klm-final},
\ref{eq:Gamma-tilde-nr-klm-final}, \ref{eq:Ups-nr-klm-final}, \ref{eq:Ups-tilde-nr-klm-final},
\ref{eq:Lambda-final} ) from Figs. \ref{fig:zeroth-order-time}-\ref{fig:Lambda-time}.
\begin{figure}
\begin{picture}(450,100) 
\begin{centering}
\put(0,0){\includegraphics[scale=0.75]{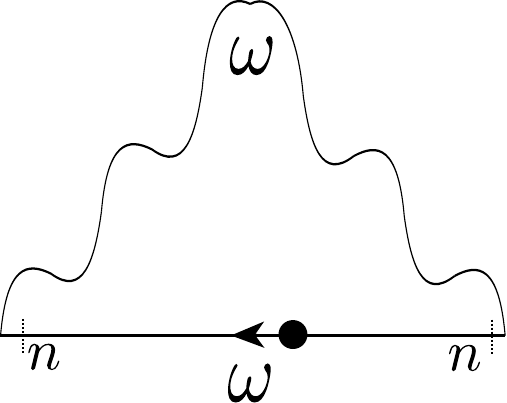}}\put(150,0){\includegraphics[scale=0.55]{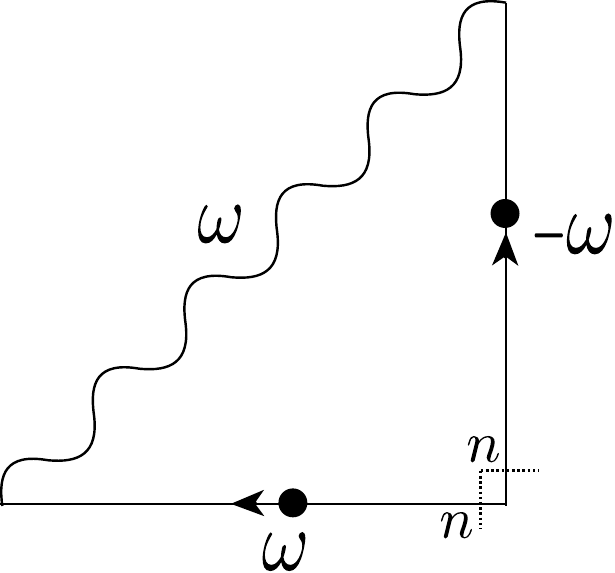}}\put(270,0){\includegraphics[scale=0.55]{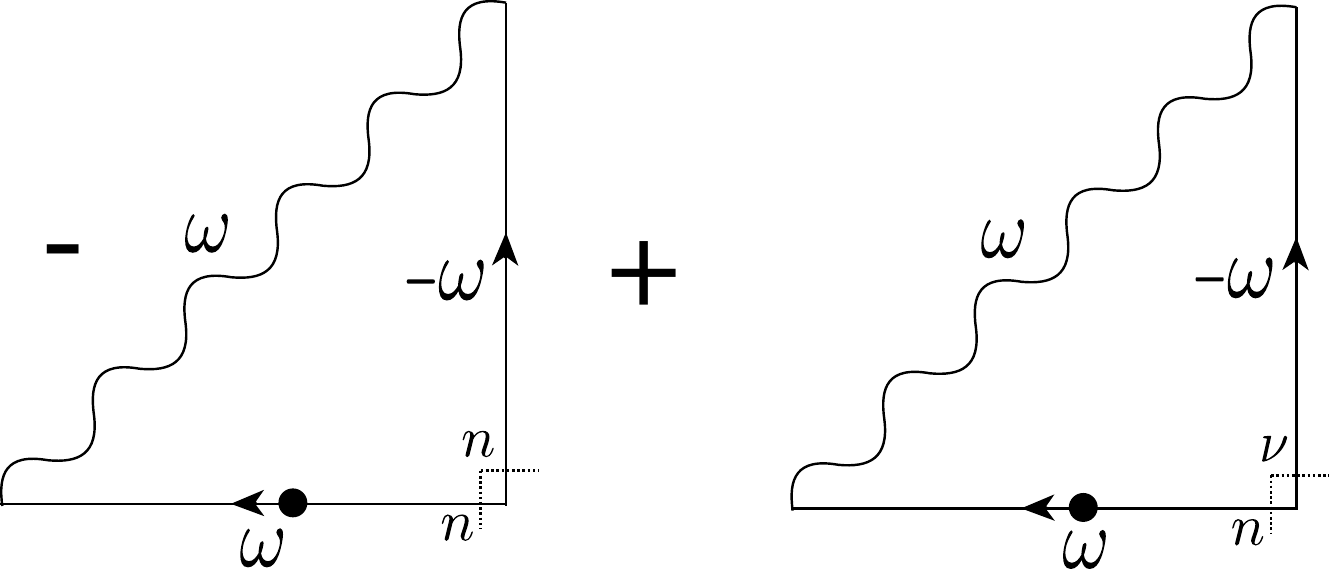}} 
\par\end{centering}
\put(50,30){\footnotesize{}(a)} \put(200,30){\footnotesize{}(b)}
\put(320,30){\footnotesize{}(c)} \put(440,30){\footnotesize{}(c)}
\end{picture}

\caption{\label{fig:zeroth-order-frequency}Frequency-domain diagrammatic representation
of the zeroth-order energy currents including Figs. (a) $P_{\xi_{n}}(\infty)$
(b) $p_{n}^{2(0)}(\infty)$ (c) $w_{n,\,2}^{\nu}(\infty)$. As we
show Sec. \ref{subsec:Properties-of-causal-propagator}, the first
digram in (c) actually vanishes.}
\end{figure}

\subsection{Feynman diagrams in the frequency domain\label{subsec:Feynman-diagrams-frequency}}

In standard quantum field theory, the momentum-space Feynman-diagrams
are far more important and easier to work with than their position-space
counter parts. Similarly, here the diagrams in the frequency domain
are more useful in proving the NESS. One can convert the time-domain
diagrams shown in Figs. \ref{fig:zeroth-order-time}-\ref{fig:Lambda-time}
according to the following rules: 
\begin{enumerate}
\item Removing all the time arguments but keeping the indices of the matrix
element from the time-domain diagrams 
\item Associated frequencies (energies) with for both the causal and Hadamard
propagators such that the energy is conserved at all the vortices. 
\item When imposing energy conservation at each vortex, the arrows in the
causal propagator denotes the direction of energy flow. The direction
of the energy flow in the Hadamard propagator is omitted since it
can be easily inferred by the frequency associated with it and the
energy conservation. 
\end{enumerate}
After implementing above procedures, all the corresponding frequency-domain
diagrams are shown in Figs. \eqref{fig:zeroth-order-frequency}-\eqref{fig:Lambda-Frequency}.
One can directly read off the Fourier transforms of Eqs. (\ref{eq:Pxi0},
\ref{eq:pn2-zeroth}, \ref{eq:W-n2-zeroth},\ref{eq:Gamma-nr-klm-final},
\ref{eq:Gamma-tilde-nr-klm-final}, \ref{eq:Ups-nr-klm-final}, \ref{eq:Ups-tilde-nr-klm-final},
\ref{eq:Lambda-final} ) from Figs. \eqref{fig:zeroth-order-frequency}-\eqref{fig:Lambda-Frequency},
according to the following rules: 
\begin{figure}
\begin{picture}(450,100)
\begin{centering}
\put(0,0){\includegraphics[scale=0.4]{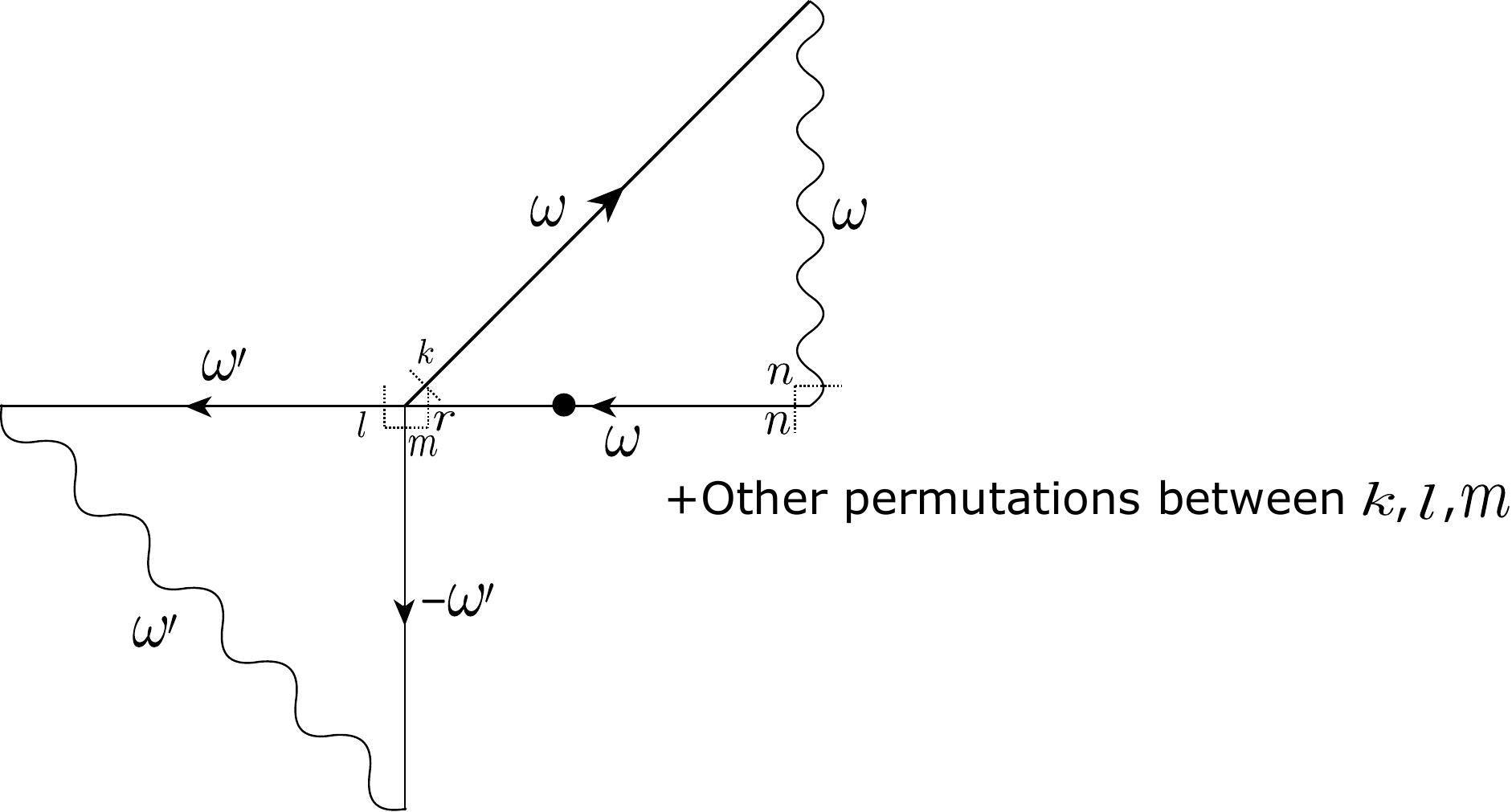}}\put(230,0){\includegraphics[scale=0.4]{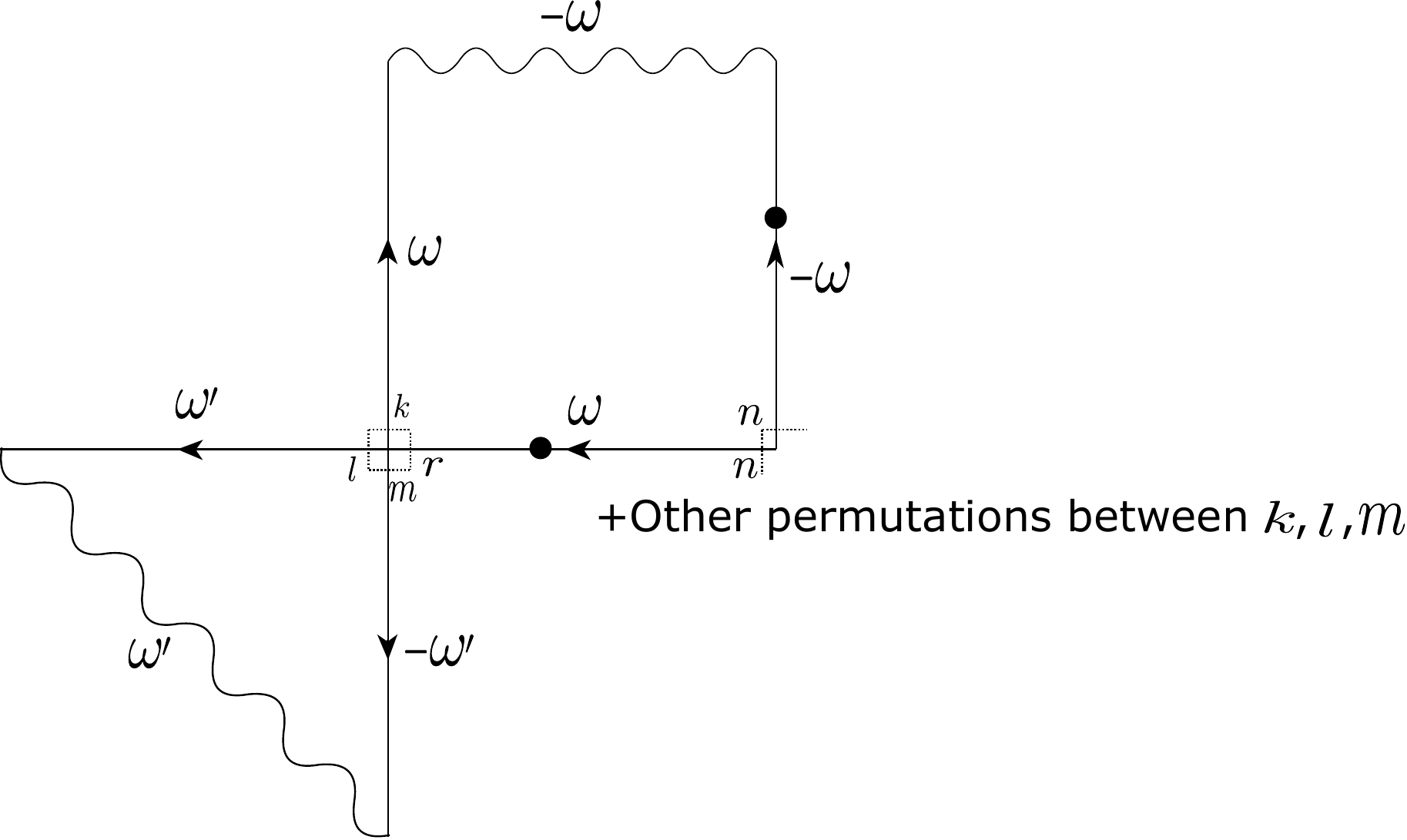}}
\par\end{centering}
\put(50,80){\footnotesize{}(a)} \put(250,80){\footnotesize{}(b)}
\end{picture} \centering{}\caption{\label{fig:Gamma-GammaTilde-Frequency}Frequency-domain diagrammatic
representations of $\Gamma_{klm}^{nr}(\infty)$ and $\tilde{\Gamma}_{klm}^{nr}(\infty)$
are shown in (a) and (b) respectively. Note that the factor of $2$
in Eq. \eqref{eq:Gamma-tilde-nr-klm-final} is not shown in (b).}
\end{figure}

\begin{enumerate}
\item $\Dcausal$ denotes $\tilde{\bm{\mathfrak{D}}}(\pm\omega)$, the Fourier
transform of the causal propagator $\bm{\mathfrak{D}}(s)$, where
the sign depends on the frequency associated with the propagator. 
\item $\DcausalDotA$ or $\DcausalDotB$ denotes $(\pm i\omega)\tilde{\bm{\mathfrak{D}}}(\pm\omega)$.
Again the sign depends on the sign of the frequency associated with
the propagator. 
\item $\GH$ denotes $\tilde{\bm{G}}_{H}(\pm\omega)$, the Fourier transform
of $\bm{G}_{H}(s)$, where the sign depends on the assigned frequency. 
\item In the end, the independent frequencies must be integrated out with
the measure $\int d\omega/(2\pi)$. Note that the number of independent
frequencies are equal to the number of loops in the diagrams. 
\end{enumerate}
According to above rules, one can easily read off the the expression
of all the zeroth order correction from Fig. \ref{fig:zeroth-order-frequency}

\begin{equation}
P_{\xi_{n}}^{(0)}=\frac{1}{2\pi}\int_{-\infty}^{\infty}(i\omega)[\tilde{\bm{\mathfrak{D}}}(\omega)\tilde{\bm{G}}_{H}(\omega)]_{nn}d\omega,\label{eq:Pxi0-frequency}
\end{equation}
\begin{align}
p_{n}^{2(0)} & =\frac{1}{2\pi}\int_{-\infty}^{\infty}\omega^{2}[\tilde{\bm{\mathfrak{D}}}(\omega)\tilde{\bm{G}}_{H}(\omega)\tilde{\bm{\mathfrak{D}}}(-\omega)]_{nn}d\omega,\label{eq:pn20-frequency}
\end{align}
\begin{equation}
w_{\nu\to n}^{(0,\,2)}=-\frac{1}{2\pi}\int_{-\infty}^{\infty}i\omega\left([\tilde{\bm{\mathfrak{D}}}(\omega)\tilde{\bm{G}}_{H}(\omega)\tilde{\bm{\mathfrak{D}}}(-\omega)]_{nn}-\tilde{\bm{\mathfrak{D}}}(\omega)\tilde{\bm{G}}_{H}(\omega)\tilde{\bm{\mathfrak{D}}}(-\omega)]_{n\nu}\right)d\omega.\label{eq:wnnu0-frequency}
\end{equation}
With Eq. \eqref{eq:D-omega-symmetry}, one can easily show that 
\begin{equation}
\int_{-\infty}^{\infty}i\omega[\tilde{\bm{\mathfrak{D}}}(\omega)\tilde{\bm{G}}_{H}(\omega)\tilde{\bm{\mathfrak{D}}}(-\omega)]_{nn}=\sum_{k=n,\nu}\int_{-\infty}^{\infty}i\omega\big|[\tilde{\bm{\mathfrak{D}}}(\omega)]_{nk}\big|^{2}[\tilde{\bm{G}}_{H}(\omega)]_{kk}=0,\label{eq:odd-integral}
\end{equation}
since the integrand is an odd function of $\omega$. Eq. \eqref{eq:wnnu0-frequency}
becomes 
\begin{equation}
w_{\nu\to n}^{(0,\,2)}=\frac{1}{2\pi}\int_{-\infty}^{\infty}i\omega[\tilde{\bm{\mathfrak{D}}}(\omega)\tilde{\bm{G}}_{H}(\omega)\tilde{\bm{\mathfrak{D}}}(-\omega)]_{n\nu}d\omega.\label{eq:wn20-frequency-simpified}
\end{equation}
Equivalently, Eq. \eqref{eq:odd-integral} justifies that the first
diagram in Fig. \ref{fig:zeroth-order-time}(c) or Fig.\ref{fig:zeroth-order-frequency}(c)
vanishes. Similarly, from Fig. \ref{fig:Gamma-GammaTilde-Frequency},
one can easily obtain the frequency-domain representation for $\Gamma_{klm}^{nr}$
and $\tilde{\Gamma}_{klm}^{nr}$

\begin{align}
\Gamma_{klm}^{nr} & =\frac{1}{(2\pi)^{2}}\iint d\omega d\omega^{\prime}(i\omega)[\tilde{\bm{\mathfrak{D}}}(\omega)]_{nr}\left\{ [\tilde{\bm{G}}_{H}(\omega)\tilde{\bm{\mathfrak{D}}}(\omega)]_{nk}[\tilde{\bm{\mathfrak{D}}}(\omega^{\prime})\tilde{\bm{G}}_{H}(\omega^{\prime})\tilde{\bm{\mathfrak{D}}}(-\omega^{\prime})]_{lm}\right.\nonumber \\
 & +[\tilde{\bm{G}}_{H}(\omega)\tilde{\bm{\mathfrak{D}}}(\omega)]_{nl}[\tilde{\bm{\mathfrak{D}}}(\omega^{\prime})\tilde{\bm{G}}_{H}(\omega^{\prime})\tilde{\bm{\mathfrak{D}}}(-\omega^{\prime})]_{km}\nonumber \\
 & +\left.[\tilde{\bm{G}}_{H}(\omega)\tilde{\bm{\mathfrak{D}}}(\omega)]_{nm}[\tilde{\bm{\mathfrak{D}}}(\omega^{\prime})\tilde{\bm{G}}_{H}(\omega^{\prime})\tilde{\bm{\mathfrak{D}}}(-\omega^{\prime})]_{kl}\right\} ,\label{eq:Gamma-nr-klm-Frequency}
\end{align}
\begin{align}
\tilde{\Gamma}_{klm}^{nr} & =\frac{2}{(2\pi)^{2}}\iint d\omega d\omega^{\prime}\omega^{2}[\tilde{\bm{\mathfrak{D}}}(\omega)]_{nr}\left\{ [\tilde{\bm{\mathfrak{D}}}(-\omega)\tilde{\bm{G}}_{H}(-\omega)\tilde{\bm{\mathfrak{D}}}(\omega)]_{nk}[\tilde{\bm{\mathfrak{D}}}(\omega^{\prime})\tilde{\bm{G}}_{H}(\omega^{\prime})\tilde{\bm{\mathfrak{D}}}(-\omega^{\prime})]_{lm}\right.\nonumber \\
 & +[\tilde{\bm{\mathfrak{D}}}(-\omega)\tilde{\bm{G}}_{H}(-\omega)\tilde{\bm{\mathfrak{D}}}(\omega)]_{nl}[\tilde{\bm{\mathfrak{D}}}(\omega^{\prime})\tilde{\bm{G}}_{H}(\omega^{\prime})\tilde{\bm{\mathfrak{D}}}(-\omega^{\prime})]_{km}\nonumber \\
 & +\left.[\tilde{\bm{\mathfrak{D}}}(-\omega)\tilde{\bm{G}}_{H}(-\omega)\tilde{\bm{\mathfrak{D}}}(\omega)]_{nm}[\tilde{\bm{\mathfrak{D}}}(\omega^{\prime})\tilde{\bm{G}}_{H}(\omega^{\prime})\tilde{\bm{\mathfrak{D}}}(-\omega^{\prime})]_{kl}\right\} .\label{eq:GammaTilde-nr-klm-Frequency}
\end{align}
Finally, the frequency-domain diagrammatic representations for Eqs.
(\ref{eq:Ups-nr-klm-final}, \ref{eq:Ups-tilde-nr-klm-final}, \ref{eq:Lambda-final})
are shown in Figs. \ref{fig:Ups-UpsTilde-Frequency} and \ref{fig:Lambda-Frequency}.
From these diagrams, one can easily find 
\begin{align}
\Upsilon_{\nu klm}^{nr} & =\frac{1}{(2\pi)^{2}}\iint d\omega d\omega^{\prime}(i\omega)[\tilde{\bm{\mathfrak{D}}}(\omega)]_{nr}\left\{ [\tilde{\bm{\mathfrak{D}}}(-\omega)\tilde{\bm{G}}_{H}(-\omega)\tilde{\bm{\mathfrak{D}}}(\omega)]_{\nu k}[\tilde{\bm{\mathfrak{D}}}(\omega^{\prime})\tilde{\bm{G}}_{H}(\omega^{\prime})\tilde{\bm{\mathfrak{D}}}(-\omega^{\prime})]_{lm}\right.\nonumber \\
 & +[\tilde{\bm{\mathfrak{D}}}(-\omega)\tilde{\bm{G}}_{H}(-\omega)\tilde{\bm{\mathfrak{D}}}(\omega)]_{\nu l}[\tilde{\bm{\mathfrak{D}}}(\omega^{\prime})\tilde{\bm{G}}_{H}(\omega^{\prime})\tilde{\bm{\mathfrak{D}}}(-\omega^{\prime})]_{km}\nonumber \\
 & +\left.[\tilde{\bm{\mathfrak{D}}}(-\omega)\tilde{\bm{G}}_{H}(-\omega)\tilde{\bm{\mathfrak{D}}}(\omega)]_{\nu m}[\tilde{\bm{\mathfrak{D}}}(\omega^{\prime})\tilde{\bm{G}}_{H}(\omega^{\prime})\tilde{\bm{\mathfrak{D}}}(-\omega^{\prime})]_{kl}\right\} ,\label{eq:Ups-frequency}
\end{align}
\begin{align}
\tilde{\Upsilon}_{\nu klm}^{nr} & =\frac{1}{(2\pi)^{2}}\iint d\omega d\omega^{\prime}(-i\omega)[\tilde{\bm{\mathfrak{D}}}(\omega)]_{\nu r}\left\{ [\tilde{\bm{\mathfrak{D}}}(-\omega)\tilde{\bm{G}}_{H}(-\omega)\tilde{\bm{\mathfrak{D}}}(\omega)]_{nk}[\tilde{\bm{\mathfrak{D}}}(\omega^{\prime})\tilde{\bm{G}}_{H}(\omega^{\prime})\tilde{\bm{\mathfrak{D}}}(-\omega^{\prime})]_{lm}\right.\nonumber \\
 & +[\tilde{\bm{\mathfrak{D}}}(-\omega)\tilde{\bm{G}}_{H}(-\omega)\tilde{\bm{\mathfrak{D}}}(\omega)]_{nl}[\tilde{\bm{\mathfrak{D}}}(\omega^{\prime})\tilde{\bm{G}}_{H}(\omega^{\prime})\tilde{\bm{\mathfrak{D}}}(-\omega^{\prime})]_{km}\nonumber \\
 & +\left.[\tilde{\bm{\mathfrak{D}}}(-\omega)\tilde{\bm{G}}_{H}(-\omega)\tilde{\bm{\mathfrak{D}}}(\omega)]_{nm}[\tilde{\bm{\mathfrak{D}}}(\omega^{\prime})\tilde{\bm{G}}_{H}(\omega^{\prime})\tilde{\bm{\mathfrak{D}}}(-\omega^{\prime})]_{kl}\right\} ,\label{eq:Ups-tilde-frequency}
\end{align}
\begin{align}
\text{\ensuremath{\Lambda_{klm}^{n}}}= & \frac{1}{(2\pi)^{2}}\iint d\omega d\omega^{\prime}(i\omega)\left\{ [\tilde{\bm{\mathfrak{D}}}(\omega)\tilde{\bm{G}}_{H}(\omega)\tilde{\bm{\mathfrak{D}}}(-\omega)]_{nk}[\tilde{\bm{\mathfrak{D}}}(\omega^{\prime})\tilde{\bm{G}}_{H}(\omega^{\prime})\tilde{\bm{\mathfrak{D}}}(-\omega^{\prime})]_{lm}\right.\nonumber \\
+ & [\tilde{\bm{\mathfrak{D}}}(\omega)\tilde{\bm{G}}_{H}(\omega)\tilde{\bm{\mathfrak{D}}}(-\omega)]_{nl}[\tilde{\bm{\mathfrak{D}}}(\omega^{\prime})\tilde{\bm{G}}_{H}(\omega^{\prime})\tilde{\bm{\mathfrak{D}}}(-\omega^{\prime})]_{km}\nonumber \\
+ & \left.[\tilde{\bm{\mathfrak{D}}}(\omega)\tilde{\bm{G}}_{H}(\omega)\tilde{\bm{\mathfrak{D}}}(-\omega)]_{nm}[\tilde{\bm{\mathfrak{D}}}(\omega^{\prime})\tilde{\bm{G}}_{H}(\omega^{\prime})\tilde{\bm{\mathfrak{D}}}(-\omega^{\prime})]_{kl}\right\} .\label{eq:Lambda-frequency}
\end{align}

One can check that Eqs. (\ref{eq:Pxi0-frequency}-\ref{eq:Lambda-frequency})
are indeed Fourier transform of Eqs. (\ref{eq:Pxi0}, \ref{eq:pn2-zeroth},
\ref{eq:W-n2-zeroth},\ref{eq:Gamma-nr-klm-final}, \ref{eq:Gamma-tilde-nr-klm-final},
\ref{eq:Ups-nr-klm-final}, \ref{eq:Ups-tilde-nr-klm-final}, \ref{eq:Lambda-final}
) respectively, with the Fourier transform defined as $f(s)=1/(2\pi)\int_{-\infty}^{\infty}\tilde{f}(\omega)e^{i\omega s}d\omega$.
From Fig. \ref{fig:Ups-UpsTilde-Frequency} or explicit expressions
(\ref{eq:Ups-frequency}, \ref{eq:Ups-tilde-frequency}), one can
easily observe, 
\begin{equation}
\Upsilon_{nklm}^{nr}=-\tilde{\Upsilon}_{nklm}^{nr},
\end{equation}
which simplifies Eq. \eqref{eq:wn2-first-order} as 
\begin{figure}
\begin{picture}(450,100)
\begin{centering}
\put(0,0){\includegraphics[scale=0.4]{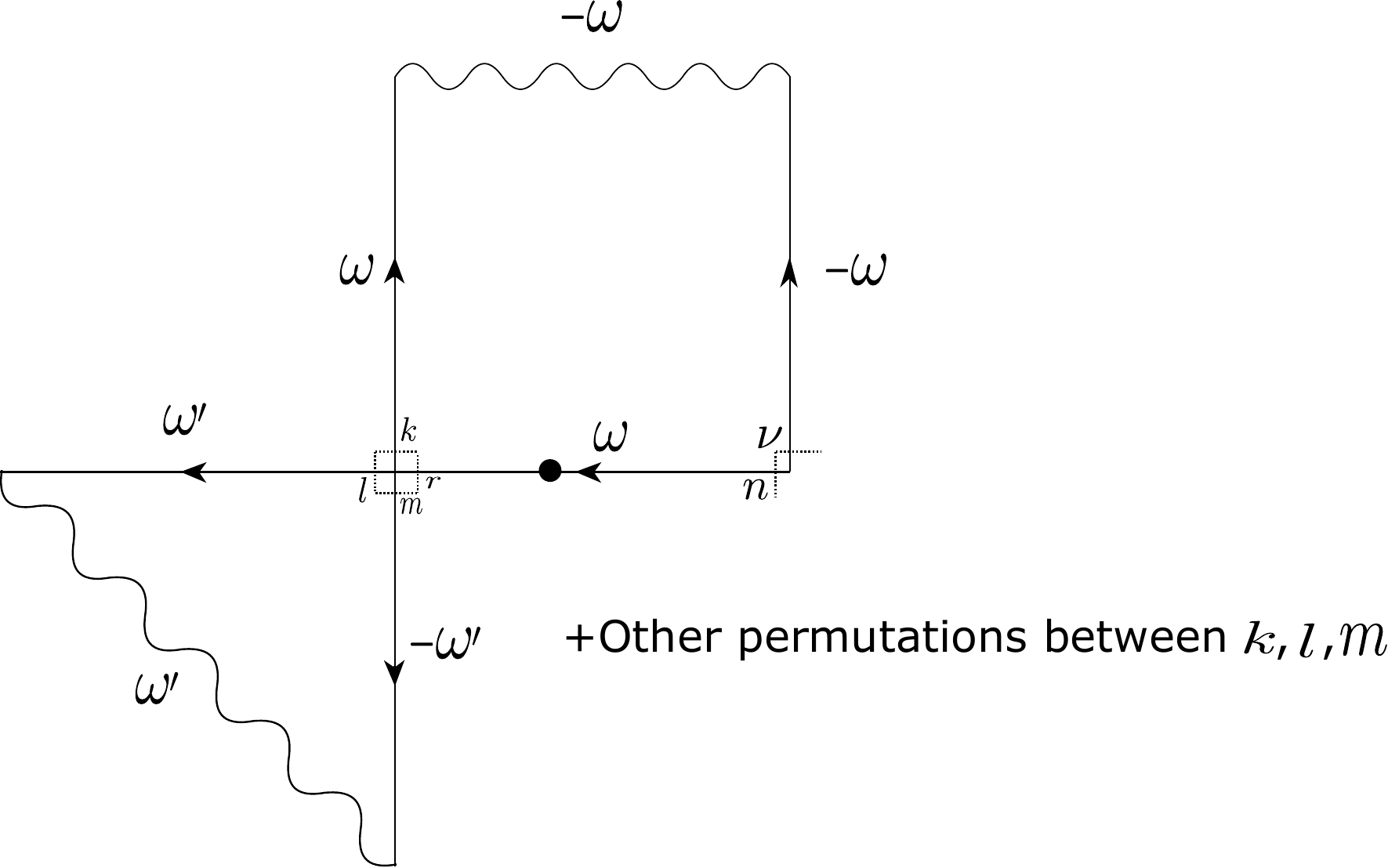}}\put(230,0){\includegraphics[scale=0.4]{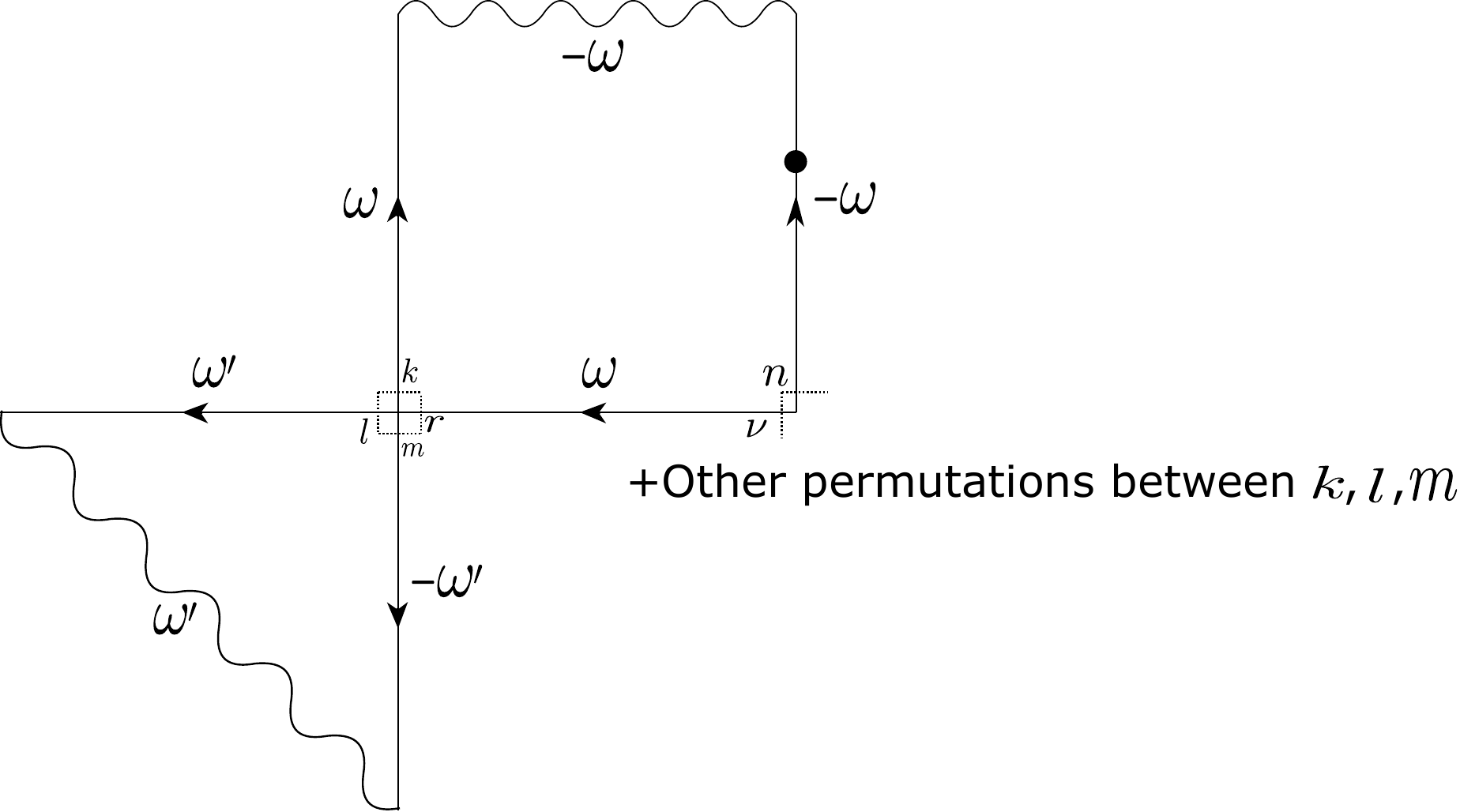}}
\par\end{centering}
\put(25,80){\footnotesize{}(a)} \put(250,80){\footnotesize{}(b)}
\end{picture} \centering{}\caption{\label{fig:Ups-UpsTilde-Frequency}Frequency-domain diagrammatic representation
of (a) $\Upsilon_{klm}^{nr}(\infty)$ and (b) $\tilde{\Upsilon}_{\nu klm}^{nr}(\infty)$.}
\end{figure}
\begin{equation}
w_{\nu\to n}^{(1,\,2)}=\sum_{klmr}\mu_{klmr}[\Upsilon_{\nu klm}^{nr}+\tilde{\Upsilon}_{\nu klm}^{nr}].\label{eq:wn2-first-simplified}
\end{equation}

\section{\label{sec:NESS}The non-equilibrium steady state (NESS)}

For the configuration we are interested, if the NESS exists, then
we expect when the NESS is reached we will have a steady, time-independent
energy current along the chain. Thus, in principle, in order to demonstrate
the existence of the NESS, we would like to show for the configuration
in consideration that in the late time limit $t_{f}\to\infty$, the
time rate of the internal energy of each oscillator vanishes. In other
words, we will show that the ensemble average of Eq. (\ref{eq:dEdt})
vanishes in the late time limit $t_{f}\to\infty$, such that 
\begin{equation}
P_{\xi_{n}}+P_{\gamma_{n}}+P_{n-1\to n}+P_{n+1\to n}=0\,,\label{eq:NESS}
\end{equation}
for each $n$.

However, we will discuss a particular case, for the proof of concept,
that involves only two oscillators in contact with their own private
baths. We will show Eq.~(\ref{eq:NESS}) holds in the late time limit,
up to the first order. This is our main result. We start with the
zeroth-order case, which has been shown to hold for a chain made of
any number of linear oscillators, strung together by bilinear coupling
\citep{HHAoP}, that is, the quadratic coupling discussed in this
paper. Nevertheless, For sake of completeness, we will still demonstrate
\eqref{eq:NESS} for the zeroth-order contributions based the results
derived by our approach in Sec.~\ref{sec:zeroth-order}. Then we
further proceed to the first-order corrections. 
\begin{figure}
\begin{centering}
\includegraphics[scale=0.5]{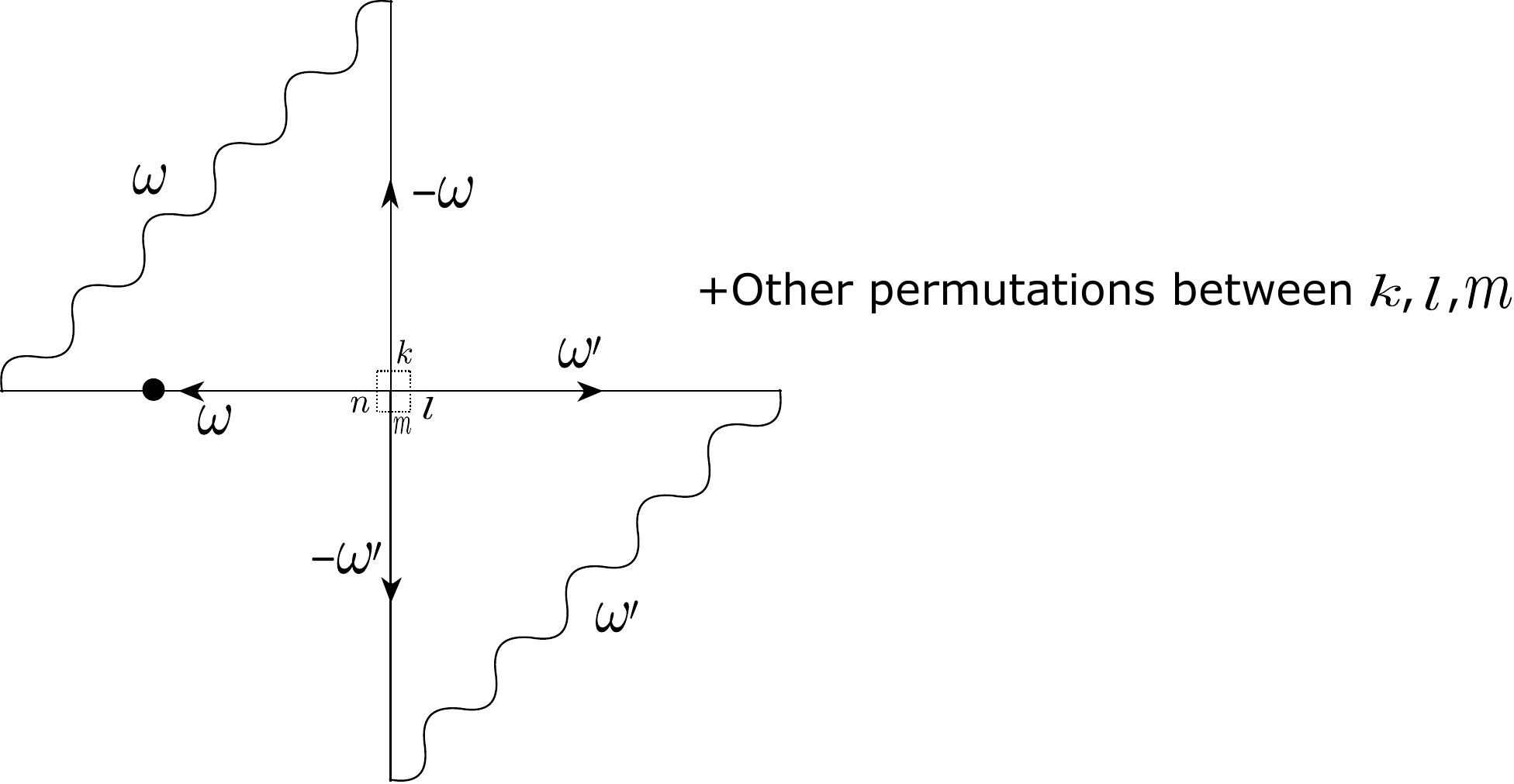}
\par\end{centering}
\centering{}\caption{\label{fig:Lambda-Frequency}Frequency-domain diagrammatic representation
of $\Lambda_{klm}^{n}(\infty)$. According to Eqs.~(\ref{eq:odd-integral},
\ref{eq:Lambda-frequency}), $\Lambda_{nnn}^{n}(\infty)=0$.}
\end{figure}

\subsection{NESS at the zeroth order}

From \eqref{eq:Pxi0-frequency}, \eqref{eq:pn20-frequency}, and \eqref{eq:wn20-frequency-simpified},
we use the property that 1) $\tilde{\bm{G}}_{H}(\omega)$ is diagonal,
and 2) $\operatorname{Im}\tilde{\bm{\mathfrak{D}}}(\omega)$ is odd
in $\omega$, we can write them as 
\begin{equation}
P_{\xi_{n}}^{(0)}=-\int_{-\infty}^{\infty}\!\frac{d\omega}{2\pi}\;\omega\,\operatorname{Im}\Bigl[\tilde{\bm{\mathfrak{D}}}(\omega)\Bigr]_{nn}\Bigl[\tilde{\bm{G}}_{H}(\omega)\Bigr]_{nn},\label{eq:P-xi-n-zero}
\end{equation}

\begin{equation}
P_{\gamma_{n}}^{(0)}=-2\gamma\sum_{k}\int_{-\infty}^{\infty}\!\frac{d\omega}{2\pi}\;\omega^{2}\,\lvert\Bigl[\tilde{\bm{\mathfrak{D}}}(\omega)\Bigr]_{nk}\rvert^{2}\Bigl[\tilde{\bm{G}}_{H}(\omega)\Bigr]_{kk},\label{eq:P-gamma-n-zero}
\end{equation}

\begin{equation}
P_{\nu\to n}^{(0,\,2)}=-\lambda_{2}\sum_{k}\int_{-\infty}^{\infty}\!\frac{d\omega}{2\pi}\;\omega\,\operatorname{Im}\Bigl\{\Bigl[\tilde{\bm{\mathfrak{D}}}(\omega)\Bigr]_{nk}\Bigl[\tilde{\bm{\mathfrak{D}}}^{*}(\omega)\Bigr]_{\nu k}\Bigr\}\Bigl[\tilde{\bm{G}}_{H}(\omega)\Bigr]_{kk}.\label{eq:P-nu-n-2-zero}
\end{equation}
So we obtain 
\begin{align}
 & P_{\xi_{n}}^{(0)}+P_{\gamma_{n}}^{(0)}+P_{\nu\to n}^{(0,\,2)}\nonumber \\
= & -\frac{1}{2\pi}\int_{-\infty}^{\infty}\left\{ \omega\text{Im}[\tilde{\bm{\mathfrak{D}}}(\omega)]_{nn}+2\gamma\omega^{2}\big|[\tilde{\bm{\mathfrak{D}}}(\omega)]_{nn}\big|^{2}+\lambda_{2}\omega\text{Im}\left([\tilde{\bm{\mathfrak{D}}}(\omega)]_{nn}[\tilde{\bm{\mathfrak{D}}}(\omega)]_{\nu n}^{*}\right)\right\} \Bigl[\tilde{\bm{G}}_{H}(\omega)\Bigr]_{nn}\nonumber \\
- & \frac{1}{2\pi}\int_{-\infty}^{\infty}\left\{ 2\gamma\omega^{2}\big|[\tilde{\bm{\mathfrak{D}}}(\omega)]_{n\nu}\big|^{2}+\lambda_{2}\omega\text{Im}\left([\tilde{\bm{\mathfrak{D}}}(\omega)]_{n\nu}[\tilde{\bm{\mathfrak{D}}}(\omega)]_{\nu\nu}^{*}\right)\right\} \Bigl[\tilde{\bm{G}}_{H}(\omega)\Bigr]_{\nu\nu}.
\end{align}
Now by replacing eliminating $[\tilde{\bm{\mathfrak{D}}}(\omega)]_{\nu n}^{*}$
through Eqs. (\ref{eq:off-diag-ME}, \ref{eq:Inverse1}), one can
easily show that 
\begin{equation}
\omega\text{Im}[\tilde{\bm{\mathfrak{D}}}(\omega)]_{nn}+2\gamma\omega^{2}\big|[\tilde{\bm{\mathfrak{D}}}(\omega)]_{nn}\big|^{2}+\lambda_{2}\omega\text{Im}\left([\tilde{\bm{\mathfrak{D}}}(\omega)]_{nn}[\tilde{\bm{\mathfrak{D}}}(\omega)]_{\nu n}^{*}\right)=0.\label{E:dgbrhtrdf1}
\end{equation}
Similarly, by eliminating $[\tilde{\bm{\mathfrak{D}}}(\omega)]_{\nu\nu}^{*}$
through Eqs. (\ref{eq:diag-ME}, \ref{eq:Inverse2}), one can easily
prove 
\begin{equation}
2\gamma\omega^{2}\big|[\tilde{\bm{\mathfrak{D}}}(\omega)]_{n\nu}\big|^{2}+\lambda_{2}\omega\text{Im}\left([\tilde{\bm{\mathfrak{D}}}(\omega)]_{n\nu}[\tilde{\bm{\mathfrak{D}}}(\omega)]_{\nu\nu}^{*}\right)=0.\label{E:dgbrhtrdf2}
\end{equation}
Eqs.~\eqref{E:dgbrhtrdf1} and \eqref{E:dgbrhtrdf2} together prove
that for both $n=1,\,2$, 
\begin{equation}
P_{\xi_{n}}^{(0)}+P_{\gamma_{n}}^{(0)}+P_{\nu\to n}^{(0,\,2)}=0\,.
\end{equation}

\subsection{NESS at the first order\label{S:mgbfdmg}}

Here we address existence of the NESS at the first order of the nonlinear
coupling constant. Since at this order, the contributions from the
nonlinear couplings are additive, and the $\alpha$-FPUT coupling
does not contribute, we consider only the KG- and the $\beta$-FPUT
nonlinearity, and will show $P_{\xi_{n}}^{(1)}+P_{\gamma_{n}}^{(1)}+P_{\nu\to n}^{(1)}=0$
for $n=1$ and $n=2$
\begin{figure}
\begin{picture}(400,400)

\put(-30,200){\includegraphics[scale=0.32]{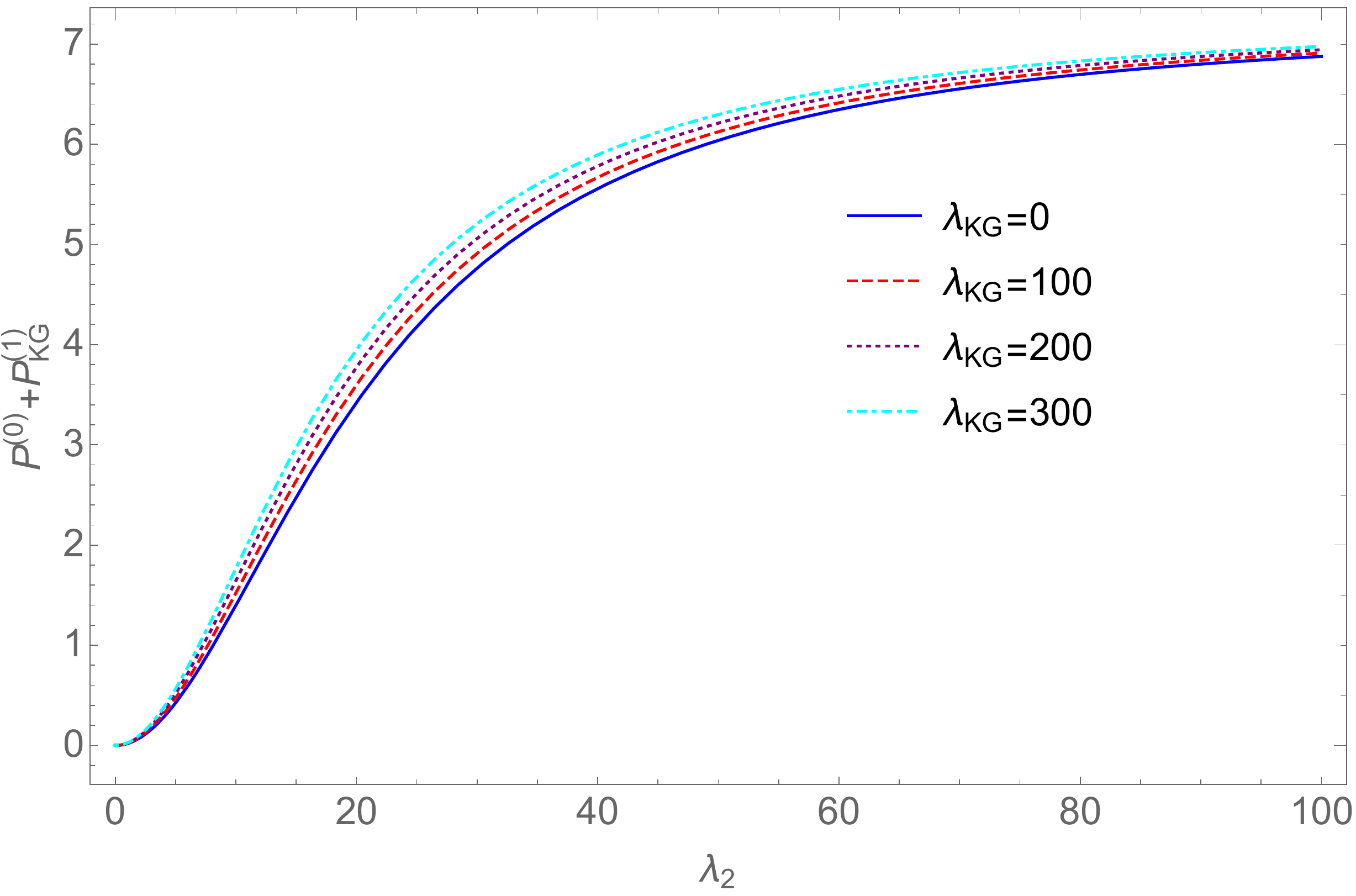}}\put(220,200){\includegraphics[scale=0.42]{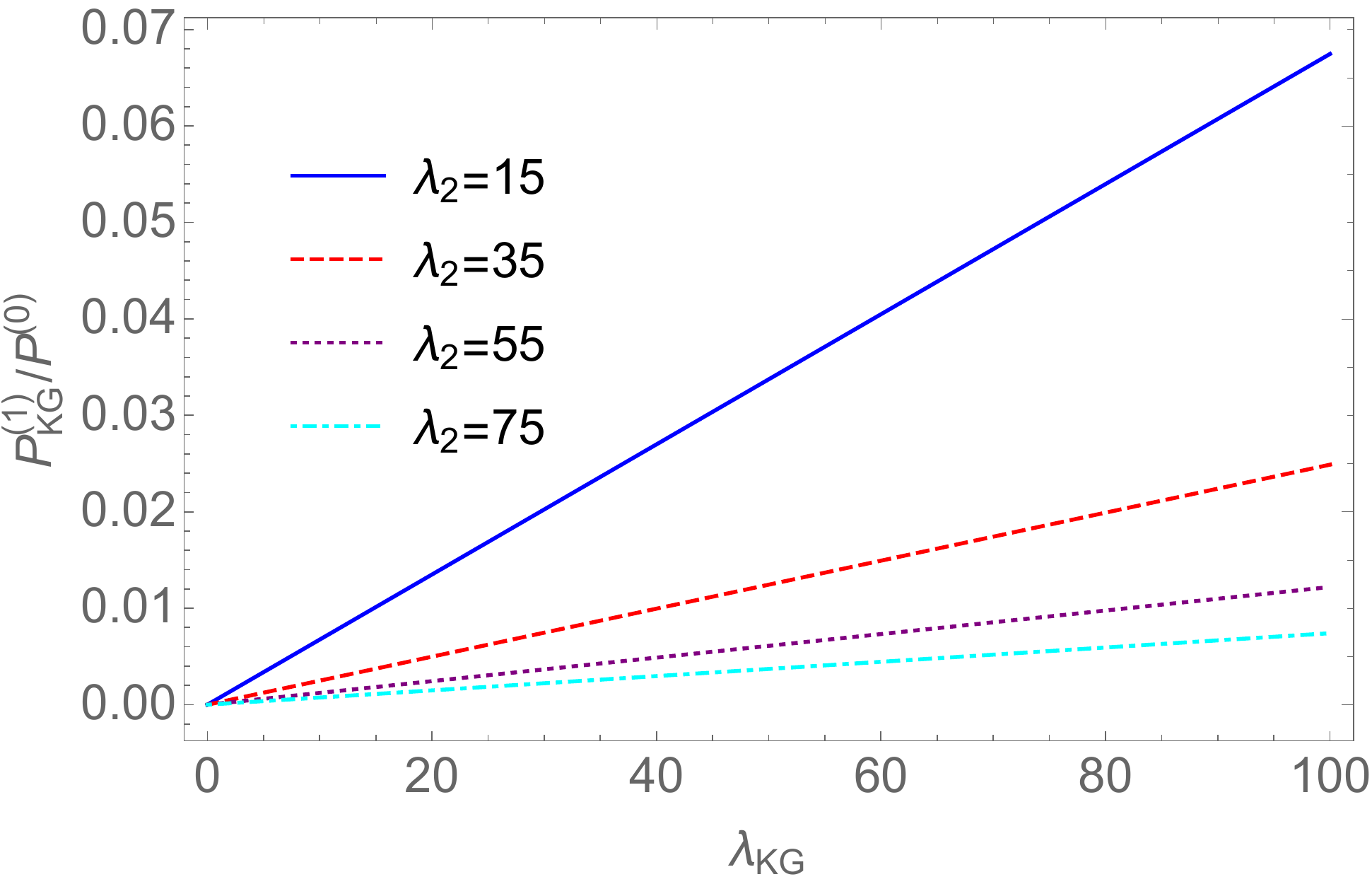}}

\put(0,0){\includegraphics[scale=0.32]{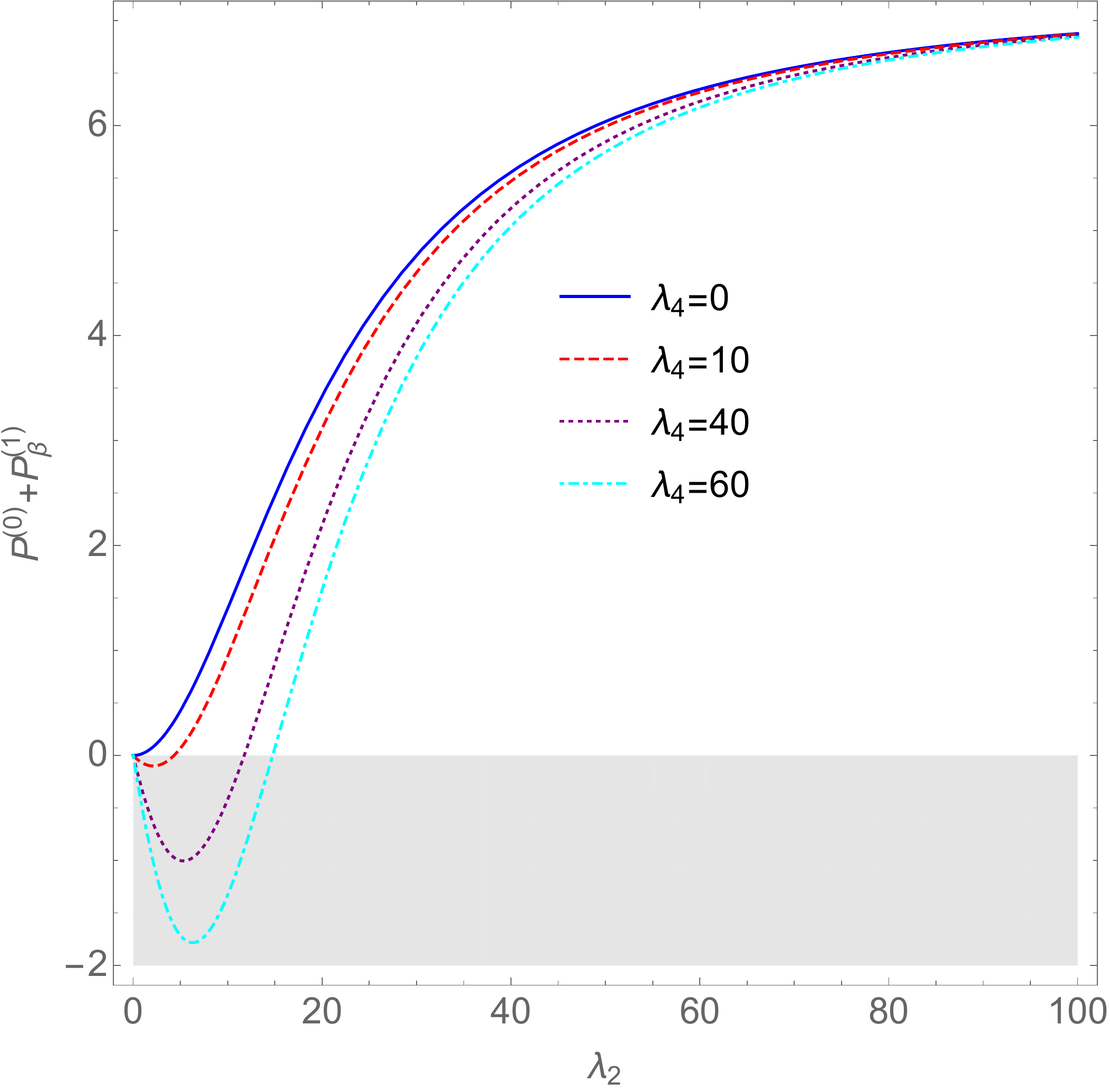}}\put(220,0){\includegraphics[scale=0.36]{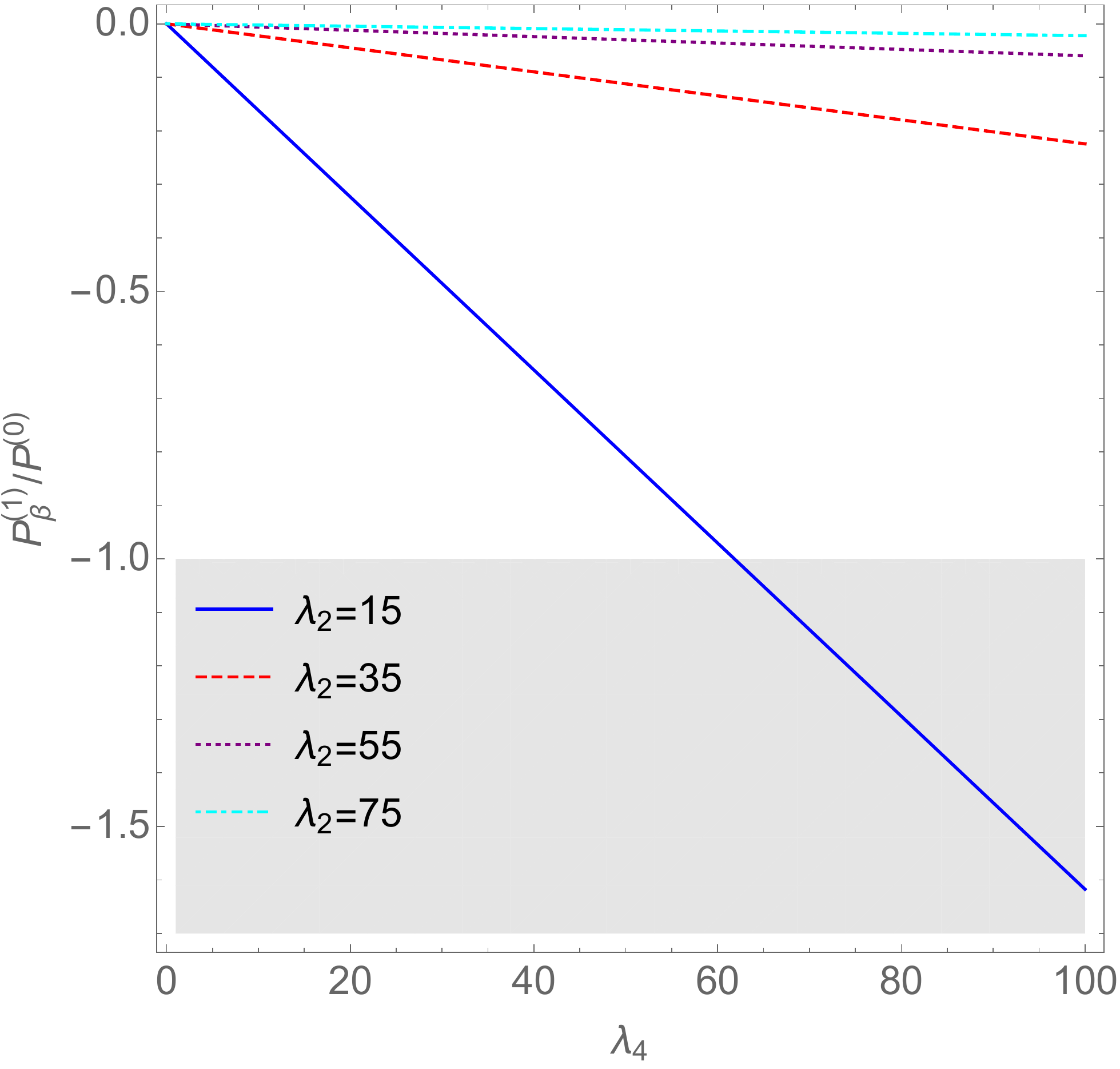}}

\put(40,330){\footnotesize{}(a)} \put(260,330){\footnotesize{}(b)}
\put(50,180){\footnotesize{}(c)} \put(260,180){\footnotesize{}(d)}
\end{picture}

\caption{\label{fig:Energy-current}Energy current up to the first order across
the two oscillators for (a) KG and (c) $\beta$-FPUT nonlinearities.
The ratio between the first-order energy current and the zeroth for
(b) KG and (d) $\beta$-FPUT nonlinearities. Value of parameters for
all the figures: $\omega_{0}=10$; $\gamma=1$; the temperature of
the hot bath is $T_{H}=\beta_{1}^{-1}=100$ and the temperature of
the cold bath is $T_{C}=\beta_{2}^{-1}=0.002$. Note the first-order
correction is positive for KG nonlinearity and negative $\beta$-FPUT
nonlinearity. Therefore for $\beta$-FPUT when the ratio $|P_{\beta}^{(1)}/P^{(0)}|\ge1$,
as shown in the gray shaded region in Figs. (c, d), the energy current
up to the first-order may become negative, meaning heat flow from
the cold bath to hot bath. However, we note that this effect is spurious
as whenever, $|P_{\beta}^{(1)}/P^{(0)}|\sim1$, our perturbative calculation
breaks down.}
\end{figure}

\subsubsection{KG nonlinearity}

When only KG nonlinearity is presented, we in fact have 
\begin{equation}
\sum_{klmr}\sigma_{klmr}r_{k}(s)q_{l}(s)q_{m}(s)q_{r}(s)=-\frac{\lambda_{\text{KG}}}{4}\sum_{k=1}^{2}r_{k}q_{k}^{3},\label{eq:KG-sigma-sum}
\end{equation}
\begin{equation}
\sum_{klmr}\mu_{klmr}r_{k}(s)r_{l}(s)r_{m}(s)q_{r}(s)=-\lambda_{\text{KG}}\sum_{k=1}^{2}r_{k}^{3}q_{k}.\label{eq:KG-mu-sum}
\end{equation}
Substitution of Eq. (\ref{eq:KG-mu-sum}) into Eqs.~\eqref{eq:Pxi-first-order},
\eqref{eq:Pgamma-first-order}, yields 
\begin{align}
P_{\xi_{n}}^{(1)} & =-\lambda_{\text{KG}}\sum_{k=1}^{2}\Gamma_{kkk}^{nk}\,, & P_{\gamma_{n}}^{(1)} & =2\gamma\lambda_{\text{KG}}\sum_{k=1}^{2}\tilde{\Gamma}_{kkk}^{nk}\,,
\end{align}
and from \eqref{eq:P-nu-eta-ave} and \eqref{eq:wn2-first-order},
we also obtain 
\begin{equation}
P_{\nu\to n}^{(1,\,2)}=-\lambda_{\text{KG}}\lambda_{2}\sum_{k=1}^{2}(\Upsilon_{\nu kkk}^{nk}+\tilde{\Upsilon}_{\nu kkk}^{nk}).\label{E:fgbkr}
\end{equation}
As a reminder, $P_{\nu\to n}^{(1,\,2)}$ is the first-order correction
to the energy flow $P_{\nu\to n}^{(2)}$ due to the quadratic intra-oscillator
coupling. Note that although both KG and $\beta-$FPUT are quartic
interactions, the KG nonlinearity only involves self-interaction at
each site and therefore, unlike the $\beta-$FPUT interactions, does
not induce $P_{\nu\to n}^{(4)}$. In  \ref{sec:NESS-Identities},
we have shown the following the identities 
\begin{eqnarray}
-\Gamma_{nnn}^{nn}+2\gamma\tilde{\Gamma}_{nnn}^{nn}-\lambda_{2}(\Upsilon_{\nu nnn}^{nn}+\tilde{\Upsilon}_{\nu nnn}^{nn}) & = & 0,\label{eq:NESS-Id0}\\
-\Gamma_{\nu\nu\nu}^{n\nu}+2\gamma\tilde{\Gamma}_{\nu\nu\nu}^{n\nu}-\lambda_{2}(\Upsilon_{\nu\nu\nu\nu}^{n\nu}+\tilde{\Upsilon}_{\nu\nu\nu\nu}^{n\nu}) & = & 0.\label{eq:NESS-Id1}
\end{eqnarray}
With these identity, we see that 
\begin{equation}
P_{\xi_{n}}^{(1)}+P_{\gamma_{n}}^{(1)}+P_{\nu\to n}^{(1,\,2)}=0,\label{eq:1order-balance-KG}
\end{equation}
for both $n=1$ and $n=2$ in the presence of the KG-type nonlinearity.
\begin{figure}
\begin{picture}(400,400)

\put(-30,210){\includegraphics[scale=0.39]{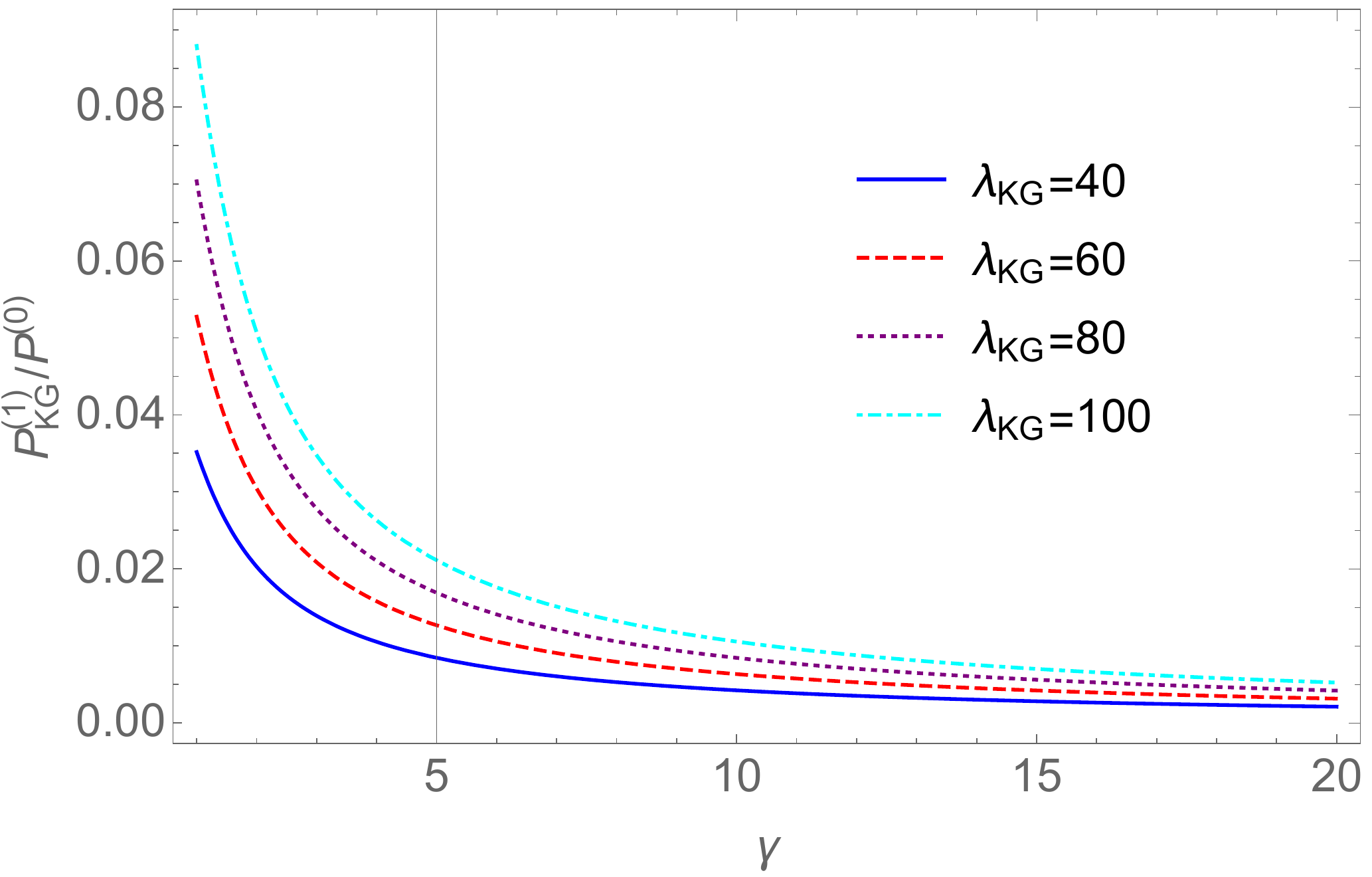}}\put(220,210){\includegraphics[scale=0.38]{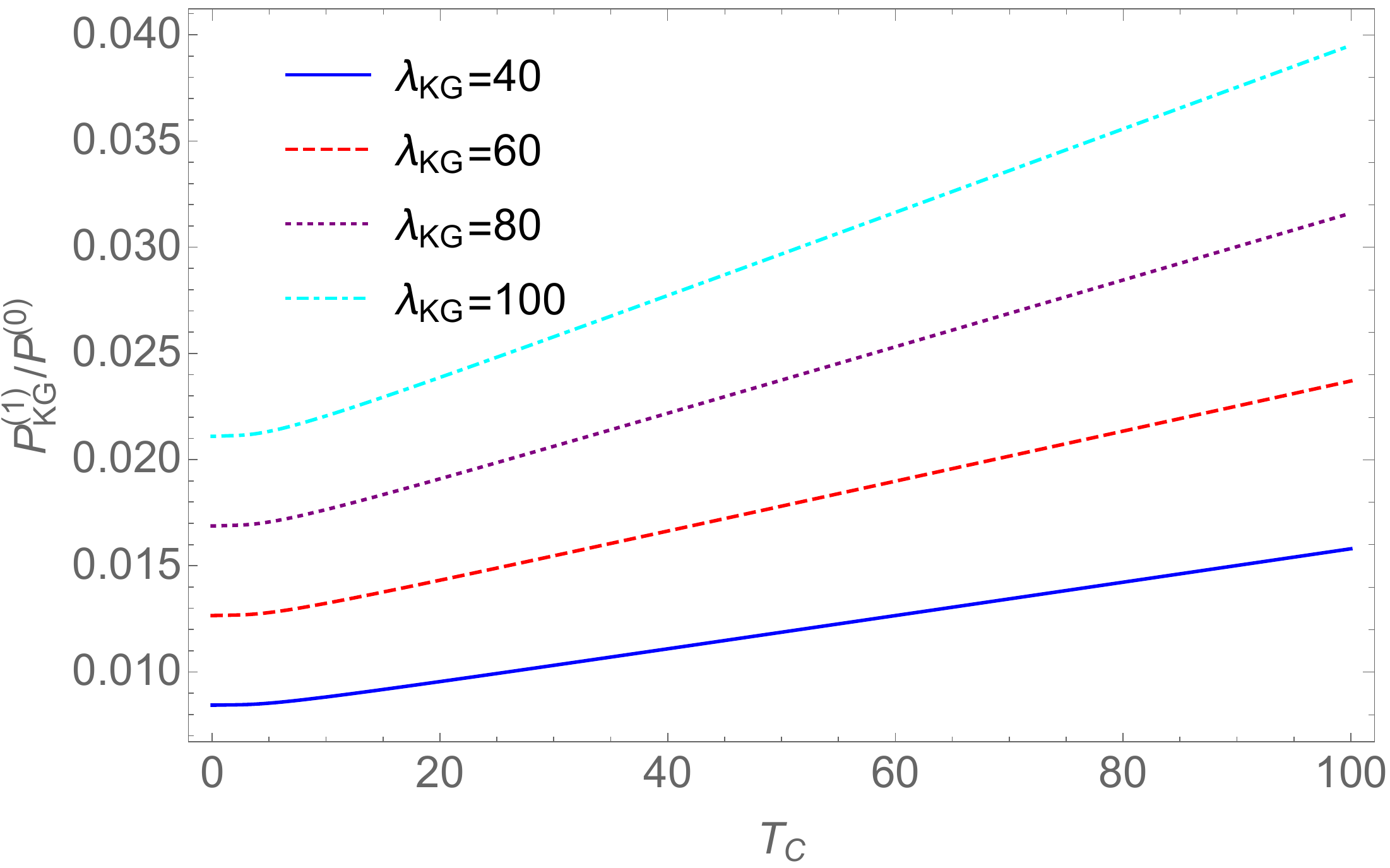}}

\put(0,0){\includegraphics[scale=0.42]{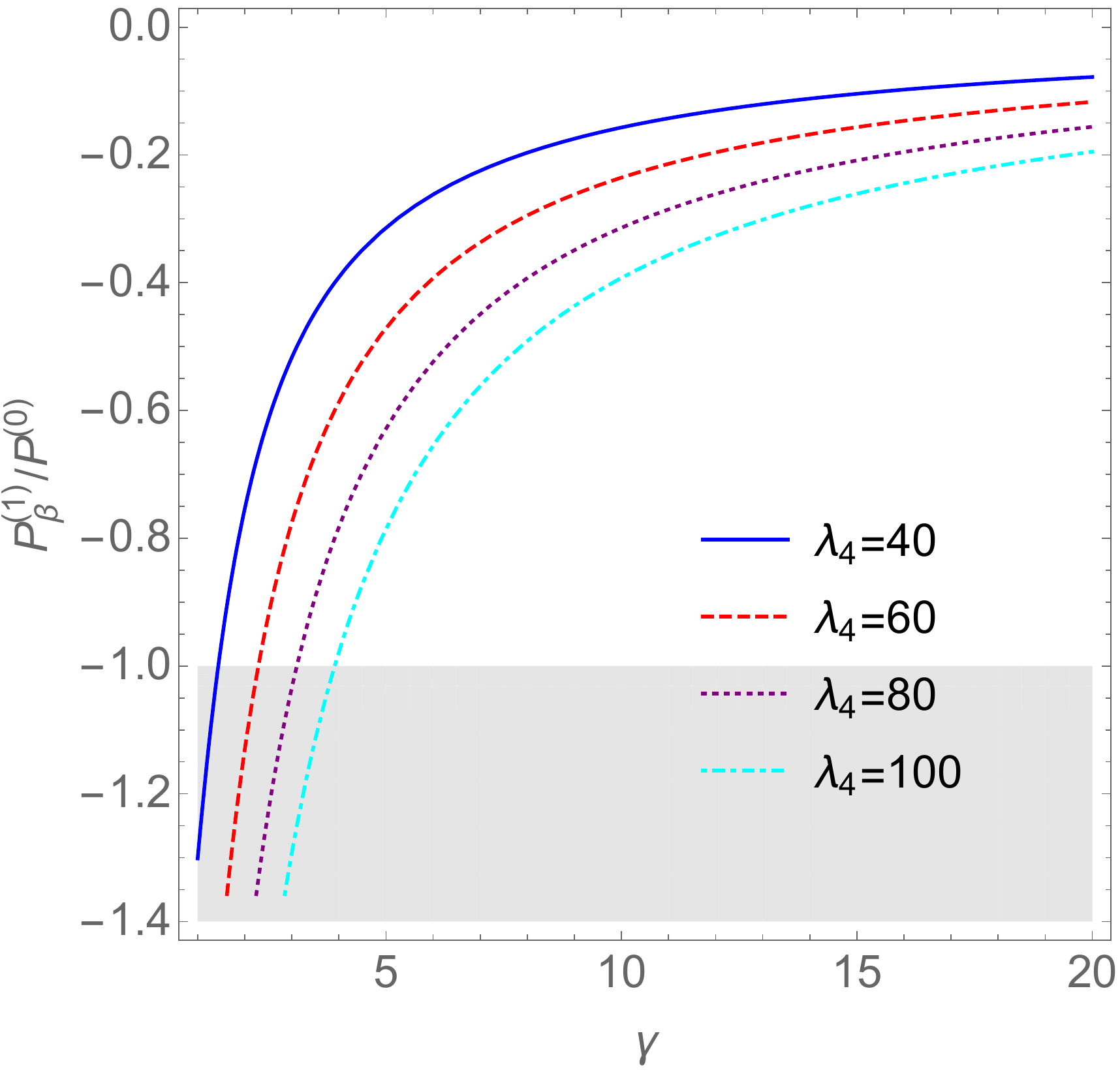}}\put(220,0){\includegraphics[scale=0.38]{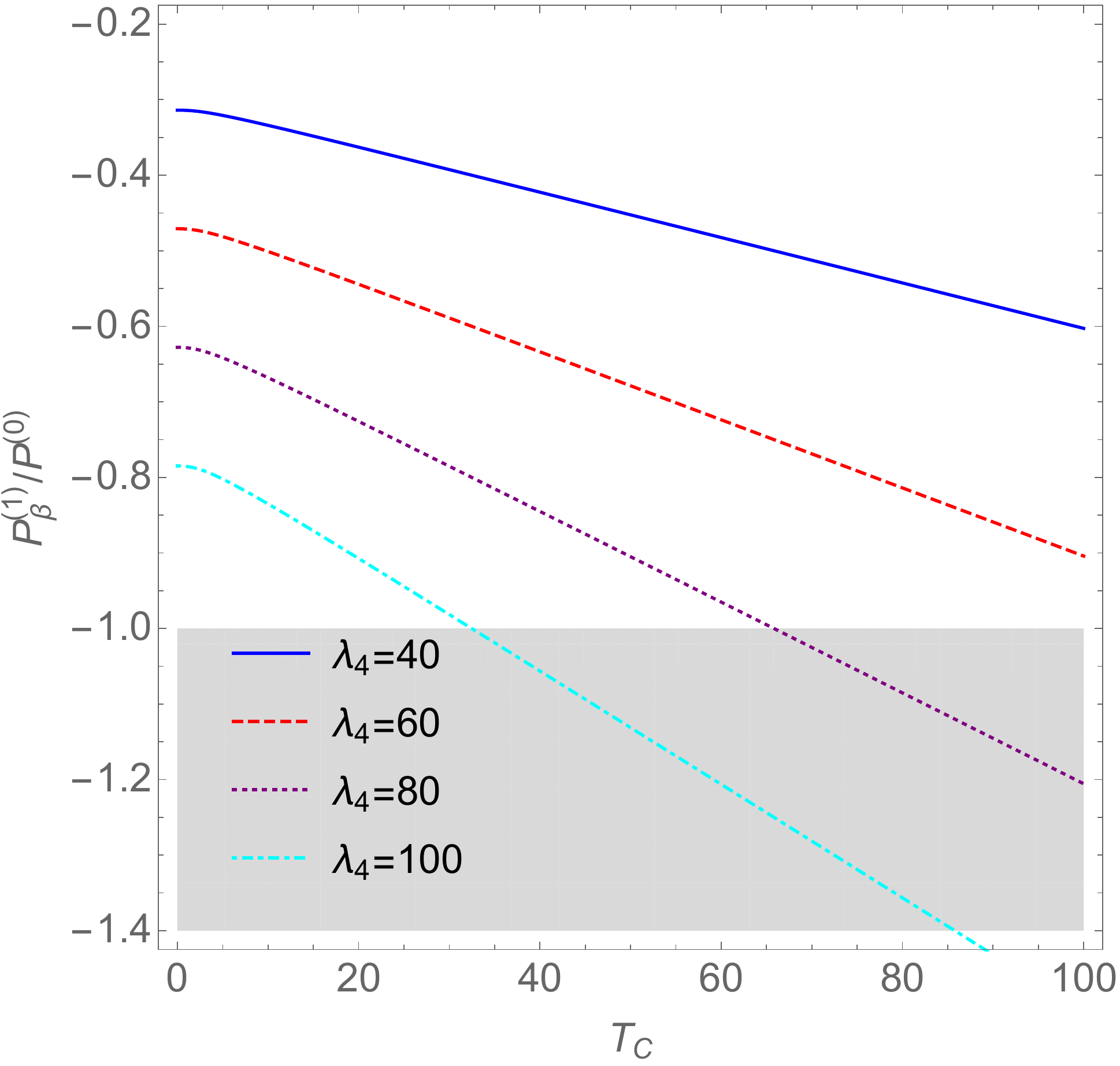}}

\put(40,330){\footnotesize{}(a)} \put(260,330){\footnotesize{}(b)}
\put(50,180){\footnotesize{}(c)} \put(260,180){\footnotesize{}(d)}
\end{picture}

\caption{\label{fig:ratio-tuning}The role of coupling (damping) and the temperature
bias in mediating the strength of (a, b) KG and (c, d) $\beta$-FPUT
nonlinearities. Common values of the parameters for all figures: $\omega_{0}=\lambda_{2}=10$,
the temperature of the hot bath is $T_{H}=\beta_{1}^{-1}=100$. For
(a) and (c): the temperature of the cold bath is $T_{C}=\beta_{2}^{-1}=0.002$.
For (b) and (d), $\gamma=5$. The gray shaded region shown in Figs.
(c, d) corresponding to $|P_{\beta}^{(1)}/P^{(0)}|\ge1$, where our
perturbative calculation breaks down. Therefore, the seemingly heat
flow from the cold bath to the hot bath is artificial.}
\end{figure}

\subsubsection{$\beta$-FPUT nonlinearity}

Next we move on to the $\beta$-FPUT nonlinearity. When only the $\beta$-FPUT
nonlinearity is presented, we can show that terms on the righthand
side of Eq.~(\ref{eq:calJ-xi}) becomes 
\begin{equation}
\sum_{klmr}\sigma_{klmr}r_{k}(s)q_{l}(s)q_{m}(s)q_{r}(s)=-\frac{\lambda_{4}}{4}(q_{n}-q_{\nu})(r_{n}-r_{\nu})^{3},\label{eq:beta-sigma-sum}
\end{equation}
\begin{equation}
\sum_{klmr}\mu_{klmr}r_{k}(s)r_{l}(s)r_{m}(s)q_{r}(s)=-\lambda_{4}(q_{n}-q_{\nu})(r_{n}-r_{\nu})^{3},\label{eq:beta-mu-sum}
\end{equation}
where the pairs $(n,\,\nu)$ could be either $(1,\,2)$ or $(2,\,1)$.
Substituting Eq.~(\ref{eq:beta-mu-sum}) into Eqs. (\ref{eq:Pxi-first-order},
\ref{eq:Pgamma-first-order}, \ref{eq:wn2-first-order}), we obtain
the first-order correction for the following energy fluxes 
\begin{align}
P_{\xi_{n}}^{(1)} & =-\lambda_{4}\,c_{n}(\Gamma)\,, & P_{\gamma_{n}}^{(1)} & =2\gamma\lambda_{4}\,c_{n}(\tilde{\Gamma})\,, & P_{\nu\to n}^{(1,\,2)} & =-\lambda_{2}\lambda_{4}\,\Bigl[f_{\nu n}(\Upsilon)+f_{\nu n}(\tilde{\Upsilon})\Bigr]\,,
\end{align}
where 
\begin{align}
c_{n}(\Gamma) & \equiv\Gamma_{nnn}^{nn}-3\Gamma_{nn\nu}^{nn}+3\Gamma_{n\nu\nu}^{nn}-\Gamma_{\nu\nu\nu}^{nn}-\Gamma_{nnn}^{n\nu}+3\Gamma_{nn\nu}^{n\nu}-3\Gamma_{n\nu\nu}^{n\nu}+\Gamma_{\nu\nu\nu}^{n\nu},\label{eq:c}\\
f_{\nu n}(\Upsilon) & \equiv\Upsilon_{\nu nnn}^{nn}-3\Upsilon_{\nu nn\nu}^{nn}+3\Upsilon_{\nu n\nu\nu}^{nn}-\Upsilon_{\nu\nu\nu\nu}^{nn}-\Upsilon_{\nu nnn}^{n\nu}+3\Upsilon_{\nu nn\nu}^{n\nu}-3\Upsilon_{\nu n\nu\nu}^{n\nu}+\Upsilon_{\nu\nu\nu\nu}^{n\nu}.\label{eq:f}
\end{align}
Eqs.~(\ref{eq:odd-integral}, \ref{eq:Lambda-frequency}) indicates
that $\Lambda_{nnn}^{n}=0$. Thus, from Eqs. (\ref{eq:P-nu-eta-ave},
\ref{eq:wbeta-zero-order}), we find 
\begin{equation}
P_{\nu\to n}^{(1,\,4)}=\lambda_{4}[3\Lambda_{nn\nu}^{n}-3\Lambda_{n\nu\nu}^{n}+\Lambda_{\nu\nu\nu}^{n}].
\end{equation}
Here $P_{\nu\to n}^{(1,\,4)}$ is the first-order correction to the
energy flow due to the quartic intra-oscillator coupling. In  \ref{sec:NESS-Identities},
we show that 
\begin{eqnarray}
-\Gamma_{klm}^{n\nu}+2\gamma\tilde{\Gamma}_{klm}^{n\nu}-\lambda_{2}(\Upsilon_{\nu klm}^{n\nu}+\tilde{\Upsilon}_{\nu klm}^{n\nu}) & = & 0,\label{eq:NESS-Id2}\\
-\Gamma_{nk\nu}^{nn}+2\gamma\tilde{\Gamma}_{nk\nu}^{nn}-\lambda_{2}(\Upsilon_{\nu nk\nu}^{nn}+\tilde{\Upsilon}_{\nu nk\nu}^{nn}) & = & \Lambda_{nk\nu}^{n},\label{eq:NESS-Id3}\\
-\Gamma_{\nu\nu\nu}^{nn}+2\gamma\tilde{\Gamma}_{\nu\nu\nu}^{nn}-\lambda_{2}(\Upsilon_{\nu\nu\nu\nu}^{nn}+\tilde{\Upsilon}_{\nu\nu\nu\nu}^{nn}) & = & \Lambda_{\nu\nu\nu}^{n},\label{eq:NESS-Id4}
\end{eqnarray}
where $k,\,l,\,m=n,\,\nu$. With Eqs. (\ref{eq:NESS-Id0}, \ref{eq:NESS-Id2}-\ref{eq:NESS-Id4}),
we readily obtain 
\begin{equation}
P_{\xi_{n}}^{(1)}+P_{\gamma_{n}}^{(1)}+P_{\nu\to n}^{(1,\,2)}+P_{\nu\to n}^{(1,\,4)}=0\,,\label{E:kbgksrt}
\end{equation}
for $n=1,\,2$.

Together with $P_{\nu\to n}^{(1,\,4)}$, the net energy flow from
site $\nu$ to site $n$ to the first order in the $\beta$-FPUT intra-oscillator
coupling, we can define 
\begin{equation}
P_{\nu\to n}^{(\le1)}=P_{\nu\to n}^{(0,\,2)}+P_{\nu\to n}^{(1,\,2)}+P_{\nu\to n}^{(1,\,4)}\,,
\end{equation}
representing the energy flow from site $\nu$ to site $n$, up to
the first order. Note that the $P_{\nu\to n}^{(0,\,4)}=0$ according
to the definition \eqref{eq:P-nu-eta-ave}. We define 
\begin{equation}
P_{\text{B}_{n}}^{(\le1)}=P_{\xi_{n}}^{(0)}+P_{\gamma_{n}}^{(0)}+P_{\xi_{n}}^{(1)}+P_{\gamma_{n}}^{(1)}\,,
\end{equation}
the net energy flow into oscillator $i$ from its private bath, up
to the first order. Thus \eqref{E:kbgksrt} becomes 
\begin{equation}
P_{\textsc{b}_{n}}^{(\le1)}=-P_{\nu\to n}^{(\le1)}
\end{equation}
at late times. It means we have a steady energy flow through each
oscillator. The direction of the flow will be determined by the temperatures
of the private baths. As seen from~\citep{HHAoP,Paz}, at the zeroth-order,
energy always flows from the higher temperature bath to the lower
one. Here, even we consider its first-order correction due to nonlinear
couplings, within the validity of the perturbative treatment, the
corrections are required much smaller than the zeroth-order contribution,
and thus will not change the direction of the flow.

We conclude this section by noting that the zeroth-order heat fluxes,
$P_{\xi_{n}}^{(0)}$ and $P_{\gamma_{n}}^{(0)}$ given in Eqs. (\ref{eq:P-xi-n-zero},
\ref{eq:P-gamma-n-zero}) diverge, which can be seen by counting the
powers of $\tilde{\bm{G}}_{H}(\omega)$ and $\tilde{\bm{\mathfrak{D}}}(\omega)$
at large $\omega$. Therefore, we need to perform regularization in
order to evaluate them. We see that their regularized sum, according
to the NESS condition at the zeroth order, equals to the finite $-P_{\nu\to n}^{(0,\,2)}$,
which again can be seen by counting the powers of $\tilde{\bm{G}}_{H}(\omega)$
and $\tilde{\bm{\mathfrak{D}}}(\omega)$ at large $\omega$. By similar
analysis, one can readily find all the first-order corrections of
the heat fluxes in the KG and $\beta$-FPUT model are all finite.

\section{\label{sec:Conclusion}Discussion and Conclusion}

%\section{\label{sec:physical-applications}Physical applications}

Having established the NESS in the late time limit in the last section,
we now show how our formal perturbative results presented in Sec.~\ref{sec:zeroth-order}
and \ref{sec:The-first-order-corrections} naturally give a measure
of the strength of nonlinearity for nonlinear open quantum systems
with KG and $\beta$-FPUT type nonlinearity. This should provide insights
for understanding and controlling the mesoscopic quantum heat transport.
We denote the steady-state energy flows across the two oscillators
at the zeroth order, the first-order for KG and $\beta$-FPUT types
of nonlinearity as $P^{(0)}$, $P_{\text{KG}}^{(1)}$ and $P_{\beta}^{(1)}$
respectively. Then the ratios $P_{\text{KG}}^{(1)}/P^{(0)}$ and $P_{\text{\ensuremath{\beta}}}^{(1)}/P^{(0)}$,
which characterize the strength of nonlinearity, can provide a measurement
of how valid a perturbative calculations is, namely, as long as these
ratios are much smaller than one.

As shown in Fig. \ref{fig:Energy-current}, the ratios $P_{\text{KG}}^{(1)}/P^{(0)}$
and $P_{\text{\ensuremath{\beta}}}^{(1)}/P^{(0)}$ become smaller
with the increase of the quadratic intra-oscillator coupling constant
or the decrease of the quartic coupling constants $\lambda_{\text{KG}}$
and $\lambda_{4}$. However, comparing Fig. \ref{fig:Energy-current}
(a) with (c) or (b) with (d), we find two distinct features between
the KG and $\beta$-FPUT nonlinearities: First, the KG nonlinearity
is weaker than the $\beta$-FPUT nonlinearities in that under the
same value of parameters $P_{\text{KG}}^{(1)}/P^{(0)}$ smaller than
$P_{\text{\ensuremath{\beta}}}^{(1)}/P^{(0)}$. Secondly, the KG nonlinearity
tends to strengthen the zeroth order heat flow since it induces a
positive first-order correction while for the $\beta$-FPUT tends
to weaken the zeroth order heat flow since it induces a negative first-order
correction. As shown in Figs. \ref{fig:Energy-current}(c, d), for
the gray shaded region, where $|P_{\beta}^{(1)}/P^{(0)}|\ge1$, the
net energy current up to the first-order becomes spuriously negative,
implying heat flow from the cold bath to hot bath spontaneously. However,
we should emphasize that this regime actually belongs to the one of
strong nonlinearity, which lies beyond the reach of our perturbative
calculation here. Nonetheless we see the general tendency in how nonlinear
coupling may affect the energy transport along the chain. Hence our
perturbative result, although not applicable in the regime of strong
nonlinearity, may still provide some useful physical insights into
this regime.

Fig. \ref{fig:ratio-tuning} illustrates how the nonlinear ratio $P^{(1)}/P^{(0)}$
changes when the coupling between the two oscillators and their respective
private baths and the temperature bias are tuned. Fig. \ref{fig:ratio-tuning}(a)
and (c) indicate that for both types of nonlinearities, the stronger
the coupling is, the less the transport deviates from the linear one.
Similarly, by increasing the temperature bias between the two baths,
one may also be able to bring the transport from strong nonlinear
regime to weak nonlinear regime, as shown in Fig. \ref{fig:ratio-tuning}(b)
and (d).

On the contrary, as discussed in \citep{Paz}, the absorption refrigerator
consisting of a network of linear oscillators and three-terminal baths
does not produce any heat flow from the cold to the hot bath through
the working bath when reaching NESS. Therefore, in order to reach
efficient cooling, one must sufficiently tune the device away from
the linear transport regime. The formal expression in Sec. \ref{sec:zeroth-order}
and \ref{sec:The-first-order-corrections} still applies, as long
as the oscillator network is weakly nonlinear. One only needs to replace
the causal propagator corresponding to two-oscillator two-bath configuration
discussed here with the one corresponding to the three-oscillator
three-bath configuration. Then our perturbative calculations here,
although it cannot provide quantitatitive predictions about the transport
in the strongly nonlinear regime, can still indicate the trend in
how an interacting system enter the strongly nonlinear regime in order
to generate significant cooling effects.

In summary, we have provided a functional perturbative approach based
on the stochastic generating functional which allows one to compute
perturbatively the energy flux in the late time limit for a quantum
anharmonic chain in the presence of \textit{weak} nonlinearity. We
considered the $\alpha$-FPUT, KG, and $\beta$-FPUT types of nonlinearities
and gave the first order corrections to all the relevant energy fluxes
to the first-order of nonlinear coupling constants. Up to this order,
the contributions to the energy fluxes due to the initial state of
the chain vanish in the late time limit. We found the first order
corrections due to $\alpha$-FPUT nonlinearity vanish. For two coupled
anharmonic oscillators that are coupled to their own private harmonic
bath, the KG and $\beta$-FPUT nonlinearities lead to an NESS in the
late time limit, up to the first-order of the nonlinear coupling constant.
We conjectured that NESS exists in a chain with any of the $\alpha$-FPUT,
KG, and $\beta$-FPUT types of nonlinearities, to arbitrary orders
of the respective nonlinear the coupling constant. %We find that the weakly nonlinear regime, where our perturbative calculations
%is valid, can be reached by tuning the oscillator-bath coupling and
%the temperature bias across the two oscillators. 
From the fractional change of the energy current due to nonlinearity,
its dependence on the oscillator-bath coupling and bath temperature
bias can provide information about the range of validity of the perturbative
analysis. This also reveals the tendency by which the energy currents
can be modified by the nonlinearities. The expressions for the energy
currents we found can be straightforwardly extended to an anharmonic
chain consisting of arbitrary number of oscillators in contact with
their own private baths. Therefore, our perturbative calculations
can help understand the anomalous heat transport in low dimensions,
at least provide some benchmark for the numerical simulations and
other analytical approximation schemes in the weakly nonlinear regime.
Our predictions about the $\alpha$-FPUT, KG, and $\beta$-FPUT types
of nonlinearities can be directly implemented and verified by e.g.
engineering an array of Josephson junctions in the limit of large
Josephson energy.

\section{Acknowledgement}

Work by JY and ANJ was supported by the U.S. Department of Energy
(DOE), Office of Science, Basic Energy Sciences (BES) under Award
No. DE-SC-0017890. JY would like to thank Professors A. Das and S.
G. Rajeev for helpful discussions. JTH and BLH are thankful to Prof.
Hong Zhao for his insight in nonlinear transport problems based on
his extensive numerical work.

\appendix
%dummy comment inserted by tex2lyx to ensure that this paragraph is not empty

\renewcommand*{\appendixname}{Appendix }

\renewcommand*\cftsecnumwidth{4.5em}

\addtocontents{toc}{\protect\addtolength{\protect\cftsecnumwidth}{4.5em}}

\section{Feynman-Vernon influence functional formalism\label{sec:FV-formalism}}

With the help of the Feynman-Vernon influence functional formalism,
the propagating function $\mathcal{J}(\bm{\chi}_{f},\,\bm{\chi}_{f}^{\prime},\,t_{f};\bm{\chi}_{i},\,\bm{\chi}_{i}^{\prime},\,t_{i})$
can be formally expressed as 
\begin{align}
J(\bm{\chi}_{f},\,\bm{\chi}_{f}^{\prime},\,t_{f};\bm{\chi}_{i},\,\bm{\chi}_{i}^{\prime},\,t_{i})=\int_{\bm{\chi}(t_{i})=\bm{\chi}_{i}}^{\bm{\chi}(t_{f})=\bm{\chi}_{f}}\!\mathcal{D}\bm{\chi}\int_{\bm{\chi}^{\prime}(t_{i})=\bm{\chi}_{i}^{\prime}}^{\bm{\chi}^{\prime}(t_{f})=\bm{\chi}_{f}^{\prime}}\!\mathcal{D}\bm{\chi}^{\prime}\;\exp\Bigl\{ i\,S_{\chi}[\bm{\chi}]-i\,S_{\chi}[\bm{\chi}^{\prime}]\Bigr\} F[\bm{\chi},\,\bm{\chi}^{\prime}]\,,\label{eq:J}
\end{align}
where the influence functional $F[\bm{\chi},\,\bm{\chi}^{\prime}]$
in Eq. \eqref{eq:J} is 
\begin{align}
F[\bm{\chi},\bm{\chi}^{\prime}] & =\mathcal{F}[\bm{q},\bm{r}]=\exp\left\{ \int_{t_{i}}^{t_{f}}\int_{t_{i}}^{t_{f}}dsds^{\prime}\left[i\,\bm{q}^{T}(s)\bm{G}_{R}(s,\,s^{\prime})\bm{r}(s^{\prime})-\frac{1}{2}\bm{q}^{T}(s)\bm{G}_{H}(s,\,s^{\prime})\bm{q}(s^{\prime})\right]\right\} .\label{eq:F-gen-qr}
\end{align}
where we have introduced the center-of-mass coordinate $\bm{r}$ and
relative coordinate $\bm{q}$ 
\begin{align}
\bm{q}(s) & =\bm{\chi}(s)-\bm{\chi}^{\prime}(s)\,, & \bm{r}(s) & =\frac{1}{2}[\bm{\chi}(s)+\bm{\chi}^{\prime}(s)]\,,\label{eq:r}
\end{align}
and the $N\times N$ diagonal Green's function matrices 
\begin{align}
\bm{G}_{R}(s,\,s^{\prime}) & =\operatorname{diag}\bigl[G_{R}^{(n)}(s,\,s^{\prime})\bigr]\,, & \bm{G}_{H}(s,\,s^{\prime}) & =\operatorname{diag}\bigl[G_{H}^{(n)}(s,\,s^{\prime})\bigr]\,.\label{eq:G-matrix}
\end{align}
The Green's functions $G_{R}^{(n)}(s,\,s^{\prime})$ and $G_{H}^{(n)}(s,\,s^{\prime})$
are respectively the dissipation kernel and the noise kernel of the
private thermal bath at the temperature $\beta_{n}^{-1}$ associated
with the $n^{\text{th}}$ oscillator 
\begin{align}
G_{R}^{(n)}(s,\,s^{\prime}) & =-\frac{e_{n}^{2}}{2\pi}\theta(s-s^{\prime})\partial_{s}\delta(s-s^{\prime})\,, & G_{H}^{(n)}(s,\,s^{\prime}) & =e_{n}^{2}\int_{-\infty}^{\infty}\!d\omega\;\tilde{G}_{H}^{(n)}(\omega)\,e^{i\omega(s-s^{\prime})}\,,\label{eq:GHn-omega-thermal}\\
\tilde{G}_{H}^{(n)}(\omega) & =\omega\coth\left(\frac{\beta_{n}\omega}{2}\right)\,,\label{E:kbgsdbf}
\end{align}
and they are defined by 
\begin{align}
G_{R}^{(n)}(s,\,s^{\prime}) & \equiv i\,\theta(s-s^{\prime})\,\langle[\hat{\phi}^{(i)}(\bm{z}_{i}(s),\,s),\hat{\phi}^{(i)}(\bm{z}_{i}(s^{\prime}),\,s^{\prime})]\,,\rangle\label{eq:GRn-def}\\
G_{H}^{(n)}(s,\,s^{\prime}) & \equiv\langle\{\hat{\phi}^{(i)}(\bm{z}_{i}(s),\,s),\,\hat{\phi}^{(i)}(\bm{z}_{i}(s^{\prime}),\,s^{\prime})\}\rangle\,,\label{eq:GHn-def}
\end{align}
%\textcolor{red}{(note about the definition of the Hadamard function... does it contains the factor $1/2$? this will affect $\tilde{G}_{H}^{(i)}(\omega)$ in \eqref{E:kbgsdbf} and the coefficient before $\bm{G}_{H}(s,\,s^{\prime})$in \eqref{eq:F-gen-qr})} 
where $\hat{\phi}_{n}(\bm{z}_{n}(s),\,s)$ denotes the field operator
in the Heisenberg picture and the expectation value is taken with
respect to its initial thermal state.

Now it is useful to introduce the coarse-grained action 
\begin{align}
\mathcal{S}_{\textsc{cg}}[\bm{r},\,\bm{q}] & =\int_{t_{i}}^{t_{f}}\!ds\;\left\{ \mathcal{S}_{\chi}[\bm{r}+\frac{\bm{q}}{2}]-\mathcal{S}_{\chi}[\bm{r}-\frac{\bm{q}}{2}]+\int_{t_{i}}^{t_{f}}ds^{\prime}\;\Bigl[\bm{q}^{T}(s)\bm{G}_{R}(s,\,s^{\prime})\bm{r}(s^{\prime})+\frac{i}{2}\bm{q}^{T}(s)\bm{G}_{H}(s,\,s^{\prime})\bm{q}(s^{\prime})\Bigr]\right\} \,,\label{eq:SCG}
\end{align}
and apply the Hubbard-Stratonovich transformation~\citep{HHAoP}
to the term involving the noise kernel $\bm{G}_{H}(s,\,s^{\prime})$
in Eq. (\ref{eq:SCG}), i.e., 
\begin{equation}
\exp\left[-\frac{1}{2}\int_{t_{i}}^{t_{f}}\!ds\int_{t_{i}}^{t_{f}}\!ds^{\prime}\;\bm{q}^{T}(s)\bm{G}_{H}(s,\,s^{\prime})\bm{q}(s^{\prime})\right]=\int\!\mathcal{D}\bm{\xi}\;P[\bm{\xi}]\exp\left[i\int_{0}^{t_{f}}\!ds\;\bm{q}^{T}(s)\bm{\xi}(s)\right],\label{eq:HS}
\end{equation}
where $P[\bm{\xi}]$ is interpreted as the Gaussian probability distribution
functional of the $c$-number stochastic noise $\bm{\xi}(s)$, whose
first two moments are $\langle\xi_{k}(s)\rangle=0\,,\langle\xi_{k}(s)\xi_{l}(s^{\prime})\rangle=G_{H}^{(k)}(s,\,s^{\prime})\delta_{kl}.$
This allows to introduce the stochastic propagating function 
\begin{equation}
\mathcal{J}_{\bm{\xi}}(\bm{r}_{f},\,\bm{q}_{f},\,t_{f};\,\bm{r}_{i},\,\bm{q}_{i},\,t_{i})=\int_{\bm{q}(t_{i})=\bm{q}_{i}}^{\bm{q}(t_{f})=\bm{q}_{f}}\!\mathcal{D}\bm{q}\!\int_{\bm{r}(t_{i})=\bm{r}_{i}}^{\bm{r}(t_{f})=\bm{r}_{f}}\!\mathcal{D}\bm{r}\;\exp\Bigl\{ i\,\mathcal{S}_{\bm{\xi}}[\bm{r},\,\bm{q}]\Bigr\}\,,\label{eq:Jxi}
\end{equation}
where the stochastic effective action is defined as 
\begin{align}
\mathcal{S}_{\bm{\xi}}[\bm{r},\,\bm{q}] & =\int_{t_{i}}^{t_{f}}ds\left\{ \mathcal{S}\left[\bm{r}+\frac{\bm{q}}{2}\right]-\mathcal{S}\left[\bm{r}-\frac{\bm{q}}{2}\right]+\int_{t_{i}}^{t_{f}}ds^{\prime}\bm{q}^{T}(s)\bm{G}_{R}(s,\,s^{\prime})\bm{r}(s^{\prime})+\bm{q}^{T}(s)\bm{\xi}(s)\right\} .\label{eq:SSE}
\end{align}

\section{Functional method\label{sec:SPF}}

To evaluate the expectation value $O(t_{f},\bm{\xi}]$ of the quantity
$O$ for each realization of the stochastic noise, it is convenient
to introduce the functional method. We first note that $\operatorname{Tr}\{\hat{\rho}(t_{f})\}=\langle\!\langle\operatorname{Tr}\{\hat{\rho}_{\bm{\xi}}(t_{f})\}\rangle\!\rangle$,
and we can write 
\begin{align}
\operatorname{Tr}\Bigl\{\hat{\rho}_{\bm{\xi}}(t_{f})\Bigr\} & =\int\!d\bm{\chi}_{i}d\bm{\chi}_{i}^{\prime}\int\!d\bm{\chi}_{f}\;J_{\bm{\xi}}(\bm{\chi}_{f},\,\bm{\chi}_{f},\,t_{f};\,\bm{\chi}_{i},\,\bm{\chi}_{i}^{\prime})\rho(\bm{\chi}_{i},\,\bm{\chi}_{i}^{\prime})\nonumber \\
 & =\int\!d\bm{r}_{i}d\bm{q}{}_{i}\int\!d\bm{r}_{f}\;\mathcal{J}_{\bm{\xi}}(\bm{r}_{f},\,0,\,t_{f};\,\bm{r}_{i},\,\bm{q}_{i},\,t_{i})\varrho(\bm{r}_{i},\,\bm{q}_{i})\,,
\end{align}
where we have made the change of variables 
\begin{align}
\bm{\chi}_{f} & \mapsto\bm{r}_{f}\,, & \bm{\chi}_{i} & \mapsto\bm{r}_{i}+\frac{\bm{q}_{i}}{2}\,, & \bm{\chi}_{i}^{\prime} & \mapsto\bm{r}_{i}-\frac{\bm{q}_{i}}{2}\,,
\end{align}
and introduced the functional variants of the stochastic propagating
function $J_{\bm{\xi}}(\bm{\chi}_{f},\,\bm{\chi}_{f},\,t_{f};\,\bm{\chi}_{i},\,\bm{\chi}_{i}^{\prime})$
and the density matrix $\rho(\bm{\chi}_{i},\,\bm{\chi}_{i}^{\prime})$
by 
\begin{align}
J_{\bm{\xi}}(\bm{\chi}_{f},\,\bm{\chi}_{f},\,t_{f};\,\bm{\chi}_{i},\,\bm{\chi}_{i}^{\prime}) & =\mathcal{J}_{\bm{\xi}}(\bm{r}_{f},\,0,\,t_{f};\,\bm{r}_{i},\,\bm{q}_{i})\,, & \rho(\bm{\chi}_{i},\,\bm{\chi}_{i}^{\prime},) & =\varrho(\bm{r}_{i},\,\bm{q}_{i})\,,
\end{align}
respectively. Thus, the expectation value $p_{n}^{2}(t_{f},\,\bm{\xi}]$
in \eqref{eq:pn2} becomes 
\begin{align}
\operatorname{Tr}\Bigl\{\hat{p}_{n}^{2}\,\hat{\rho}_{\bm{\xi}}(t_{f})\Bigr\} & =\int\!d\bm{p}_{f}d\bm{\chi}_{f}d\bm{\chi}_{f}^{\prime}\;\braket{\bm{\chi}_{f}^{\prime}\big|\hat{p}_{n}^{2}\big|\bm{p}_{f}}\braket{\bm{p}_{f}\big|\bm{\chi}_{f}}\braket{\bm{\chi}_{f}\big|\hat{\rho}_{\bm{\xi}}(t_{f})\big|\bm{\chi}_{f}^{\prime}}\nonumber \\
 & =\frac{1}{(2\pi)^{N}}\int\!d\bm{p}_{f}\;p_{fn}^{2}\,e^{i\bm{p}_{f}\cdot(\bm{\chi}_{f}^{\prime}-\bm{\chi}_{f})}\int\!d\bm{\chi}_{f}d\bm{\chi}_{f}^{\prime}\;\braket{\bm{\chi}_{f}\big|\hat{\rho}_{\bm{\xi}}(t_{f})\big|\bm{\chi}_{f}^{\prime}}\nonumber \\
 & =\frac{1}{(2\pi)^{N}i^{2}}\frac{\partial^{2}}{\partial h_{n}^{2}}\int\!d\bm{p}_{f}\;e^{i\bm{p}_{f}\cdot(\bm{\chi}_{f}^{\prime}-\bm{\chi}_{f})+i\bm{p}_{f}\cdot\bm{h}}\int\!d\bm{\chi}_{f}d\bm{\chi}_{f}^{\prime}\braket{\bm{\chi}_{f}\big|\hat{\rho}(t_{f})\big|\bm{\chi}_{f}^{\prime}}\,\bigg|_{\bm{h}=0}\nonumber \\
 & =-\frac{\partial^{2}}{\partial h_{n}^{2}}\int\!d\bm{\chi}_{f}\;\braket{\bm{\chi}_{f}\big|\hat{\rho}(t_{f})\big|\bm{\chi}_{f}-\bm{h}}\,\bigg|_{\bm{h}=0}\nonumber \\
 & =-\frac{\partial^{2}}{\partial h_{n}^{2}}\int\!d\bm{r}_{i}d\bm{q}_{i}\int\!d\bm{r}_{f}\;\mathcal{J}_{\bm{\xi}}(\bm{r}_{f},\,\bm{h},\,t_{f};\,\bm{r}_{i},\,\bm{q}_{i})\varrho(\bm{r}_{i},\,\bm{q}_{i})\,\bigg|_{\bm{h}=0}\,.\label{eq:pn2-J}
\end{align}
Similarly, we find the expectation value $p_{n}(t_{f},\,\bm{\xi}]$
in \eqref{eq:pn2} expressed by 
\begin{align}
\operatorname{Tr}\Bigl\{\hat{p}_{n}\,\hat{\rho}_{\bm{\xi}}(t_{f})\Bigr\} & =-i\,\frac{\partial}{\partial h_{n}}\int\!d\bm{r}_{i}d\bm{q}_{i}\int\!d\bm{r}_{f}\;\mathcal{J}_{\bm{\xi}}(\bm{r}_{f},\,\bm{h},\,t_{f};\,\bm{r}_{i},\,\bm{q}_{i})\varrho(\bm{r}_{i},\,\bm{q}_{i})\,\bigg|_{\bm{h}=0}\,.\label{eq:pn-J}
\end{align}
as a derivative over a fictitious, external constant source $\bm{h}$.

It is straightforward to calculate $w_{\nu\to n}^{(\eta)}(t_{f},\,\bm{\xi}]$
in \eqref{eq:wn-nu-eta} by the same derivative approach. We need
to calculate the following quantity 
\begin{align}
\operatorname{Tr}\Bigl\{\hat{p}_{n}^{2}\bigl(\hat{\chi}_{n}-\hat{\chi}_{m}\bigr)^{\eta}\,\hat{\rho}_{\bm{\xi}}(t_{f})\Bigr\} & =-\frac{\partial^{2}}{\partial h_{n}^{2}}\int\!d\bm{\chi}_{f}\;\bigl(\chi_{fn}-\chi_{fm}\bigr)^{\eta}\,\braket{\bm{\chi}_{f}\big|\hat{\rho}_{\bm{\xi}}(t_{f})\big|\bm{\chi}_{f}-\bm{h}}\,\bigg|_{\bm{h}=0}\nonumber \\
 & =-\frac{\partial^{2}}{\partial h_{n}^{2}}\int\!d\bm{r}_{i}d\bm{q}_{i}\int\!d\bm{r}_{f}\,\Bigl[r_{fn}+\frac{h_{n}}{2}-r_{fm}-\frac{h_{m}}{2}\Bigr]^{\eta}\mathcal{J}_{\bm{\xi}}(\bm{r}_{f},\,\bm{h},\,t_{f};\,\bm{r}_{i},\,\bm{q}_{i})\varrho(\bm{r}_{i},\,\bm{q}_{i})\,\bigg|_{\bm{h}=0}\,,\label{eq:pn-2-un-um-eta}
\end{align}
and the quantity 
\begin{align}
\operatorname{Tr}\Bigl\{\bigl(\hat{\chi}_{n}-\hat{\chi}_{m}\bigr)^{\eta}\hat{p}_{n}^{2}\,\hat{\rho}_{\bm{\xi}}(t_{f})\Bigr\} & =-\frac{\partial^{2}}{\partial h_{n}^{2}}\int\!d\bm{\chi}_{f}\;\bigl(\chi_{n}-\chi_{m}\bigr)^{\eta}\braket{\bm{\chi}_{f}+\bm{h}\big|\hat{\rho}_{\bm{\xi}}(t_{f})\big|\bm{\chi}_{f}}\,\bigg|_{\bm{h}=0}\nonumber \\
 & =-\frac{\partial^{2}}{\partial h_{n}^{2}}\int\!d\bm{r}_{i}d\bm{q}_{i}\int\!d\bm{r}_{f}\;\Bigl[r_{fn}-\frac{h_{n}}{2}-r_{fm}+\frac{h_{m}}{2}\Bigr]^{\eta}\mathcal{J}_{\bm{\xi}}(\bm{r}_{f},\,\bm{h},\,t_{f};\,\bm{r}_{i},\,\bm{q}_{i})\varrho(\bm{r}_{i},\,\bm{q}_{i})\,\bigg|_{\bm{h}=0}\,.\label{eq:un-um-eta-pn-2}
\end{align}
Note the different signs in the right hand sides in Eq. (\ref{eq:pn-2-un-um-eta},
\ref{eq:un-um-eta-pn-2}) due to the differential positions of $\hat{p}_{n}$
on the left hand sides. Since the derivatives in Eqs.~\eqref{eq:pn-2-un-um-eta},
\eqref{eq:un-um-eta-pn-2} do not involve $h_{m}$ ($m\neq n$), so
we can safely set $h_{m}=0$ in these equation. Furthermore, the terms
proportional to the third power of $h_{n}$ and higher can be discarded
since only the second-order derivatives of $h_{n}$ are involved.
With these observations, we immediately find 
\begin{align}
\operatorname{Tr}\Bigl\{\bigl[\hat{p}_{n}^{2},\,\bigl(\hat{\chi}_{n}-\hat{\chi}_{m}\bigr)^{2}\bigr]\,\hat{\rho}_{\bm{\xi}}(t_{f})\Bigr\} & =-2\,\frac{\partial^{2}}{\partial h_{n}^{2}}\int\!d\bm{r}_{i}d\bm{q}_{i}\int\!d\bm{r}_{f}\;h_{n}\Bigl[r_{fn}-r_{fm}\Bigr]\,\mathcal{J}_{\bm{\xi}}(\bm{r}_{f},\,\bm{h},\,t_{f};\,\bm{r}_{i},\,\bm{q}_{i})\varrho(\bm{r}_{i},\,\bm{q}_{i})\,\bigg|_{\bm{h}=0}\nonumber \\
 & =-2\,\frac{\partial^{2}}{\partial h_{n}^{2}}\biggl\{ h_{n}\Bigl[\langle r_{fn}\rangle_{\bm{h}}-\langle r_{fm}\rangle_{\bm{h}}\Bigr]\biggr\}\,\bigg|_{\bm{h}=0}\nonumber \\
 & =-4\,\frac{\partial}{\partial h_{n}}\langle r_{fn}-r_{fm}\rangle_{\bm{h}}\,\bigg|_{\bm{h}=0}\,,\label{eq:Tr-p2-delu2}
\end{align}
where the expectation value $\langle\bullet\rangle$ is defined by
\begin{equation}
\langle\bullet\rangle\equiv\operatorname{Tr}\Bigl\{\bullet\;\hat{\rho}(t_{f})\Bigr\}=\int\!d\bm{r}_{i}d\bm{q}_{i}\int\!d\bm{r}_{f}\;\bigl(\bullet\bigr)\,\mathcal{J}_{\bm{\xi}}(\bm{r}_{f},\,\bm{h},\,t_{f};\,\bm{r}_{i},\,\bm{q}_{i})\varrho(\bm{r}_{i},\,\bm{q}_{i})\,,
\end{equation}
and the subscript $\bm{h}$ is a reminder that this expectation value
is still a function of $\bm{h}$. Similarly, We can find 
\begin{align}
\operatorname{Tr}\Bigl\{\bigl[\hat{p}_{n}^{2},\,\bigl(\hat{\chi}_{n}-\hat{\chi}_{m}\bigr)^{3}\bigr]\,\hat{\rho}_{\bm{\xi}}(t_{f})\Bigr\} & =-3\,\frac{\partial^{2}}{\partial h_{n}^{2}}\int\!d\bm{r}_{i}d\bm{q}_{i}\int\!d\bm{r}_{f}\;h_{n}\Bigl[r_{fn}-r_{fm}\Bigr]^{2}\,\mathcal{J}_{\bm{\xi}}(\bm{r}_{f},\,\bm{h},\,t_{f};\,\bm{r}_{i},\,\bm{q}_{i})\varrho(\bm{r}_{i},\,\bm{q}_{i})\,\bigg|_{\bm{h}=0}\nonumber \\
 & =-6\,\frac{\partial}{\partial h_{n}}\langle\bigl(r_{fn}-r_{fm}\bigr)^{2}\rangle_{\bm{h}}\,\bigg|_{\bm{h}=0}\,,
\end{align}
and 
\begin{equation}
\operatorname{Tr}\Bigl\{\bigl[\hat{p}_{n}^{2},\,\bigl(\hat{\chi}_{n}-\hat{\chi}_{m}\bigr)^{4}\bigr]\,\hat{\rho}_{\bm{\xi}}(t_{f})\Bigr\}=-8\,\frac{\partial}{\partial h_{n}}\langle\bigl(r_{fn}-r_{fm}\bigr)^{3}\rangle_{\bm{h}}\,\bigg|_{\bm{h}=0}\,.\label{eq:Tr-p2-delu4}
\end{equation}
Note that in Eqs.~(\ref{eq:pn2-J}-\ref{eq:Tr-p2-delu4}), oscillator
$n$ and $m$ are not necessarily ajacent to each other and interactions
beyond the nearest neigbhor interactions are also included. If we
take $m=\nu$ with $\nu=n\pm1$ we arrive at a simple, but general
expression for $w_{\nu\to n}^{(\eta)}(t_{f},\,\bm{\xi}]$ 
\begin{equation}
w_{\nu\to n}^{(\eta)}(t_{f},\,\bm{\xi}]=i\,\frac{\partial}{\partial h_{n}}\langle\bigl(r_{fn}-r_{f\nu}\bigr)^{\eta-1}\rangle_{\bm{h}}\,\bigg|_{\bm{h}=0}\,,
\end{equation}

In calculating Eqs.~(\ref{eq:pn2-J}-\ref{eq:Tr-p2-delu4}), we note
that it turns out convenient to introduce the generating functional
\begin{equation}
\mathcal{Z}_{\bm{\xi}}[\bm{j},\,\bm{j}^{\prime},\bm{h})=\int\!d\bm{r}_{i}d\bm{q}_{i}\int\!d\bm{r}_{f}\int_{\bm{r}(0)=\bm{r}_{i}}^{\bm{r}(t_{f})=\bm{r}_{f}}\!\mathcal{D}\bm{r}\int_{\bm{q}(0)=\bm{q}_{i}}^{\bm{q}(t_{f})=\bm{h}}\!\mathcal{D}\bm{q}\;e^{i\mathcal{S}_{\bm{\xi}}[\bm{r},\,\bm{q};\,\bm{j},\,\bm{j}^{\prime}]}\varrho(\bm{r}_{i},\,\bm{q}_{i})\,,\label{E:mggnsm}
\end{equation}
with a new action $\mathcal{S}_{\bm{\xi}}[\bm{r},\,\bm{q};\,\bm{j},\,\bm{j}^{\prime}]$
defined by 
\begin{equation}
\mathcal{S}_{\bm{\xi}}[\bm{r},\,\bm{q};\,\bm{j},\,\bm{j}^{\prime}]=\mathcal{S}_{\bm{\xi}}[\bm{r},\,\bm{q}]+\int_{0}^{t_{f}}\!ds\;\Bigl\{\bm{j}(s)\cdot\bm{r}(s)+\bm{j}^{\prime}(s)\cdot\bm{q}(s)\Bigr\}\,.\label{E:fghdjve}
\end{equation}
Apparently it is the stochastic effective action $\mathcal{S}_{\bm{\xi}}[\bm{r},\,\bm{q}]$
in \eqref{eq:SSE} attached with external sources $\bm{j}$ and $\bm{j}^{\prime}$.
With the help of this generating functional, the results in Eqs.~\eqref{eq:pn-Z}
and \eqref{eq:pn2-Z} then take very succinct forms.

To find the energy exchange between the oscillators, we need to evaluate
$w_{\nu\to n}^{(\eta)}(t_{f},\,\bm{\xi}]$ in \eqref{eq:wn-nu-eta}.
Generically it is equivalent to the evaluations of the multi-moment
of $\bm{\chi}$. We can do it also by the functional method. For example,
the second moment $\operatorname{Tr}\{\hat{\chi}_{m}\hat{\chi}_{n}\,\hat{\rho}_{\bm{\xi}}(t_{f})\}$
can be found by 
\begin{align}
\operatorname{Tr}\Bigl\{\hat{\chi}_{n}\hat{\chi}_{m}\,\hat{\rho}_{\bm{\xi}}(t_{f})\Bigr\} & =\int\!d\bm{\chi}_{f}\;\chi_{fn}\chi_{fm}\,\rho_{\bm{\xi}}(\bm{\chi}_{f},\,\bm{\chi}_{f},\,t_{f})=\langle r_{n}(t_{f})r_{m}(t_{f})\rangle_{\bm{h}=0}\,,\label{eq:2pointV}
\end{align}
and, similarly, three and four-point moments are given by 
\begin{align}
\operatorname{Tr}\Bigl\{\hat{\chi}_{n}\hat{\chi}_{m}\hat{\chi}_{k}\,\hat{\rho}_{\bm{\xi}}(t_{f})\Bigr\} & =\langle r_{n}(t_{f})r_{m}(t_{f})r_{k}(t_{f})\rangle_{\bm{h}=0}\,,\label{eq:3pointV}\\
\operatorname{Tr}\Bigl\{\hat{\chi}_{n}\hat{\chi}_{m}\hat{\chi}_{k}\hat{\chi}_{l}\,\hat{\rho}_{\bm{\xi}}(t_{f})\Bigr\} & =\langle r_{n}(t_{f})r_{m}(t_{f})r_{k}(t_{f})r_{l}(t_{f})\rangle_{\bm{h}=0}\,,\label{eq:4pointV}
\end{align}
with 
\begin{align}
\langle r_{n}(t_{f})r_{m}(t_{f})\rangle_{\bm{h}=0} & =-\frac{\delta^{2}\mathcal{Z}_{\bm{\xi}}[\bm{j},\,\bm{j}^{\prime},\bm{h})}{\delta j_{n}(t_{f})\delta j_{m}(t_{f})}\,\bigg|_{\bm{j}=\bm{j}^{\prime}=\bm{h}=0}\,,\\
\langle r_{n}(t_{f})r_{m}(t_{f})r_{k}(t_{f})r_{l}(t_{f})\rangle_{\bm{h}=0} & =+\frac{\delta^{4}\mathcal{Z}_{\bm{\xi}}[\bm{j},\,\bm{j}^{\prime},\bm{h})}{\delta j_{n}(t_{f})\delta j_{m}(t_{f})\delta j_{k}(t_{f})\delta j_{l}(t_{f})}\,\bigg|_{\bm{j}=\bm{j}^{\prime}=\bm{h}=0}\,.
\end{align}
Here we see that the generating functional defined in (\ref{eq:Z})
can be used to efficiently to evaluate the quantities such as $\operatorname{Tr}\{\hat{p}_{n_{1}}\cdots\hat{p}_{n_{a}}\hat{\chi}_{m_{1}}\cdots\hat{\chi}_{m_{b}}\,\hat{\rho}_{\bm{\xi}}(t_{f})\}$
or $\operatorname{Tr}\{\hat{\chi}_{m_{1}}\cdots\hat{\chi}_{m_{b}}\,\hat{\rho}_{\bm{\xi}}(t_{f})\,\hat{p}_{n_{1}}\cdots\hat{p}_{n_{a}}\}$
for any two arbitrary nonnegative integers $a$ and $b$.

\section{Explicit evaluation of the noninteracting stochastic propagating
function\label{sec:Z0-SE}}

Here we will derive the explicit form of the stochastic propagating
function for the linearly coupled oscillators. This will be the zeroth-order
term for the more general configuration considered in the paper, and
will serve as the basis for the perturbative calculations of the higher-order
corrections in various nonlinear coupling discussed earlier.

Taking variations of $\mathcal{S}_{\bm{\xi}}^{(0)}$ in \eqref{eq:S0SE}
with respect to $\bm{q}$ and $\bm{r}$ respectively, we obtains a
sets of differential equations for the classical trajectories $\bar{\bm{q}}$
and $\bar{\bm{r}}$, 
\begin{align}
\ddot{\bm{\bar{r}}}(s)-\int_{0}^{t_{f}}\!ds^{\prime}\;\bm{G}_{R}(s,\,s^{\prime})\cdot\bar{\bm{r}}(s^{\prime})+\bm{\Omega}^{2}\bar{\bm{r}}(s) & =\bm{j}^{\prime}(s)+\bm{\xi}(s)\,,\label{eq:r-diff}\\
\ddot{\bar{\bm{q}}}(s)-\int_{0}^{t_{f}}\!ds^{\prime}\;\bm{G}_{R}(s^{\prime},\,s)\cdot\bar{\bm{q}}(s^{\prime})+\bm{\Omega}^{2}\bar{\bm{q}}(s) & =\bm{j}(s)\,,\label{eq:q-diff}
\end{align}
with boundary conditions $\bar{\bm{r}}(t_{f})=\bm{r}_{f}$, $\bar{\bm{r}}(0)=\bm{r}_{i}$,
$\bar{\bm{q}}(t_{f})=\bm{q}_{f}$, $\bar{\bm{q}}(0)=\bm{q}_{i}$,
and $t_{i}=0$.

In fact the integrals in Eqs.~\eqref{eq:r-diff} and \eqref{eq:q-diff}
can be computed explicitly using Eq.~\eqref{eq:GHn-omega-thermal}.
We find 
\begin{align}
\int_{0}^{t_{f}}\!ds^{\prime}\;\bm{G}_{R}(s,\,s^{\prime})\cdot\bar{\bm{r}}(s^{\prime}) & =-2\gamma\left[\delta(0)\bar{\bm{r}}(s)+\dot{\bar{\bm{r}}}(s)\right]\,,\\
\int_{0}^{t_{f}}\!ds^{\prime}\;\bm{G}_{R}(s^{\prime},\,s)\cdot\bar{\bm{q}}(s^{\prime}) & =2\gamma\left[\delta(0)\bar{\bm{q}}(s)+\dot{\bar{\bm{q}}}(s)\right]\,,
\end{align}
with $\gamma=e^{2}/8\pi$. Eqs.~\eqref{eq:r-diff} and \eqref{eq:q-diff}
then become 
\begin{align}
\ddot{\bar{\bm{r}}}(s)+2\gamma\dot{\bar{\bm{r}}}(s)+\bm{\Omega}_{R}^{2}\cdot\bar{\bm{r}}(s) & =\bm{j}^{\prime}(s)+\bm{\xi}(s)\,,\label{eq:r-diff-gamma}\\
\ddot{\bar{\bm{q}}}(s)-2\gamma\dot{\bar{\bm{q}}}(s)+\bm{\Omega}_{R}^{2}\cdot\bar{\bm{q}}(s) & =\bm{j}(s)\,.\label{eq:q-diff-gamma}
\end{align}
where the divergent expression $\delta(0)$ is absorbed by $\omega_{0}$
in $\bm{\Omega}$ to give the renormalized value $\omega_{R}$, and
thus the renormalized frequency matrix $\bm{\Omega}_{R}^{2}$ is defined
as 
\begin{equation}
\bm{\Omega}_{R}^{2}\equiv\begin{bmatrix}\omega_{R}^{2}+\lambda_{2} & -\lambda_{2}\\
-\lambda_{2} & \omega_{R}^{2}+\lambda_{2}
\end{bmatrix}.\label{eq:Omega-Renormalized}
\end{equation}
Following the procedures outlined in~\citep{HHAoP,NENL1}, we find
the classical stochastic action $\bar{S}_{\bm{\xi}}^{(0)}[\bar{\bm{q}},\,\bar{\bm{r}},\,\bm{j},\,\bm{j}^{\prime}]$
given by 
\begin{align}
\bar{\mathcal{S}}_{\bm{\xi}}^{(0)}[\bm{j},\,\bm{j}^{\prime};\,\bm{r}_{f},\,\bm{q}_{f},\,t_{f};\,\bm{r}_{i},\,\bm{q}_{i}) & =\bm{q}_{f}^{T}\cdot\Bigl\{\dot{\bm{\alpha}}(t_{f})\cdot\bm{r}_{i}+\dot{\bm{\beta}}(t_{f})\cdot\bm{r}_{f}\Bigr\}-\bm{q}_{i}^{T}\cdot\Bigl\{\dot{\bm{\alpha}}(0)\cdot\bm{r}_{i}+\dot{\bm{\beta}}(0)\cdot\bm{r}_{f}\Bigr\}\nonumber \\
 & \qquad\qquad\qquad+\bm{q}_{f}^{T}\cdot\Bigl\{\dot{\bm{F}}(t_{f},\bm{j}^{\prime}]+\dot{\bm{F}}(t_{f},\bm{\xi}]\Bigr\}-\bm{q}_{i}^{T}\cdot\Bigl\{\dot{\bm{F}}(0,\,\bm{j}^{\prime}]+\dot{\bm{F}}(0,\,\bm{\xi}]\Bigr\}\label{eq:S0-xi-all}\\
 & +\int_{0}^{t_{f}}\!ds\;\bm{j}^{T}(s)\cdot\bm{\alpha}(s)\cdot\bm{r}_{i}+\int_{0}^{t_{f}}\!ds\;\bm{j}^{T}(s)\cdot\bm{\beta}(s)\cdot\bm{r}_{f}+\int_{0}^{t_{f}}\!ds\;\bm{j}^{T}(s)\cdot\Bigl\{\bm{F}(s,\,\bm{j}^{\prime}]+\bm{F}(s,\,\bm{\xi}]\Bigr\}\,.\nonumber 
\end{align}
Here we have introduced a couple of shorthand notations 
\begin{align}
\bm{\alpha}(t) & =\bm{D}_{1}(t)-\bm{D}_{2}(t)\cdot\bm{D}_{2}^{-1}(t_{f})\cdot\bm{D}_{1}(t_{f})\,,\qquad\qquad\qquad\bm{\beta}(t)=\bm{D}_{2}(t)\cdot\bm{D}_{2}^{-1}(t_{f})\,,\label{eq:beta}\\
\bm{F}(t,\,\bm{f}] & =\int_{0}^{t}\!ds\;\bm{D}_{2}(t-s)\cdot\bm{f}(s)-\bm{D}_{2}(t)\cdot\bm{D}_{2}^{-1}(t_{f})\cdot\int_{0}^{t_{f}}\!ds\;\bm{D}_{2}(t_{f}-s)\cdot\bm{f}(s)\,,\label{eq:Jprime}
\end{align}
among which $\bm{D}_{1}(t)$ and $\bm{D}_{2}(t)$ two fundamental
solutions to \eqref{eq:r-diff-gamma}, satisfying the initial conditions
$\bm{D}_{1}(0)=\mathbb{I}$, $\dot{\bm{D}}_{1}(0)=0$, $\bm{D}_{2}(0)=0$
and $\dot{\bm{D}}_{2}(0)=\mathbb{I}$.

Using the Laplace transform, we can find the explicit forms of $\bm{D}_{1}$
and $\bm{D}_{2}$ 
\begin{align}
\bm{D}_{1}(\mathfrak{s}) & =(\mathfrak{s}+2\gamma)[\mathfrak{s}^{2}+2\gamma\mathfrak{s}+\bm{\Omega}_{R}^{2}]{}^{-1}=\frac{\bm{W}_{1}(\mathfrak{s})}{\prod_{k=1}^{4}(\mathfrak{s}-\mathfrak{s}_{k})}\label{eq:D1-s-def}\\
\bm{D}_{2}(\mathfrak{s}) & =[\mathfrak{s}^{2}+2\gamma\mathfrak{s}+\bm{\Omega}_{R}^{2}]{}^{-1}=\frac{\bm{W}_{2}(\mathfrak{s})}{\prod_{k=1}^{4}(\mathfrak{s}-\mathfrak{s}_{k})}\label{eq:D2-s-def}
\end{align}
where $\bm{W}_{1}(\mathfrak{s})=(\mathfrak{s}+2\gamma)\bm{W}_{2}(\mathfrak{s})$,
\begin{equation}
\bm{W}_{2}(\mathfrak{s})=\begin{bmatrix}\mathfrak{s}^{2}+2\gamma\mathfrak{s}+\omega_{0}^{2}+\lambda_{2}, & \lambda_{2}\\
\lambda_{2}, & \mathfrak{s}^{2}+2\gamma\mathfrak{s}+\omega_{0}^{2}+\lambda_{2}
\end{bmatrix}\,,
\end{equation}
with $\mathfrak{s}_{1,\,2}=-\gamma\pm\sqrt{\gamma^{2}-\omega_{0}^{2}}$
and $\mathfrak{s}_{3,\,4}=-\gamma\pm\sqrt{\gamma^{2}-(\omega_{0}^{2}+2\lambda_{2})}$.
Since $\operatorname{Re}\mathfrak{s}_{k}<0$ for all $k=1$, $\cdots$,
4, all the matrix elements of $\bm{D}_{i}(t)$ exponentially decay
on time scale much large than $\gamma_{0}^{-1}$, where 
\begin{equation}
\gamma_{0}\equiv\min_{k}\lvert\operatorname{Re}\mathfrak{s}_{k}\rvert\,,\label{eq:gamma0}
\end{equation}
i.e., $\bm{D}_{1,2}(t)\sim e^{-\gamma_{0}t}$, which leads to 
\begin{equation}
\lim_{t\to\infty}\frac{d^{n}}{dt^{n}}\bm{D}_{i}(t)=0\,,\label{eq:Di-derivative}
\end{equation}
where $n=0,\,1,\,2\cdots$. This is an extremely useful property in
the context of the late-time dynamics of the oscillators. Essentially,
it predicts that the zeroth-order dynamics of the oscillators will
relax to a steady state, independent of the initial conditions.

Since from \eqref{eq:Jxi}, the integral gives 
\begin{align}
 & \quad\int\!d\bm{r}_{f}\;\exp\left\{ i\left[\bm{q}_{f}^{T}\cdot\dot{\bm{\beta}}(t_{f})-\bm{q}_{i}^{T}\cdot\dot{\bm{\beta}}(0)+\int_{0}^{t_{f}}\!ds\;\bm{j}^{T}(s)\cdot\bm{\beta}(s)\right]\cdot\bm{r}_{f}\right\} \nonumber \\
 & =\mathcal{N}^{-1}\delta^{(N)}(\,\bm{q}_{i}^{T}-\bm{q}_{f}^{T}\cdot\dot{\bm{D}}_{2}(t_{f})-\int_{0}^{t_{f}}\!ds\;\bm{j}^{T}(s)\cdot\bm{D}_{2}(s)\,)\,,\label{eq:int-drf}
\end{align}
with the normalization $\mathcal{N}^{-1}=(2\pi)^{N}\det\bm{D}_{2}(t_{f})$,
we arrive at 
\begin{align}
 & \quad\int\!d\bm{r}_{f}\;\mathcal{J}_{0\bm{\xi}}[\bm{j},\,\bm{j}^{\prime};\,\bm{r}_{f},\,\bm{h},\,t_{f};\,\bm{r}_{i},\,\bm{q}_{i})\nonumber \\
 & =\delta^{(N)}(\,\bm{q}_{i}^{T}-\bm{q}_{f}^{T}\cdot\dot{\bm{D}}_{2}(t_{f})-\int_{0}^{t_{f}}\!ds\;\bm{j}^{T}(s)\cdot\bm{D}_{2}(s)\,)\times\exp\Bigl\{ i\,\bar{\mathcal{S}}_{\bm{\xi}}^{(0)}[\bm{j},\,\bm{j}^{\prime};\,\bm{0},\,\bm{h},\,t_{f};\,\bm{r}_{i},\,\bm{q}_{i})\Bigr\}\,,
\end{align}
from Eq.~\eqref{eq:int-drf}. Thus, the stochastic generating function
$\mathcal{Z}_{\bm{\xi}}^{(0)}$ at late times reduces to 
\begin{equation}
\mathcal{Z}_{\bm{\xi}}^{(0)}[\bm{j},\,\bm{j}^{\prime},\,\bm{h})=\int\!d\bm{r}_{i}\;\exp\Bigl\{ i\,\bar{\mathcal{S}}_{\bm{\xi}}^{(0)}[\bm{j},\,\bm{j}^{\prime};\,\bm{0},\,\bm{h},\,t_{f};\,\bm{r}_{i},\,\bm{0})\Bigr\}\,\varrho(\bm{r}_{i},\,\int_{0}^{t_{f}}\!ds\;\bm{D}_{2}(s)\bm{j}(s))\,,\label{eq:Z0-xi-app}
\end{equation}
where we have removed terms that will vanish in the limit $t_{f}\to0$.
The stochastic action in this limit will take a simpler form 
\begin{align}
\bar{\mathcal{S}}_{\bm{\xi}}^{(0)}[\bm{j},\,\bm{j}^{\prime};\,\bm{0},\,\bm{h},\,t_{f};\,\bm{r}_{i},\,\bm{0}) & =\int_{0}^{t_{f}}\!ds\;\bm{j}^{T}(s)\cdot\bm{D}_{1}(s)\cdot\bm{r}_{i}+\bm{h}^{T}\cdot\int_{0}^{t_{f}}\!ds\;\dot{\bm{D}}_{2}(t_{f}-s)\cdot\bm{\xi}(s)+\bm{h}^{T}\cdot\int_{0}^{t_{f}}\!ds\;\dot{\bm{D}}_{2}(t_{f}-s)\cdot\bm{j}^{\prime}(s)\nonumber \\
 & \qquad+\int_{0}^{t_{f}}\!ds\int_{0}^{s}\!ds^{\prime}\;\bm{j}^{T}(s)\cdot\bm{D}_{2}(s-s^{\prime})\cdot\Bigl[\bm{j}^{\prime}(s^{\prime})+\bm{\xi}(s^{\prime})\Bigr]\,.\label{eq:S0-xi-app}
\end{align}

\section{Theorems on the functional derivatives\label{sec:Functional derivatives} }

In this section, we have gathered a few handy theorems about the functional
derivatives used in the calculations of the first-order corrections
to the energy current in the anharmonic chain.

\begin{thm}\label{thm:NKlessM} For $\nu+\kappa<\mu$ 
\begin{equation}
\frac{\partial^{\kappa}\delta^{\nu+\mu}\mathcal{Z}_{\bm{\xi}}^{(0)}[\bm{j},\,\bm{j}^{\prime},\,\bm{h})}{\partial h_{k_{1}}\partial h_{k_{2}}\cdots\partial h_{k_{\kappa}}\delta j_{n_{1}}(s_{1})\delta j_{n_{2}}(s_{2})\cdots\delta j_{n_{\nu}}(s_{\nu})\delta j'_{m_{1}}(s_{1}^{\prime})\delta j'_{m_{2}}(s_{2}^{\prime})\cdots\delta j'_{m_{\mu}}(s_{\mu}^{\prime})}\,\bigg|_{\bm{j}=\bm{j}^{\prime}=\bm{h}=0}=0\,,\label{eq:NlessM}
\end{equation}
where $\kappa$, $\nu$ and $\mu$ are non-negative integers. \end{thm}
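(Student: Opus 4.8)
The plan is to exploit the structure of the factorized zeroth-order generating functional in Eqs.~\eqref{eq:Z0-xi}--\eqref{eq:S0-frak}: the source $\bm{j}'$ enters \emph{only} through $\bar{\mathfrak{S}}^{(0)}[\bm{j},\bm{j}',\bm{h}]$, and there only affine-linearly, while neither $\mathscr{Z}^{(0)}[\varrho_i,\bm{j}]$ nor $\bar{\mathscr{S}}_{\bm{\xi}}^{(0)}[\bm{j},\bm{h})$ depends on $\bm{j}'$. First I would record this observation and note that, since $\bar{\mathfrak{S}}^{(0)}$ is linear in $\bm{j}'$, every second (and higher) functional variation in $\bm{j}'$ vanishes, so each $\delta/\delta j'_{m_i}(s'_i)$ acting on $\exp\{i\bar{\mathfrak{S}}^{(0)}\}$ merely brings down the $\bm{j}'$-independent factor $i\,\delta\bar{\mathfrak{S}}^{(0)}/\delta j'_{m_i}(s'_i)$.

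Carrying out all $\mu$ derivatives with respect to $\bm{j}'$ and setting $\bm{j}'=0$ then gives, with $g\equiv\mathscr{Z}^{(0)}[\varrho_i,\bm{j}]\,e^{i\bar{\mathscr{S}}_{\bm{\xi}}^{(0)}[\bm{j},\bm{h})}$ and $K_m(s')\equiv\delta\bar{\mathfrak{S}}^{(0)}/\delta j'_m(s')\big|_{\bm{j}'=0}$,
\[
\left.\frac{\delta^{\mu}\mathcal{Z}_{\bm{\xi}}^{(0)}}{\delta j'_{m_1}(s'_1)\cdots\delta j'_{m_\mu}(s'_\mu)}\right|_{\bm{j}'=0}=g(\bm{j},\bm{h})\;i^{\mu}\prod_{i=1}^{\mu}K_{m_i}(s'_i).
\]
By inspection of \eqref{eq:S0-frak}, each $K_m(s')=\int_0^{t_f}\!ds\,[\bm{j}^T(s)\bm{\mathfrak{D}}(s-s')]_m+[\bm{h}^T\dot{\bm{D}}_2(t_f-s')]_m$ is a linear functional of $(\bm{j},\bm{h})$ with \emph{no constant term}. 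Hence the product $P\equiv\prod_i K_{m_i}(s'_i)$ is homogeneous of total degree exactly $\mu$ in the variables $(\bm{j},\bm{h})$, while $g$ is regular at $\bm{j}=\bm{h}=0$; consequently the power-series expansion of $g\,P$ in $(\bm{j},\bm{h})$ contains only monomials of total degree $\ge\mu$.

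The final step is the elementary remark that $\partial^{\kappa}\delta^{\nu}/(\partial h_{k_1}\!\cdots\partial h_{k_\kappa}\,\delta j_{n_1}(s_1)\!\cdots\delta j_{n_\nu}(s_\nu))$, evaluated at $\bm{j}=\bm{h}=0$, just extracts the coefficient of a monomial of total degree $\nu+\kappa$. Since $g\,P$ has no monomials of degree below $\mu$ and $\nu+\kappa<\mu$ by hypothesis, that coefficient is zero, which is precisely \eqref{eq:NlessM}; the cases $\kappa=0$ and $\nu=0$ are included.

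I expect no substantial difficulty here. The only points needing care are (i) making ``total degree'' precise in the mixed functional/ordinary-variable setting, which is cleanly done by treating $\mathcal{Z}_{\bm{\xi}}^{(0)}$ as a formal power series in the independent components of $\bm{j},\bm{j}',\bm{h}$; and (ii) if one wants the statement for the full (pre-limit) $\mathcal{Z}_{\bm{\xi}}^{(0)}$ rather than its late-time form, one checks from \eqref{eq:S0-xi-all}--\eqref{eq:Jprime} that $\bm{j}'$ still enters only linearly, via $\bm{F}(\cdot,\bm{j}']$, and is again absent from the initial-state factor, so the same degree count applies verbatim.
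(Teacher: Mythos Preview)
Your proposal is correct and is essentially the same argument as the paper's: both exploit that $\bm{j}'$ enters $\mathcal{Z}_{\bm{\xi}}^{(0)}$ only linearly through $\bar{\mathfrak{S}}^{(0)}$, so the $\mu$ derivatives in $\bm{j}'$ force a factor that is homogeneous of degree $\mu$ in $(\bm{j},\bm{h})$, and then the remaining $\nu+\kappa<\mu$ derivatives cannot bring that down to degree zero. The only cosmetic difference is that the paper Taylor-expands $\exp\{i\bar{\mathfrak{S}}^{(0)}\}$ and picks out the $\mu$-th term (your Eq.~\eqref{eq:M-jprime}), whereas you pull down the factors $K_{m_i}$ one at a time and keep the prefactor $g$ explicit; your remark that $g$ is regular at the origin makes the degree count slightly more transparent, but the substance is identical.
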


\begin{proof} The Taylor's expanding of the term $\exp\{i\bar{\mathfrak{S}}^{(0)}[\bm{j},\,\bm{j}^{\prime},\,\bm{h})\}$
gives defined Eq. (\ref{eq:S0-frak}), 
\begin{equation}
\exp\Bigl\{ i\,\bar{\mathfrak{S}}^{(0)}[\bm{j},\,\bm{j}^{\prime},\,\bm{h})\Bigr\}=\sum_{n=0}^{\infty}\frac{i^{n}}{n!}\left[\int_{0}^{t_{f}}\!ds\int_{0}^{t_{f}}\!ds^{\prime}\;\bm{j}^{T}(s)\cdot\bm{\mathfrak{D}}(s-s^{\prime})\cdot\bm{j}^{\prime}(s^{\prime})+\bm{h}^{T}\cdot\int_{0}^{t_{f}}\!ds\;\dot{\bm{D}}_{2}(t_{f}-s)\cdot\bm{j}^{\prime}(s)\right]^{n}\,.\label{eq:eiSfrak}
\end{equation}
By counting the powers of $\bm{j}^{\prime}$, we see that after we
take the functional derivatives $\delta^{\mu}/\delta j_{m_{1}}^{\prime}(s_{1}^{\prime})\delta j_{m_{2}}^{\prime}(s_{2}^{\prime})\cdots\delta j_{m_{\mu}}^{\prime}(s_{\mu}^{\prime})$
and set $\bm{j}^{\prime}=0$, the only nonvanishing contributions
in the Taylor series is 
\begin{equation}
\frac{i^{\mu}}{\mu!}\left[\int_{0}^{t_{f}}\!ds\int_{0}^{t_{f}}\!ds^{\prime}\;\bm{j}^{T}(s)\cdot\bm{\mathfrak{D}}(s-s^{\prime})\cdot\bm{j}^{\prime}(s^{\prime})+\bm{h}^{T}\cdot\int_{0}^{t_{f}}\!ds\;\dot{\bm{D}}_{2}(t_{f}-s)\cdot\bm{j}^{\prime}(s)\right]^{\mu}\,.\label{eq:M-jprime}
\end{equation}
Then we immediate see that the total powers of $\bm{j}$ and $\bm{h}$
in the expansion of Eq.~\eqref{eq:M-jprime} must be $\mu$, too.
If we take the subsequent functional derivatives $\partial^{\kappa}\delta^{\nu}/\delta j_{n_{1}}(s_{1})\delta j_{n_{2}}(s_{2})\cdots\delta j_{n_{\nu}}(s_{\nu})$$\partial h_{k_{1}}\partial h_{k_{2}}\cdots\partial h_{k_{\kappa}}$
of \eqref{eq:M-jprime}, then we are left with term that are at least
$(\mu-\nu-\kappa)^{\text{th}}$ power of $\bm{j}$ or $\bm{h}$. Setting
$\bm{j}=\bm{j}'=\bm{h}=0$ gives vanishing results. \end{proof}

\begin{thm} For the case of $\nu+\kappa=\mu$, 
\begin{align}
 & \quad\frac{\partial^{\kappa}\delta^{\nu+\mu}\mathcal{Z}_{\bm{\xi}}^{(0)}[\bm{j},\,\bm{j}^{\prime},\,\bm{h})}{\partial h_{k_{1}}\partial h_{k_{2}}\cdots\partial h_{k_{\kappa}}\delta j_{n_{1}}(s_{1})\delta j_{n_{2}}(s_{2})\cdots\delta j_{n_{\nu}}(s_{\nu})\delta j_{m_{1}}^{\prime}(s_{1}^{\prime})\delta j_{m_{2}}^{\prime}(s_{2}^{\prime})\cdots\delta j_{m_{\mu}}^{\prime}(s_{\mu}^{\prime})}\,\bigg|_{\bm{j}=\bm{j}^{\prime}=\bm{h}=0}\nonumber \\
 & =\frac{i^{\mu}}{\mu!}\,C_{\kappa}^{\mu}\,\frac{\delta^{\nu+\mu}\left[{\displaystyle \int_{0}^{t_{f}}\!ds\int_{0}^{t_{f}}\!ds^{\prime}\;\bm{j}^{T}(s)\cdot\bm{\mathfrak{D}}(s-s^{\prime})\cdot\bm{j}^{\prime}(s^{\prime})}\right]^{\mu-\kappa}\left[{\displaystyle \bm{h}^{T}\cdot\int_{0}^{t_{f}}\!ds\;\dot{\bm{D}}_{2}(t_{f}-s)\cdot\bm{j}^{\prime}(s)}\right]^{\kappa}}{\delta j_{n_{1}}(s_{1})\delta j_{n_{2}}(s_{2})\cdots\delta j_{n_{\nu}}(s_{\nu})\delta j_{m_{1}}^{\prime}(s_{1}^{\prime})\delta j_{m_{2}}^{\prime}(s_{2}^{\prime})\cdots\delta j_{m_{\mu}}^{\prime}(s_{\mu}^{\prime})}\,\bigg|_{\bm{j}=\bm{j}^{\prime}=\bm{h}=0}\,,
\end{align}
which may be further evaluated analytically using the Wick's theorem.
\end{thm}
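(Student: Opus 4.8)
The plan is to run the same bookkeeping that powers Theorem~\ref{thm:NKlessM}, but now to track the single configuration of functional derivatives that survives in the borderline case $\nu+\kappa=\mu$. I would start from the factorized late-time form \eqref{eq:Z0-xi} of $\mathcal{Z}_{\bm\xi}^{(0)}$ and use that the source $\bm j^{\prime}$ enters only through $\bar{\mathfrak{S}}^{(0)}[\bm j,\bm j^{\prime},\bm h)$ in \eqref{eq:S0-frak}, which is \emph{linear} in $\bm j^{\prime}$, with the two building blocks $A\equiv\int_{0}^{t_{f}}\!ds\int_{0}^{t_{f}}\!ds^{\prime}\,\bm j^{T}(s)\cdot\bm{\mathfrak D}(s-s^{\prime})\cdot\bm j^{\prime}(s^{\prime})$ and $B\equiv\bm h^{T}\cdot\int_{0}^{t_{f}}\!ds\,\dot{\bm D}_{2}(t_{f}-s)\cdot\bm j^{\prime}(s)$. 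Consequently the $\mu$ derivatives $\delta/\delta j^{\prime}$ act exclusively on $\exp\{i\bar{\mathfrak S}^{(0)}\}$, and --- exactly as in the Taylor-series argument of Theorem~\ref{thm:NKlessM} --- the only contribution that is not annihilated upon setting $\bm j^{\prime}=0$ afterwards is the $n=\mu$ term $\tfrac{i^{\mu}}{\mu!}\,(A+B)^{\mu}$, multiplied by the $\bm j^{\prime}$-independent prefactor $\mathscr{Z}^{(0)}[\varrho_{i},\bm j]\,\exp\{i\bar{\mathscr S}^{(0)}_{\bm\xi}[\bm j,\bm h)\}$.

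Next I would apply the $\nu$ derivatives $\delta/\delta j$ and the $\kappa$ derivatives $\partial/\partial h$ and then set $\bm j=\bm h=0$. The polynomial $\tfrac{i^{\mu}}{\mu!}(A+B)^{\mu}$ has total degree $\mu=\nu+\kappa$ in $(\bm j,\bm h)$, while $A$ carries degree one in $\bm j$ and $B$ degree one in $\bm h$. By the Leibniz rule each of the $\nu+\kappa$ remaining derivatives either acts on this polynomial or on the prefactor, but any derivative spent on the prefactor leaves the polynomial with residual degree $\ge 1$ in $(\bm j,\bm h)$, which vanishes at $\bm j=\bm h=0$ --- this is the very mechanism of Theorem~\ref{thm:NKlessM}, and in the equality case it forces \emph{all} $\nu+\kappa$ remaining derivatives onto the polynomial. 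The prefactor is then evaluated at vanishing sources, where $\mathscr{Z}^{(0)}[\varrho_{i},\bm 0]=\operatorname{Tr}\hat\rho=1$ by \eqref{eq:Z0-scr} and $\exp\{i\bar{\mathscr S}^{(0)}_{\bm\xi}[\bm 0,\bm 0)\}=1$ because every term of $\bar{\mathscr S}^{(0)}_{\bm\xi}$ in \eqref{eq:S0cal-xi} carries a source; notably no Gaussianity of $\bm\xi$ is needed at this stage --- the noise-dependent exponential is simply switched off at zero sources.

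Finally, inside $(A+B)^{\mu}=\sum_{k=0}^{\mu}C_{k}^{\mu}\,A^{\mu-k}B^{k}$ the monomial $A^{\mu-k}B^{k}$ has degree $\mu-k$ in $\bm j$ and $k$ in $\bm h$; surviving the $\nu$ remaining $\bm j$-derivatives and the $\kappa$ remaining $\bm h$-derivatives then forces $\mu-k=\nu$ and $k=\kappa$, i.e.\ $k=\kappa$, so only the single binomial term with coefficient $C_{\kappa}^{\mu}$ contributes. Collecting the factors reproduces the stated identity (with the $\kappa$ derivatives $\partial/\partial h$ understood to act also on the right-hand member), and the residual $\delta^{\nu+\mu}$ derivative of $A^{\mu-\kappa}B^{\kappa}$ is a finite, purely combinatorial contraction of the causal propagators $\bm{\mathfrak D}$ and the homogeneous solutions $\dot{\bm D}_{2}$ --- the ``Wick'' step --- obtained by pairing each $j$- and $h$-slot with a $j^{\prime}$-slot. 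The only real content beyond Theorem~\ref{thm:NKlessM}, and hence the step I expect to need the most care, is the Leibniz bookkeeping of the second paragraph: one must verify that \emph{no} mixed term, with derivatives split between the initial-state/noise prefactor and the source polynomial, survives, so that the single ``all derivatives on the polynomial'' configuration delivers the exact coefficient $\tfrac{i^{\mu}}{\mu!}C_{\kappa}^{\mu}$ rather than merely a bound of the kind used in Theorem~\ref{thm:NKlessM}.
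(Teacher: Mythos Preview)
Your proposal is correct and follows essentially the same route as the paper: Taylor-expand the $\bm j'$-dependent exponential, isolate the $n=\mu$ term that survives the $\mu$ functional derivatives $\delta/\delta\bm j'$, then use the binomial structure of $(A+B)^{\mu}$ together with degree counting in $(\bm j,\bm h)$ to single out the $k=\kappa$ term. Your treatment is in fact slightly more careful than the paper's, since you explicitly run the Leibniz argument showing that the $\bm j'$-independent prefactor $\mathscr{Z}^{(0)}[\varrho_i,\bm j]\exp\{i\bar{\mathscr S}^{(0)}_{\bm\xi}\}$ cannot absorb any of the remaining $\nu+\kappa$ derivatives (and then evaluates to~$1$), whereas the paper's proof leaves this point implicit.
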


\begin{proof} The proof is similar to the proof for Theorem \ref{thm:NKlessM}.
After we take the functional derivatives $\delta^{\mu}/\delta j_{m_{1}}^{\prime}(s_{1}^{\prime})\cdots\delta j_{m_{\mu}}^{\prime}(s_{\mu}^{\prime})$
of the Taylor's expansion of Eq.~\eqref{eq:eiSfrak} and set $\bm{j}^{\prime}=0$,
we find the surviving terms in the expansion are 
\begin{equation}
\frac{i^{\mu}}{\mu!}\left[\int_{0}^{t_{f}}\!ds\int_{0}^{t_{f}}\!ds^{\prime}\;\bm{j}^{T}(s)\cdot\bm{\mathfrak{D}}(s-s^{\prime})\cdot\bm{j}^{\prime}(s^{\prime})+\bm{h}^{T}\cdot\int_{0}^{t_{f}}\!ds\;\dot{\bm{D}}_{2}(t_{f}-s)\cdot\bm{j}^{\prime}(s)\right]^{\mu}\,.\label{E:fbkrts}
\end{equation}
Applying the additonal derivatives $\partial^{\kappa}\delta^{\nu}/\partial h_{k_{1}}\partial h_{k_{2}}\cdots\partial h_{k_{\kappa}}\delta j_{n_{1}}(s_{1})\delta j_{n_{2}}(s_{2})\cdots\delta j_{n_{\mu}}(s_{\nu})$
on \eqref{E:fbkrts} gives 
\begin{equation}
\frac{i^{\mu}}{\mu!}\,C_{\kappa}^{\mu}\left[\int_{0}^{t_{f}}\!ds\int_{0}^{t_{f}}\!ds^{\prime}\;\bm{j}^{T}(s)\cdot\bm{\mathfrak{D}}(s-s^{\prime})\cdot\bm{j}^{\prime}(s^{\prime})\right]^{\mu-\kappa}\left[\bm{h}^{T}\cdot\int_{0}^{t_{f}}\!ds\;\dot{\bm{D}}_{2}(t_{f}-s)\cdot\bm{j}^{\prime}(s)\right]^{\kappa}\,.
\end{equation}
All the other terms in the binomial expansion are either of the order
$\mathcal{O}(\bm{j}^{\nu+1})$ or of the order $\mathcal{O}(\bm{h}^{\kappa+1})$.
They will vanish after either performing the functional derivative
$\delta^{\nu}/\delta j_{n_{1}}(s_{1})\delta j_{n_{2}}(s_{2})\cdots\delta j_{n_{\nu}}(s_{\nu})$
and setting $\bm{j}=0$, or taking the derivative $\partial^{\kappa}/\partial h_{k_{1}}\partial h_{k_{2}}\cdots\partial h_{k_{\kappa}}$
and setting $\bm{h}=0$. \end{proof}

When the times in the functional derivative are the same, we have
the following theorem \begin{thm}\label{thm:same-s} For $\mu\ge1$
and $\kappa<\mu$ 
\begin{equation}
\frac{\partial^{\kappa}\delta^{\nu+\mu}\mathcal{Z}_{\bm{\xi}}^{(0)}[\bm{j},\,\bm{j}^{\prime},\,\bm{h})}{\partial h_{k_{1}}\partial h_{k_{2}}\cdots\partial h_{k_{\kappa}}\delta j_{n_{1}}(s)\delta j_{n_{2}}(s)\cdots\delta j_{n_{\nu}}(s)\delta j_{m_{1}}^{\prime}(s)\delta j_{m_{2}}^{\prime}(s)\cdots\delta j_{m_{\mu}}^{\prime}(s)}\,\bigg|_{\bm{j}=\bm{j}^{\prime}=\bm{h}=0}=0\,.\label{eq:dZdj-same-s}
\end{equation}
\end{thm}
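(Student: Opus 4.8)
The plan is to peel off the $\mu$ derivatives with respect to $\bm{j}'$ first. By Eqs.~\eqref{eq:Z0-xi}--\eqref{eq:S0-frak} the generating functional factors as $\mathcal{Z}_{\bm{\xi}}^{(0)}=\mathscr{Z}^{(0)}[\varrho_i,\bm{j}]\,e^{i\bar{\mathscr{S}}_{\bm{\xi}}^{(0)}[\bm{j},\bm{h})}\,e^{i\bar{\mathfrak{S}}^{(0)}[\bm{j},\bm{j}',\bm{h})}$, and only the last factor carries $\bm{j}'$, and does so \emph{linearly}. Expanding that factor in a Taylor series exactly as in the proof of Theorem~\ref{thm:NKlessM}, after acting with $\delta^\mu/\delta j'_{m_1}(s)\cdots\delta j'_{m_\mu}(s)$ and setting $\bm{j}'=0$ only the $\mu$-th order term survives, so that
\[
\frac{\delta^\mu\mathcal{Z}_{\bm{\xi}}^{(0)}}{\delta j'_{m_1}(s)\cdots\delta j'_{m_\mu}(s)}\bigg|_{\bm{j}'=0}=i^\mu\,\mathscr{Z}^{(0)}[\varrho_i,\bm{j}]\,e^{i\bar{\mathscr{S}}_{\bm{\xi}}^{(0)}[\bm{j},\bm{h})}\prod_{a=1}^{\mu}\Phi_{m_a}[\bm{j},\bm{h};s]\,,
\]
where $\Phi_m[\bm{j},\bm{h};s]\equiv\delta\bar{\mathfrak{S}}^{(0)}/\delta j'_m(s)=\int_0^{t_f}\!ds_1\,[\bm{j}^T(s_1)\bm{\mathfrak{D}}(s_1-s)]_m+[\bm{h}^T\dot{\bm{D}}_2(t_f-s)]_m$. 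The object $\Phi_m$ is first order in the sources and vanishes at $\bm{j}=\bm{h}=0$, while the prefactor $\mathscr{Z}^{(0)}[\varrho_i,\bm{j}]\,e^{i\bar{\mathscr{S}}_{\bm{\xi}}^{(0)}[\bm{j},\bm{h})}$ reduces there to a source-independent constant.

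Next I would apply the remaining derivatives $\partial^\kappa/\partial h_{k_1}\cdots\partial h_{k_\kappa}$ and $\delta^\nu/\delta j_{n_1}(s)\cdots\delta j_{n_\nu}(s)$ via the Leibniz rule, distributing them over the factors $\{\mathscr{Z}^{(0)}e^{i\bar{\mathscr{S}}_{\bm{\xi}}^{(0)}},\Phi_{m_1},\dots,\Phi_{m_\mu}\}$. Because each $\Phi_{m_a}$ is linear in $(\bm{j},\bm{h})$, a term survives the limit $\bm{j}=\bm{h}=0$ only if \emph{every} one of the $\mu$ factors $\Phi_{m_a}$ is acted upon by exactly one of the $\nu+\kappa$ derivatives: a factor hit twice vanishes by linearity, and a factor never hit vanishes when the sources are set to zero. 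The leftover $\nu+\kappa-\mu$ derivatives may be placed on the prefactor without any obstruction.

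The crux — and the only place where the ``same time $s$'' hypothesis is used — is to examine which kind of derivative is allowed to land on a $\Phi_{m_a}$. An $h$-derivative gives the harmless constant $\partial_{h_{k_b}}\Phi_{m_a}=[\dot{\bm{D}}_2(t_f-s)]_{k_b m_a}$, but a $j$-derivative gives $\delta\Phi_{m_a}/\delta j_{n_c}(s)=[\bm{\mathfrak{D}}(s-s)]_{n_c m_a}=[\bm{\mathfrak{D}}(0)]_{n_c m_a}$, which vanishes identically because $\bm{\mathfrak{D}}(\tau)=\Theta(\tau)\bm{D}_2(\tau)$ with $\bm{D}_2(0)=0$. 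Hence every $\Phi_{m_a}$ must be hit by an $h$-derivative, which requires $\kappa\ge\mu$ and contradicts the hypothesis $\kappa<\mu$; therefore each Leibniz term vanishes and \eqref{eq:dZdj-same-s} follows. I do not expect a genuine obstacle here: once the factorized form above is in place the argument is essentially bookkeeping, the one point needing care being to argue that ``hit exactly once, by an $h$-derivative'' is indeed the unique surviving configuration. (The requirement $\mu\ge1$ is essential, since for $\mu=0$ there are no $\Phi$ factors and the mechanism is vacuous.)
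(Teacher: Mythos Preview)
Your proposal is correct and follows essentially the same route as the paper: isolate the $\bm{j}'$-dependence in $e^{i\bar{\mathfrak{S}}^{(0)}}$, pick out the $\mu$-th Taylor term after the $\bm{j}'$-derivatives, and then observe that each of the $\mu$ resulting linear factors must absorb one of the remaining $(\bm{h},\bm{j}(s))$-derivatives, with a $\bm{j}(s)$-hit producing $\bm{\mathfrak{D}}(0)=0$. Your version is in fact slightly more explicit than the paper's, since you spell out the Leibniz distribution over the prefactor $\mathscr{Z}^{(0)}e^{i\bar{\mathscr{S}}_{\bm{\xi}}^{(0)}}$ and the $\Phi_{m_a}$ factors, whereas the paper compresses this into a power-counting remark on the binomial expansion \eqref{eq:bi-Mth}.
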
 \begin{proof} Again by counting the powers of $\bm{h}$
and $\bm{j}$ in the binomial expansion Eq. (\ref{eq:M-jprime}),
we note that the non-vanishing terms will likely be 
\begin{equation}
\sum_{m=0}^{\mu}\frac{i^{\mu}}{\mu!}\,C_{m}^{\mu}\left[\int_{0}^{t_{f}}\!ds\int_{0}^{t_{f}}\!ds^{\prime}\;\bm{j}^{T}(s)\bm{\mathfrak{D}}(s-s^{\prime})\cdot\bm{j}^{\prime}(s^{\prime})\right]^{\mu-m}\left[\bm{h}^{T}\cdot\int_{0}^{t_{f}}\!ds\;\dot{\bm{D}}_{2}(t_{f}-s)\cdot\bm{j}^{\prime}(s)\right]^{m}\,,\label{eq:bi-Mth}
\end{equation}
since the remaining terms in the expansion of \eqref{eq:M-jprime}
are of the order $\mathcal{(}\bm{h}^{\kappa+1})$, and they will vanish
after taking the derivatives $\partial^{\kappa}/\partial h_{k_{1}}\partial h_{k_{2}}\cdots\partial h_{k_{\kappa}}$
and setting $\bm{h}=0$. Next we observe that $j_{n_{i}}$ and $j_{m_{i}}$
have the same time argument $s$. Thus when we perform the functional
derivatives $\delta^{\nu}/\delta j_{n_{1}}(s)\delta j_{n_{2}}(s)\cdots\delta j_{n_{\nu}}(s)$
of Eq.~\eqref{eq:bi-Mth}, we will obtain terms proportional to $\bm{\mathfrak{D}}(0)$,
which by definition is zero. \end{proof}

Similarly, we have the following theorem

\begin{thm}\label{thm:same-s-tf} For $\mu\ge1$ and $\kappa+\lambda<\mu$,
we have 
\begin{equation}
\frac{\partial^{\kappa}\delta^{\lambda+\nu+\mu}\mathcal{Z}_{\bm{\xi}}^{(0)}[\bm{j},\,\bm{j}^{\prime},\,\bm{h})}{\partial h_{k_{1}}\partial h_{k_{2}}\cdots\partial h_{k_{\kappa}}\delta j_{l_{1}}(t_{f})\cdots\delta j_{l_{\lambda}}(t_{f})\delta j_{n_{1}}(s)\delta j_{n_{2}}(s)\cdots\delta j_{n_{\nu}}(s)\delta j_{m_{1}}^{\prime}(s)\delta j_{m_{2}}^{\prime}(s)\cdots\delta j_{m_{\mu}}^{\prime}(s)}\,\bigg|_{\bm{j}=\bm{j}^{\prime}=\bm{h}=0}=0\,.\label{eq:dZdj-same-s-tf}
\end{equation}
\end{thm}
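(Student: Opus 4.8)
The plan is to adapt, essentially verbatim, the argument that proves Theorem~\ref{thm:same-s}, modified only to keep track of the extra block of $\lambda$ derivatives $\delta/\delta j_{l_i}(t_f)$. First I would pass to the late-time factorized form~\eqref{eq:Z0-xi} and observe that the entire $\bm{j}'$-dependence of $\mathcal{Z}_{\bm{\xi}}^{(0)}$ resides in $\bar{\mathfrak{S}}^{(0)}[\bm{j},\bm{j}',\bm{h})$, which by~\eqref{eq:S0-frak} is linear in $\bm{j}'$, while $\bar{\mathscr{S}}_{\bm{\xi}}^{(0)}[\bm{j},\bm{h})$ and $\mathscr{Z}^{(0)}[\varrho_i,\bm{j}]$ carry no $\bm{j}'$ at all. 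Taylor-expanding $\exp\{i\bar{\mathfrak{S}}^{(0)}\}$ as in~\eqref{eq:eiSfrak} and acting with the $\mu$ derivatives $\delta^{\mu}/\delta j'_{m_1}(s)\cdots\delta j'_{m_\mu}(s)$, all at the common time $s$, and then setting $\bm{j}'=0$, only the $\mu$-th term~\eqref{eq:M-jprime} of the series survives. Each $\bm{j}'$-derivative at time $s$ turns a vertex $\bm{j}^{T}\cdot\bm{\mathfrak{D}}\cdot\bm{j}'$ into a factor $\int\!ds_1\,\bm{j}^{T}(s_1)\cdot\bm{\mathfrak{D}}(s_1-s)$ that still carries one power of $\bm{j}$, and turns a vertex $\bm{h}^{T}\cdot\dot{\bm{D}}_2\cdot\bm{j}'$ into the factor $\bm{h}^{T}\cdot\dot{\bm{D}}_2(t_f-s)$ that carries one power of $\bm{h}$. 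After this step $\mathcal{Z}_{\bm{\xi}}^{(0)}$ has become $\mathscr{Z}^{(0)}[\varrho_i,\bm{j}]\,\exp\{i\bar{\mathscr{S}}_{\bm{\xi}}^{(0)}[\bm{j},\bm{h})\}$ times the binomial sum~\eqref{eq:bi-Mth}, whose $m$-th term contains precisely $\mu-m$ explicit $\bm{j}$-factors of the first kind and $m$ explicit $\bm{h}$-factors of the second kind.

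Next I would apply the remaining $\kappa$ derivatives in $\bm{h}$, $\lambda$ derivatives in $\bm{j}(t_f)$ and $\nu$ derivatives in $\bm{j}(s)$, and power-count in the limit $\bm{j}=\bm{j}'=\bm{h}=0$. Expanding the prefactors $\mathscr{Z}^{(0)}[\varrho_i,\bm{j}]$ and $\exp\{i\bar{\mathscr{S}}_{\bm{\xi}}^{(0)}[\bm{j},\bm{h})\}$ in powers of their sources, a generic surviving monomial inside the $m$-th binomial term carries $m+a$ powers of $\bm{h}$, where $a\ge0$ is the power extracted from $\bar{\mathscr{S}}_{\bm{\xi}}^{(0)}$ (each such power tied to a factor of $\bm{\xi}$), and $(\mu-m)+q$ powers of $\bm{j}$, where $q\ge0$ comes from the two prefactors. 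For the monomial to survive the $\bm{h}$-derivatives followed by $\bm{h}=0$ one needs $m+a=\kappa$, hence $m\le\kappa$; for it to survive the $\bm{j}$-derivatives one needs $(\mu-m)+q=\lambda+\nu$. The decisive input, inherited from Theorem~\ref{thm:same-s}, is that each of the $\mu-m$ explicit $\bm{j}$-factors equals $\int\!ds_1\,\bm{j}^{T}(s_1)\cdot\bm{\mathfrak{D}}(s_1-s)$, so that a derivative $\delta/\delta j(s)$ acting on it produces $\bm{\mathfrak{D}}(0)=\Theta(0)\bm{D}_2(0)=0$. Being linear in $\bm{j}$, such a factor must absorb exactly one of the $\lambda+\nu$ $\bm{j}$-derivatives, and, by the preceding remark, that derivative can only be one of the $\lambda$ derivatives $\delta/\delta j(t_f)$. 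Hence $\mu-m\le\lambda$.

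Combining $m\le\kappa$ with $\mu-m\le\lambda$ gives $\mu\le\kappa+\lambda$, which contradicts the hypothesis $\kappa+\lambda<\mu$; therefore no monomial survives and the functional derivative in~\eqref{eq:dZdj-same-s-tf} vanishes. The one point that will require care is the bookkeeping in the second step: one must check that the $\delta/\delta j(s)$-derivatives can never be profitably placed on the localized $\bm{\mathfrak{D}}(s_1-s)$ factors but only on the $q$ source powers sitting in $\mathscr{Z}^{(0)}$ and $\bar{\mathscr{S}}_{\bm{\xi}}^{(0)}$, and that no cross term between those sources and the explicit $\bm{h}$-factors can compensate. Once this accounting is made precise, the statement reduces to the two inequalities above and follows exactly as Theorems~\ref{thm:NKlessM} and~\ref{thm:same-s} do.
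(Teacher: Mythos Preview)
Your proposal is correct and follows essentially the same route as the paper's proof: both arguments pass to the binomial expansion~\eqref{eq:bi-Mth} after the $\bm{j}'$-derivatives, observe that each of the $\mu-m$ explicit factors $\int ds_1\,\bm{j}^T(s_1)\cdot\bm{\mathfrak{D}}(s_1-s)$ must absorb exactly one $\bm{j}$-derivative, note that a $\delta/\delta j(s)$-derivative on such a factor yields $\bm{\mathfrak{D}}(0)=0$, and hence conclude $\mu-m\le\lambda$, which together with $m\le\kappa$ contradicts $\kappa+\lambda<\mu$. Your write-up is in fact more careful than the paper's, since you explicitly track the extra powers of $\bm{h}$ and $\bm{j}$ supplied by the prefactors $\mathscr{Z}^{(0)}[\varrho_i,\bm{j}]$ and $\exp\{i\bar{\mathscr{S}}_{\bm{\xi}}^{(0)}\}$ via your parameters $a$ and $q$; the paper's proof leaves this bookkeeping implicit.
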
 \begin{proof} The proof is similar to Theorem \ref{thm:same-s}.
When we apply $\kappa$ $\bm{j}^{\prime}$-functional derivatives
to the second pair of square brackets in \eqref{eq:bi-Mth} and $(\mu-\kappa)$
$\bm{j}^{\prime}$-functional derivatives to the first pair of square
brackets in \eqref{eq:bi-Mth}, if we would like to obtain a nonvanishing
result in the limits $\bm{j}=\bm{j}'=\bm{h}=0$, the only possibility
is to apply $(\mu-\kappa)$ $\bm{j}$-functional derivatives to the
first pair of square brackets in \eqref{eq:bi-Mth} as well.

This means that we have to choose $(\mu-\kappa)$ $\bm{j}$-functional
derivatives among $\delta/\delta j_{l_{1}}(t_{f})$, $\cdots$, $\delta/\delta j_{l_{\lambda}}(t_{f})$,
and $\delta/\delta j_{n_{1}}(s)$, $\cdots$, $\delta/\delta j_{n_{\nu}}(s)$.
Since $\mu-\kappa>\lambda$, this procedure unavoidably gives at a
term proportional to $\bm{\mathfrak{D}}(0)$, which by definition
is zero. \end{proof}

When all the functional derivatives are taken at the same time $t_{f}$,
the following lemma guarantees that the functional derivatives in
the late time limit $t_{f}\to\infty$ is independent of the initial
state

\begin{thm}[independence of initial state]\label{thm:dNZscrdNj}
Given $\mathscr{Z}^{(0)}$ in \eqref{eq:Z0-scr} and the property
of $\bm{D}_{i}$ in \eqref{eq:Di-derivative}, we can shown 
\begin{equation}
\lim_{t_{f}\to\infty}\frac{\delta^{\nu}\mathscr{Z}^{(0)}[\varrho_{i},\,\bm{j}]}{\delta j_{n_{1}}(t_{f})\delta j_{n_{2}}(t_{f})\cdots\delta j_{n_{\nu}}(t_{f})}=0\,.\label{eq:dNZscrdNj}
\end{equation}
\end{thm}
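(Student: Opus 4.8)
The plan is to exploit the fact that the generating functional $\mathscr{Z}^{(0)}[\varrho_{i},\bm{j}]$ of \eqref{eq:Z0-scr} depends on the source $\bm{j}$ only through two linear functionals, whose functional derivatives evaluated at the endpoint time $t_{f}$ are controlled entirely by the decaying homogeneous solutions $\bm{D}_{1}$ and $\bm{D}_{2}$. First I would isolate this $\bm{j}$-dependence: write
\[
\mathscr{Z}^{(0)}[\varrho_{i},\bm{j}]=\int\!d\bm{r}_{i}\;e^{\,i\,A[\bm{r}_{i},\bm{j}]}\,\varrho\bigl(\bm{r}_{i},\,\bm{B}[\bm{j}]\bigr),\qquad A[\bm{r}_{i},\bm{j}]=\int_{0}^{t_{f}}\!\!ds\;\bm{j}^{T}(s)\cdot\bm{D}_{1}(s)\cdot\bm{r}_{i},\quad\bm{B}[\bm{j}]=\int_{0}^{t_{f}}\!\!ds\;\bm{D}_{2}(s)\cdot\bm{j}(s),
\]
and record the two elementary endpoint derivatives $\delta A/\delta j_{k}(t_{f})=[\bm{D}_{1}(t_{f})\cdot\bm{r}_{i}]_{k}$ and $\delta B_{a}/\delta j_{k}(t_{f})=[\bm{D}_{2}(t_{f})]_{ak}$.

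Then I would iterate, using the chain and Leibniz rules. Each application of $\delta/\delta j_{n_{i}}(t_{f})$ either hits $e^{iA}$, producing the factor $i\,[\bm{D}_{1}(t_{f})\cdot\bm{r}_{i}]_{n_{i}}$, or acts through $\bm{B}$, producing a factor $[\bm{D}_{2}(t_{f})]_{a n_{i}}$ together with one partial derivative $\partial/\partial q_{a}$ on the second argument of $\varrho$. Hence after $\nu$ differentiations, $\delta^{\nu}\mathscr{Z}^{(0)}/\bigl(\delta j_{n_{1}}(t_{f})\cdots\delta j_{n_{\nu}}(t_{f})\bigr)$ is a finite sum, over subsets $I\subseteq\{1,\dots,\nu\}$ (the derivatives hitting the exponent) and over repeated matrix indices, of terms of the schematic form
\[
\Bigl(\prod_{i\in I}i\Bigr)\int\!d\bm{r}_{i}\;\Bigl(\prod_{i\in I}[\bm{D}_{1}(t_{f})\cdot\bm{r}_{i}]_{n_{i}}\Bigr)\Bigl(\prod_{i\notin I}[\bm{D}_{2}(t_{f})]_{a_{i}n_{i}}\Bigr)\,e^{\,i\,A[\bm{r}_{i},\bm{j}]}\,\Bigl(\textstyle\prod_{i\notin I}\partial_{q_{a_{i}}}\Bigr)\varrho\bigl(\bm{r}_{i},\bm{B}[\bm{j}]\bigr).
\]
The decisive point is that every term carries exactly $\nu$ scalar factors, each one a matrix entry of $\bm{D}_{1}(t_{f})$ or of $\bm{D}_{2}(t_{f})$.

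Finally I would take $t_{f}\to\infty$. By \eqref{eq:Di-derivative} with $n=0$, both $\bm{D}_{1}(t_{f})\to0$ and $\bm{D}_{2}(t_{f})\to0$, in fact exponentially, $\sim e^{-\gamma_{0}t_{f}}$ with $\gamma_{0}$ of \eqref{eq:gamma0}. In each term the residual $\bm{r}_{i}$-integral stays bounded uniformly in $t_{f}$: one has $|e^{iA}|=1$, and $\varrho$ together with its $q$-derivatives, paired against the polynomial $\prod_{i\in I}[\,\cdot\,\bm{r}_{i}]_{n_{i}}$ of degree $|I|\le\nu$, yields a finite integral, because the initial reduced density matrix is assumed smooth with rapidly decaying position representation (finite moments of all orders). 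Thus each term is $O\bigl(\|\bm{D}(t_{f})\|^{\nu}\bigr)\to0$, so the finite sum vanishes, which is \eqref{eq:dNZscrdNj}. I expect the only non-routine ingredient to be precisely this uniform boundedness of the $\bm{r}_{i}$-integrals — controlling the powers of $\bm{r}_{i}$ and the $q$-derivatives of $\varrho$ brought down by the differentiations — which is where the (physically natural) regularity hypothesis on the initial state enters; everything else is bookkeeping of the chain rule combined with the exponential decay of the homogeneous solutions $\bm{D}_{i}$.
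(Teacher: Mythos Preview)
Your proposal is correct and follows essentially the same approach as the paper: the paper carries out the $\nu=1$ case explicitly, observing that each functional derivative $\delta/\delta j_{n}(t_{f})$ produces a factor of either $\bm{D}_{1}(t_{f})$ or $\bm{D}_{2}(t_{f})$ (both of which vanish as $t_{f}\to\infty$), and then asserts that the general case follows by straightforward generalization. You have simply made that generalization explicit via the Leibniz/chain-rule bookkeeping, and you are more careful than the paper in flagging the regularity hypothesis on the initial state needed to bound the residual $\bm{r}_{i}$-integrals.
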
 \begin{proof} We observe that for $\nu=1$, the proof
is trivial 
\begin{align}
\frac{\delta\mathscr{Z}^{(0)}[\varrho_{i},\,\bm{j}]}{\delta j_{n}(t_{f})} & =i\int\!d\bm{r}_{i}\;\Bigl[\cancel{\bm{D}_{1}(t_{f})}\cdot\bm{r}_{i}\Bigr]_{n}\times\exp\Bigl\{ i\int_{0}^{t_{f}}\!ds\;\bm{j}^{T}(s)\cdot\bm{D}_{1}(s)\cdot\bm{r}_{i}\Bigr\}\,\varrho(\,\bm{r}_{i},\,J(t_{f})\,)\nonumber \\
 & \qquad\qquad+\int\!d\bm{r}_{i}\;\exp\Bigl\{ i\int_{0}^{t_{f}}\!ds\;\bm{j}^{T}(s)\cdot\bm{D}_{1}(s)\cdot\bm{r}_{i}\Bigr\}\frac{\delta}{\delta J(t_{f})}\varrho(\bm{r}_{i},\,J(t_{f})\,)\cdot\cancel{\bm{D}_{2}(t_{f})}\nonumber \\
 & =0\,,\label{eq:dZscrdj}
\end{align}
where 
\begin{equation}
J(t_{f})=\int_{0}^{t_{f}}\!ds\;\bm{D}_{2}(s)\cdot\bm{j}(s)\,.
\end{equation}
The vanishing result is the consequence of Eq. (\ref{eq:Di-derivative}).
Generalizing this observation straightforwardly to any integer $\nu$
concludes the proof \end{proof}

Another useful lemma is the following \begin{lem}\label{lem:decay-Laplace}
For signal $f(s)$ that\textcolor{blue}{{} }has no singularity on\textcolor{blue}{{}
$[0,\,+\infty)$} and decays as $s\to\infty$, i.e., ${\displaystyle \lim_{s\to\infty}\lvert f(s)\rvert=0}$,
we then have 
\begin{equation}
\lim_{t_{f}\to\infty}\int_{0}^{t_{f}}\!ds\;D(t_{f}-s)f(s)=0\,,\label{eq:int-dDidn}
\end{equation}
where $D(s)$ stands for any matrix element of the derivative or power
of the fundamental solution $\bm{D}_{i}(s)$ discussed in \ref{sec:Z0-SE}.
\end{lem}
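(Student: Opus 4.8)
The plan is to read the integral as a convolution of the exponentially damped kernel $D$ against the decaying signal $f$, and to control it by splitting into the regime where $f$ is already small and the short window where $D$ has not yet decayed.

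First I would record the decay estimate on $D$. From the explicit Laplace-domain forms \eqref{eq:D1-s-def} and \eqref{eq:D2-s-def}, every matrix element of $\bm{D}_i(\mathfrak{s})$ is a rational function whose only poles are the $\mathfrak{s}_k$ with $\operatorname{Re}\mathfrak{s}_k<0$; inverting, each matrix element of $\bm{D}_i(s)$, of any of its derivatives, and of any finite product of such objects is a finite linear combination of terms $s^{a}e^{\mathfrak{s}_k s}$. Hence, with $\gamma_0$ as in \eqref{eq:gamma0} and any fixed $\gamma'$ with $0<\gamma'<\gamma_0$, there is a constant $C>0$ such that $\lvert D(s)\rvert\le C e^{-\gamma' s}$ for all $s\ge 0$; the polynomial prefactors are harmless since $s^{a}e^{-\gamma_0 s}\le C_a e^{-\gamma' s}$. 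In particular $C_0:=\int_0^{\infty}\lvert D(u)\rvert\,du<\infty$ and $\int_T^{\infty}\lvert D(u)\rvert\,du\le (C/\gamma')e^{-\gamma' T}\to 0$ as $T\to\infty$. This is consistent with, and strengthens, the pointwise fact \eqref{eq:Di-derivative} already stated.

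Next I would carry out the splitting. After the substitution $u=t_f-s$ the quantity in \eqref{eq:int-dDidn} equals $\int_0^{t_f}D(u)\,f(t_f-u)\,du$. Fix $\varepsilon>0$. Since $f$ has no singularity on $[0,\infty)$ and $\lvert f(s)\rvert\to 0$, $f$ is bounded, say $\lvert f\rvert\le M$ on $[0,\infty)$, and there is $S$ with $\lvert f(s)\rvert\le\varepsilon$ for $s\ge S$. For $t_f>S$ split at $u=t_f-S$: on $[0,t_f-S]$ one has $t_f-u\ge S$, so that piece is bounded by $\varepsilon\int_0^{\infty}\lvert D(u)\rvert\,du=\varepsilon C_0$; on $[t_f-S,t_f]$ one uses $\lvert f\rvert\le M$ and the tail bound to get a contribution $\le M(C/\gamma')e^{-\gamma'(t_f-S)}$, which is $\le\varepsilon$ once $t_f$ is large enough. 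Therefore $\bigl\lvert\int_0^{t_f}D(t_f-s)f(s)\,ds\bigr\rvert\le (C_0+1)\varepsilon$ for all sufficiently large $t_f$, and since $\varepsilon$ was arbitrary the limit is $0$.

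The only step that is not completely mechanical is the first one: producing a single exponential majorant valid for an \emph{arbitrary} derivative or power of $\bm{D}_i$. Once the partial-fraction/inverse-Laplace bookkeeping is done and the strict negativity $\operatorname{Re}\mathfrak{s}_k<0$ is invoked, the remainder is the standard fact that convolving an $L^1$ kernel against a function vanishing at infinity yields something vanishing at infinity; continuity of $f$ on each compact $[0,t_f]$ guarantees integrability there, so no further subtlety arises.
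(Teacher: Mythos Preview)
Your proof is correct but takes a genuinely different route from the paper's. The paper argues via the Laplace convolution theorem: it writes
\[
\int_{0}^{t_f}\!ds\;[\bm{D}_i(t_f-s)]_{jk}\,f(s)=\frac{1}{2\pi i}\int_{\operatorname{Re}\mathfrak{s}_0-i\infty}^{\operatorname{Re}\mathfrak{s}_0+i\infty}\!d\mathfrak{s}\;\frac{[\bm{W}_i(\mathfrak{s})]_{jk}\,f(\mathfrak{s})\,e^{\mathfrak{s}t_f}}{\prod_k(\mathfrak{s}-\mathfrak{s}_k)}
\]
and then invokes the residue theorem, observing that every pole $\mathfrak{s}_m$ of the integrand has $\operatorname{Re}\mathfrak{s}_m<0$, so each residue carries a factor $e^{t_f\operatorname{Re}\mathfrak{s}_m}\to 0$. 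You instead stay entirely in the time domain: you extract the uniform bound $|D(s)|\le Ce^{-\gamma' s}$ from the partial-fraction structure of \eqref{eq:D1-s-def}--\eqref{eq:D2-s-def}, change variables to $u=t_f-s$, and do a standard $\varepsilon$-splitting of the convolution at $u=t_f-S$.

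Your approach is more elementary and in fact more robust: it only requires $f$ to be bounded on $[0,\infty)$ and to vanish at infinity, whereas the paper's residue argument tacitly needs $f(\mathfrak{s})$ to be meromorphic with isolated poles in the left half-plane, a stronger hypothesis than ``no singularity on $[0,\infty)$ and decays at infinity.'' The paper's route, on the other hand, fits the transform-heavy style of the surrounding computations and makes the extension to derivatives and powers of $\bm{D}_i$ immediate at the level of pole structure, without having to repackage polynomial prefactors into a single exponential majorant as you do.
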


\begin{proof} From the definition of the Laplace transform, we can
easily see that the possible poles of $f(\mathfrak{s})$ must lie
on the left half of the complex $\mathfrak{s}$ plane. Take $D(s)$
be the matrix element of $\bm{D}_{i}(s)$, and then by the Laplace
convolution theorem, we obtain 
\begin{align}
\lim_{t_{f}\to\infty}\int_{0}^{t_{f}}\!ds\;\Bigl[\bm{D}_{i}(t_{f}-s)\Bigr]_{jk}f(s) & =\lim_{t_{f}\to\infty}\frac{1}{2\pi i}\int_{\operatorname{Re}\mathfrak{s}_{0}-i\infty}^{\operatorname{Re}\mathfrak{s}_{0}+i\infty}\!d\mathfrak{s}\;\Bigl[\bm{D}_{i}(\mathfrak{s})\Bigr]_{jk}f(\mathfrak{s})\,e^{\mathfrak{s}t_{f}}\nonumber \\
 & =\lim_{t_{f}\to\infty}\frac{1}{2\pi i}\int_{\operatorname{Re}\mathfrak{s}_{0}-i\infty}^{\operatorname{Re}\mathfrak{s}_{0}+i\infty}\!d\mathfrak{s}\;\frac{[\bm{W}_{i}(\mathfrak{s})]_{jk}{f}(\mathfrak{s})e^{\mathfrak{s}t_{f}}}{\prod_{k=1}^{4}(\mathfrak{s}-\mathfrak{s}_{k})}\,,\label{eq:int-Di}
\end{align}
where $\operatorname{Re}\mathfrak{s}_{0}$ is a real number so that
the integration is well defined. Since all of the poles $\mathfrak{s}_{m}$
of the integrand lie on the left half of the complex $\mathfrak{s}$
plane, i.e., $\operatorname{Re}\mathfrak{s}_{m}<0$, the factor $e^{t_{f}\operatorname{Re}\mathfrak{s}_{m}}$
will approach zero exponentially in the limit $t_{f}\to\infty$. Thus
from the residue theorem, we conclude that \eqref{eq:int-Di} will
vanish in the same limit. Since any powers or times derivatives of
$\bm{D}_{i}(\mathfrak{s})$ have the same singular structure as $\bm{D}_{i}(\mathfrak{s})$,
the conclusion naturally extends to the case that $D(s)$ is powers
of time derivatives of any matrix element of $\bm{D}_{i}(s)$. \end{proof}

When we apply Theorem \ref{lem:decay-Laplace} in what follows, the
function $f$ may contain the Hadamard Green's function defined by
Eq. \eqref{eq:GHn-omega-thermal}, which is divergent without regularization.
This will render $f(s)$ no longer analytic on the $s\in[0,\,\infty)$
or at $s=\infty$. However, realistically there is always some regularization
imposed by physical constraint on the Hadamard Green's function, such
as the cutoff regularization, which will make it convergent. Therefore,
the Hadamard Green's function in what follows should be understood
as the regularized one and then we could apply Theorem \ref{lem:decay-Laplace}
without any further concerns. 

\begin{thm}\label{thm:decaying}

\begin{align}
 & \quad\lim_{t_{f}\to\infty}\int_{0}^{t_{f}}\!ds\;\langle\!\langle\,\Bigl[\dot{\bm{D}}_{2}(t_{f}-s)\Bigr]_{nr}\,\xi_{n}(t_{f})\,\frac{\delta^{3}\mathscr{Z}^{(0)}[\varrho_{i},\,\bm{j}]\exp\Bigl\{ i\,\bar{\mathscr{S}}_{\bm{\xi}}^{(0)}[\bm{j},\,\bm{h})\Bigr\}}{\delta j_{k}(s)\delta j_{l}(s)\delta j_{m}(s)}\,\rangle\!\rangle\,\bigg|_{\bm{j}=\bm{h}=0}\nonumber \\
 & =\lim_{t_{f}\to\infty}\int_{0}^{t_{f}}\!ds\;\langle\!\langle\,\Bigl[\dot{\bm{D}}_{2}(t_{f}-s)\Bigr]_{nr}\,\xi_{n}(t_{f})\,\frac{\delta^{3}\exp\Bigl\{ i\,\bar{\mathscr{S}}_{\bm{\xi}}^{(0)}[\bm{j},\,\bm{h})\Bigr\}}{\delta j_{k}(s)\delta j_{l}(s)\delta j_{m}(s)}\,\rangle\!\rangle\,\bigg|_{\bm{j}=\bm{h}=0}\,.\label{E:fhbkrt}
\end{align}
\end{thm}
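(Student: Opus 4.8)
The plan is to expand the triple functional derivative with the Leibniz rule, grouping the resulting terms by how many of the three derivatives $\delta/\delta j_k(s)$, $\delta/\delta j_l(s)$, $\delta/\delta j_m(s)$ act on $\exp\{i\bar{\mathscr{S}}_{\bm{\xi}}^{(0)}[\bm{j},\bm{h})\}$ rather than on $\mathscr{Z}^{(0)}[\varrho_i,\bm{j}]$:
\begin{equation}
\frac{\delta^3\bigl(\mathscr{Z}^{(0)}[\varrho_i,\bm{j}]\,e^{i\bar{\mathscr{S}}_{\bm{\xi}}^{(0)}[\bm{j},\bm{h})}\bigr)}{\delta j_k(s)\,\delta j_l(s)\,\delta j_m(s)}=\sum_{a=0}^{3}\mathcal{T}_a\,,
\end{equation}
where $\mathcal{T}_a$ collects the terms with exactly $a$ derivatives on $\exp\{i\bar{\mathscr{S}}_{\bm{\xi}}^{(0)}\}$. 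Evaluating at $\bm{j}=\bm{h}=0$ and using the normalization $\mathscr{Z}^{(0)}[\varrho_i,\bm{j}=0]=1$, the term $\mathcal{T}_3$ reduces to $\delta^3\exp\{i\bar{\mathscr{S}}_{\bm{\xi}}^{(0)}\}/\delta j_k(s)\delta j_l(s)\delta j_m(s)$, which after the remaining operations (multiplication by $[\dot{\bm{D}}_2(t_f-s)]_{nr}\xi_n(t_f)$, the $s$-integration, the ensemble average and the limit) is exactly the right-hand side of the theorem. So it remains to show that the contributions of $\mathcal{T}_0$, $\mathcal{T}_1$ and $\mathcal{T}_2$ to $\int_0^{t_f}\!ds\,\langle\!\langle[\dot{\bm{D}}_2(t_f-s)]_{nr}\,\xi_n(t_f)\,(\cdot)\rangle\!\rangle$ vanish as $t_f\to\infty$.

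I would first run a parity count in $\bm{\xi}$. After $\bm{j}=\bm{h}=0$ the only stochastic factors that survive are the external $\xi_n(t_f)$ and those produced whenever a derivative $\delta/\delta j_a(s)$ strikes $\exp\{i\bar{\mathscr{S}}_{\bm{\xi}}^{(0)}\}$ — by \eqref{eq:S0cal-xi} each such hit brings down one factor $[\int_0^{t_f}\!ds'\,\bm{\mathfrak{D}}(s-s')\,\bm{\xi}(s')]$ — whereas by \eqref{eq:Z0-scr} a hit on $\mathscr{Z}^{(0)}$ brings down only a factor $\bm{D}_1(s)\bm{r}_i$ or $\bm{D}_2(s)$ carrying no $\bm{\xi}$. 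Hence $\mathcal{T}_a$ carries $1+a$ powers of $\bm{\xi}$, and since $\bm{\xi}$ is Gaussian with zero mean, $\langle\!\langle\mathcal{T}_0(\cdot)\rangle\!\rangle$ (one power) and $\langle\!\langle\mathcal{T}_2(\cdot)\rangle\!\rangle$ (three powers) vanish identically. The only surviving obstruction is $\mathcal{T}_1$, which carries two powers of $\bm{\xi}$.

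For $\mathcal{T}_1$ the two derivatives on $\mathscr{Z}^{(0)}$ produce a function $g_{kl}(s)$ that is a finite sum of products of matrix elements of $\bm{D}_1(s)$ and $\bm{D}_2(s)$ weighted by moments of the initial state, and by \eqref{eq:Di-derivative} $g_{kl}(s)\to0$ exponentially as $s\to\infty$. The single derivative on $\exp\{i\bar{\mathscr{S}}_{\bm{\xi}}^{(0)}\}$ brings down one $\bm{\xi}$, which Wick-contracts with $\xi_n(t_f)$ into a factor $\int_0^{t_f}\!ds_1\,[\bm{\mathfrak{D}}(s-s_1)]_{mp}[\bm{G}_H(t_f-s_1)]_{pn}$. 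Substituting $t_f-s\to s$ and $t_f-s_1\to s_1$, the $\mathcal{T}_1$ contribution takes the shape $\int_0^{t_f}\!ds\,g_{kl}(t_f-s)\,f(s)$ with $f(s)=[\dot{\bm{D}}_2(s)]_{nr}\,i\!\int_0^{t_f}\!ds_1\,[\bm{\mathfrak{D}}(s_1-s)]_{mp}[\bm{G}_H(s_1)]_{pn}$, which is bounded and decays in $s$ (using the exponential fall-off of $\dot{\bm{D}}_2$ and the regularized, bounded $\bm{G}_H$). Because $g_{kl}$ is assembled from the fundamental solutions it shares their left-half-plane Laplace-pole structure, so Lemma~\ref{lem:decay-Laplace} — in the mild extension that allows the kernel $D$ to be a product or time-derivative of matrix elements of $\bm{D}_i$ — yields $\lim_{t_f\to\infty}\int_0^{t_f}\!ds\,g_{kl}(t_f-s)\,f(s)=0$, disposing of $\mathcal{T}_1$ and completing the argument.

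The main obstacle is precisely $\mathcal{T}_1$: the parity-in-$\bm{\xi}$ argument that trivially kills $\mathcal{T}_0$ and $\mathcal{T}_2$ has no purchase here, so one must genuinely exploit the analytic decay of the homogeneous solutions $\bm{D}_i$, together with the physically mandated regularization of the Hadamard kernel $\bm{G}_H$, to control the nested time integrals. Subsidiary care is needed to justify the extension of Lemma~\ref{lem:decay-Laplace} to products of $\bm{D}_i$ matrix elements and to interchange $\lim_{t_f\to\infty}$ with the $s$- and $s_1$-integrations, which is again where the regularization of $\bm{G}_H$ is used.
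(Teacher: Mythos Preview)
Your proposal is correct and follows essentially the same route as the paper: a parity-in-$\bm{\xi}$ count disposes of $\mathcal{T}_0$ and $\mathcal{T}_2$, leaving only the $\mathcal{T}_1$ contribution, which after the change of variables $t_f-s\to s$, $t_f-s_1\to s_1$ becomes a convolution of a decaying kernel built from products of $\bm{D}_i$-matrix elements against a bounded, decaying function, to which Lemma~\ref{lem:decay-Laplace} (in its extension to products/derivatives of the $\bm{D}_i$) applies. The paper's own argument is organized identically, including the intermediate step of replacing the inner $s_1$-integral's upper limit by $\infty$ before invoking the lemma --- precisely the interchange of limit and integral you flag as a subsidiary care point.
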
 \begin{proof} In order to generate a non-vanishing results
after averaging on all the realization of $\bm{\xi}$, we keep terms
that are even powers in $\bm{\xi}$. Thus either one or all of the
functional derivatives among $\delta/\delta j_{k}(s)$, $\delta/\delta j_{l}(s)$,
$\delta/\delta j_{m}(s)$ should be applied to $\exp\bigl\{ i\,\bar{\mathscr{S}}_{\bm{\xi}}^{(0)}[\bm{j},\,\bm{h})\bigr\}$.
Now we show that the former case will give vanishing values.

Suppose we apply $\delta/\delta j_{k}(s)$ to $\exp\bigl\{ i\,\bar{\mathscr{S}}_{\bm{\xi}}^{(0)}[\bm{j},\,\bm{h})\bigr\}$,
and $\delta/\delta j_{l}(s)$, $\delta/\delta j_{m}(s)$ to $\mathscr{Z}^{(0)}[\varrho_{i},\,\bm{j}]$.
Then after taking the ensemble average, we obtain expressions like
\begin{align}
 & \quad\int_{0}^{t_{f}}\!ds\int_{0}^{t_{f}}\!ds^{\prime}\;\Bigl[\bm{D}_{2}(t_{f}-s)\Bigr]_{nr}\Bigl[\bm{\mathfrak{D}}(s-s^{\prime})\Bigr]_{kp}\langle\!\langle\xi_{n}(t_{f})\xi_{p}(s^{\prime})\rangle\!\rangle\Bigl[\bm{D}_{i}(s)\Bigr]_{l\alpha}\Bigl[\bm{D}_{j}(s)\Bigr]_{m\beta}\nonumber \\
\sim\epsilon_{1}(t_{f}) & \equiv\int_{0}^{t_{f}}\!ds\int_{0}^{t_{f}}\!ds^{\prime}\;D(t_{f}-s)\mathfrak{D}(s-s^{\prime})G_{H}(t_{f}-s^{\prime})D^{2}(s)\,,\label{eq:estimate}
\end{align}
where $[\bm{D}_{i}(s)]_{l\alpha}$ and $[\bm{D}_{j}(s)]_{m\beta}$
result from the application of $\delta/\delta j_{l}(s)$, $\delta/\delta j_{m}(s)$
on $\mathscr{Z}^{(0)}[\varrho_{i},\,\bm{j}]$. Now we make the change
of variables $t_{f}-s\to s$, $t_{f}-s^{\prime}\to s^{\prime}$, and
then Eq.~(\ref{eq:estimate}) becomes 
\begin{align}
\epsilon_{1}(t_{f}) & =\int_{0}^{t_{f}}\!ds\;D^{2}(t_{f}-s)\left[\int_{s}^{t_{f}}\!ds^{\prime}\;D(s)D(s^{\prime}-s)G_{H}(s^{\prime})\right]\,.
\end{align}
Since both $D(s)$ and $G_{H}(s)$ vanish at $s=\infty$, we know
${\displaystyle \lim_{t_{f}\to\infty}\int_{s}^{t_{f}}ds^{\prime}D(s)D(s^{\prime}-s)G_{H}(s^{\prime})}$
exists and we can safely set $t_{f}=\infty$, 
\begin{equation}
\lim_{t_{f}\to\infty}\int_{s}^{t_{f}}\!ds^{\prime}\;D(s)D(s^{\prime}-s)G_{H}(s^{\prime})=\int_{s}^{\infty}\!ds^{\prime}\;D(s)D(s^{\prime}-s)G_{H}(s^{\prime})\,.
\end{equation}
Therefore in the large $t_{f}$ limit, we find $\epsilon_{1}(t_{f})$
given by 
\begin{equation}
\lim_{t_{f}\to\infty}\epsilon_{1}(t_{f})=\lim_{t_{f}\to\infty}\int_{0}^{t_{f}}\!ds\;D^{2}(t_{f}-s)\left[\int_{s}^{\infty}ds^{\prime}D(s)D(s^{\prime}-s)G_{H}(s^{\prime})\right]\,.
\end{equation}
Since 
\begin{equation}
\lim_{s\to\infty}\int_{s}^{\infty}\!ds^{\prime}\;D(s)D(s^{\prime}-s)G_{H}(s^{\prime})=0\,,
\end{equation}
by Lemma \ref{lem:decay-Laplace}, we have 
\begin{equation}
\lim_{t_{f}\to\infty}\epsilon_{1}(t_{f})=0\,.
\end{equation}

In the second case, all the functional derivatives $\delta/\delta j_{k}(s)$,
$\delta/\delta j_{l}(s)$, $\delta/\delta j_{p}(s)$ apply to $\exp\bigl\{ i\,\bar{\mathscr{S}}_{\bm{\xi}}^{(0)}[\bm{j},\,\bm{h})\bigr\}$,
and $\mathscr{Z}^{(0)}[\varrho_{i},\,\bm{j}=0]=1$, so this gives
\eqref{E:fhbkrt}. This concludes the proof. \end{proof}

By similar procedures, because $\bm{\xi}$ has a Gaussian distribution,
we can readily show \begin{thm}\label{thm:decaying-2j} 
\begin{align}
 & \quad\lim_{t_{f}\to\infty}\int_{0}^{t_{f}}\!ds\;\langle\!\langle\,\Bigl[\dot{\bm{D}}_{2}(t_{f}-s)\Bigr]_{nr}\,\xi_{n}(t_{f})\,\frac{\delta^{2}\mathscr{Z}^{(0)}[\varrho_{i},\,\bm{j}]\exp\Bigl\{ i\,\bar{\mathscr{S}}_{\bm{\xi}}^{(0)}[\bm{j},\,\bm{h})\Bigr\}}{\delta j_{k}(s)\delta j_{l}(s)}\,\rangle\!\rangle\,\bigg|_{\bm{j}=\bm{h}=0}\nonumber \\
 & =\lim_{t_{f}\to\infty}\int_{0}^{t_{f}}\!ds\;\langle\!\langle\,\Bigl[\dot{\bm{D}}_{2}(t_{f}-s)\Bigr]_{nr}\,\xi_{n}(t_{f})\,\frac{\delta^{2}\exp\Bigl\{ i\,\bar{\mathscr{S}}_{\bm{\xi}}^{(0)}[\bm{j},\,\bm{h})\Bigr\}}{\delta j_{k}(s)\delta j_{l}(s)}\,\rangle\!\rangle\,\bigg|_{\bm{j}=\bm{h}=0}\nonumber \\
 & =0\,.
\end{align}
\end{thm}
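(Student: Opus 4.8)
The plan is to follow the template of the proof of Theorem~\ref{thm:decaying} and then invoke the Gaussianity of $\bm{\xi}$, exactly as the remark preceding the statement suggests. First I would expand the mixed functional derivative $\delta^{2}/\delta j_{k}(s)\delta j_{l}(s)$ acting on the product $\mathscr{Z}^{(0)}[\varrho_{i},\bm{j}]\exp\{i\bar{\mathscr{S}}_{\bm{\xi}}^{(0)}[\bm{j},\bm{h})\}$ by the Leibniz rule into three groups: \textbf{(A)} both derivatives hit $\exp\{i\bar{\mathscr{S}}_{\bm{\xi}}^{(0)}\}$, \textbf{(B)} one derivative acts on each factor, and \textbf{(C)} both derivatives hit $\mathscr{Z}^{(0)}$. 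Since $\bar{\mathscr{S}}_{\bm{\xi}}^{(0)}[\bm{j},\bm{h})$ in~\eqref{eq:S0cal-xi} is \emph{linear} in $\bm{j}$ and $\mathscr{Z}^{(0)}[\varrho_{i},\bm{j}=0]=1$, setting $\bm{j}=\bm{h}=0$ at the end collapses each group to something simple. The first equality in the statement is precisely the assertion that groups (B) and (C) drop out in the limit $t_{f}\to\infty$, leaving only group (A), which is the right-hand side; the second equality is the assertion that group (A) itself vanishes.

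Group (C) is handled first: the two derivatives of $\mathscr{Z}^{(0)}$ produce an object built only from constant $\bm{r}_{i}$-moments of $\varrho$ and from homogeneous solutions $\bm{D}_{1,2}$ evaluated at $s$, with \emph{no} dependence on the noise; the integrand then carries the single factor $\xi_{n}(t_{f})$, so the ensemble average annihilates it identically, using $\langle\!\langle\xi_{n}(t_{f})\rangle\!\rangle=0$. For group (B), the derivative on $\exp\{i\bar{\mathscr{S}}_{\bm{\xi}}^{(0)}\}$ pulls down a factor $\int\! ds_{1}\,\bm{\mathfrak{D}}(s-s_{1})\bm{\xi}(s_{1})$ while the derivative on $\mathscr{Z}^{(0)}$ pulls down a matrix element of $\bm{D}_{1}(s)$ or $\bm{D}_{2}(s)$ multiplied by a constant; the Gaussian contraction of $\xi_{n}(t_{f})$ with $\xi(s_{1})$ then produces $\bm{G}_{H}(t_{f}-s_{1})$. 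After the substitution $t_{f}-s\to s$, $t_{f}-s_{1}\to s_{1}$ this has exactly the form of the estimate $\epsilon_{1}(t_{f})$ in the proof of Theorem~\ref{thm:decaying}: an outer convolution of the decaying kernel $D(t_{f}-s)$ against $\dot{D}(s)\bigl[\int_{s}^{\infty}\! ds_{1}\,\mathfrak{D}(s_{1}-s)G_{H}(s_{1})\bigr]$, whose bracket tends to $0$ as $s\to\infty$ (for the regularized $G_{H}$, per the remark before Theorem~\ref{thm:decaying}). Lemma~\ref{lem:decay-Laplace} then forces this to vanish as $t_{f}\to\infty$, which establishes the first equality.

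Finally, group (A): because $\bar{\mathscr{S}}_{\bm{\xi}}^{(0)}$ is linear in $\bm{j}$, each of $\delta/\delta j_{k}(s)$ and $\delta/\delta j_{l}(s)$ brings down one noise factor $\int\! ds_{i}\,\bm{\mathfrak{D}}(s-s_{i})\bm{\xi}(s_{i})$ and $\exp\{i\bar{\mathscr{S}}_{\bm{\xi}}^{(0)}\}\big|_{\bm{j}=\bm{h}=0}=1$; together with the explicit prefactor $\xi_{n}(t_{f})$, the integrand is a product of \emph{three} noise factors, so its ensemble average vanishes since $\bm{\xi}$ is Gaussian with zero mean. Hence both sides of the statement equal zero. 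I expect no real obstacle beyond the bookkeeping already done for Theorem~\ref{thm:decaying}; the only point requiring care is interchanging $\lim_{t_{f}\to\infty}$ with the $s$- and $s_{1}$-integrations in group (B), which goes through as there---the inner integral converges because $\bm{D}_{i}$ and its derivatives decay exponentially (Eq.~\eqref{eq:Di-derivative}) and $G_{H}$ is regularized, and the outer integral is controlled by the Laplace-convolution/residue argument of Lemma~\ref{lem:decay-Laplace}.
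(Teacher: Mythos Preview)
Your proposal is correct and follows essentially the same approach the paper indicates: the paper does not spell out a proof but simply says ``by similar procedures, because $\bm{\xi}$ has a Gaussian distribution, we can readily show'' the result, and your Leibniz decomposition into groups (A), (B), (C) with Gaussianity killing (A) and (C) and the $\epsilon_{1}$-type estimate plus Lemma~\ref{lem:decay-Laplace} handling (B) is exactly the unpacking of that remark.
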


Similar conclusions can be further generalized. We prove the following
theorems with the assistance of Lemma \ref{lem:decay-Laplace}. \begin{thm}\label{thm:decaying-2D-3j}
\begin{align}
 & \quad\lim_{t_{f}\to\infty}\int_{0}^{t_{f}}\!ds\int_{0}^{t_{f}}\!ds'\;\langle\!\langle\,\Bigl[\dot{\bm{D}}_{2}(t_{f}-s)\Bigr]_{nr}\Bigl[\dot{\bm{D}}_{2}(t_{f}-s^{\prime})\Bigr]_{np}\,\xi_{p}(s^{\prime})\,\frac{\delta^{3}\mathscr{Z}^{(0)}[\varrho_{i},\,\bm{j}]\exp\Bigl\{ i\,\bar{\mathscr{S}}_{\bm{\xi}}^{(0)}[\bm{j},\,\bm{h})\Bigr\}}{\delta j_{k}(s)\delta j_{l}(s)\delta j_{m}(s)}\,\rangle\!\rangle\,\bigg|_{\bm{j}=\bm{h}=0}\nonumber \\
 & =\lim_{t_{f}\to\infty}\int_{0}^{t_{f}}\!ds\int_{0}^{t_{f}}\!ds'\;\langle\!\langle\,\Bigl[\dot{\bm{D}}_{2}(t_{f}-s)\Bigr]_{nr}\Bigl[\dot{\bm{D}}_{2}(t_{f}-s^{\prime})\Bigr]_{np}\,\xi_{p}(s^{\prime})\,\frac{\delta^{3}\exp\Bigl\{ i\,\bar{\mathscr{S}}_{\bm{\xi}}^{(0)}[\bm{j},\,\bm{h})\Bigr\}}{\delta j_{k}(s)\delta j_{l}(s)\delta j_{m}(s)}\,\rangle\!\rangle\,\bigg|_{\bm{j}=\bm{h}=0}\,.
\end{align}
\end{thm}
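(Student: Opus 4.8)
The plan is to prove Theorem~\ref{thm:decaying-2D-3j} by the same strategy used for Theorem~\ref{thm:decaying}: show that in the equality there are only two ways the three functional derivatives $\delta/\delta j_k(s),\,\delta/\delta j_l(s),\,\delta/\delta j_m(s)$ can be distributed between the two factors $\mathscr{Z}^{(0)}[\varrho_i,\bm{j}]$ and $\exp\{i\,\bar{\mathscr{S}}_{\bm{\xi}}^{(0)}[\bm{j},\bm{h})\}$, namely \emph{all three} act on the exponential (which yields the right-hand side since $\mathscr{Z}^{(0)}[\varrho_i,\bm{j}=0]=1$), or \emph{at least one} acts on $\mathscr{Z}^{(0)}[\varrho_i,\bm{j}]$; I then show the latter contributions all vanish in the limit $t_f\to\infty$. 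First I would note the Gaussian/parity constraint: after the ensemble average over $\bm{\xi}$ the surviving terms must be even in $\bm{\xi}$, and since $\xi_p(s')$ already supplies one power, the three $\delta/\delta j$ derivatives acting on $\exp\{i\bar{\mathscr{S}}^{(0)}_{\bm{\xi}}\}$ must together bring down an odd number of $\bm{\xi}$'s; this forces either three of them or exactly one of them to hit the exponential. (The case ``two hit the exponential'' is killed by parity, exactly as in the proof of Theorem~\ref{thm:decaying}.)

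Next I would treat the remaining case: one derivative, say $\delta/\delta j_k(s)$, acts on $\exp\{i\bar{\mathscr{S}}^{(0)}_{\bm{\xi}}\}$ and produces a factor $\int_0^{t_f}ds_1\,[\bm{\mathfrak{D}}(s-s_1)]_{kp'}\xi_{p'}(s_1)$, while $\delta/\delta j_l(s)$ and $\delta/\delta j_m(s)$ act on $\mathscr{Z}^{(0)}[\varrho_i,\bm{j}]$ and, by Eq.~\eqref{eq:Z0-scr}, produce factors of the homogeneous solutions $\bm{D}_1(s)$ and $\bm{D}_2(s)$ (through $r_i$ and through $\partial/\partial J(t_f)$ acting on $\varrho$). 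Performing the $\langle\!\langle\,\cdot\,\rangle\!\rangle$ average via Wick's theorem pairs $\xi_p(s')$ with $\xi_{p'}(s_1)$ and yields, schematically, an expression of the type
\begin{equation}
\epsilon(t_f)\sim\int_0^{t_f}\!ds\,\dot D(t_f-s)D^2(s)\int_0^{t_f}\!ds'\int_0^{t_f}\!ds_1\,\dot D(t_f-s')\,\mathfrak{D}(s-s_1)\,G_H(s'-s_1)\,,
\end{equation}
where $D$ stands for a matrix element of $\bm{D}_1$, $\bm{D}_2$ or a derivative thereof. Changing variables $t_f-s\to s$, $t_f-s'\to s'$, $t_f-s_1\to s_1$ and using $\bm{\mathfrak{D}}(s-s_1)=\Theta(s-s_1)\bm{D}_2(s-s_1)$, the two inner integrals converge as $t_f\to\infty$ to a finite function of $s$ that \emph{decays} as $s\to\infty$ (because $D(s)$ and $G_H(s)$ — understood as regularized — vanish at infinity, so the inner double integral is a convolution of decaying signals). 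Lemma~\ref{lem:decay-Laplace} then applies to the remaining outer integral $\int_0^{t_f}ds\,\dot D(t_f-s)\,[\text{decaying in }s]$ and forces $\lim_{t_f\to\infty}\epsilon(t_f)=0$. The same argument handles the subcase where two derivatives hit $\mathscr{Z}^{(0)}$ but they land on different structural pieces ($r_i$ vs.\ the $\varnothing$-argument of $\varrho$), and the subcase where the lone $\mathscr{Z}^{(0)}$-derivative is $\delta/\delta j_k(s)$ rather than $\delta/\delta j_l(s)$; all reduce to estimates of the same $\epsilon(t_f)$ form.

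I expect the main obstacle to be purely bookkeeping rather than conceptual: carefully enumerating the finitely many ways the three $\bm{j}$-derivatives (plus the implicit Wick pairings with $\xi_p(s')$ and with the $\xi$'s generated from the exponential) can be assigned, and checking that \emph{every} assignment that touches $\mathscr{Z}^{(0)}[\varrho_i,\bm{j}]$ yields an integrand in which at least one factor $\bm{D}_i(t_f-\bullet)$ is ``trapped'' so that, after the change of variables and taking the $t_f\to\infty$ limit of the remaining nested integrals, one lands in the hypothesis of Lemma~\ref{lem:decay-Laplace}. The key technical point — already used verbatim in the proof of Theorem~\ref{thm:decaying} — is that $\int_s^\infty ds'\,D(s)D(s'-s)G_H(s')\to 0$ as $s\to\infty$, so that the convolution-tail indeed decays and Lemma~\ref{lem:decay-Laplace} is legitimately applicable; granting that, the proof is complete and the displayed equality of Theorem~\ref{thm:decaying-2D-3j} follows, since in the only surviving case $\mathscr{Z}^{(0)}[\varrho_i,\bm{j}=0]=1$ and the left side collapses to the right side.
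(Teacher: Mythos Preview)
Your proposal is correct and follows essentially the same route as the paper's own proof: the Gaussian parity reduction to the ``one or three derivatives on the exponential'' dichotomy, the schematic estimate $\epsilon_2(t_f)$ for the unwanted case where two derivatives hit $\mathscr{Z}^{(0)}[\varrho_i,\bm{j}]$, and the appeal to Lemma~\ref{lem:decay-Laplace} after arguing that the inner convolution is bounded and the outer integrand decays. Your write-up is in fact slightly more careful than the paper's in one respect: you retain the factor $\dot{\bm{D}}_2(t_f-s')$ explicitly in your schematic, whereas the paper's displayed $\epsilon_2(t_f)$ in Eq.~\eqref{eq:eps2} drops it, which makes their subsequent boundedness claim for $\int_0^\infty ds'\int_0^s ds_1\,D(s-s_1)G_H(s'-s_1)$ look more delicate than it needs to be.
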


\begin{proof} In order to obtain a non-vanishing results, we keep
terms that are even powers in $\bm{\xi}$. Thus, either one or all
of the three functional derivatives can be applied to $\exp\bigl\{ i\,\bar{\mathscr{S}}_{\bm{\xi}}^{(0)}[\bm{j},\,\bm{h}=0)\bigr\}$.
For the former case, let us apply, say, $\delta/\delta j_{k}(s)$
to $\exp\bigl\{ i\,\bar{\mathscr{S}}_{\bm{\xi}}^{(0)}[\bm{j},\,\bm{h}=0)\bigr\}$
and $\delta/\delta j_{l}(s)$, $\delta/\delta j_{m}(s)$ to $\mathscr{Z}^{(0)}[\varrho_{i},\,\bm{j}]$.

Then we will have expressions like 
\begin{align}
 & \quad\int_{0}^{t_{f}}\!ds\int_{0}^{t_{f}}\!ds'\int_{0}^{t_{f}}\!ds_{1}\;\Bigl[\dot{\bm{D}}_{2}(t_{f}-s)\Bigr]_{nr}\Bigl[\dot{\bm{D}}_{2}(t_{f}-s^{\prime})\Bigr]_{np}\Bigl[\bm{\mathfrak{D}}(s-s_{1})\Bigr]_{kq}\,\langle\!\langle\xi_{p}(s^{\prime})\xi_{q}(s_{1})\rangle\!\rangle\,\Bigl[\bm{D}_{i}(s)\Bigr]_{l\alpha}\Bigl[\bm{D}_{j}(s)\Bigr]_{m\beta}\nonumber \\
\sim\epsilon_{2}(t_{f}) & \equiv\int_{0}^{t_{f}}\!ds\;\dot{D}(t_{f}-s)\left[D^{2}(s)\int_{0}^{t_{f}}\!ds^{\prime}\;\int_{0}^{s}ds_{1}D(s-s_{1})G_{H}(s^{\prime}-s_{1})\right]\,.\label{eq:eps2}
\end{align}
Again since $D(s)$ and $G_{H}(s)$ vanish at $s=\infty$, so we find
\begin{equation}
\lvert\,\lim_{t_{f}\to\infty}\int_{0}^{t_{f}}\!ds^{\prime}\int_{0}^{s}\!ds_{1}\;D(s-s_{1})G_{H}(s^{\prime}-s_{1})\,\rvert=\lvert\,\int_{0}^{\infty}\!ds^{\prime}\int_{0}^{s}\!ds_{1}\;D(s-s_{1})G_{H}(s^{\prime}-s_{1})\,\rvert<\infty\,.
\end{equation}
Thus we obtain 
\begin{equation}
\lim_{t_{f}\to\infty}\epsilon_{2}(t_{f})=\lim_{t_{f}\to\infty}\int_{0}^{t_{f}}\!ds\;\dot{D}(t_{f}-s)\left[D^{2}(s)\int_{0}^{\infty}\!ds^{\prime}\int_{0}^{s}\!ds_{1}\;D(s-s_{1})G_{H}(s^{\prime}-s_{1})\right]\,.
\end{equation}
On the other hand, we know $D^{2}(s)$ decays as $s\to\infty$ and
\begin{equation}
\lim_{s\to\infty}\lvert\,\int_{0}^{\infty}\!ds'\int_{0}^{s}\!ds_{1}\;D(s-s_{1})G_{H}(s^{\prime}-s_{1})\,\rvert<\infty\,,
\end{equation}
so we conclude 
\begin{equation}
\lim_{s\to\infty}\left[D^{2}(s)\int_{0}^{t_{f}}\!ds'\int_{0}^{s}\!ds_{1}\;D(s-s_{1})G_{H}(s^{\prime}-s_{1})\right]=0\,.
\end{equation}
By Lemma \ref{lem:decay-Laplace}, we obtain 
\begin{equation}
\lim_{t_{f}\to\infty}\epsilon_{2}(t_{f})=0\,.
\end{equation}
\end{proof}

\begin{thm}\label{thm:decaying-2D-2j} 
\begin{align}
 & \quad\lim_{t_{f}\to\infty}\int_{0}^{t_{f}}\!ds\int_{0}^{t_{f}}\!ds'\;\langle\!\langle\Bigl[\dot{\bm{D}}_{2}(t_{f}-s)\Bigr]_{nr}\Bigl[\dot{\bm{D}}_{2}(t_{f}-s^{\prime})\Bigr]_{np}\,\xi_{p}(s^{\prime})\,\frac{\delta^{2}\mathscr{Z}^{(0)}[\varrho_{i},\,\bm{j}]\exp\Bigl\{ i\,\bar{\mathcal{S}}_{\bm{\xi}}^{(0)}[\bm{j},\,\bm{h})\Bigr\}}{\delta j_{k}(s)\delta j_{l}(s)}\rangle\!\rangle\,\bigg|_{\bm{j}=\bm{h}=0}\nonumber \\
 & =\lim_{t_{f}\to\infty}\int_{0}^{t_{f}}\!ds\int_{0}^{t_{f}}\!ds'\;\langle\!\langle\Bigl[\dot{\bm{D}}_{2}(t_{f}-s)\Bigr]_{nr}\Bigl[\dot{\bm{D}}_{2}(t_{f}-s^{\prime})\Bigr]_{np}\,\xi_{p}(s^{\prime})\,\frac{\delta^{2}\exp\Bigl\{ i\,\bar{\mathcal{S}}_{\bm{\xi}}^{(0)}[\bm{j},\,\bm{h})\Bigr\}}{\delta j_{k}(s)\delta j_{l}(s)}\rangle\!\rangle\,\bigg|_{\bm{j}=\bm{h}=0}\nonumber \\
 & =0
\end{align}
\end{thm}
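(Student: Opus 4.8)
The plan is to mirror the proof of Theorem~\ref{thm:decaying-2D-3j}, with the three $\bm{j}$-functional derivatives there replaced by the two here; the two structural ingredients --- a parity argument in $\bm{\xi}$ together with Lemma~\ref{lem:decay-Laplace} --- carry over unchanged. First I would use the Leibniz rule to expand $\delta^{2}/\bigl(\delta j_{k}(s)\delta j_{l}(s)\bigr)$ acting on the product $\mathscr{Z}^{(0)}[\varrho_{i},\bm{j}]\exp\{i\bar{\mathscr{S}}_{\bm{\xi}}^{(0)}[\bm{j},\bm{h})\}$ into three groups of terms: (i) both derivatives hitting $\mathscr{Z}^{(0)}$, (ii) one hitting each factor, and (iii) both hitting the exponential. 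Recalling that, with $\bm{h}$ eventually set to zero, $\bar{\mathscr{S}}_{\bm{\xi}}^{(0)}[\bm{j},\bm{h}=0)=\int\!ds\int\!ds'\,\bm{j}^{T}(s)\bm{\mathfrak{D}}(s-s')\bm{\xi}(s')$ is linear in $\bm{\xi}$ while $\mathscr{Z}^{(0)}[\varrho_{i},\bm{j}]$ is $\bm{\xi}$-independent, each $\bm{j}$-derivative acting on the exponential produces exactly one power of $\bm{\xi}$; together with the explicit prefactor $\xi_{p}(s')$, groups (i) and (iii) carry an odd number ($1$ and $3$, respectively) of factors of $\bm{\xi}$, so their Gaussian ensemble averages $\langle\!\langle\,\cdot\,\rangle\!\rangle$ vanish identically, for every $t_{f}$. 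In particular the middle member of the claimed identity, in which all derivatives act on $\exp\{i\bar{\mathscr{S}}_{\bm{\xi}}^{(0)}\}$, equals its group-(iii) contribution and is therefore zero.

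It then remains only to show that the group-(ii) contribution vanishes in the limit $t_{f}\to\infty$; this is the one step that genuinely uses Lemma~\ref{lem:decay-Laplace}. There one derivative, say $\delta/\delta j_{l}(s)$, acting on $\mathscr{Z}^{(0)}[\varrho_{i},\bm{j}]$ brings down a factor $[\bm{D}_{1}(s)]_{l\alpha}$ (or $[\bm{D}_{2}(s)]_{l\alpha}$ from the $\varrho$-argument), i.e.\ a fundamental solution evaluated at the interior vertex time $s$, while $\delta/\delta j_{k}(s)$ acting on the exponential brings down $\int\!ds_{1}\,[\bm{\mathfrak{D}}(s-s_{1})]_{kq}\xi_{q}(s_{1})$; after Wick-contracting $\xi_{p}(s')$ with $\xi_{q}(s_{1})$ via $[\bm{G}_{H}(s'-s_{1})]_{pq}$ and performing the change of variables $t_{f}-s\to s$, $t_{f}-s'\to s'$, $t_{f}-s_{1}\to s_{1}$ exactly as in the estimate for $\epsilon_{2}(t_{f})$ in Eq.~\eqref{eq:eps2}, one arrives at an expression of the schematic form $\int_{0}^{t_{f}}\!ds\,\dot{D}(t_{f}-s)\,\bigl[\,D(s)\int\!\cdots\bigr]$, in which the multiple integral inside the bracket is bounded uniformly in $t_{f}$ (because $\bm{D}_{i}$ and the regularized $\bm{G}_{H}$ decay at infinity) while the isolated factor $D(s)$ drives the whole bracket to zero as $s\to\infty$. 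Lemma~\ref{lem:decay-Laplace} then forces the limit to be zero, word for word as for $\epsilon_{2}(t_{f})$.

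Putting the pieces together: the left-hand side is the sum of groups (i), (ii), (iii); (i) and (iii) vanish by $\bm{\xi}$-parity and (ii) vanishes in the limit by the decay estimate, so the left-hand side tends to zero; the middle member is just group (iii), which is identically zero. Hence all three members equal zero, which is the assertion. The main obstacle I anticipate is purely bookkeeping: tracking which $\bm{D}$-factors are evaluated at the external time $t_{f}$ (and hence persist as kernels of the steady-state expression) versus at internal vertex times (and hence decay), and arranging the change of variables so that Lemma~\ref{lem:decay-Laplace} applies verbatim; there is no new analytic content beyond what Theorems~\ref{thm:decaying}--\ref{thm:decaying-2D-3j} already supply.
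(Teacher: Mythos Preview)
Your proposal is correct and follows exactly the approach the paper intends: the paper does not write out a separate proof for this theorem but groups it with Theorems~\ref{thm:decaying-2j} and~\ref{thm:decaying-2D-1j} as immediate consequences of the $\bm{\xi}$-parity argument combined with the $\epsilon_{2}$-type estimate of Theorem~\ref{thm:decaying-2D-3j}. Your Leibniz decomposition into groups (i)--(iii), the odd-moment vanishing of (i) and (iii), and the reduction of (ii) to the same decay estimate as $\epsilon_{2}(t_{f})$ (with $D^{2}(s)$ replaced by a single $D(s)$, which decays just as well) reproduce that reasoning faithfully.
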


\begin{thm}\label{thm:decaying-2D-1j} 
\begin{align}
 & \quad\lim_{t_{f}\to\infty}\int_{0}^{t_{f}}\!ds\int_{0}^{t_{f}}\!ds'\;\langle\!\langle\Bigl[\dot{\bm{D}}_{2}(t_{f}-s)\Bigr]_{nr}\Bigl[\dot{\bm{D}}_{2}(t_{f}-s^{\prime})\Bigr]_{np}\,\frac{\delta\mathscr{Z}^{(0)}[\varrho_{i},\,\bm{j}]\exp\Bigl\{ i\,\bar{\mathcal{S}}_{\bm{\xi}}^{(0)}[\bm{j},\,\bm{h})\Bigr\}}{\delta j_{k}(s)}\rangle\!\rangle\,\bigg|_{\bm{j}=\bm{h}=0}\nonumber \\
 & =\lim_{t_{f}\to\infty}\int_{0}^{t_{f}}\!ds\int_{0}^{t_{f}}\!ds'\;\langle\!\langle\Bigl[\dot{\bm{D}}_{2}(t_{f}-s)\Bigr]_{nr}\Bigl[\dot{\bm{D}}_{2}(t_{f}-s^{\prime})\Bigr]_{np}\,\frac{\delta\exp\Bigl\{ i\,\bar{\mathcal{S}}_{\bm{\xi}}^{(0)}[\bm{j},\,\bm{h})\Bigr\}}{\delta j_{k}(s)}\rangle\!\rangle\,\bigg|_{\bm{j}=\bm{h}=0}\nonumber \\
 & =0\,.
\end{align}
\end{thm}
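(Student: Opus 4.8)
The plan is to proceed exactly as in the proofs of Theorems~\ref{thm:decaying} and~\ref{thm:decaying-2D-3j}. First I would apply the Leibniz rule to the single functional derivative $\delta/\delta j_{k}(s)$ acting on the product $\mathscr{Z}^{(0)}[\varrho_{i},\,\bm{j}]\exp\{i\,\bar{\mathscr{S}}_{\bm{\xi}}^{(0)}[\bm{j},\,\bm{h})\}$, splitting it into the term in which the derivative hits $\mathscr{Z}^{(0)}[\varrho_{i},\,\bm{j}]$ and the term in which it hits $\exp\{i\,\bar{\mathscr{S}}_{\bm{\xi}}^{(0)}[\bm{j},\,\bm{h})\}$. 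The first equality of the statement is the assertion that the former term does not survive the limit $t_{f}\to\infty$, and the second equality ($=0$) will then follow because the latter term is odd in $\bm{\xi}$.

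For the $\mathscr{Z}^{(0)}$ term, I would use the explicit form~\eqref{eq:Z0-scr} to observe that $\delta\mathscr{Z}^{(0)}[\varrho_{i},\,\bm{j}]/\delta j_{k}(s)\big|_{\bm{j}=0}$ is a deterministic function of $s$ only — a linear combination of matrix elements of $\bm{D}_{1}(s)$ and $\bm{D}_{2}(s)$ with $s$-independent coefficients built from the initial density matrix and its relative-coordinate gradient at zero — so by~\eqref{eq:Di-derivative} it decays like $e^{-\gamma_{0}s}$ as $s\to\infty$. Since it carries no $\bm{\xi}$ and $\exp\{i\,\bar{\mathscr{S}}_{\bm{\xi}}^{(0)}[\bm{j},\,\bm{h})\}\big|_{\bm{j}=\bm{h}=0}=1$, this contribution factorizes into a product of decoupled $s$- and $s'$-integrals. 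The $s'$-integral is $\int_{0}^{t_{f}}\!ds'\,[\dot{\bm{D}}_{2}(t_{f}-s')]_{np}=[\bm{D}_{2}(t_{f})-\bm{D}_{2}(0)]_{np}=[\bm{D}_{2}(t_{f})]_{np}\to0$ by~\eqref{eq:Di-derivative}, while the $s$-integral $\int_{0}^{t_{f}}\!ds\,[\dot{\bm{D}}_{2}(t_{f}-s)]_{nr}\,\delta\mathscr{Z}^{(0)}/\delta j_{k}(s)\big|_{\bm{j}=0}$ tends to $0$ (hence stays bounded) by Lemma~\ref{lem:decay-Laplace}. The product therefore vanishes as $t_{f}\to\infty$, which is precisely the first equality.

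For the surviving term, in which $\delta/\delta j_{k}(s)$ acts on $\exp\{i\,\bar{\mathscr{S}}_{\bm{\xi}}^{(0)}[\bm{j},\,\bm{h})\}$, I would use~\eqref{eq:S0cal-xi}: $\delta\bar{\mathscr{S}}_{\bm{\xi}}^{(0)}[\bm{j},\,\bm{h})/\delta j_{k}(s)\big|_{\bm{h}=0}=\int_{0}^{t_{f}}\!ds_{1}\,[\bm{\mathfrak{D}}(s-s_{1})\bm{\xi}(s_{1})]_{k}$, so upon setting $\bm{j}=\bm{h}=0$ the integrand of the remaining double integral is linear in $\bm{\xi}$. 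Because $\bm{\xi}$ is Gaussian with zero mean, $\langle\!\langle\bullet\rangle\!\rangle$ of any odd polynomial in $\bm{\xi}$ vanishes identically, for every $t_{f}$; hence this term equals $0$, which is simultaneously the second equality and the assertion of the theorem.

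The only step that needs a word of justification — the same one as in Theorems~\ref{thm:decaying} and~\ref{thm:decaying-2D-3j} — is exchanging the $t_{f}\to\infty$ limit with the $s$-integral in the $\mathscr{Z}^{(0)}$ term; this is legitimate because every matrix element of $\bm{D}_{i}$ and of its derivatives or powers is $\mathcal{O}(e^{-\gamma_{0}t})$, so Lemma~\ref{lem:decay-Laplace} applies verbatim (with the convention, stated before Theorem~\ref{thm:decaying}, that any Hadamard kernel appearing through the initial data is understood as regularized). No estimate beyond the $\epsilon_{1}(t_{f})$, $\epsilon_{2}(t_{f})$ bounds already exploited for the earlier theorems is required, so I do not expect a genuine obstacle here.
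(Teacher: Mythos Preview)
Your proposal is correct and matches the paper's intended approach: the paper does not spell out a proof of this theorem, grouping it with Theorems~\ref{thm:decaying-2j} and~\ref{thm:decaying-2D-2j} as following ``by similar procedures'' from Theorems~\ref{thm:decaying} and~\ref{thm:decaying-2D-3j}, and your Leibniz split together with Gaussian oddness for the $\exp$-piece is exactly that template. Your handling of the $\mathscr{Z}^{(0)}$-piece via the explicit factorization $\int_{0}^{t_{f}}ds'\,[\dot{\bm{D}}_{2}(t_{f}-s')]_{np}=[\bm{D}_{2}(t_{f})]_{np}\to0$ is in fact slightly cleaner than building an $\epsilon$-type estimate as in the proofs of Theorems~\ref{thm:decaying} and~\ref{thm:decaying-2D-3j}, because here the $\mathscr{Z}^{(0)}$-term carries no residual $\bm{\xi}$ factor and so needs no Wick contraction before the decay argument.
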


Theorems \ref{thm:decaying}-\ref{thm:decaying-2D-1j} imply that
we can set $\mathscr{Z}^{(0)}[\varrho_{i},\,\bm{j}]$ in Eq.~\eqref{eq:Z0-xi}
to unity, which is equivalent to taking $\bm{j}=0$ in $\mathscr{Z}^{(0)}[\varrho_{i},\,\bm{j}]$,
without changing the heat transport in the late time limit. Since
the dependence of the initial state of the chain is included in $\mathscr{Z}^{(0)}[\varrho_{i},\,\bm{j}]$,
it then implies that these late-time results are independent of the
initial configurations of the chain.

\section{Useful identities for proving the NESS at the first order\label{sec:NESS-Identities}}

In this appendix, we will construct the identities useful for proving
the NESS at the first order in nonlinearity in Sec.~\ref{S:mgbfdmg}.

It will be convenient for the following discussions if we decompose
the right hand side of Eq.~\eqref{eq:Gamma-nr-klm-Frequency} into
three components and denote them by

\begin{equation}
\Gamma_{klm}^{nr}=\prescript{}{\text{I}}{\Gamma}_{klm}^{nr}+\prescript{}{\text{I\!I}}{\Gamma}_{klm}^{nr}+\prescript{}{\text{I\!I\!I}}{\Gamma}_{klm}^{nr}\,,
\end{equation}
and similar decompositions will be applied to $\tilde{\Gamma}_{klm}^{nr}$,
$\Upsilon_{\nu klm}^{nr}$, $\tilde{\Upsilon}_{\nu klm}^{nr}$ and
$\text{\ensuremath{\Lambda_{klm}^{n}}}$. as well. Furthermore, we
note that at late time the correlation matrix $\mathbf{C}$ in the
coincident limit actually takes the form 
\begin{equation}
\bm{\mathcal{C}}_{lm}=\int_{-\infty}^{\infty}\!\frac{d\omega}{2\pi}\;\Bigl[\tilde{\bm{\mathfrak{D}}}(\omega)\tilde{\bm{G}}_{H}(\omega)\tilde{\bm{\mathfrak{D}}}(-\omega)\Bigr]_{lm}\,.
\end{equation}
With the help of the properties of $\tilde{\bm{\mathfrak{D}}}(\omega)$
outlined in Eqs. (\ref{eq:Inverse1}-\ref{eq:D-omega-symmetry}),
we find 
\begin{eqnarray}
\bm{\mathcal{C}}{}_{lm} & = & \sum_{k=1}^{2}\int_{-\infty}^{\infty}\!\frac{d\omega}{2\pi}\;\Bigl[\tilde{\bm{\mathfrak{D}}}(\omega)\Bigr]_{lk}\Bigl[\tilde{\bm{\mathfrak{D}}}^{*}(\omega)\Bigr]_{km}\Bigl[\tilde{\bm{G}}_{H}(\omega)\Bigr]_{kk}\\
\prescript{}{\text{I}}{\Gamma}_{klm}^{nr} & = & -\bm{\mathcal{C}}{}_{lm}\int_{-\infty}^{\infty}\frac{d\omega}{2\pi}\omega\text{Im}\left([\tilde{\bm{\mathfrak{D}}}(\omega)]_{nr}\tilde{\bm{\mathfrak{D}}}(\omega)]_{nk}\right)\Bigl[\tilde{\bm{G}}_{H}(\omega)\Bigr]_{nn}\\
\prescript{}{\text{I}}{\tilde{\Gamma}}_{klm}^{nr} & = & 2\bm{\mathcal{C}}{}_{lm}\sum_{p=1}^{2}\int_{-\infty}^{\infty}\frac{d\omega}{2\pi}\omega^{2}\text{Re}\left([\tilde{\bm{\mathfrak{D}}}(\omega)]_{nr}[\tilde{\bm{\mathfrak{D}}}(\omega)]_{np}^{*}[\tilde{\bm{\mathfrak{D}}}(\omega)]_{pk}\right)\Bigl[\tilde{\bm{G}}_{H}(\omega)\Bigr]_{pp}\\
\prescript{}{\text{I}}{\Upsilon}_{\nu klm}^{nr} & = & -\bm{\mathcal{C}}{}_{lm}\sum_{p=1}^{2}\int_{-\infty}^{\infty}\frac{d\omega}{2\pi}\omega\text{Im}\left([\tilde{\bm{\mathfrak{D}}}(\omega)]_{nr}[\tilde{\bm{\mathfrak{D}}}(\omega)]_{\nu p}^{*}[\tilde{\bm{\mathfrak{D}}}(\omega)]_{pk}\right)\Bigl[\tilde{\bm{G}}_{H}(\omega)\Bigr]_{pp}\\
\prescript{}{\text{I}}{\tilde{\Upsilon}}_{\nu klm}^{nr} & = & \bm{\mathcal{C}}{}_{lm}\sum_{p=1}^{2}\int_{-\infty}^{\infty}\frac{d\omega}{2\pi}\omega\text{Im}\left([\tilde{\bm{\mathfrak{D}}}(\omega)]_{\nu r}[\tilde{\bm{\mathfrak{D}}}(\omega)]_{np}^{*}[\tilde{\bm{\mathfrak{D}}}(\omega)]_{pk}\right)\Bigl[\tilde{\bm{G}}_{H}(\omega)\Bigr]_{pp}\\
\prescript{}{\text{I}}{\Lambda}_{klm}^{n} & = & -\bm{\mathcal{C}}{}_{lm}\sum_{p=1}^{2}\int_{-\infty}^{\infty}\frac{d\omega}{2\pi}\omega\text{Im}\left([\tilde{\bm{\mathfrak{D}}}(\omega)]_{np}[\tilde{\bm{\mathfrak{D}}}^{*}(\omega)]_{pk}\right)\Bigl[\tilde{\bm{G}}_{H}(\omega)\Bigr]_{pp}\label{eq:Lambda-partI}
\end{eqnarray}
so we arrive at 
\[
-\prescript{}{\text{I}}{\Gamma}_{klm}^{nr}+2\gamma\prescript{}{\text{I}}{\tilde{\Gamma}}_{klm}^{nr}-\lambda_{2}\left(\prescript{}{\text{I}}{\Upsilon}_{\nu klm}^{nr}+\prescript{}{\text{I}}{\tilde{\Upsilon}}_{\nu klm}^{nr}\right)=\bm{\mathcal{C}}_{lm}\int_{-\infty}^{\infty}\!\frac{d\omega}{2\pi}\;\left\{ \mathcal{K}_{\nu k}^{nr}\Bigl[\tilde{\bm{G}}_{H}(\omega)\Bigr]_{nn}+\mathcal{L}_{\nu k}^{nr}(\omega)\Bigl[\tilde{\bm{G}}_{H}(\omega)\Bigr]_{\nu\nu}\right\} \,,
\]
where 
\begin{align}
\mathcal{K}_{\nu k}^{nr}(\omega) & \equiv\omega\text{Im}\left([\tilde{\bm{\mathfrak{D}}}(\omega)]_{nr}\tilde{\bm{\mathfrak{D}}}(\omega)]_{nk}\right)+4\gamma\omega^{2}\text{Re}\left([\tilde{\bm{\mathfrak{D}}}(\omega)]_{nr}[\tilde{\bm{\mathfrak{D}}}(\omega)]_{nn}^{*}[\tilde{\bm{\mathfrak{D}}}(\omega)]_{nk}\right)\nonumber \\
 & +\omega\left[\lambda_{2}\text{Im}\left([\tilde{\bm{\mathfrak{D}}}(\omega)]_{nr}[\tilde{\bm{\mathfrak{D}}}(\omega)]_{\nu n}^{*}[\tilde{\bm{\mathfrak{D}}}(\omega)]_{nk}\right)-\text{Im}\left([\tilde{\bm{\mathfrak{D}}}(\omega)]_{\nu r}[\tilde{\bm{\mathfrak{D}}}(\omega)]_{nn}^{*}[\tilde{\bm{\mathfrak{D}}}(\omega)]_{nk}\right)\right]\label{eq:calK}\\
\mathcal{L}_{\nu k}^{nr}(\omega) & \equiv4\gamma\omega^{2}\text{Re}\left([\tilde{\bm{\mathfrak{D}}}(\omega)]_{nr}[\tilde{\bm{\mathfrak{D}}}(\omega)]_{n\nu}^{*}[\tilde{\bm{\mathfrak{D}}}(\omega)]_{\nu k}\right)+\nonumber \\
 & +\omega\lambda_{2}\left[\text{Im}\left([\tilde{\bm{\mathfrak{D}}}(\omega)]_{nr}[\tilde{\bm{\mathfrak{D}}}(\omega)]_{\nu\nu}^{*}[\tilde{\bm{\mathfrak{D}}}(\omega)]_{\nu k}\right)-\text{Im}\left([\tilde{\bm{\mathfrak{D}}}(\omega)]_{\nu r}[\tilde{\bm{\mathfrak{D}}}(\omega)]_{n\nu}^{*}[\tilde{\bm{\mathfrak{D}}}(\omega)]_{\nu k}\right)\right]\label{eq:calL}
\end{align}
Let us now using Eqs. (\ref{eq:off-diag-ME}-\ref{eq:Inverse2}) to
simplify $\mathcal{K}_{\nu k}^{nr}(\omega)$ and $\mathcal{L}_{\nu k}^{nr}(\omega)$.
Eliminate $[\tilde{\bm{\mathfrak{D}}}(\omega)]_{\nu n}^{*}$ in Eq.
\eqref{eq:calK} with Eq. \eqref{eq:Inverse1}, we obtain 
\begin{align}
\mathcal{K}_{\nu k}^{nr}(\omega) & =\omega\text{Im}\left\{ (-\omega^{2}+2i\gamma\omega+\omega_{R0}^{2}+\lambda_{2})[\tilde{\bm{\mathfrak{D}}}(\omega)]_{nr}[\tilde{\bm{\mathfrak{D}}}(\omega)]_{nn}^{*}[\tilde{\bm{\mathfrak{D}}}(\omega)]_{nk}\right\} \nonumber \\
 & -\omega\lambda_{2}\text{Im}\left([\tilde{\bm{\mathfrak{D}}}(\omega)]_{\nu r}[\tilde{\bm{\mathfrak{D}}}(\omega)]_{nn}^{*}[\tilde{\bm{\mathfrak{D}}}(\omega)]_{nk}\right)
\end{align}
Upon using Eq. \eqref{eq:Inverse1} for $r=n$ and Eq. \eqref{eq:Inverse2}
for $r=\nu$, one immediately obtains 
\begin{eqnarray}
\mathcal{K}_{\nu k}^{nn}(\omega) & = & \omega\lambda_{2}\text{Im}\left([\tilde{\bm{\mathfrak{D}}}(\omega)]_{nn}^{*}[\tilde{\bm{\mathfrak{D}}}(\omega)]_{nk}\right)\label{eq:K-Id1}\\
\mathcal{K}_{\nu k}^{n\nu}(\omega) & = & 0\label{eq:K-Id2}
\end{eqnarray}
Similarly, eliminating $[\tilde{\bm{\mathfrak{D}}}(\omega)]_{\nu\nu}^{*}$
in Eq. \eqref{eq:calL} with Eq. \eqref{eq:Inverse2} yields 
\begin{align}
\mathcal{L}_{\nu k}^{nr}(\omega) & =\omega\text{Im}\left\{ (-\omega^{2}+2i\gamma\omega+\omega_{R0}^{2}+\lambda_{2})[\tilde{\bm{\mathfrak{D}}}(\omega)]_{nr}[\tilde{\bm{\mathfrak{D}}}(\omega)]_{n\nu}^{*}[\tilde{\bm{\mathfrak{D}}}(\omega)]_{\nu k}\right\} \nonumber \\
 & -\omega\lambda_{2}\text{Im}\left([\tilde{\bm{\mathfrak{D}}}(\omega)]_{\nu r}[\tilde{\bm{\mathfrak{D}}}(\omega)]_{n\nu}^{*}[\tilde{\bm{\mathfrak{D}}}(\omega)]_{\nu k}\right)
\end{align}
Then again using Eq. \eqref{eq:Inverse1} for $r=n$ and Eq. \eqref{eq:Inverse2}
for $r=\nu$, one obtain 
\begin{align}
\mathcal{L}_{\nu k}^{nn}(\omega) & =\omega\text{Im}\left([\tilde{\bm{\mathfrak{D}}}(\omega)]_{n\nu}^{*}[\tilde{\bm{\mathfrak{D}}}(\omega)]_{\nu k}\right)\label{eq:L-Id1}\\
\mathcal{L}_{\nu k}^{n\nu}(\omega) & =0\label{eq:L-Id2}
\end{align}
Now from Eqs. (\ref{eq:K-Id1}, \ref{eq:K-Id2}, \ref{eq:L-Id1},
\ref{eq:L-Id2}), one immediately obtains 
\begin{equation}
-\prescript{}{\text{I}}{\Gamma}_{nnn}^{nn}+2\gamma\prescript{}{\text{I}}{\tilde{\Gamma}}_{nnn}^{nn}-\lambda_{2}\left(\prescript{}{\text{I}}{\Upsilon}_{\nu nnn}^{nn}+\prescript{}{\text{I}}{\tilde{\Upsilon}}_{\nu nnn}^{nn}\right)=0\label{E:kjrkthe1}
\end{equation}
\begin{equation}
-\prescript{}{\text{I}}{\Gamma}_{klm}^{n\nu}+2\gamma\prescript{}{\text{I}}{\tilde{\Gamma}}_{klm}^{n\nu}-\lambda_{2}\left(\prescript{}{\text{I}}{\Upsilon}_{\nu klm}^{n\nu}+\prescript{}{\text{I}}{\tilde{\Upsilon}}_{\nu klm}^{n\nu}\right)=0\label{E:kjrkthe2}
\end{equation}
Furthermore, Eqs. (\ref{eq:K-Id1}, \ref{eq:K-Id2}, \ref{eq:L-Id1},
\ref{eq:L-Id2}) also indicate that 
\begin{equation}
-\prescript{}{\text{I}}{\Gamma}_{\nu lm}^{nn}+2\gamma\prescript{}{\text{I}}{\tilde{\Gamma}}_{\nu lm}^{nn}-\lambda_{2}\left(\prescript{}{\text{I}}{\Upsilon}_{\nu\nu lm}^{nn}+\prescript{}{\text{I}}{\tilde{\Upsilon}}_{\nu\nu lm}^{nn}\right)=\prescript{}{\text{I}}{\Lambda}_{\nu lm}^{n}\label{E:kjrkthe3}
\end{equation}
The procedures work for cases I\!I and I\!I\!I because they differ
by permutations of the indices, so Eqs.~\eqref{E:kjrkthe1}--\eqref{E:kjrkthe3}
hold in general without the left subscripts I, I\!I and I\!I\!I.
Therefore, we see that Eq. \eqref{E:kjrkthe1} leads to Eq. \eqref{eq:NESS-Id0}
while Eq. \eqref{E:kjrkthe2}leads to Eqs. (\ref{eq:NESS-Id1}-\ref{eq:NESS-Id2}).
Finally, Eq. \eqref{E:kjrkthe3} proves Eqs. (\ref{eq:NESS-Id3}-\ref{eq:NESS-Id4}).

\end{document}